\title{Computing Entanglement Polytopes}
\author{Konstantin Wernli}
\theoremstyle{plain} 
\newtheorem{thm}{Theorem}[section]
\newtheorem{lem}[thm]{Lemma}
\newtheorem{cor}[thm]{Corollary}
\newtheorem{prop}[thm]{Proposition}
\newtheorem{clm}[thm]{Claim}
\theoremstyle{definition}
\newtheorem{defn}{Definition}[section]
\newtheorem{expl}{Example}[section]
\theoremstyle{remark}
\newtheorem{rem}[thm]{Remark}
\newcommand{\C}{\mathds{C}}
\newcommand{\R}{\mathds{R}}
\newcommand{\Z}{\mathds{Z}}
\newcommand{\End}{\mathrm{End}}
\newcommand{\ketbra}[1]{\ket{#1}\bra{#1}}
\newcommand{\tr}{\mathrm{tr}}
\newcommand{\supp}{\mathrm{supp}}
\begin{document}
\begin{titlepage}
\begin{center}

\textsc{\LARGE ETH Zürich}\\[1.5cm] 
\textsc{\Large Master Thesis}\\[1cm] 

\huge \textbf{Computing Entanglement Polytopes}\\[0.75cm] 

\large \textit{using geometrical, algebraical and numerical methods} \\[2cm]
\begin{minipage}{0.4\textwidth}
\begin{flushleft} \large
\emph{Author:}\\
{Konstantin Wernli} 
\end{flushleft}
\end{minipage}
\begin{minipage}{0.4\textwidth}
\begin{flushright} \large
\emph{Supervisor:} \\
{Matthias Christandl} 
\end{flushright}
\end{minipage}\\[3cm]
 
\vfill
 
{\large \today}\\[4cm] 

\vfill
\end{center}

\end{titlepage}
\begin{abstract}
In \cite{Walter2012} entanglement polytopes where introduced as a coarsening of the SLOCC classification of multipartite entanglement. The advantages of classifying entanglement by entanglement polytopes are a finite hierarchy for all dimensions and a number of parameters linear in system size. In \cite{Walter2012} a method to compute entanglement polytopes using geometric invariant theory is presented. In this thesis we consider alternative methods to compute them. Some geometrical and algebraical tools are presented that can be used to compute inequalities giving an outer approximation of the entanglement polytopes. Furthermore we present a numerical method which, in theory, can compute the entanglement polytope of any given SLOCC class given a representative. Using it we classify the entanglement polytopes of $2 \times 3 \times N$ systems.
\end{abstract}
\tableofcontents

\chapter{Introduction}
Entanglement is one of the key features that distinguishes classical (and relativistic) physics from quantum physics in a way that has puzzled also the greatest minds in history \cite{Einstein1935}. In more recent years, it has gained interest as a fundamental resource in Quantum Information Protocols such as Quantum Key Distribution \cite{Bennett1984}, Quantum Teleportation \cite{Bennett1993}, and Superdense Coding \cite{Bennett1992}.
In \cite{Greenberger1990}, Greenberger, Zeilinger, Horne and Shimony argued that multipartite entanglement violates classical principles even stronger than bipartite entanglement as proposed by Bell \cite{Bell1964}. Moreover, \cite{Raussendorf2001} suggests that multiparticle entanglement could also be used as a resource of Quantum Information Theory. 
It is therefore desirable to study the properties of multipartite entanglement. This has been done by attempting to classify multipartite entanglement systematically by equivalence under Stochastic Local Operations and Classical Communication, or SLOCC for short (a detailed list of references can be found in \cite{Walter2012}). However, this way of classification is faced with the problem that in general there are infinitely many classes distinguished by an amount of parameters exponential in the particle number. \\
In \cite{Walter2012}, Christandl, Walter, Doran and Gross suggested entanglement polytopes as a way of significantly coarse-graining the classification, always providing a finite hierarchy in which states can be placed by measuring their local eigenvalues.  They also proposed a method which can theoretically be used to compute all polytopes in general dimensions based on covariants of the group action used to define SLOCC, and used it to compute entanglement polytopes of systems of three and four qubits, see also Figure \ref{fig:3qubitpol}. \\

The aim of thesis is to try and develop alternative methods to compute entanglement polytopes. More precisely, we are looking for answers to the following questions: 
\begin{itemize}
\item First of all, are there any other ways to gain information about the entanglement polytopes? 
\item If so, how efficient and how general are they?
\item How much information about the polytopes can be gained from those methods? 
\end{itemize}
In the first part of the thesis we show that the answer to the first question is positive, providing some geometrical and algebraic method to retrieve information about the polytopes. We then apply the methods to some low-dimensional systems. 
In the second part of the thesis, we discuss a numerical method to compute entanglement polytopes, and use it to completely calculate all polytopes of systems of size $2 \times 3 \times N$. Moreover, some results on higher dimensional systems are given. 
\begin{figure}
\includegraphics[width=\textwidth]{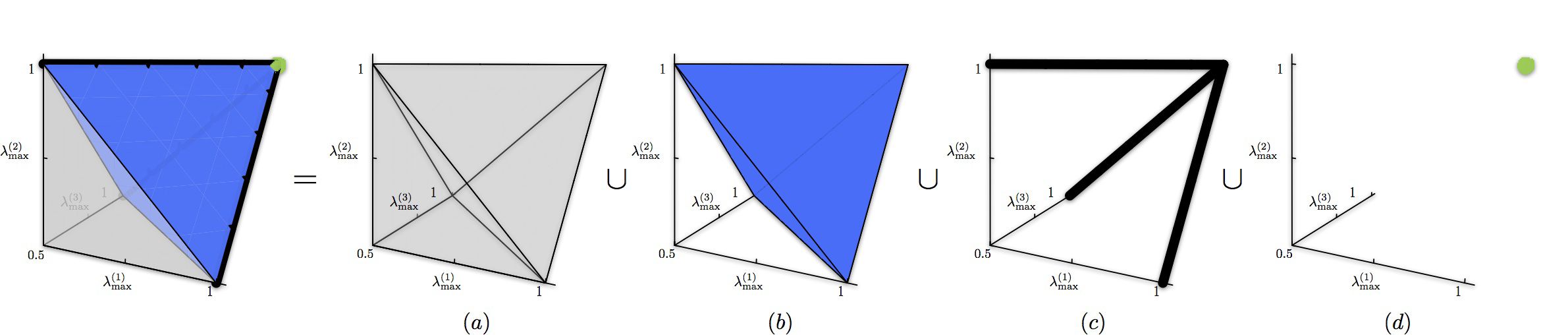}
\caption[Entanglement Polytopes for Three Qubits]{Entanglement polytopes for three qubits: (a) GHZ polytope (entire polytope, i.e., upper and lower pyramid), (b) W
polytope (upper pyramid), (c) three polytopes corresponding to EPR pairs shared between any two of the three parties (three
solid edges in the interior), (d) polytope of the unentangled states (interior vertex). Figure taken from \cite{Walter2012}}
\label{fig:3qubitpol}
\end{figure}
\chapter{Preliminaries}\label{ch:entpoly}
In this chapter we will fix some notations and review the basic facts about SLOCC orbits and entanglement polytopes  needed for this thesis along the lines of \cite{Walter2012}. Furthermore we give an ever so brief introduction to convexity, a feature which will prove very important in the calculation of entanglement polytopes. 
\section{SLOCC entanglement classes}

\subsection{Pure states}
Let $\mathcal{H} = \C^{d_1} \otimes \C^{d_2} \otimes \cdots \otimes \C^{d_N}$ be the space of pure states of a system of $N$ distinguishable particles, where particle $i$ has $d_i$ degrees of freedom. In this thesis we will consider the $N = 3$ case and the $N$-qubit case $(\C^2)^{\otimes N}$. Mathematically, it is convenient\footnote{It means one does not have to assume normalised states all the time which turns out to be handy sometimes. We will treat this issue very sloppily in general, differing between vectors in Hilbert space, lines in projective space and projectors as density operators only when necessary.} to consider a pure state to be an element of the projective space $P(\mathcal{H})$ over $\mathcal{H}$, i.e a line $l$ through the origin. Usually we identify such a line with any non-zero vector $\ket{\psi} \in l$ (and refer to it as $\ket{\psi} \in \mathcal{H}$), such a vector is called a representative of the line. The projetive space $P(\mathcal{H})$ can be seen as a subset of the space of all density operators (positive hermitian operators of trace 1) on $\mathcal{H}$ by identifying a line through $\ket{\psi}$ with the normalised projector $\frac{\ketbra{\psi}}{|\braket{\psi|\psi}|}$ onto it, thus we are able to identify a pure state with the corresponding density operator $\rho = \ketbra{\psi}$. Throughout this thesis we will use both viewpoints without further notice. 
 
\subsection{Entanglement and entanglement classes}
A pure state $\ket{\psi}$ is called \emph{entangled} if it cannot be written as a product state $\ket{\psi_1} \otimes \cdots \otimes \ket{\psi_N}$ with $\ket{\psi_i} \in \C^{d_i}$ (\cite{Nielsen2000})\footnote{In our convention one actually would have to check that this definition and all the others to follow are independent of the representative. However, this is a dull task from which want to save the reader.}. Two pure states $\ket{\psi}$ and $\ket{\varphi}$ are considered  \emph{equivalently entangled under Stochastic Local Operations and Classical Communication} if each can be converted into the other with a non-zero probability using only local operations and classical communication, see \cite{Duer2000}, where it has also been shown that this is equivalent to the existence of an Invertibe Local Operator (ILO) $A^{(1)} \otimes \cdots \otimes A^{(N)}$ such that 
\begin{equation}\label{eq:ILO} A^{(1)} \otimes \cdots \otimes A^{(N)}\ket{\psi} = \ket{\varphi}.\end{equation}
This defines an equivalence relation the set of pure states and therefore divides it into equivalence classes $\mathcal{C}$ called \emph{SLOCC entanglement classes}. Mathematically, these classes can be understood as the \emph{orbits} of the action of $G = SL(d_1) \times \ldots \times SL(d_N)$ on the set of pure states. If $g = (g^{(1)},\ldots,g^{(N)})$, then this action is given by \footnote{It is convenient to define it in this way even though normalisation is of course unnecessary.}
\begin{equation}g \cdot \ket{\psi} = \frac{ (g^{(1)} \otimes \cdots \otimes g^{(N)})\ket{\psi} } {\|(g^{(1)} \otimes \cdots \otimes g^{(N)})\ket{\psi}\|} \label{eq:action} \end{equation}
We will therefore also use the name SLOCC orbits from time to time. The orbit $G \cdot \ket{\psi}$ of a state $\ket{\psi}$ is called the \emph{entanglement class of } $\ket{\psi}$. \\
 
While the entanglement classes $\mathcal{C} = G \cdot\ket{\psi}$ partition the set of pure states, there is a natural hierarchy on the closures of the entanglement classes $\overline{\mathcal{C}} = \overline{G \cdot\ket{\psi}}$ coming from their nature as orbit closures: If $\ket{\psi'} \in \overline{G \cdot \ket{\psi}}$ then $\overline{G \cdot\ket{\psi'}} \subseteq \overline{G\cdot\ket{\psi}}$ is immediate. \\
A considerable amount of research has been put into the classification of multipartite entanglement under SLOCC, see the references in \cite{Walter2012}. In this thesis we will use results by Miyake published in  \cite{Miyake2004} on the SLOCC classes of $ 2 \times 2 \times N $ systems and work of Chen et al. \cite{Chen2006, Chen2009} on the classification of orbits in more general $2 \times M \times N$ systems. 
The main problem of the SLOCC classification is that even though it is already a coarsening of the LOCC classification it is still ``too fine'' for larger systems. As shown by a quick dimension argument, the dimension of the orbit space and thus the number of parameters labeling the SLOCC classes will grow exponentially for large N, but also for higher-level systems of a small number of particles. For example, the dimension of the orbit space is lower bounded by $2^N -1 - 3N$ for $N$ qubit systems.  
Here the entanglement polytopes described in \cite{Walter2012} come in. Before turning to them let us make a brief digression into convexity.
\section{A brief introduction to convexity}
We will state some definitions and results in this section that will provide useful in the rest of the thesis. For a detailed introduction to convexity and proofs see \cite{Barvinok2002}.
\subsection{Basics}
Let $V$ be a real vector space and $A = \{x_1,\ldots,x_n\}\subseteq V$ be a finite subset. A convex combination of $x_1,\ldots,x_n$ is a sum $\sum_{i=1}^n \lambda_i x_i$ where $ 0 \leq \lambda_i \leq 1$ and $\sum_{i} \lambda_i = 1$. 
The set of all convex combinations of the $x_i$ is called the \emph{convex hull} $\mathrm{conv}(A)$, and for any subset $B \subseteq V$, the \emph{convex hull} $\mathrm{conv}(B)$ is the set of all convex combinations of a finite number of points in $B$. \\
The set of all convex combinations of two points $x,y$ is called the \emph{segment} between $x$ and $y$ and is denoted by $[x,y]$. A subset $A \subseteq V$ is \emph{convex} if for any two points $x,y \in A$, $[x,y] \in A$ 
The \emph{extremal points} or \emph{vertices} of a convex set $A$ are those points that cannot be expressed as a proper convex combination of two other points. Under reasonable assumptions and in particular for finite-dimensional vector spaces, 
the so-called Krein-Milman theorem says that any convex set equal the (closure of the) convex hull of its extremal points. 
\subsection{Polytopes}
A (convex)\footnote{In this thesis, we use the term ``Polytope'' as equivalent to ``convex Polytope'', since we are only dealing with convex polytopes.}\emph{Polytope} in euclidean space is a finite intersection of half-spaces given by linear inequalities $l(x)+a \leq 0$  . This does not require the polytopes to be bounded, however, our polytopes always will be. The intersection of a hyperplane $l(x)+a = 0$ with the polytope is called a \emph{facet} or \emph{face} of the polytope. \\

When trying to compute an entanglement polytope $\Delta$ in this thesis, we typically are confronted with the following situation: We have computed an \emph{outer} approximation of a polytope $\Delta' \supseteq \Delta$ by means of a set of linear inequalities $\mathcal{I}$ , and an \emph{inner} approximation by means of a list of vertices $\{v_1,\ldots,v_n\}$, and we would like to show that the two coincide: $\Delta' = \mathrm{conv}(\{v_1,\ldots,v_n\})$. To this end we have to compute the extremal points $\{w_1,\ldots,w_m\}$ of the outer approximation of our polytope and show that they are actually contained in the convex hull of the vertices we already know are included. Since we know they are extremal points we also know that if we have equality, then we must actually have $\{w_1,\ldots,w_m\} = \{v_1,\ldots,v_n\}$. Alternatively, one can calculate $\{w_1,\ldots,w_m\}$ and try to show they are included in the polytope. If this is true, we know that the polytope is exactly equal to $\bigcap \mathcal{I}$. In theory, one can compute the vertices by hand: Since they must lie in the boundaries of the halfspaces, we can take any subset of $\mathcal{I}$ of cardinality $dim V$, and then check whether the corresponding hyperplanes intersect in a single point. If so, we can check the point against all other inequalities in $\mathcal{I}$. Of course this is an inhuman amount of calculation. Fortunately, there exists a variety of elaborate algorithms and programs. In this thesis we have used \texttt{qhull}\footnote{see http://www.qhull.org and \cite{Barber1996}.}. \\

To compute such inequalities we sometimes use the following result: Given a convex set $A$ and a point $p \notin A$, there exists a seperating linear functional $l(x)$ such that $l(p) < c$ and $l(A) \subseteq [c,\infty)$. Now in  euclidean space there is a canonical way of computing such a linear functional and the corresponding $c$. Namely, assume that we have found a point $x \in A$ such that $\|x-p\| = \min_{y \in A}\|y-p\|$\footnote{It can be shown such  points always exists. One of the goals of this thesis is to develop methods to find them in the entanglement polytopes.}. Denote $n:= x-p$. Then we have that for all $y \in p$, $\langle y-p , x - p \rangle \geq \langle x - p , x - p\rangle.$\\

\section{Entanglement Polytopes}
Entanglement Polytopes are a tool introduced in \cite{Walter2012} to classify entanglement in multiparticle systems. In this section, we will define them in physical and mathematical terms and analyse their advantages compared the full SLOCC classification.
\subsection{A physical definition of entanglement polytopes}
Entanglement polytopes have a very nice description by means of physical parameters. Namely, let $\ket{\psi}\in\mathcal{H}$ be a pure state and $\overline{\mathcal{C}} = \overline{G \cdot \ket{\psi}}$. Denote by $${\lambda(\ket{\psi})} = \left(\lambda^{(1)}, \ldots,\lambda^{(N)}\right)$$ the ordered local spectra of $\rho = \ketbra{\psi}$, i.e. $\lambda^{(i)} = \sigma(\rho^{(i)})$, the spectrum of th $i$-th particle reduced density matrix $\rho^{(i)} = tr_{j\neq i}\rho$. The \emph{entanglement polytope} of an entanglement class $\overline{\mathcal{C}}$ is by definition 
\[ \Delta_{\mathcal{C}}=\{\lambda(\ket{\psi}): \ket{\psi} \in \overline{\mathcal{C}}\} \]
Also, for $\overline{\mathcal{C}} = \overline{G\cdot\ket{\psi}}$ we often denote $\Delta_{\mathcal{C}}$ by $\Delta_{\ket{\psi}}$ or simply $\Delta_{\psi}$.
A priori, it is not clear that this set should have any reasonable structure at all. 
\subsection{Mathematical Description of entanglement polytopes}
The mathematical description of the entanglement polytopes is a bit more involved but also gives more results. We first provide the mathematical framework, which is geometric invariant theory and its study of the moment map. Our main sources are the work by Kirwan\cite{Kirwan1984}, Brion \cite{Brion1987}, and Ness \cite{Ness1984}. \\

Let $K$ be a compact Lie Group with Lie Algebra $Lie(K)$ which acts  on a Kähler manifold\footnote{A symplectic manifold with compatible Riemannian and almost complex structures. For our purposes we always have $X \subseteq  P(\mathcal{H}). $ See e.g. the lecture notes by Ballmann \cite{Ballmann2006}} $X$ preserving the symplectic form. 
\begin{defn}
A \emph{momemt map} for the action of $K$ on $X$ is a map $\Phi: X \to Lie(K)^*$ satisfying the following to conditions: 
\begin{enumerate}[i)]
\item It is $K$-equivariant with respect to the coadjoint action $\mathrm{Ad}^*$ of $K$ on $Lie(K)*$, i.e for all $x \in X$
$$\Phi(k\cdot x) = \mathrm{Ad^*}(k)\Phi(x)$$
\item If $\omega$ denotes the symplectic form on $X$, then for any $a \in Lie(K),x \in X$ and $\xi \in T_xX$ we have that 
\[ d\Phi_x(\xi)(a) =\omega_x(\xi,a_x) \] 
where $a_x$ denotes the vector field on $X$ generated by $a$, i.e. $a_x = \frac{d}{dt}|_{t=0}\exp(ta)\cdot x$. Equivalently, one could say that for any $a \in Lie(K)$, the function $x \mapsto \Phi(x)(a)$ is a hamiltonian function for the vector field $a_x$.
\end{enumerate}
\end{defn}
Now let $G$ be the complexification of $K$ and $T$ a maximal torus in $G$. We now choose a postive Weyl chamber $\mathfrak{t}^+$ in $Lie(T)$\footnote{Details can be found in any good text on representation theory of Lie Algebra and Lie Groups, e.g. \cite{Fulton1997} or \cite{Fulton1991}.} Then, as stated by Kirwan in \cite{Kirwan1984}, for every $x \in X$ we have that $\Phi(\overline{G\cdot x})\cap \mathfrak{t}^+ \subseteq Lie(K)^*$ is a compact convex  polytope, the so-called \emph{moment polytope}. \\

We now to fit our entanglement polytopes in this setting. Therefore, we consider as above the Lie Group $G = SL(d_1)\times\ldots \times SL(d_N)$ with maximal compact subgroup $K = SU(d_1) \times \ldots \times SU(d_N)$ and $X = P(\mathcal{H})$ considered as a Kähler manifold with the Fubini-Study metric. The postive Weyl chamber $\mathfrak{t}^+$ then coincides with the set of diagonal matrices with non-increasing entries. Now we define a map $\Phi: X \to i\cdot Lie(K)$ which maps a pure state $\rho = \ketbra{\psi}$ to its tuple of reduced density matrices $
\left(\rho^{(1)},\ldots,\rho^{(N)}\right)$\footnote{Notice a slight imprecision here: The reduced density operators have trace 1, whereas the elements of $i\cdot Lie(K)^*$ are supposed to have trace 0. This can be dealt with by subtracting a constant term from the moment or simply by shifting the coordinates. The ``origin'' in our language therefore is actually a tuple of maximally mixed density matrices.}.  We want to show this is can be seen as a moment map for the action of $K$ on $X$ defined by restricting the action \ref{eq:action}. To do so we use a Lemma quoted from Kirwan's book\cite[2.7]{Kirwan1984} (adjusted to our notation):
\begin{lem}
For a compact group $K$ acting complex projective variety $ X \subseteq \mathds{P}_n$ by a homomorphism $\varphi: K \to U(n+1)$, a moment map $\Phi: X \to Lie(K)^*$ is given by 
\[ \Phi(\ket{x})(a) = \frac{\braket{x|\varphi_*(a)|x}}{2\pi i |\braket{x|x}|}\]
\end{lem}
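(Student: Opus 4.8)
The plan is to verify directly that the displayed formula meets the two defining conditions of a moment map, after first reducing to the tautological action. \textbf{Reduction.} It suffices to treat the case $K = U(n+1)$ acting on $\mathds{P}_n$ in the standard way (so $\varphi = \mathrm{id}$) and to establish that $\tilde\Phi(\ket{x})(A) = \frac{\braket{x|A|x}}{2\pi i\,|\braket{x|x}|}$ is a moment map there. For an arbitrary homomorphism $\varphi : K \to U(n+1)$, the map $\Phi(\ket{x})(a) := \tilde\Phi(\ket{x})(\varphi_*(a))$ is then automatically a moment map for the induced $K$-action: the fundamental vector field on $\mathds{P}_n$ generated by $a \in Lie(K)$ equals the one generated by $\varphi_*(a) \in Lie(U(n+1))$, and $\varphi_*$ intertwines the (co)adjoint actions because $\varphi$ is a group homomorphism, so both axioms pass through $\varphi_*$. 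This $\Phi$ is exactly the formula in the statement. Finally, if $X \subseteq \mathds{P}_n$ is a $K$-invariant complex submanifold, then $\omega_{FS}|_X$ is again Kähler (in particular symplectic), $T_xX \subseteq T_x\mathds{P}_n$, and the fundamental vector fields are tangent to $X$; since the moment-map conditions are pointwise, $\Phi|_X$ serves as a moment map for the restricted action. So everything comes down to the base case on $\mathds{P}_n$.

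\textbf{The base case.} Equivariance is immediate: for $U \in U(n+1)$, $\braket{Ux|A|Ux} = \braket{x|U^\dagger A U|x} = \braket{x|\mathrm{Ad}(U^{-1})A|x}$ and $|\braket{Ux|Ux}| = |\braket{x|x}|$, so $\tilde\Phi(U\cdot\ket{x}) = \mathrm{Ad}^*(U)\tilde\Phi(\ket{x})$. For the Hamiltonian condition, fix a representative with $\braket{x|x} = 1$ (the formula is invariant under rescaling, so this is harmless) and identify $T_{[x]}\mathds{P}_n$ with $x^\perp = \{v \in \C^{n+1} : \braket{x|v} = 0\}$. The fundamental vector field of $A \in Lie(U(n+1))$ (a skew-Hermitian matrix) at $[x]$ is the horizontal part of $A\ket{x}$, namely $a_{[x]} = A\ket{x} - \braket{x|A|x}\ket{x}$. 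Differentiating $f_A(\ket{y}) = \frac{\braket{y|A|y}}{2\pi i\,\braket{y|y}}$ along $v \in x^\perp$ and using $A^\dagger = -A$ gives $d(f_A)_{[x]}(v) = \tfrac1\pi\,\mathrm{Im}\,\braket{v|A|x}$. On horizontal vectors the Fubini--Study form is $\omega_{FS}(u,v) = \frac{1}{2\pi i}\big(\braket{u|v} - \braket{v|u}\big)$, whence $\omega_{FS}(v, a_{[x]}) = \frac{1}{2\pi i}\big(\braket{v|A|x} - \overline{\braket{v|A|x}}\big) = \tfrac1\pi\,\mathrm{Im}\,\braket{v|A|x}$, using $\braket{v|x} = 0$. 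Comparing the two, $d(\tilde\Phi(\cdot)(A))_{[x]}(v) = \omega_{FS}(v, a_{[x]})$, which is condition (ii). Together with the reduction this proves the lemma.

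\textbf{Where the work is.} The only genuinely non-formal part is the Hamiltonian computation, and its difficulty is entirely a matter of conventions: the $2\pi$ surviving in the denominator is forced by normalising $[\omega_{FS}]$ to be the positive integral generator of $H^2(\mathds{P}_n;\Z)$, one must pin down the sign convention in condition (ii), and one must justify the identification $T_{[x]}\mathds{P}_n \cong x^\perp$ together with the formula for the fundamental vector field. A cleaner route avoids explicit Fubini--Study coordinates altogether: the quadratic moment map for $U(n+1)$ on $\C^{n+1}$, $z \mapsto \tfrac{1}{2\pi i}\braket{z|A|z}$ for the appropriately normalised flat Kähler form, commutes with the $U(1)$-scaling action and hence descends under Marsden--Weinstein reduction to the symplectic quotient $S^{2n+1}/U(1) = \mathds{P}_n$, where it becomes precisely the asserted expression; then one merely invokes the reduction theorem. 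Either way, restricting to $X$ is free.
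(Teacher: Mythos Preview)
Your proof is correct. The reduction to the tautological $U(n+1)$-action is clean, the equivariance check is immediate, and your computation of the Hamiltonian condition in the Fubini--Study normalisation is accurate (the factor $2\pi$ indeed matches the convention that $[\omega_{FS}]$ is the integral generator of $H^2(\mathds{P}_n;\Z)$). The alternative via symplectic reduction from the flat moment map on $\C^{n+1}$ is also valid and is arguably the quickest route.

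By contrast, the paper does not prove this lemma at all: it is simply quoted from Kirwan's book \cite[2.7]{Kirwan1984} and used as a black box to identify the moment map with the tuple of reduced density matrices. So your write-up is not a different proof so much as a proof where the paper offers none. If you want to align with the paper, a one-line citation suffices; if you want to be self-contained, what you have is fine, though you could trim the closing ``Where the work is'' paragraph since it mostly restates what you already did.
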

In our case, the homomorphism $\varphi$ is given by $\varphi((U^{(1)},\ldots U^{(N)})) = U^{(1)} \otimes \cdots \otimes U^{(N)}$ and its differential $\varphi_*$ simply takes \[(u^{(1)},\ldots,u^{(N)}) \mapsto u^{(1)}\otimes \mathds{1} \cdots \mathds{1} + \ldots + \mathds{1}\otimes \cdots \otimes \mathds{1}\otimes u^{(N)},\]
i.e. the above says in our case (choose a normalised representative), by the definition of reduced density matrices, 
\begin{align*}\Phi(\ket{x})(u^{(1)},\ldots,u^{(N)}) &= \frac{1}{2\pi i}\sum_{i=1}^N\braket{x|\mathds{1}_{1\cdots i-1}\otimes u^{(i)} \otimes \mathds{1}_{i+1\cdots N} |x}\\
&= \frac{1}{2\pi i}\sum_{i=1}^N \tr(\rho^{(i)}u^{(i)}) = \frac{1}{2\pi i}\sum_{i=1}^N \tr((u^{(i)})^{\dagger}\rho^{(i)})\\
&= \frac{1}{2\pi }\left\langle \left(u^{(1)},\ldots,u^{(N)}\right),-i \left( \rho^{(1)},\ldots,\rho^{(N)}\right),\right\rangle 
\end{align*}
where $\langle U_1,U_2\rangle = tr(U_1^{\dagger}U_2)$ denotes the scalar product on $Lie(K)$. Note that the $i$ serves a ``conversion factor'' between the hermitian reduced density matrices and the antihermitian elements of $Lie(K)$. 
The only thing left show to identify entanglement polytopes as moment polytopes is that $\Phi(\overline{G\cdot \ket{\psi}}) \cap \mathfrak{t}^+$ can be identified with the collection of ordered local eigenvalues. We have already stated, however, that $\mathfrak{t}^+$ consists of diagonal matrices with weakly decreasing entries. Moreover, since for any collection of local unitaries  $U=(U^{(1)},\ldots U^{(N)}$ we have that
 $$\Phi(U\cdot\ket{\psi}) = Ad^*(U)(\Phi(\ket{\psi})) = \left(U^{(1)}\rho^{(1)}(U^{(1)})^{\dagger},\ldots,U^{(N)}\rho^{(N)}(U^{(N)})^{\dagger}\right)$$
 the orbit of $\ket{\psi}$ will always contain states $\ket{\psi'}$ for whom the reduced density matrices are exactly given by $\lambda(\ket{\psi})$ (identifying a  diagonal matrix with the vector of it entries). We conclude that the entanglement polytopes are precisely the momemt polytopes of the closures $G$-action \eqref{eq:action} on the set of pure states. \\
 
 Brion investigated on moment polytopes in \cite{Brion1987} using covariants. In \cite{Walter2012} these results are used to deduce important properties of entanglement polytopes. Three important results for this thesis are the following\cite[Corollary 2,p.8]{Walter2012}: 
 \begin{itemize}
 \item The entanglement polytopes are compact convex polytopes.
 \item There are always only finitely many entanglement polytopes.
 \item The entanglement polytope of the orbit closure of a generic state always is the full polytope $\Delta$ given by the Quantum Marginal Inequalities (the set of all possible local spectra of a quantum state).

 \end{itemize} 
 The second part is important for us since it means we can use methods put together in the previous section on convexity. \\
 
 Using the description of the entanglement polytopes by covariants,the authors of  \cite{Walter2012} describe an algorithm how to compute them. In the next chapter we will present methods on how to gain information on the polytopes without using covariants. First, however, we briefly discuss the advantages and disadvantages of entanglement polytopes in comparison with the usual SLOCC classification.
 \subsection{Advantages of entanglement polytopes}
 The first major advantage of classifying entanglement via entanglement polytopes is that there are only finitely many of them no matter how large our system becomes. 
 It follows directly from the definition of entanglement polytopes that 
 $$ \overline{\mathcal{C}} \subseteq \overline{\mathcal{D}} \Rightarrow \Delta_{\mathcal{C}} \subseteq \Delta_{\mathcal{D}},$$ i.e the inclusion hierarchy of the polytopes respects the inclusion hierarchy of the SLOCC orbit closures. Therefore, the classification of entanglement by entanglement polytopes allows for a significant coarse-graining of the SLOCC hierarchy. \\
 
 The second big advantage is their relatively simple description once they have been computed. Since they form convex polytopes, they are given by an intersection of halfspaces, i.e. giving an description of an entanglement polytope of a certain entanglement class is equivalent to giving a set of linear inequalities for the local spectra of all states in an entanglement class. Therefore, if one wants to check whether a state $\ket{\psi}$ is contained in an entanglement class $\mathcal{C}$, one measures its local eigenvalues and compares them against all the inequalities for a given entanglement polytope. If the local eigenvalues violate one of them (i.e. $\lambda(\ket{\psi}) \notin \Delta_{\mathcal{C}}$)then we conclude that $\ket{\psi} \notin \mathcal{C}$. This only requires the measurement of linearly many parameters in $N$, as opposed to the exponentially many one needs to measure to locate the entanglement class the state belongs to. Moreover, this suggests that also partial knowledge of the entanglement polytopes can be useful, as every inequality we know serves as an entanglement witness.

\chapter{Geometric and algebraic ways to compute entanglement polytopes}\label{geometrictricks}
Here we present some alternative methods on how to find any information (that is, inner and outer estimates) on entanglement polytopes. 
\section{A word on coordinates}
To give results on the polytopes we use as coordinates in the polytope the ``Most Local Eigenvalues''\footnote{named after the Mathematica function Most.}: A point in the polytope, which is a collection of diagonal matrices with trace 1, is specified by the collection of tuples of the $d_i-1$ \emph{highest} eigenvalues denoted by $x_{i,j}, j=1,\ldots,d_i -1, i = 1,\ldots , N$ the last diagonal entry on each system then equals one minus the sum of the others.
For example, the well-known GHZ state $\frac{1}{\sqrt{2}}(\ket{000}+\ket{111})$ has coordinates $(\frac{1}{2},\frac{1}{2},\frac{1}{2})$. While unpractical for calculation, these coordinates remove $N$ dimensions from the problem, thus making it more visualisable for low dimensions. However, one has to pay attention to calculate the scalar product of two such points by using \emph{all} diagonal entries. 
\section{A method to find the closest point to the origin in an entanglement polytope}
The aim of this section is to prove the following 
\begin{prop}\label{eigenvectorprop}
Let $\ket{\psi}\in \mathcal{H}$,$\rho = \ketbra{\psi}$ and 
$$ X_{\rho}:=\rho^{(1)}\otimes \mathds{1} \cdots \otimes \mathds{1}+\mathds{1}\otimes\rho^{(2)}\otimes\mathds{1}\otimes\cdots\otimes\mathds{1}+\ldots+\mathds{1}\otimes\cdots\otimes\mathds{1}\otimes\rho^{(N)}
$$
Assume $\ket{\psi}$ is an eigenvector of $X_\rho$, i.e. \begin{equation}X_{\rho}\ket{\psi}=\lambda\ket{\psi}\label{psieigenvector} \end{equation} Then $\Phi(K\cdot\ket{\psi})\cap \Delta_{\psi}$ is the (unique) closest point to the origin in $\Delta_{\psi}$.
\end{prop}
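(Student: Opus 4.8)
The strategy is to combine two facts: (1) a norm-minimization / separating-hyperplane characterization of the closest point, as sketched in the convexity section, and (2) the relation between the operator $X_\rho$ and the moment map $\Phi$. Observe first that $X_\rho = \varphi_*(-i(\rho^{(1)},\ldots,\rho^{(N)}))$ up to the identification used in the computation of $\Phi$, so the hypothesis $X_\rho\ket{\psi} = \lambda\ket{\psi}$ says that $\ket{\psi}$ is an eigenvector of the operator whose expectation values generate the moment map. I would begin by fixing a representative $\ket{\psi'}$ in $K\cdot\ket{\psi}$ whose reduced density matrices are already diagonalized with non-increasing entries, so that $p := \Phi(\ket{\psi'})\in\mathfrak{t}^+$ is the candidate point; since $K$-conjugation only rotates the reduced density matrices, $X_{\rho'}\ket{\psi'}=\lambda\ket{\psi'}$ still holds with the same $\lambda$, and one checks $\lambda = \sum_i \tr((\rho^{(i)})^2) = \sum_i \|\rho^{(i)}\|^2$, i.e. $\lambda$ equals the squared norm of the marginal tuple, which (after the trace-shift to the origin) is essentially $\|p\|^2$.

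Next I would use the characterization recalled at the end of the convexity section: a point $p$ in a convex set $A$ is the closest point to the origin iff $\langle y - p, \,0 - p\rangle \le 0$ for all $y\in A$, equivalently $\langle y, p\rangle \ge \langle p, p\rangle$ for all $y\in A = \Delta_\psi$. So the whole proof reduces to showing that for every $\ket{\varphi}\in\overline{G\cdot\ket{\psi}}$, writing $q = \Phi(\ket{\varphi})$ for its (ordered) marginal tuple, one has $\langle q, p\rangle \ge \lambda$. Unwinding the inner product on $iLie(K)$, $\langle q,p\rangle = \sum_i \tr(\rho_\varphi^{(i)}\rho_\psi^{(i)}) = \sum_i \tr\big((\mathds{1}\otimes\cdots\otimes\rho_\psi^{(i)}\otimes\cdots)\,\ketbra{\varphi}\big) = \braket{\varphi|X_{\rho'}|\varphi}$ for a unit vector $\ket{\varphi}$. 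Thus the inequality to prove is simply $\braket{\varphi|X_{\rho'}|\varphi}\ge \lambda$ for all unit $\ket{\varphi}$ in the orbit closure — but $\lambda$ is an eigenvalue of the Hermitian operator $X_{\rho'}$ with eigenvector $\ket{\psi'}$, so this holds for all $\ket{\varphi}$ provided $\lambda$ is the \emph{smallest} eigenvalue of $X_{\rho'}$ on the relevant subspace. The key point, then, is that $\lambda$ is automatically the minimal eigenvalue: this follows because $X_{\rho'} = \sum_i \mathds{1}\otimes\rho^{(i)}\otimes\mathds{1}$ with each $\rho^{(i)}\ge 0$, and $\braket{\psi'|X_{\rho'}|\psi'} = \sum_i\tr((\rho^{(i)})^2) = \lambda$ is exactly the value of the function $\ket{\varphi}\mapsto\braket{\varphi|X_{\rho'}|\varphi}$ at its critical point $\ket{\psi'}$; one argues that among vectors in the orbit closure this critical value is a minimum, e.g. by noting $\|X_{\rho'}\ket{\psi'}\|$ relates to the gradient of $\ket{\varphi}\mapsto\sum_i\|\rho_\varphi^{(i)}\|^2$ along the $G$-orbit and that $X_{\rho'}\ge$ (its least eigenvalue) with equality forced by the Quantum Marginal / Horn-type constraints that any $q\in\Delta_\psi$ must satisfy.

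I expect the main obstacle to be precisely this last step: showing that the eigenvalue $\lambda$ attached to $\ket{\psi}$ is the minimal one, rather than just some eigenvalue, and that the inequality $\braket{\varphi|X_{\rho'}|\varphi}\ge\lambda$ genuinely holds for all $\ket{\varphi}$ in the orbit closure and not merely on a subspace. The clean way to handle this is via Kempf–Ness / Kirwan theory: the function $\ket{\varphi}\mapsto \tfrac12\sum_i\|\rho_\varphi^{(i)}\|^2$ is, up to constants, the norm-squared of the moment map, its critical points in $\overline{G\cdot\ket{\psi}}$ are exactly the eigenvectors of $X_{\rho}$ with eigenvalue equal to that norm-squared, and the minimal critical value over the orbit closure is attained and equals the squared distance from the origin to $\Delta_\psi$ (this is Ness's theorem, available from the cited references \cite{Ness1984, Kirwan1984}). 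Given that, the eigenvector hypothesis pins $\ket{\psi}$ to a critical point, and a short argument — convexity of $\Delta_\psi$ plus the separating-functional computation above — upgrades "critical" to "the closest point", and uniqueness is immediate from strict convexity of the Euclidean distance. Finally I would note that $\Phi(K\cdot\ket{\psi})\cap\Delta_\psi$ is a single point (all $K$-translates of $\ket{\psi}$ have the same ordered spectra), so the statement is well-posed.
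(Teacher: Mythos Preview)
Your overall plan --- show that the eigenvector condition makes $\ket{\psi}$ a critical point of $\|\Phi\|^2$, then invoke Kirwan/Ness to conclude --- is exactly what the paper does. But your attempt to get there via the separating-hyperplane inequality has a genuine flaw, and your fallback to Ness is not stated precisely enough to close the gap.

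The problematic step is the reduction $\langle q,p\rangle = \braket{\varphi|X_{\rho'}|\varphi}$ and the subsequent claim that it suffices to show $\braket{\varphi|X_{\rho'}|\varphi}\ge\lambda$ for every unit $\ket{\varphi}\in\overline{G\cdot\ket{\psi}}$. That inequality is \emph{false} in general. Take $\ket{\psi}=\ket{W}$: then $X_{\rho'}=\sum_i \mathrm{diag}(2/3,1/3)_i$, so $\lambda=5/3$, but $\ket{111}$ lies in $\overline{G\cdot\ket{W}}$ and $\braket{111|X_{\rho'}|111}=1<5/3$. The point is that the identity $\langle q,p\rangle=\braket{\varphi|X_{\rho'}|\varphi}$ only holds when the reduced density matrices of $\ket{\varphi}$ are diagonal \emph{with non-increasing entries} (for $\ket{111}$ they are diagonal but increasing); once you impose that restriction, the inequality you want to prove is just a restatement of the closest-point claim, so nothing has been gained. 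Your assertion that ``$\lambda$ is automatically the minimal eigenvalue'' is therefore wrong as stated.

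Your fallback (``Ness's theorem: the minimal critical value over the orbit closure equals the squared distance to the origin'') is true but insufficient: an orbit closure can contain several critical points of $\|\Phi\|^2$ with \emph{different} critical values (e.g.\ $\overline{G\cdot\ket{W}}$ contains both $\ket{W}$ and $\ket{000}$), so you still owe an argument that \emph{your} critical point $\ket{\psi}$ is the one achieving the minimum. The paper handles exactly this by quoting Kirwan more carefully: the critical set of $\|\Phi\|^2$ decomposes as $\bigsqcup_\beta C_\beta$ (\cite[Lemma~3.15]{Kirwan1984}), and $x\in W^S(C_\beta)$ iff $\beta$ is the closest point to the origin in $\Phi(\overline{G\cdot x})\cap\mathfrak t^+$ (\cite[6.2]{Kirwan1984}). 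Since a critical point trivially lies in its own stable manifold, the conclusion follows immediately. In other words, the paper skips your separating-hyperplane detour entirely and goes straight from ``eigenvector $\Rightarrow$ critical point'' (via $\mathrm{grad}\,\|\Phi\|^2(x)=2i\Phi(x)_x$) to Kirwan's stable-manifold characterisation.
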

I.e. if $\psi$ satisfies equation \eqref{psieigenvector}, then the locally ordered collection of local eigenvalues will give the point in the entanglement polytope of $\ket{\psi}$ closest to the origin. The proof relies on Kirwan's book \cite{Kirwan1984}. 
\begin{clm}
Let $f = \|\Phi\|^2$. Assume equation \eqref{psieigenvector}. Then $\ket{\psi}$ is a critical point of f. 
\end{clm}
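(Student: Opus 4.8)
The plan is to use the standard fact (due to Kirwan, \cite{Kirwan1984}) that the critical points of $f = \|\Phi\|^2$ on $X$ are precisely the points $x$ for which the vector field generated by $\Phi(x) \in \mathfrak{k}^* \cong \mathfrak{k}$ (via the invariant inner product) vanishes at $x$; equivalently, the gradient flow of $f$ has $x$ as a fixed point. Concretely, in our setting $\Phi(\ket{\psi})$ is (up to the conversion factor $-i$ and normalization) the tuple $(\rho^{(1)}, \dots, \rho^{(N)})$, so the corresponding element of $i\cdot Lie(K)$ acts on $X = P(\mathcal{H})$ through $\varphi_*$ as the operator $X_\rho$ appearing in the statement. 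So I would first recall precisely which characterization of critical points of $\|\Phi\|^2$ I want to use, citing the relevant statement in \cite{Kirwan1984} (the moment map square and its gradient flow).

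Next I would carry out the computation that $\ket{\psi}$ is a critical point of $f$ iff the tangent vector at $\ket{\psi}$ induced by the one-parameter subgroup $\exp(t\, a_\psi)$, with $a_\psi := \Phi(\ket{\psi})$ viewed in $i \cdot Lie(K)$, is zero in $T_{\ket{\psi}} P(\mathcal{H})$. Since that induced vector field is $\frac{d}{dt}\big|_{t=0}\exp(t X_\rho)\cdot\ket{\psi}$, and a curve $\exp(tA)\ket{\psi}$ in $\mathcal{H}$ projects to a constant curve in $P(\mathcal{H})$ exactly when $A\ket{\psi}$ is a scalar multiple of $\ket{\psi}$, the condition becomes $X_\rho \ket{\psi} = \lambda \ket{\psi}$ for some $\lambda \in \C$. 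Under the eigenvector hypothesis \eqref{psieigenvector}, this is satisfied, hence $\ket{\psi}$ is a critical point. I would also note that the relevant $\lambda$ is automatically real since $X_\rho$ is Hermitian, so there is no compatibility issue.

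The main obstacle I expect is making rigorous the passage from "critical point of $\|\Phi\|^2$ on $X$" to "the $\Phi(x)$-generated vector field vanishes at $x$": this requires the identity $\nabla f_x = 2\, (\Phi(x))_x$ (gradient of $\|\Phi\|^2$ equals twice the fundamental vector field of $\Phi(x)$), which follows from the defining property (ii) of the moment map together with $K$-equivariance (i), but spelling out the sign and factor conventions — and the fact that $\Phi$ takes values in $Lie(K)^*$ while we want to act by an element of $Lie(K)$, using the inner product to identify them — is the fiddly part. Once that identity is in place, the rest is the short linear-algebra observation about when $\exp(tX_\rho)\ket{\psi}$ is projectively constant, which is routine. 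I would therefore structure the proof as: (1) state Kirwan's gradient identity for $\|\Phi\|^2$; (2) identify $\Phi(\ket{\psi})$ with $X_\rho$ acting on $\mathcal{H}$; (3) observe $(X_\rho)_{\ket{\psi}} = 0$ in $T_{\ket{\psi}}P(\mathcal{H})$ iff $X_\rho\ket{\psi} \in \C\ket{\psi}$; (4) conclude from \eqref{psieigenvector}.
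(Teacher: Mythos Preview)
Your proposal is correct and follows essentially the same approach as the paper: both invoke Kirwan's identity $\mathrm{grad}\,f(x) = 2i\,\Phi(x)_x$, identify the element $\Phi(\ket{\psi})$ with $X_\rho$ acting on $\mathcal{H}$, and then observe that the fundamental vector field vanishes at $\ket{\psi}$ in $P(\mathcal{H})$ precisely because the eigenvector condition makes $t\mapsto \exp(tX_\rho)\ket{\psi}$ projectively constant. The paper carries this last step out by explicitly writing $e^{tX_\rho}\ket{\psi}=e^{t\lambda}\ket{\psi}$ and differentiating the normalized projector, while you phrase it as the equivalent linear-algebra criterion $X_\rho\ket{\psi}\in\C\ket{\psi}$; these are the same argument.
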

\begin{proof}
In Kirwan's book \cite[Lemma 6.6]{Kirwan1984},  we find the equation \[\mathrm{grad} f(x) = 2i\Phi(x)_x\label{gradienteq}\] (for compact K\"ahler manifolds) where we identify $\Phi(x) \in Lie(G)^*$ with an element of $Lie(G)$ using duality in euclidean space. $\Phi(x)_x$ then denotes the tangent vector of the infinitesimal action of $\Phi(x)$, i.e. for any $y\in X$
\[\Phi(x)_y = \frac{d}{dt}exp(t\Phi(x))\cdot y\big|_{t=0}\] 
In our case we have 
$e^{tX_{\rho}} = e^{t\rho^{(1)}}\otimes\cdots\otimes e^{t\rho^{(N)}}$, so by assuming equation \eqref{psieigenvector} we also get $e^{t\rho^{(1)}}\otimes\cdots\otimes e^{t\rho^{(N)}}\ket{\psi} = e^{t\lambda} \ket{\psi}$. Thus we see that 
\begin{align*}
\mathrm{grad}f(\rho) 	&= 2i\Phi(\rho)_{\rho} \\
						&= 2i\frac{d}{dt}exp(t\Phi(\rho))\cdot \rho\big|_{t=0} \\
						&= 2i\frac{d}{dt}\bigg|_{t=0}\frac{e^{t\rho^{(1)}}\otimes\cdots\otimes e^{t\rho^{(N)}}\ketbra{\psi}(e^{t\rho^{(1)}}\otimes\cdots\otimes e^{t\rho^{(N)}})^{\dagger}}{\|e^{t\rho^{(1)}}\otimes\cdots\otimes e^{t\rho^{(N)}}\ket{\psi}\|^2} \\
						&= 2i\frac{d}{dt}\bigg|_{t=0}|\frac{e^{t\lambda}|^2\ketbra{\psi}}{\|e^{t\lambda}\ket{\psi}\|^2}\\
						&=2i\frac{d}{dt}\ketbra{\psi} = 0. \\
\end{align*}
\end{proof}
Next let $T \leq G$ be the subgroup of diagonal matrices and $T\cap K = (S^1)^{d_1-1} \times \cdots \times (S^1)^{d_N-1}\subset K$ be a maximal torus in $K$. 
\begin{clm}
The set of fixed points of the action of $T\cap K$ on $P(\mathcal{H})$ is the set of computational basis vectors of $\mathcal{H}$: $$Fix(T)=\{\ket{\mathbf{j}} = \ket{j_1}\otimes\cdots\otimes\ket{j_N}: 1 \leq j_i \leq d_i, i=1,\ldots,N\}\ .$$\end{clm}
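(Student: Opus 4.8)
The plan is to compute the fixed-point set via the weight-space decomposition of $\mathcal H$ under the compact torus $T\cap K$, using the elementary fact that a point $[\ket\psi]\in P(\mathcal H)$ is torus-fixed if and only if some (equivalently, any) representative $\ket\psi$ lies in a single weight space. I note first that the claim is stated for $\mathrm{Fix}(T)$ whereas the surrounding discussion uses the compact torus $T\cap K$; this is harmless, since the $G$-action is holomorphic and $T\cap K$ is Zariski dense in the complex torus $T$, so the stabiliser of a point --- a closed algebraic subgroup --- contains all of $T$ as soon as it contains $T\cap K$. Hence $\mathrm{Fix}(T)=\mathrm{Fix}(T\cap K)$, and it suffices to describe the fixed points of $T\cap K$.

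The inclusion $\supseteq$ is immediate: writing $t=(t^{(1)},\ldots,t^{(N)})\in T\cap K$ with $t^{(i)}=\mathrm{diag}(t^{(i)}_1,\ldots,t^{(i)}_{d_i})$, one has $(t^{(1)}\otimes\cdots\otimes t^{(N)})\ket{\mathbf j}=\chi_{\mathbf j}(t)\,\ket{\mathbf j}$ for $\ket{\mathbf j}=\ket{j_1}\otimes\cdots\otimes\ket{j_N}$, where $\chi_{\mathbf j}(t)=\prod_{i=1}^N t^{(i)}_{j_i}$ is a nonvanishing scalar. So $[\ket{\mathbf j}]$ is fixed, each $\ket{\mathbf j}$ is a simultaneous eigenvector of $T\cap K$ with weight (character) $\chi_{\mathbf j}$, and since the $\ket{\mathbf j}$ form an eigenbasis we get the weight decomposition $\mathcal H=\bigoplus_\mu\mathcal H_\mu$ with $\mathcal H_\mu=\mathrm{span}\{\ket{\mathbf j}:\chi_{\mathbf j}=\mu\}$.

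For $\subseteq$, let $[\ket\psi]$ be fixed and decompose $\ket\psi=\sum_\mu\ket{\psi_\mu}$ with $\ket{\psi_\mu}\in\mathcal H_\mu$. Then $t\cdot\ket\psi=\sum_\mu\mu(t)\ket{\psi_\mu}$ must be proportional to $\ket\psi$ for every $t$; if two summands with $\mu\neq\nu$ were nonzero, a $t$ with $\mu(t)\neq\nu(t)$ would contradict this. Hence $\ket\psi$ lies in a single $\mathcal H_\mu$, and the proposition follows once we know each nonzero $\mathcal H_\mu$ is one-dimensional, i.e. that $\mathbf j\mapsto\chi_{\mathbf j}$ is injective.

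This last point is the one that genuinely needs checking --- and it is where I expect the only real obstacle --- because $T\cap K$ is the maximal torus of the \emph{special} unitary factors (rank $d_i-1$), not of $U(d_i)$, so a priori distinct basis vectors could share a character. Here one uses that the character lattice of the diagonal torus of $SU(d_i)$ is $\Z^{d_i}/\Z(1,\ldots,1)$ and that it acts on the $j$-th standard basis vector of $\C^{d_i}$ through the class $\bar e_j$ of $e_j$; since $e_j-e_k$ has exactly one $+1$ and one $-1$ entry, it is never a nonzero integer multiple of $(1,\ldots,1)$, so $\bar e_j\neq\bar e_k$ for $j\neq k$. Consequently $\chi_{\mathbf j}$ corresponds to the tuple $(\bar e_{j_1},\ldots,\bar e_{j_N})$ in $\bigoplus_i\Z^{d_i}/\Z(1,\ldots,1)$, and $\chi_{\mathbf j}=\chi_{\mathbf j'}$ forces $j_i=j_i'$ for every $i$, hence $\mathbf j=\mathbf j'$. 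Thus $\mathcal H_\mu=\C\ket{\mathbf j}$ for a unique $\mathbf j$, so $[\ket\psi]=[\ket{\mathbf j}]$, and combined with the easy inclusion this gives the asserted equality. Apart from this distinctness verification, the argument is just the standard classification of torus fixed points in the projectivisation of a representation as the lines spanned by weight vectors.
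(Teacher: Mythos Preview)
Your proof is correct. Both your argument and the paper's establish the two inclusions, but the methods differ in flavour. The paper handles the nontrivial direction by a direct computation: given a state with two distinct basis vectors $\ket{\mathbf j},\ket{\mathbf j'}$ in its support, it picks a slot $i$ with $j_i\neq j_i'$ and writes down an explicit torus element (essentially $\mathrm{diag}(i,-i,1,\ldots,1)$ on the $i$-th factor, identity elsewhere) that multiplies the two coefficients by different phases, so no scalar can realign them. Your argument instead invokes the weight-space decomposition and reduces the question to the injectivity of $\mathbf j\mapsto\chi_{\mathbf j}$, which you verify via the character lattice $\bigoplus_i\Z^{d_i}/\Z(1,\ldots,1)$ of the special unitary torus. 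Your approach is more structural and makes explicit why the passage from $U(d_i)$ to $SU(d_i)$ costs nothing here; the paper's approach is more elementary and requires no representation-theoretic vocabulary, at the price of being slightly ad hoc. Your remark that $\mathrm{Fix}(T)=\mathrm{Fix}(T\cap K)$ by Zariski density is also a nice clarification of a notational slip in the statement.
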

\begin{proof}
Since $T\cap K$ acts by matrices which are diagonal in the computational basis, it acts by multiplying the basis vectors by a phase. Therefore the corresponding density operator $\rho=\ketbra{\psi}$ is invariant under the action of $T$. Now let $\ket{\psi} =\sum_{\mathbf{j}} c_{\mathbf{j}}\ket{\mathbf{j}}$ be an entangled (non-seperable) state expanded in the computational basis. Then there exist $\mathbf{j} \neq \mathbf{j'}$ such that both $c_{\mathbf{j}}$ and $c_{\mathbf{j'}}$ are nonzero. Assume $j_1=0 \neq j'_1=1$(for simplicity,this can be achieved by permuting the systems and basis vectors) and take $t \in T$ to be $(\mathrm{diag}(i,-i,1,\ldots,1),\mathds{1},\ldots,\mathds{1})$.
Then $t \cdot \ket{\psi} = ic_{\mathbf{j}}\ket{j}-ic_{\mathbf{j'}}+\sum_{\mathbf{k}\neq\mathbf{j},\mathbf{j'}}c_{\mathbf{k}}t\cdot\ket{\mathbf{k}}$. Hence there is no complex constant $a$ such that $a\ket{\psi}=t\cdot\ket{\psi}$, that is, $\ket{\psi} \neq t\cdot\ket{\psi}$ and the claim is proven. 
\end{proof}
Let us denote, following Kirwan, by $\Phi_T$ the composition $X \rightarrow Lie(K)^* \rightarrow Lie(T)^*$ of the moment map with the restriction map $Lie(G)^* \rightarrow Lie(T)^*$ which maps a linear functional $\phi \in Lie(G)^*$ to its restriction $\phi\big|_{Lie(T)}$, and the set $\Phi_T(Fix(T))$ by $\mathds{A}$ According to a theorem by Atiyah \cite{Atiyah1982}, this is always a finite set and $\Phi_T(X)=\mathrm{Conv}(\mathds{A})$. In our case an element of $A$ is just given by a collection of matrices which have a single 1 on the diagonal. \\
Now we define a finite set $\mathcal{B}$ as follows. Let $\beta \in t^*$ be an element of the convex hull of some elements of $\mathds{A}$ which minimises the distance to the origin in this convex hull. For every subset of $\mathds{A}$ there is a unique such $\beta$. $\mathcal{B}$ is the set of all such $\beta$ which also lie in $t^+$, our choice of positive Weyl chamber, that is, those $\beta$ which consist of diagonal matrices with weakly decreasing entries. 
The following Lemma is from the Kirwan's book and characterises the critical points of the norm square of the moment map: 
\begin{lem}\cite[Lemma 3.5]{Kirwan1984}\label{critpts} Let $\Phi_{\beta}(x)=\Phi(x)(\beta)$. 
\[\mathrm{Crit}(f)= \bigsqcup_{\beta \in \mathcal{B}} K\cdot (Z_{\beta} \cap \Phi^{-1}(\beta))=: \bigsqcup_{\beta \in \mathcal{B}}C_{\beta}\] where $Z_{\beta} = \mathrm{Crit}(\Phi_{\beta}) \cap \Phi_{\beta}^{-1}(\|\beta\|^2)$
\end{lem}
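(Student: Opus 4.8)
The plan is to reduce everything to the infinitesimal action via the gradient formula already recorded above, and then to organise the critical points by the value of the moment map. Viewing $\Phi(x)\in Lie(K)$ through the invariant product $\langle U_1,U_2\rangle=\tr(U_1^\dagger U_2)$, Kirwan's formula $\mathrm{grad}\,f(x)=2i\,\Phi(x)_x$ shows at once that $x\in\mathrm{Crit}(f)$ if and only if $\Phi(x)_x=0$, i.e. $x$ is fixed by the one-parameter subgroup $t\mapsto\exp(t\Phi(x))$. First I would record this equivalence and note that $\mathrm{Crit}(f)$ is $K$-invariant, since $\Phi$ is $K$-equivariant and the norm is $\mathrm{Ad}^*$-invariant. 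To each critical $x$ I then attach $\beta:=(K\cdot\Phi(x))\cap\mathfrak{t}^+$, the unique representative of the coadjoint orbit of $\Phi(x)$ in the positive Weyl chamber; replacing $x$ by a suitable $K$-translate I may assume $\Phi(x)=\beta\in\mathfrak{t}^+$.

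Next I would identify the piece containing $x$. Since $\Phi_\beta(y)=\Phi(y)(\beta)=\langle\Phi(y),\beta\rangle$ is a Hamiltonian for the vector field $\beta_y$, its critical points are exactly the zeros of $\beta_y$; as $\beta_x=\Phi(x)_x=0$ this gives $x\in\mathrm{Crit}(\Phi_\beta)$, while $\Phi_\beta(x)=\langle\beta,\beta\rangle=\|\beta\|^2$ gives $x\in\Phi_\beta^{-1}(\|\beta\|^2)$, so $x\in Z_\beta$; together with $\Phi(x)=\beta$ this places $x\in Z_\beta\cap\Phi^{-1}(\beta)$. The reverse inclusion is immediate: if $x\in Z_\beta\cap\Phi^{-1}(\beta)$ then $\Phi(x)=\beta$ and $x\in\mathrm{Crit}(\Phi_\beta)$ force $\Phi(x)_x=\beta_x=0$, hence $\mathrm{grad}\,f(x)=0$, and applying $K$ shows that the whole orbit $C_\beta=K\cdot(Z_\beta\cap\Phi^{-1}(\beta))$ consists of critical points.

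The hard part is to show that the chamber representative $\beta$ produced this way actually lies in the finite set $\mathcal{B}$. Writing $x=[v]$ and decomposing a representative $v=\sum_{w}v_w$ into weight spaces of the maximal torus $T$, the torus moment map is the barycentre $\Phi_T(x)=\sum_{w\in S}\frac{|v_w|^2}{\|v\|^2}\,w$ of the set $S\subseteq\mathds{A}$ of weights $w$ with $v_w\neq 0$; in particular $\beta=\Phi_T(x)\in\mathrm{conv}(S)$. I would then feed in the criticality condition: the requirement $\exp(t\beta)\cdot[v]=[v]$ forces the pairings $\langle w,\beta\rangle$ to take a common value for all $w\in S$, and pairing the barycentre expression for $\beta$ with $\beta$ itself identifies this value as $\|\beta\|^2$. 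Thus $\langle\beta,s\rangle=\|\beta\|^2$ for every $s\in S$, so the variational inequality $\langle\beta,s-\beta\rangle\ge 0$ characterising the point of $\mathrm{conv}(S)$ nearest the origin (recalled in the convexity section) is satisfied; hence $\beta$ is that nearest point and, being in $\mathfrak{t}^+$ by construction, lies in $\mathcal{B}$. Extracting this self-indexing identity $\langle w,\beta\rangle\equiv\|\beta\|^2$ cleanly from the fixed-point condition is the technical heart of the argument.

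Finally I would check disjointness. For any $x\in C_\beta$ one has $\Phi(x)\in K\cdot\beta$, and since $\beta\in\mathfrak{t}^+$ the coadjoint orbit $K\cdot\beta$ meets the Weyl chamber in the single point $\beta$; hence $\beta=(K\cdot\Phi(x))\cap\mathfrak{t}^+$ is recovered from $x$ alone, so distinct $\beta\in\mathcal{B}$ give disjoint pieces. Combining the two inclusions of the second paragraph with this, I obtain $\mathrm{Crit}(f)=\bigsqcup_{\beta\in\mathcal{B}}C_\beta$ as claimed.
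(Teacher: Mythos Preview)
The paper does not give its own proof of this lemma; it is simply quoted from Kirwan's book \cite[Lemma 3.5]{Kirwan1984} without argument. Your proof is correct and is essentially Kirwan's argument specialised to the projective setting $X\subseteq P(\mathcal{H})$ used throughout the paper: the gradient identity reduces criticality to $\Phi(x)_x=0$, a $K$-translate brings $\Phi(x)=\beta$ into $\mathfrak{t}^+$, and the weight-space decomposition together with the fixed-point condition forces $\langle w,\beta\rangle=\|\beta\|^2$ on the support, which by the variational characterisation of the nearest point makes $\beta$ the closest point to the origin in the convex hull of those weights and hence an element of $\mathcal{B}$. The only caveat is that your step~6 uses the explicit weight decomposition of vectors in $\mathcal{H}$, so it is tailored to the projective case rather than Kirwan's general K\"ahler setting; since the paper only applies the lemma to $P(\mathcal{H})$, this is entirely adequate here.
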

Denote by $W^S(C_{\beta})$ the stable manifold of $C_{\beta}$ under the negative gradient flow of f, i.e. the points in $P(\mathcal{H})$ satisfying $\lim_{t \to \infty}\varphi_{\mathrm{grad}f}^t(x) \in C_{\beta}$ where $\varphi_{\mathrm{grad}f}^t$ denotes the flow of the vector field $\mathrm{grad}f$. In chapter 6 of Kirwan's book, we find the following result \cite[6.2]{Kirwan1984} 
\begin{prop}\label{mindistprop}
A point $x \in X$lies in $W^S(C_{\beta})$ if and only if $\beta$ is closest to the origin in $\Phi(\overline{G\cdot x})\cap \mathfrak{t}^+$.
\end{prop}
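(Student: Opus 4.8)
The plan is to identify the unique $\beta\in\mathcal{B}$ with $x\in W^S(C_\beta)$ as the limit data of the negative gradient flow, and then to match this $\beta$ with the closest point of the moment polytope $\Delta(x):=\Phi(\overline{G\cdot x})\cap\mathfrak{t}^+$ by a convexity argument. First I would record the two structural facts that make the flow tractable. By Kirwan's gradient formula $\mathrm{grad}\,f(y)=2i\,\Phi(y)_y$ \cite{Kirwan1984} (used already in the first Claim), the gradient vector field is everywhere tangent to the $G$-orbit through $y$: the infinitesimal $K$-action together with its image under the complex structure spans the tangent space to the orbit of the complexified group. Consequently the flow $\varphi^t_{\mathrm{grad}f}$ maps each orbit into itself and therefore preserves the orbit closure $\overline{G\cdot x}$. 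Since $X=P(\mathcal{H})$ is compact and $f$ is real-analytic, the \L{}ojasiewicz gradient inequality guarantees that every trajectory of the \emph{negative} gradient flow converges to a single critical point $y_\infty:=\lim_{t\to\infty}\varphi^t_{\mathrm{grad}f}(x)$, which by Lemma \ref{critpts} lies in exactly one $C_\beta$. Thus $x$ belongs to $W^S(C_\beta)$ for precisely one $\beta\in\mathcal{B}$, and since the nearest point of a convex set is also unique, the whole biconditional reduces to showing that this $\beta$ is the point of $\Delta(x)$ nearest the origin.

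Next I would dispatch the two easy halves of that identification. On the one hand, $y_\infty\in\overline{G\cdot x}$ (the flow preserved the closure) and $y_\infty\in C_\beta=K\cdot(Z_\beta\cap\Phi^{-1}(\beta))$, so $\Phi(y_\infty)$ lies on the coadjoint orbit $K\cdot\beta$; since $\overline{G\cdot x}$ is $K$-invariant we may move $y_\infty$ by $K$ so that its moment image is exactly $\beta\in\mathfrak{t}^+$, proving $\beta\in\Delta(x)$ and hence $\|\beta\|\ge\mathrm{dist}(0,\Delta(x))$. On the other hand, because $\Phi$ is $K$-equivariant and each coadjoint orbit meets the Weyl chamber $\mathfrak{t}^+$ in a single point, the set $\Phi(\overline{G\cdot x})$ is the union of the coadjoint orbits through $\Delta(x)$, so $\min_{\overline{G\cdot x}}\|\Phi\|=\min_{\gamma\in\Delta(x)}\|\gamma\|=\mathrm{dist}(0,\Delta(x))$. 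It therefore suffices to prove the reverse inequality $\|\beta\|\le\mathrm{dist}(0,\Delta(x))$, i.e. that the limit $y_\infty$ actually realizes the global minimum of $f$ over the \emph{whole} orbit closure, not merely a lower critical value reached from $x$.

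This last point is the crux and the main obstacle, and it is exactly where convexity enters. I would invoke the Kempf--Ness function: for a lift $v\in\mathcal{H}$ of $x$ the map $g\mapsto\log\|(g^{(1)}\otimes\cdots\otimes g^{(N)})v\|^2$ descends to a function on the symmetric space $G/K$ that is convex along geodesics. Convexity rules out interior saddles of $f|_{\overline{G\cdot x}}$: every critical point of $f$ lying in the orbit closure is in fact a global minimum there, and the negative gradient flow from any point of $\overline{G\cdot x}$ converges to the single minimizing $K$-orbit. Equivalently, one can argue directly with the separating-hyperplane construction from the convexity section: writing $\alpha$ for the genuine closest point of the convex polytope $\Delta(x)$, the supporting inequality $\langle\gamma,\alpha\rangle\ge\|\alpha\|^2$ holds for all $\gamma\in\Delta(x)$; feeding this into the monotonicity of $t\mapsto\langle\Phi(\varphi^t_{\mathrm{grad}f}(x)),\alpha\rangle$ along the flow and passing to the limit $\Phi(y_\infty)\in K\cdot\beta$ forces $\|\beta\|\le\|\alpha\|=\mathrm{dist}(0,\Delta(x))$. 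Combined with $\beta\in\Delta(x)$ and uniqueness of the nearest point in a convex set, this yields $\beta=\alpha$, which is the assertion. The delicate bookkeeping will be controlling the flow near the boundary of $\overline{G\cdot x}$, where $G$ is non-compact and trajectories can degenerate to lower-dimensional orbits; I expect to handle this by working with the proper, convex Kempf--Ness function on $G/K$ rather than on $X$ directly, so that convergence and the global-minimum property follow from standard convex-analytic facts.
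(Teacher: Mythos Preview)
The paper does not prove this proposition; it is quoted directly from Kirwan \cite[6.2]{Kirwan1984} and then \emph{applied} to finish the proof of Proposition~\ref{eigenvectorprop} (the sentence ``we have proved the claim of proposition~\ref{mindistprop}'' a few lines later is evidently a slip for~\ref{eigenvectorprop}). So there is no in-paper argument to compare your sketch against; you are in effect reconstructing a piece of Kirwan's Morse theory.

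Your outline is in the right spirit and the easy halves are fine, but the crux step has two gaps. First, the asserted monotonicity of $t\mapsto\langle\Phi(\varphi^t(x)),\alpha\rangle$ along the negative gradient flow of $f=\|\Phi\|^2$ is not justified: the flow decreases $\|\Phi\|^2$, not a fixed linear functional of $\Phi$, and differentiating gives a term of the form $\omega\bigl(i\Phi(x_t)_{x_t},\alpha_{x_t}\bigr)$ with no evident sign. Second, convexity of the Kempf--Ness function on $G/K$ controls $f$ along the \emph{single orbit} $G\cdot x$, not on boundary strata of its closure; you correctly flag this as the delicate point, but ``I expect to handle this'' is not yet an argument. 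Kirwan's actual proof in Chapter~6 does not proceed analytically in this way: she identifies each Morse stratum $W^S(C_\beta)$ with an algebraically defined stratum $S_\beta$ (built from the parabolic $P_\beta$ acting on the blade $Y_\beta$), and the proposition is a consequence of that identification together with the Hilbert--Mumford/Hesselink description of $S_\beta$ via optimal destabilizing one-parameter subgroups. If you want a purely analytic proof closer to your sketch, the clean ingredient to invoke instead of the unproven monotonicity is Ness's theorem \cite{Ness1984} that $f$ restricted to any single $G$-orbit has at most one critical value, namely its infimum; combined with the fact that the gradient trajectory remains in $G\cdot x$ for all finite $t$ and that $\inf_{G\cdot x}f=\inf_{\overline{G\cdot x}}f$ by continuity, this forces $f(y_\infty)=\min_{\overline{G\cdot x}}f$ and hence $\|\beta\|=\mathrm{dist}(0,\Delta(x))$.
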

Assume now that $x$ is a critical point of $f$ with $\Phi(x) \in \mathfrak{t}^*$. Now $\Phi(C_{\beta})\cap \mathfrak{t^+}$ consists of a single element since the action of $K$ on the reduced density matrices can be interpreted as a local change of basis, thus leaving the eigenvalues invariant, and there is only a single element in $t^*$ for a given set of eigenvalues. We conclude, using Proposition $\ref{critpts}$, that there exists a $U\in K$ such that $\Phi(U \cdot x) = \beta$. A fortiori this means $x \in W^S(C_{\beta})$, and therefore $\beta$ is the closest point to the origin in the entanglement polytope of $x$. Since $\Phi(U \cdot x ) = \beta$, we have proved the claim of proposition \ref{mindistprop}. \\
If we have found the closest point $\lambda^*$ to the origin $O$, we automatically get an inequality for the polytope using convexity and the scalar product in $Lie(K)$, namely, for all $\lambda$ in the polytope,
$$\langle\lambda - O, \lambda^*-O\rangle \geq \langle\lambda^*-O, \lambda^*-O\rangle$$ 
However, it is a priori unclear how to guess a state satisfying \eqref{psieigenvector} . In some cases, the following method can be used.
 
\section{Free Quantum states}
Let $\mathcal{H} = \C^{d_1} \otimes \ldots \otimes \C^{d_N}$ and $\rho = \ket{\psi}\bra{\psi} \in \End(\mathcal{H})$
be a pure state. Let
\[ \ket{\mathbf{j}}:= \ket{j_1 j_2 \ldots j_N} = \ket{j_1} \otimes \ket{j_2} \otimes \ldots \otimes  \ket{j_N} \]
 where $\mathbf{j}=j_1\ldots j_N$ ranges over $j_i = 0,1,\ldots, n-1; \quad i=1, \ldots, N$, denote the canonical basis for $\mathcal{H}$. We then can expand our stat
\[ \ket{\psi} = \sum_{\mathbf{j}}c_{\mathbf{j}}\ket{\mathbf{j}}=\sum_{j_1=0}^{d_1-1}\cdots\sum_{j_N=0}^{d_N-1}c_{j_1j_2\ldots j_n}\ket{j_1\cdots j_N} \] 
in this basis, the normalisation condition tells us that $\sum_{\mathbf{j}}|c_{\mathbf{j}}|^2 = 1$.

\begin{defn}  If $\ket{\psi} = \sum c_{\mathbf{j}}\ket{\mathbf{j}}$ then the set $ \supp(\ket{\psi}):= \{\mathbf{j}:c_{\mathbf{j}}\neq 0\}$ is called the \emph{support} of $\ket{\psi}$.\end{defn} 
 
\begin{defn} A pure state $\ket{\psi}$ is said to be \emph{free} if for all $\mathbf{j},\mathbf{j'}$ such that $\ket{\mathbf{j}}, \ket{\mathbf{j'}} \in \supp(\ket{\psi})$ there exist distinct indices $i_1$ and $i_2$ such that $j_{i_1} \neq j'_{i_1}$ and $j_{i_2} \neq j'_{i_2}$,  \end{defn}
or in words: If we expand $\ket{\psi}$ in a basis, every two appearing basis vectors differ in at least two "slots". 
\begin{rem}
This is a special case of the property of a vector to have \emph{no adjacent weights} (since our weight spaces are generated by the computational basis vectors), which is a standard tool to compute inner approximations to polytopes, see \cite{Sjamaar1998} and \cite{Smirnov2004}.
\end{rem}
\begin{prop}[cf \cite{Sjamaar1998}]\label{prop:free1} Let $\ket{\psi} \in P(\mathcal{H})$ be free
 and $\rho = \ket{\psi}$. Then all its reduced density matrices
 \[ \rho^{(i)}=\tr_{l \neq i}\rho \] are diagonal in the computational (canonical) basis .
In this case,
\[\rho^{(i)} = \sum_{\mathbf{j \in \supp(\ket{\psi})}}|c_{\mathbf{j}}|^2\ketbra{j_i}\] 
\end{prop}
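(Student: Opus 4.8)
The plan is to compute $\rho^{(i)}$ directly from the expansion of $\ket{\psi}$ in the computational basis and to use freeness to eliminate all off-diagonal contributions.

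First I would write
$$\rho = \ketbra{\psi} = \sum_{\mathbf{j},\mathbf{j'}} c_{\mathbf{j}}\,\overline{c_{\mathbf{j'}}}\;\ket{\mathbf{j}}\bra{\mathbf{j'}},$$
where both sums range over $\supp(\ket{\psi})$. Taking the partial trace over all systems $l \neq i$ and using orthonormality of the computational basis, $\braket{j'_l|j_l} = \delta_{j_l j'_l}$, one gets
$$\tr_{l\neq i}\big(\ket{\mathbf{j}}\bra{\mathbf{j'}}\big) = \Big(\prod_{l\neq i}\delta_{j_l j'_l}\Big)\ket{j_i}\bra{j'_i},$$
so that
$$\rho^{(i)} = \sum_{\mathbf{j},\mathbf{j'}} c_{\mathbf{j}}\,\overline{c_{\mathbf{j'}}}\Big(\prod_{l\neq i}\delta_{j_l j'_l}\Big)\ket{j_i}\bra{j'_i}.$$
Hence only those pairs $(\mathbf{j},\mathbf{j'})$ contribute for which $j_l = j'_l$ for every $l \neq i$; in other words, $\mathbf{j}$ and $\mathbf{j'}$ agree in all slots except possibly the $i$-th, so they differ in \emph{at most one} slot.

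Next I would invoke freeness: by definition, any two basis vectors $\ket{\mathbf{j}},\ket{\mathbf{j'}}$ in the support with $\mathbf{j}\neq\mathbf{j'}$ differ in \emph{at least two} slots. Combined with the previous observation, the surviving terms are exactly those with $\mathbf{j} = \mathbf{j'}$, and we obtain
$$\rho^{(i)} = \sum_{\mathbf{j}\in\supp(\ket{\psi})} |c_{\mathbf{j}}|^2\,\ketbra{j_i},$$
which is manifestly diagonal in the computational basis, proving both assertions. (By the normalisation $\sum_{\mathbf{j}}|c_{\mathbf{j}}|^2 = 1$ this also has trace $1$, as it must.)

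There is essentially no serious obstacle here: the argument is a one-line partial-trace computation followed by a direct appeal to the definition of freeness. The only point deserving a little care is the bookkeeping in the partial trace — making sure that tracing out system $l$ produces exactly the scalar factor $\braket{j'_l|j_l}$ while leaving the operator $\ket{j_i}\bra{j'_i}$ on system $i$ untouched — which is immediate once $\ket{\mathbf{j}}\bra{\mathbf{j'}}$ is written as the tensor product $\bigotimes_l \ket{j_l}\bra{j'_l}$.
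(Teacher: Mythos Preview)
Your proof is correct and follows essentially the same approach as the paper: expand $\rho$ in the computational basis, compute the partial trace to get a product of Kronecker deltas on the traced-out slots, and then use freeness to conclude that only the diagonal terms $\mathbf{j}=\mathbf{j'}$ survive. The paper carries this out for $i=1$ ``for simplicity'' and then invokes the same argument for general $i$, whereas you write it uniformly for arbitrary $i$; otherwise the arguments are identical.
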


\begin{proof}
 Let $\ket{\psi} \in P(\mathcal{H}).$ Expanding $\ket{\psi}$ in the computational basis, we have that 
 \[ \rho = \ketbra{\psi} = \sum_{\mathbf{j},\mathbf{k}}c_{\mathbf{j}}c_{\mathbf{k}}^*\ket{\mathbf{j}}\bra{\mathbf{k}}\]
 Now let's look at $\rho^{(1)}$ for simplicity. Then we can compute the reduced density matrix to be, using linearity of the partial trace
 \[ \rho^{(1)}= \sum_{\mathbf{j},\mathbf{k}}c_{\mathbf{j}}c_{\mathbf{k}}^* \tr_{2\ldots n} \ket{\mathbf{j}}\bra{\mathbf{k}}
 =\sum_{\mathbf{j},\mathbf{k}}c_{\mathbf{j}}c_{\mathbf{k}}^*\tr_{2\ldots n}\ket{j_1\ldots j_N} \bra{k_1 \ldots k_N} 
= \sum_{\mathbf{j},\mathbf{k}}c_{\mathbf{j}}c_{\mathbf{k}}^*\ket{j_1}\bra{k_1}\braket{j_2\ldots j_N|k_2 \ldots k_N}\]
where the second equality follows from the definition of the partial trace. Now if $\ket{\psi}$ is free, every $\mathbf{j}$ and $\mathbf{k}$ must differ in at least \emph{two} slots, we have that $\braket{j_2\ldots j_N|k_2 \ldots k_N} = \delta_{\mathbf{j}\mathbf{k}}$. Indeed if $\mathbf{j} \neq \mathbf{k}$ then by freeness there exists at least one $i \in {2,\ldots,N}$ with $j_i \neq k_i$ so that $\braket{j_2\ldots j_N|k_2 \ldots k_N} = 0$. Hence only the terms with $\mathbf{j} = \mathbf{k}$ contribute to the above sum and we get that
\[ \rho^{(1)}=\sum_{\mathbf{j}}|c_{\mathbf{j}}|^2 \ketbra{j_1}\]
is diagonal in the computational basis. The same argument leads to the formula in the proposition.\end{proof}
This proposition implies in particular that the image of a free state under the moment map $\Phi$ lies in $Lie(T)^*$ (since all reduced density matrices are diagonal). Therefore, for free states $\ket{\psi}$, $\Phi_T(\ket{\psi}) = \Phi(\ket{\psi})$. The theorem by Atiyah quoted above says $\Phi_T(X) = \mathrm{conv}(\mathds{A})$. So, the convex hull of the images $K(\ket{\psi}):=\mathrm{conv}(\Phi_T(\{\ket{j}:\ket{j}\in\supp{\ket{\psi}}\}))$ lies inside $\Phi_T(X) \subseteq \Phi(X)$, therefore $K(\ket{\psi}) \cap \mathfrak{t}^+$ will lie inside the generic polytope. Even more is true:
\begin{prop}\label{prop:free2}
Let $\ket{\psi} = \sum_{\mathbf{j}}c_{\mathbf{j}}\ket{\mathbf{j}}$ be a free state. Then 
\[K(\ket{\psi}):=\mathrm{conv}(\Phi_T(\{\ket{\mathbf{j}}:\ket{\mathbf{j}}\in\supp{\ket{\psi}}\})) \cap \mathfrak{t}^+ \subseteq \Delta_{\ket{\psi}} = \Phi(\overline{G\cdot\ket{\psi}})\cap\mathfrak{t}^+\ ,\] i.e. the set $K(\ket{\psi})$ is completely contained in the entanglement polytope belonging to the entanglement class of $\ket{\psi}$. 
\end{prop}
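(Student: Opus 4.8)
The plan is to prove the a priori stronger inclusion
$\mathrm{conv}\bigl(\Phi_T(\{\ket{\mathbf{j}}:\mathbf{j}\in\supp(\ket{\psi})\})\bigr)\subseteq \Phi(\overline{G\cdot\ket{\psi}})$
in $Lie(K)^*$; intersecting both sides with $\mathfrak{t}^+$ then gives the proposition, since by definition $\Phi(\overline{G\cdot\ket{\psi}})\cap\mathfrak{t}^+=\Delta_{\ket{\psi}}$. The key geometric object is the closure $\overline{T\cdot\ket{\psi}}$ of the orbit of the \emph{complex} maximal torus $T\leq G$, which is contained in $\overline{G\cdot\ket{\psi}}$; the idea is that $\overline{T\cdot\ket{\psi}}$ is a (toric) subvariety whose moment image is exactly the convex hull we want. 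This is, in essence, the ``no adjacent weights'' argument of \cite{Sjamaar1998} specialised to the present setting, which is why freeness enters.

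First I would observe that every state in $\overline{T\cdot\ket{\psi}}$ is again free. Acting by a diagonal element of $T$ only rescales the coefficients $c_{\mathbf{j}}$, and passing to a limit can only kill some of them; hence every $\ket{w}\in\overline{T\cdot\ket{\psi}}$ has $\supp(\ket{w})\subseteq\supp(\ket{\psi})$, and a subset of a free support is free. By Proposition \ref{prop:free1} this means the reduced density matrices of such a $\ket{w}$ are diagonal, so $\Phi(\ket{w})=\Phi_T(\ket{w})=\sum_{\mathbf{j}\in\supp(\ket{w})}|w_{\mathbf{j}}|^2\,\Phi_T(\ket{\mathbf{j}})$. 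In particular $\Phi(\overline{T\cdot\ket{\psi}})=\Phi_T(\overline{T\cdot\ket{\psi}})$ and it is contained in $\mathrm{conv}\bigl(\Phi_T(\{\ket{\mathbf{j}}:\mathbf{j}\in\supp(\ket{\psi})\})\bigr)$.

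Next I would establish the reverse inclusion by showing that each basis vector $\ket{\mathbf{j}^0}$ with $\mathbf{j}^0\in\supp(\ket{\psi})$ lies in $\overline{T\cdot\ket{\psi}}$. For this one writes down an explicit one-parameter subgroup: let $g(t)\in T$ act on the $i$-th tensor factor by the diagonal matrix with entry $t^{\,d_i-1}$ in the slot $j^0_i$ and entry $t^{-1}$ in the remaining $d_i-1$ slots (this is in $SL(d_i)$). Then $g(t)\cdot\ket{\mathbf{j}}=t^{e(\mathbf{j})}\ket{\mathbf{j}}$ with $e(\mathbf{j}^0)=\sum_i(d_i-1)$, and for any other $\mathbf{j}$ in the support, differing from $\mathbf{j}^0$ in a nonempty set $I$ of slots, $e(\mathbf{j})=e(\mathbf{j}^0)-\sum_{i\in I}d_i<e(\mathbf{j}^0)$; hence $\lim_{t\to\infty}[g(t)\cdot\ket{\psi}]=[\ket{\mathbf{j}^0}]$. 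Moreover the $T$-fixed points of $\overline{T\cdot\ket{\psi}}$ are exactly these $\ket{\mathbf{j}^0}$ (the orbit closure lies in $P(\mathrm{span}\{\ket{\mathbf{j}}:\mathbf{j}\in\supp(\ket{\psi})\})$). Invoking Atiyah's torus convexity theorem \cite{Atiyah1982} for the compact connected $T$-space $\overline{T\cdot\ket{\psi}}$ — so that $\Phi_T(\overline{T\cdot\ket{\psi}})$ is a convex polytope with vertices among the $\Phi_T(\ket{\mathbf{j}^0})$ — and combining with the previous paragraph yields $\Phi(\overline{T\cdot\ket{\psi}})=\mathrm{conv}\bigl(\Phi_T(\{\ket{\mathbf{j}}:\mathbf{j}\in\supp(\ket{\psi})\})\bigr)$. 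Since $\overline{T\cdot\ket{\psi}}\subseteq\overline{G\cdot\ket{\psi}}$, intersecting with $\mathfrak{t}^+$ finishes the proof.

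The step I expect to require the most care is the last identification of $\Phi_T(\overline{T\cdot\ket{\psi}})$ with $\mathrm{conv}(\Phi_T(\supp))$: one is applying Atiyah's theorem to a torus orbit closure, which is a (generally singular) projective toric variety rather than a smooth symplectic manifold. This is harmless — one may either quote the version of the convexity theorem valid for singular spaces, or pull everything back along a $T$-equivariant toric resolution $\widetilde{X}\to\overline{T\cdot\ket{\psi}}$, which does not change the moment image — but it should be addressed rather than glossed over. The remaining points are routine: the bookkeeping of the identification $\Phi\leftrightarrow\Phi_T$ (the trace shift and the $i$-factor relating hermitian reduced density matrices to elements of $Lie(K)$), and the fact that $\supp(\ket{w})\subseteq\supp(\ket{\psi})$ for all $\ket{w}\in\overline{T\cdot\ket{\psi}}$, both already implicit in the discussion preceding the statement.
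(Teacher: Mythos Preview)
Your proof is correct and shares the paper's core step: both show that each basis vector $\ket{\mathbf{j}^0}$ in the support lies in the orbit closure via an explicit one-parameter subgroup of the torus. The paper uses the subgroup generated by $\Phi(\ketbra{\mathbf{k}})-\Phi(\ketbra{\psi})$ and relies on freeness to verify that the coefficient of $\ket{\mathbf{k}}$ dominates in the limit; your $g(t)$ is simpler and does not need freeness at this step --- freeness enters in your argument only through the identification $\Phi=\Phi_T$ on $\overline{T\cdot\ket{\psi}}$ via Proposition~\ref{prop:free1}. The more substantive difference is in the convexity step: the paper writes ``by convexity, it is enough'' and leaves the passage from the individual vertices to the full convex hull implicit (later, in the remark after Theorem~\ref{thm:magiclemma}, it gives a direct preimage argument that works under an extra dimension hypothesis), whereas you invoke Atiyah's torus convexity theorem for $\overline{T\cdot\ket{\psi}}$, which is more careful and needs no such hypothesis. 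Your caveat about applying Atiyah to a possibly singular torus orbit closure is well placed; for such closures in projective space the identification of the moment image with the weight polytope is standard, and your suggested fix via an equivariant toric resolution is perfectly adequate.
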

\begin{proof}
By convexity, it is enough to show that $\ket{\mathbf{j}} \in \overline {G\cdot \ket{\psi}} $ for all $\ket{\mathbf{j}}\in \supp(\ket{\psi})$. This can be seen as follows: Let $\ket{\mathbf{k}} \in\supp(\ket{\psi}) $ and look at the one-parameter subgroup $H_{\mathbf{k}}$ generated by the  difference $Z_{\mathbf{k}}:=\Phi(\ketbra{\mathbf{k}})-\Phi(\ket{\psi})$ which we interpret as an element of $Lie(K)$ by identifying $Lie(K) \cong Lie(K)^*$. By \ref{prop:free1} all the reduced density matrices are diagonal 
$$\rho^{(i)} = \sum_{l=1}^{d_i}\sum_{\mathbf{j}:j_i = l} |c_{\mathbf{j}}|^2 \ketbra{l}_{i}$$
so that the exponential is simply 
$$\exp(-t\rho^{(i)}) =  \sum_{l=1}^{d_i}\exp\left(-t\sum_{\mathbf{j}:j_i = l} |c_{\mathbf{j}}|^2\right) \ketbra{l}_{i}$$
Therefore, 
\begin{align*} 
e^{-t\rho^{(1)}}\otimes\cdots\otimes e^{-t\rho^{(N)}} \ket{\mathbf{m}} &=  e^{-t\rho^{(1)}}\otimes\cdots\otimes e^{-t\rho^{(N)}}\ket{m_1}\otimes\cdots\otimes\ket{m_N} \\ &= \sum_{l=1}^{d_1}\exp\left(-t\sum_{\mathbf{j}:j_1 = l} |c_{\mathbf{j}}|^2\right)   \ketbra{l}_{1}\ket{m_1}\otimes\cdots\otimes\sum_{l=1}^{d_N}\exp\left(-t\sum_{\mathbf{j}:j_N = l} |c_{\mathbf{j}}|^2\right) \ketbra{l}_{N}\ket{m_N} \\
&=\prod_{i=1}^{N} \exp\left(-t\sum_{\mathbf{j}:j_i = m_i}|c_{\mathbf{j}}|^2\right)\ket{\mathbf{m}}\\
&= \exp\left(-t\sum_{i=1}^N\sum_{\mathbf{j}:j_i = m_i}|c_{\mathbf{j}}|^2\right)\ket{\mathbf{m}} \\
&= \exp\left(-ta_{\mathbf{m}}\right)\ket{\mathbf{m}}
\end{align*}
where we shortened $1 \geq \sum_{i=1}^N\sum_{\mathbf{j}:j_i = m_i}|c_{\mathbf{j}}|^2 =:a_{\mathbf{m}} >0$.
Furthermore, let $\sigma = \ketbra{\mathbf{k}}$.Then we have that $\sigma^{(i)} = \mathrm{diag}(e_{k_i})$ and $\exp(t\sigma^{(i)})=\mathrm{diag}(1,1,\ldots,1,\underbrace{e^t}_{k_i},1,\ldots,1)$. This means that 
$$e^{t\sigma^{(1)}}\otimes\cdots\otimes e^{t\sigma^{(N)}} \ket{\mathbf{m}} =  e^{t\sigma^{(1)}}\otimes\cdots\otimes e^{t\sigma^{(N)}}\ket{m_1}\otimes\cdots\otimes\ket{m_N} = \exp\left(t\sum_{i=1}^N{\delta_{k_im_i}}\right)\ket{\mathbf{m}}$$
Now the action of $H_{\mathbf{k}}$ on $\ket{\psi}$ is, using the above
\begin{align*}
\exp(t\Phi(\sigma)-t\Phi(\ket{\psi}))\cdot \ket{\psi} &= \exp(t\Phi(\sigma)-t\Phi(\ket{\psi}))\cdot \sum_{\mathbf{m}}c_{\mathbf{m}}\ket{\mathbf{m}} \\
&=\sum_{\mathbf{m}}c_{\mathbf{m}}\frac{\left(e^{t\sigma^{(1)}}\otimes\cdots\otimes e^{t\sigma^{(N)}}\right)\left(e^{-t\rho^{(1)}}\otimes\cdots\otimes e^{-t\rho^{(N)}}\right)\ket{\mathbf{m}}}{\|\sum_{\mathbf{m}}c_{\mathbf{m}}\left(e^{t\sigma^{(1)}}\otimes\cdots\otimes e^{t\sigma^{(N)}}\right)\left(e^{-t\rho^{(1)}}\otimes\cdots\otimes e^{-t\rho^{(N)}}\right)\ket{\mathbf{m}}\|} \\
&=\sum_{\mathbf{m}}c_{\mathbf{m}}\frac{\left(e^{t\sigma^{(1)}}\otimes\cdots\otimes e^{t\sigma^{(N)}}\right)\exp(-ta_{\mathbf{m}})\ket{\mathbf{m}}}{\|\sum_{\mathbf{m}}c_{\mathbf{m}}\left(e^{t\sigma^{(1)}}\otimes\cdots\otimes e^{t\sigma^{(N)}}\right)\exp(-ta_{\mathbf{m}})\ket{\mathbf{m}}\|} \\
&= \sum_{\mathbf{m}}c_{\mathbf{m}} \frac {\exp(t(\sum_{i=1}^N \delta_{k_im_i}-a_{\mathbf{m}}))\ket{\mathbf{m}}}{\|\sum_{\mathbf{m}}c_{\mathbf{m}} \exp(t(\sum_{i=1}^N \delta_{k_im_i}-a_{\mathbf{m}}))\ket{\mathbf{m}}\|}
\end{align*}
Now as $t \to \infty$, we see that the denominator is dominated by the coefficient which belongs to $\mathbf{k}$, namely $e^{t(N-a_k)}$, because by our freeness assumption $\sum_{i=1}^N \delta_{k_im_i} \leq N-2$ for all vectors $\ket{\mathbf{m}} \neq \ket{ \mathbf{j}}$. Therefore, in the sum all summands except the one belonging to $\ket{\mathbf{k}}$ will be exponentially suppressed, so that 
$$\lim_{t \to \infty}\exp(t(\Phi(\sigma)-\Phi(\ket{\psi})))\cdot \ket{\psi} = \sigma = \ketbra{\mathbf{k}}$$ 
which means $\ket{\mathbf{k}} \in \overline{G \cdot \ket{\psi}}$ and our claim is proven. 
\end{proof}
\begin{rem} This is nothing but an analytical verification of the more abstract argument given in chapter 7 of \cite{Sjamaar1998}.
\end{rem}
There is a useful application to these results. If we have a free quantum state $\ket{\psi}$, we can try to find a point in its convex set $ \ket{\psi}' \in K(\ket{\psi})$ which minimises distance to the origin in the full entanglement polytope of $\ket{\psi}$. By convexity, this implies that the entanglement polytope of $\ket{\psi}$ must be contained in the halfspace defined by the hyperplane orthogonal (with respect to the scalar product on $Lie(K)$) to the vector distance vector from $\Phi(\ket{\psi}')$ to the origin. This results in the following 
\begin{prop}[Criterion for minimal distance]\label{prop:mindistcrit}
Let $\ket{\psi} = \sum_{\mathbf{j}}c_{\mathbf{j}}\ket{\mathbf{j}}$ be a free quantum state. Then $\Phi(\ketbra{\psi})$ minimises the distance to the origin in $\Delta_{\ket{\psi}}$ if and only if there exists a $\lambda \in \C $ such that for all $\mathbf{j} \in \supp(\ket{\psi})$ we have that
$$\sum_{l=1}^n\sum_{\mathbf{k} \in \supp(\ket{\psi}):k_l = j_l}|c_\mathbf{k}|^2 = \lambda \label{eq:mindistcrit}$$
\end{prop}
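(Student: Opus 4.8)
The plan is to recognise the displayed identity \eqref{eq:mindistcrit} as nothing but the eigenvector equation \eqref{psieigenvector} for this particular state, and then to shuttle between the three statements ``$\ket{\psi}$ is an eigenvector of $X_{\rho}$'', ``$\ket{\psi}$ is a critical point of $f=\|\Phi\|^{2}$'', and ``$\Phi(\ketbra{\psi})$ minimises the distance to the origin in $\Delta_{\ket{\psi}}$'', using Propositions~\ref{eigenvectorprop} and~\ref{mindistprop} together with the gradient-flow description recalled just before them.

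\textbf{Step 1 (the algebraic translation).} Since $\ket{\psi}$ is free, Proposition~\ref{prop:free1} gives $\rho^{(l)}=\sum_{\mathbf{k}\in\supp(\ket{\psi})}|c_{\mathbf{k}}|^{2}\ketbra{k_{l}}$, so each $\rho^{(l)}$, and hence $X_{\rho}$ itself, is diagonal in the computational basis, with $X_{\rho}\ket{\mathbf{j}}=a_{\mathbf{j}}\ket{\mathbf{j}}$ where $a_{\mathbf{j}}:=\sum_{l=1}^{N}\sum_{\mathbf{k}\in\supp(\ket{\psi}):\,k_{l}=j_{l}}|c_{\mathbf{k}}|^{2}$ is exactly the quantity $a_{\mathbf{m}}$ that appeared in the proof of Proposition~\ref{prop:free2}. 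Hence $X_{\rho}\ket{\psi}=\sum_{\mathbf{j}\in\supp(\ket{\psi})}c_{\mathbf{j}}a_{\mathbf{j}}\ket{\mathbf{j}}$; as the $\ket{\mathbf{j}}$ with $\mathbf{j}\in\supp(\ket{\psi})$ are linearly independent and the $c_{\mathbf{j}}$ are nonzero, this is proportional to $\ket{\psi}$ iff all the $a_{\mathbf{j}}$, $\mathbf{j}\in\supp(\ket{\psi})$, coincide. Thus \eqref{eq:mindistcrit} is equivalent to \eqref{psieigenvector}, and it remains to prove that \eqref{psieigenvector} holds iff $\Phi(\ketbra{\psi})$ is a closest point to the origin in $\Delta_{\ket{\psi}}$.

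\textbf{Step 2 (the two implications).} For ``$\Leftarrow$'': assuming \eqref{psieigenvector}, Proposition~\ref{eigenvectorprop} already says that $\Phi(K\cdot\ket{\psi})\cap\Delta_{\ket{\psi}}$ is the unique closest point to the origin; since $\ket{\psi}$ is free, $\Phi(\ketbra{\psi})$ is diagonal, so its $K$-orbit meets the Weyl chamber $\mathfrak{t}^{+}$ in the single point $\beta$ obtained by sorting the eigenvalues of the $\rho^{(l)}$, this $\beta=\Phi(U\cdot\ketbra{\psi})$ for a tuple $U$ of permutation unitaries and hence lies in $\Delta_{\ket{\psi}}$, and $\|\beta\|=\|\Phi(\ketbra{\psi})\|$, so $\Phi(\ketbra{\psi})$ has minimal norm in $\Delta_{\ket{\psi}}$. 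For ``$\Rightarrow$'': let $\beta^{*}\in\Delta_{\ket{\psi}}$ be the unique closest point to the origin, so the hypothesis reads $f(\ket{\psi})=\|\Phi(\ketbra{\psi})\|^{2}=\|\beta^{*}\|^{2}$. By Proposition~\ref{mindistprop}, $\ket{\psi}\in W^{S}(C_{\beta^{*}})$, so the negative gradient trajectory of $f$ issuing from $\ket{\psi}$ converges to some $y\in C_{\beta^{*}}=K\cdot(Z_{\beta^{*}}\cap\Phi^{-1}(\beta^{*}))$, for which $\Phi(y)$ lies on the coadjoint orbit of $\beta^{*}$ and hence $f(y)=\|\beta^{*}\|^{2}=f(\ket{\psi})$; since $f$ is non-increasing along this trajectory and returns to its initial value, it is constant there, so $\mathrm{grad}\,f(\ket{\psi})=0$. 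By the identity $\mathrm{grad}\,f(\rho)=2i\,\Phi(\rho)_{\rho}$ established in the proof of Proposition~\ref{eigenvectorprop}, this vanishing forces the one-parameter group $t\mapsto e^{tX_{\rho}}$ to fix the line $\ket{\psi}$, i.e.\ $e^{tX_{\rho}}\ket{\psi}$ is proportional to $\ket{\psi}$ for all $t$; as $X_{\rho}$ is Hermitian this means $\ket{\psi}$ is an eigenvector of $X_{\rho}$, which is \eqref{psieigenvector}, closing the circle.

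\textbf{Main obstacle.} The algebra of Step~1 and the ``$\Leftarrow$'' half are routine. The genuine work is the ``$\Rightarrow$'' half: one starts from a statement purely about the image point $\Phi(\ketbra{\psi})$ inside the polytope and must convert it into the ``upstairs'' statement that $\ket{\psi}$ is a critical point of $f$. This is precisely where Kirwan's description of the stable manifolds (Proposition~\ref{mindistprop}) together with the monotonicity of $f$ along the gradient flow is indispensable; the only extra ingredient is the elementary remark that a one-parameter group of Hermitian operators stabilising a line must act on it through a single eigenvalue.
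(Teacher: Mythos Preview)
Your Step~1 is precisely the paper's proof: the author simply shows that the displayed condition is equivalent to the eigenvector equation $X_{\rho}\ket{\psi}=\lambda\ket{\psi}$, by computing $M_i\ket{\mathbf{j}}$ from Proposition~\ref{prop:free1}, and then declares the proposition proven.

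Your Step~2 goes beyond the paper. The paper tacitly treats ``minimises distance'' and ``$\ket{\psi}$ satisfies \eqref{psieigenvector}'' as interchangeable, but in fact only the implication \eqref{psieigenvector}~$\Rightarrow$~``closest point'' is established earlier (Proposition~\ref{eigenvectorprop}); the converse is never argued. You supply that converse via a clean gradient-flow argument: $\ket{\psi}\in W^{S}(C_{\beta^*})$ by Proposition~\ref{mindistprop}, the flow limit has $f$-value $\|\beta^*\|^{2}=f(\ket{\psi})$, monotonicity of $f$ along the flow forces $\mathrm{grad}\,f(\ket{\psi})=0$, and the explicit form of the gradient then yields the eigenvector equation. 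This is correct; one small cosmetic point is that you need not pass through ``the one-parameter group fixes the line for all $t$''---vanishing of the tangent vector at $t=0$ already gives $X_{\rho}\ketbra{\psi}+\ketbra{\psi}X_{\rho}=2\braket{\psi|X_{\rho}|\psi}\ketbra{\psi}$, which applied to $\ket{\psi}$ is exactly \eqref{psieigenvector}. So your proof is the same core computation as the paper's, plus a genuine justification of the ``only if'' half that the paper omits.
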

\begin{proof}
We have to show that this is equivalent  to $X_{\rho}\ket{\psi} = \lambda\ket{\psi}$. Let $M_i$ be the operator $\mathds{1}\otimes\ldots\otimes\mathds{1}\otimes\rho^{(i)}\otimes\mathds{1}\ldots\otimes\mathds{1}$.  Then, by Proposition \ref{prop:free1}, 
\begin{equation*}M_i \ket{\mathbf{j}} = \left(\mathds{1}\otimes\ldots\otimes\mathds{1}\otimes\left(\sum_{\mathbf{k} \in \supp(\ket{\psi})}|c_{\mathbf{k}}|^2\ketbra{k_i}\right)\otimes\mathds{1}\ldots\otimes\mathds{1}\right)\ket{\mathbf{k}} = \sum_{\mathbf{k}\in \supp(\ket{\psi}):k_i = j_i}|c_{\mathbf{k}}|^2\ket{\mathbf{j}}
\end{equation*}
Since $X_{\rho} = \sum_{l=1}^n M_l$, the proposition follows. 
\end{proof}
One can further simplify this equations if one rewrites them in matrix-vector form. 
\begin{cor}\label{cor:mindist}
For a free state $\ket{\psi}$ let $m= \#\supp(\ket{\psi})$ and  label the product states $\ket{\mathbf{j}}$ in $\supp(\ket{\psi})$ as $\{\mathbf{j}^{(1)},\ldots,\mathbf{j}^{(m)}$. Define an $m \times m$ Matrix $A = A(\ket{\psi})$ by 
$$A = (a_{ik})= \left( \# \{l|\mathbf{j}^{(i)}_l=\mathbf{j}^{(k)}_l\}\right),$$
i.e. the $ik$-th entry is the number of slots in which $\mathbf{j}^{(i)}$ and $\mathbf{j}^{(k)}$ coincide. Then \ref{eq:mindistcrit} is equivalent to $$A \begin{pmatrix} |c_{\mathbf{j}^{(1)}}|^2 \\ \vdots \\ |c_{\mathbf{j}^{(m)}}|^2\end{pmatrix} = \lambda\begin{pmatrix} 1 \\ \vdots \\ 1\end{pmatrix}$$
\end{cor}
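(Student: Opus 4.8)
The plan is to observe that \eqref{eq:mindistcrit} is nothing but the matrix equation of the Corollary written out one row at a time, so the entire argument amounts to interchanging the order of the two sums appearing in \eqref{eq:mindistcrit} and identifying the resulting combinatorial count with the entries of $A$.

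Concretely, fix $\mathbf{j} = \mathbf{j}^{(i)} \in \supp(\ket{\psi})$ and rewrite the left-hand side of \eqref{eq:mindistcrit}. Rather than first fixing a slot $l$ and then summing $|c_{\mathbf{k}}|^2$ over all $\mathbf{k} \in \supp(\ket{\psi})$ that agree with $\mathbf{j}^{(i)}$ in that slot, sum over $\mathbf{k} \in \supp(\ket{\psi})$ first, weighting each $|c_{\mathbf{k}}|^2$ by the number of slots in which $\mathbf{k}$ and $\mathbf{j}^{(i)}$ agree:
\[ \sum_{l=1}^n \sum_{\substack{\mathbf{k} \in \supp(\ket{\psi}) \\ k_l = j^{(i)}_l}} |c_{\mathbf{k}}|^2 \;=\; \sum_{\mathbf{k} \in \supp(\ket{\psi})} \#\{\,l : k_l = j^{(i)}_l\,\}\,|c_{\mathbf{k}}|^2 . \]
This interchange is legitimate because all the sums are finite. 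Writing $\mathbf{k} = \mathbf{j}^{(t)}$ with $t$ ranging over $1, \ldots, m$, the count $\#\{\,l : \mathbf{j}^{(t)}_l = \mathbf{j}^{(i)}_l\,\}$ is by definition exactly the entry $a_{it}$ of $A$, so the right-hand side above equals $\sum_{t=1}^m a_{it}\,|c_{\mathbf{j}^{(t)}}|^2$, which is precisely the $i$-th component of the vector $A\,(|c_{\mathbf{j}^{(1)}}|^2, \ldots, |c_{\mathbf{j}^{(m)}}|^2)^{\mathsf{T}}$.

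It then remains only to note that \eqref{eq:mindistcrit} demands this quantity equal $\lambda$ for every $\mathbf{j} \in \supp(\ket{\psi})$, i.e.\ for every $i \in \{1, \ldots, m\}$, and this system of $m$ scalar equations is exactly $A\,(|c_{\mathbf{j}^{(1)}}|^2, \ldots, |c_{\mathbf{j}^{(m)}}|^2)^{\mathsf{T}} = \lambda\,(1, \ldots, 1)^{\mathsf{T}}$. There is essentially no obstacle here: the one point demanding a moment's care is checking that the combinatorial reindexing indeed matches the definition of the matrix entries $a_{ik}$. One may additionally remark in passing that $A$ is symmetric, since ``agreement in a slot'' is a symmetric relation, and has constant diagonal $a_{ii} = n$ — useful structure when one later wants to solve this linear system for the $|c_{\mathbf{j}^{(t)}}|^2$.
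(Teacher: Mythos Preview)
Your proof is correct and follows essentially the same approach as the paper: interchange the order of the double sum in \eqref{eq:mindistcrit}, recognise the inner sum over slots as the count $\#\{l:\mathbf{j}^{(i)}_l=\mathbf{j}^{(k)}_l\}=a_{ik}$, and observe that the resulting equations are the rows of the claimed matrix equation. The paper's proof is the same computation written more tersely via Kronecker deltas; your additional remarks on the symmetry and constant diagonal of $A$ are correct but not needed for the equivalence.
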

\begin{proof}
One simply rewrites the left-hand side of \ref{eq:mindistcrit} as 
$$ \sum_{l=1}^n\sum_{\mathbf{k} \in \supp(\ket{\psi}):k_l = j_l}|c_{\mathbf{k}}|^2 = \sum_{\mathbf{k} \in \supp(\ket{\psi})}\sum_{l=1}^n\delta_{j_lk_l}|c_{\mathbf{k}}|^2 = \sum_{\mathbf{k} \in \supp(\ket{\psi})} \#\{l|\mathbf{j}_l=\mathbf{k}_l\}  |c_{\mathbf{k}}|^2  $$ 
\end{proof}
\begin{expl}
\begin{enumerate}[i)]
\item For the GHZ-state this matrix is simply 
$$A(\ket{GHZ}) = \begin{pmatrix} 3 & 0 \\ 0 & 3 \end{pmatrix}$$ 
\item Consider a non-normalised free state $\ket{\psi} =\ket{100}+\ket{010}+\ket{001}+\ket{121}+\ket{112}$. Then 
$$A(\ket{\psi}) = \begin{pmatrix} 3 & 1 & 1 & 1 & 1 \\ 1 & 3 & 1 & 0 & 1 \\ 1 & 1 & 3 & 1 & 0 \\ 1 & 0 & 1 & 3 & 1 \\ 1 & 1 & 0 & 1 & 3 \end{pmatrix}$$
\end{enumerate}
\end{expl}
Now we need one final result. 
\begin{thm}\label{thm:magiclemma}
Suppose that $\ket{\psi}$ is free and $\supp(\ket{\psi})) \leq \sum_{i=i}^N(d_i-1)$, where $d_i = \mathrm{rank}(\rho^{(i)})$ is the rank of the $i$-th reduced density matrix of $\rho = \ketbra{\psi}$. Then $T$ acts transitively on the set $$M = \left\{ \ketbra{\psi'} \in P(\mathcal{H}): \supp(\ket{\psi'}) = \supp(\ket{\psi}) \right\}.$$
\end{thm}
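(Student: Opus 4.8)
The plan is to translate transitivity into a rank condition on the matrix $A(\ket{\psi})$ of Corollary~\ref{cor:mindist} and then establish that condition combinatorially. Write $\supp(\ket{\psi})=\{\mathbf{j}^{(1)},\dots,\mathbf{j}^{(m)}\}$. Every $[\psi']\in M$ equals $[\sum_k c_k\ket{\mathbf{j}^{(k)}}]$ with all $c_k\neq 0$, so $M$ is itself an algebraic torus, $M\cong(\C^*)^m/\C^*$, of dimension $m-1$, and a diagonal element $t=(\mathrm{diag}(\tau^{(i)}_l)_l)_i$ of $T$ acts on it by $c_k\mapsto c_k\prod_{i=1}^N\tau^{(i)}_{\mathbf{j}^{(k)}_{i}}$. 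Thus the $T$-action on $M$ is a homomorphism of algebraic tori $\phi\colon T\to M$ followed by the translation action of $M$ on itself; since homomorphic images of algebraic groups are closed and a connected closed subtorus of a torus of full dimension is everything, $T$ is transitive on $M$ precisely when $\phi$ — equivalently $d\phi$ — is surjective. Putting $\tau^{(i)}_l=e^{s^{(i)}_l}$ gives $d\phi(s)=\big(\sum_i s^{(i)}_{\mathbf{j}^{(k)}_{i}}\big)_k$ modulo the constants $\C\cdot(1,\dots,1)$. Let $B=B(\ket{\psi})$ be the incidence matrix with rows indexed by the occupied pairs $(i,l)$ (those with $\rho^{(i)}_{ll}\neq 0$, of which there are $d_i$ for each $i$) and entries $B_{(i,l),k}=\delta_{\mathbf{j}^{(k)}_{i},l}$. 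Then $d\phi$ is $s\mapsto B^T s$ modulo the constants, and since $(1,\dots,1)$ lies in the row space of $B$ (sum the rows of any single block), $d\phi$ is surjective iff $B$ has full column rank $m$, iff $A(\ket{\psi})=B^T B$ is invertible. Hence the theorem is equivalent to: \emph{if $\ket{\psi}$ is free and $m\le\sum_i(d_i-1)$, then the columns of $B$ are linearly independent.}

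For that, the upper bound $\mathrm{rank}(B)\le\sum_i d_i-(N-1)=\sum_i(d_i-1)+1$ is immediate, since the rows of block $i$ sum to $(1,\dots,1)$ for each $i$. For the matching lower bound the natural tool is a pivoting reduction: if some occupied hyperplane slice $\{\mathbf{j}\in\supp(\ket{\psi}):\mathbf{j}_{i}=l\}$ is a singleton $\{\mathbf{j}^*\}$, the row $(i,l)$ of $B$ is the coordinate vector $e_{\mathbf{j}^*}$, and deleting it together with column $\mathbf{j}^*$ (and any rows that become zero) leaves the incidence matrix of the free state $\sum_{\mathbf{j}\neq\mathbf{j}^*}c_{\mathbf{j}}\ket{\mathbf{j}}$; one then wants to induct on $m$, checking that removing $\mathbf{j}^*$ lowers $\sum_i d_i$ by a controlled amount (the number of directions in which $\mathbf{j}^*$ is the sole occupant of its slice). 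The substantial case is the one where \emph{no} occupied slice is a singleton: there the pivot is unavailable, and one must show directly that freeness forbids a relation $\sum_k\lambda_k(\text{column }\mathbf{j}^{(k)})=0$. Such a relation forces $\sum_k\lambda_k=0$ and, block by block, a multiset identity between the values occupied by the support vectors with $\lambda_k\neq 0$, which one analyses by tracking which pairs of support vectors are forced to agree in which slots.

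This last step is the main obstacle, and it is where the hypotheses really enter. For $N=3$ it can be carried out, because freeness is rigid there — any two support vectors agree in at most one of the three slots, so $\supp(\ket{\psi})$ projects injectively onto each pair of coordinates; this bounds the number of distinct values in each slot, hence the length of any minimal relation, against $m\le\sum_i(d_i-1)$, while the shortest conceivable relations (lengths $3$ and $4$) are ruled out by a short case check. For $N\ge 4$, by contrast, the argument cannot be pushed through, and in fact the statement needs an extra assumption: the four-qubit state $\ket{0000}+\ket{0011}+\ket{1100}+\ket{1111}$ is free, has $m=4=\sum_i(d_i-1)$ and no singleton slice, yet $(\text{column }0000)+(\text{column }1111)=(\text{column }0011)+(\text{column }1100)$, so $B$ is rank-deficient and $T$ is not transitive on the corresponding $M$. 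Accordingly I would prove the statement for $N\le 3$ — the regime relevant to the $2\times 3\times N$ classification — and note that for larger $N$ one must exclude such ``tesseract'' configurations.
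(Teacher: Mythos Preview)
Your reduction of transitivity to surjectivity of the torus homomorphism $\phi$, hence to full column rank of the incidence matrix $B$, is correct and is essentially the paper's own argument in different clothing: the paper takes logarithms of the equations $c_{\mathbf j}\prod_i t^{(i)}_{j_i}=\lambda c'_{\mathbf j}$ to obtain a linear system and then asserts, without justification, that the resulting rows are independent because ``the $m$ vectors $\mathbf e_{\mathbf j}=(\mathbf e_{j_1},\ldots,\mathbf e_{j_N})$ are linearly independent in $\bigoplus\C^{d_i}$ \ldots\ [which] follows from the freeness assumption.'' Your four-qubit example $\ket{0000}+\ket{0011}+\ket{1100}+\ket{1111}$ shows that this assertion --- and hence the theorem as stated --- is false: the state is free with $m=4=\sum_i(d_i-1)$, yet $\mathbf e_{0000}+\mathbf e_{1111}=\mathbf e_{0011}+\mathbf e_{1100}$, and the cross-ratio $c_{0000}c_{1111}/(c_{0011}c_{1100})$ is a nonconstant $T$-invariant on $M$.

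Your claim that the tripartite case \emph{can} be carried through, however, is also wrong. Take $\mathcal H=\C^3\otimes\C^3\otimes\C^3$ with $\supp(\ket\psi)=\{(\sigma(0),\sigma(1),\sigma(2)):\sigma\in S_3\}$. Any two distinct permutations of $\{0,1,2\}$ differ in at least two positions, so the state is free; each local rank is $d_i=3$; and $m=6=\sum_i(d_i-1)$, so the hypothesis is satisfied. But $\sum_{\sigma\in S_3}\mathrm{sgn}(\sigma)\,\mathbf e_{\sigma}=0$ is a length-six column relation in $B$, and correspondingly the alternating product $c_{012}c_{120}c_{201}/(c_{021}c_{102}c_{210})$ is a nonconstant $T$-invariant on $M$. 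Your sketch only rules out relations of length $3$ and $4$; length $6$ is exactly what occurs here. So neither the paper's proof nor your proposed repair for $N=3$ goes through, and the statement requires a genuinely stronger hypothesis (for instance, that the weights $\mathbf e_{\mathbf j}$ are affinely independent) --- which, to be fair, does hold for the particular representatives the paper actually applies the theorem to.
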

\begin{proof}
It is clear that the action of $T$ takes $M$ to $M$. For simplicity, denote $S:=\supp(\ket{\psi})$ and $m:= \#S$ Let $\ket{\psi'} = \sum_{\mathbf{j} \in S}c'_{\mathbf{j}}\ket{\mathbf{j}} $ be in the affine cone over $M$. By moving $\ket{\psi}$ and $\ket{\psi'}$ in the orbit of $T_{\R}$, assume that all $c_{\mathbf{j}}$ and $c'_{\mathbf{j}}$ are real and positive, this is possible by our assumption on dimensions. Now let $t=(t^{(1)},\ldots,t^{(N)})\in iT_{\R}$ be a collection of diagonal matrices $t^{(i)}=\mathrm{diag}(t^{(i)}_1,\ldots,t^{(i)}_{d_i})$ with positive real entries. 
Then $t\ket{\mathbf{j}} = \prod_{i=1}^N t^{(i)}_{\mathbf{j_i}}$. Forgetting about the normalisation condition for $\ket{\psi'}$, we have that  $t\cdot \ket{\psi} = \ket{\psi'}$ is equivalent to $$t^{(1)}\otimes\cdots\otimes t^{(n)} \ket{\psi} =\lambda \ket{\psi'}$$ for some nonzero $\lambda$. Writing out the equation for every $\mathbf{j} \in S$ gives 
$$ c_{\mathbf{j}} \prod_{i=1}^N t^{(i)}_{\mathbf{j}_i}=\lambda c'_{\mathbf{j}} \Leftrightarrow \frac{\prod_{i=1}^N t^{(i)}_{\mathbf{j}_i}}{\lambda}=\frac{c'_{\mathbf{j}}}{c_{\mathbf{j}}} \Leftrightarrow -\log\lambda+\sum_{i=1}^N\log  t^{(i)}_{\mathbf{j}_i} = \log\frac{c'_{\mathbf{j}}}{c_{\mathbf{j}}}$$
Now replace $\log t^{(i)}_{d_i}$ by $-\sum_{l=1}^{d_i-1}\log t^{(i)}_{l}$. We can write the equations in a $m \times \sum(d_i-1)+1$ matrix $B$. The row corresponding $r_{\mathbf{j}}$ to $\ket{\mathbf{j}}$ is 
$(-1,v^{\mathbf{j}}_1,\ldots,v^{\mathbf{j}}_N)$ where $v^{\mathbf{j}}_i = \delta_{\mathbf{j}_il}\mathbf{e}^{d_i-1}_l-\delta_{\mathbf{j}_id_i-1}(1,\ldots,1)$.\footnote{$\mathbf{e}^k_l$ is the $l$-th canonical basis vector of length $k$ where $l$ starts from 0.} If we can show that the rows are linearly independent, then this system of linear equations has a solution dependent on a parameter, which we can use to set $\|\lambda\ket{\psi'}\|=1$. Consider therefore the equation $\sum_{\mathbf{j}} a_{\mathbf{j}}r_{\mathbf{j}} = 0$. We must show the only solution is $a_{\mathbf{j}}$=0. The first component of $\sum_{\mathbf{j}} a_{\mathbf{j}}r_{\mathbf{j}}$ is simply $\sum_{\mathbf{j}} a_{\mathbf{j}}$, so we conclude that $\sum_{\mathbf{j}} a_{\mathbf{j}}=0$. Now let us unravel the other components. For every $i=1,\ldots,N$ we get 
$$\sum_{\mathbf{j}}a_{\mathbf{j}}v_i^{\mathbf{j}} = 0 \Leftrightarrow \sum_{\mathbf{j}}a_{\mathbf{j}}\left(\delta_{\mathbf{j}_il}\mathbf{e}^{d_i-1}_l-\delta_{\mathbf{j}_id_i-1}(1,\ldots,1)\right)=0 \Leftrightarrow \sum_{\mathbf{j}:\mathbf{j}_i\neq d_i -1}a_{\mathbf{j}}e^{d_i-1}_{\mathbf{j}_i} = \sum_{\mathbf{j}:\mathbf{j}_i= d_i -1}a_{\mathbf{j}}(1,\ldots,1)$$.Looking at the components of the last equation for all $l=0,\ldots,d_i-2$, and using $\sum_{\mathbf{j}} a_{\mathbf{j}}=0$, we see that
\begin{equation}\sum_{j:\mathbf{j}_i=l}a_{\mathbf{j}} = \sum_{\mathbf{j}:\mathbf{j}= d_i -1}a_{\mathbf{j}} = -\sum_{\mathbf{j}:\mathbf{j}_i\neq d_i -1}a_{\mathbf{j}}\label{eq:magic1} \end{equation} Now we sum this over $l=0,\ldots,d_i-2$ and receive that 
$$\sum_{l=0}^{d_i-2}\sum_{j:\mathbf{j}_i=l}a_{\mathbf{j}} = \sum_{\mathbf{j}:\mathbf{j}_i\neq d_i -1}a_{\mathbf{j}} = -(d_i-1)\sum_{\mathbf{j}:\mathbf{j}_i\neq d_i -1}a_{\mathbf{j}}$$ Which is saying that for all $i$,  $\sum_{\mathbf{j}:\mathbf{j}_i\neq d_i -1}a_{\mathbf{j}}=0$ for all $i=1,\ldots,N$ . But then equation \ref{eq:magic1} implies that $\sum_{\mathbf{j}:\mathbf{j}_i = l}a_{\mathbf{j}}=0$ for all $l = 1,\ldots,d_i-1$.
This is equivalent to saying that the $m$ vectors $\mathbf{e}_{\mathbf{j}}=(\mathbf{e}_{j_1},\ldots,\mathbf{e}_{j_N})$ are linearly independent in $\bigoplus \C^{d_i}$. This, however, follows from the freeness assumption.
 \end{proof}
 By continuity of the action we get the following
 \begin{cor}\label{cor:magiclemma}
 Any state $\ket{\psi'}=\sum_{\mathbf{j} \in \supp(\ket{\psi})}c_{\mathbf{j}}\ket{\mathbf{j}}$ (where the $c_{\mathbf{j}}$ are possibly zero) is contained in the orbit \emph{closure} of $\ket{\psi}$. 
 \end{cor}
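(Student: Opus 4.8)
The plan is to derive the corollary from Theorem~\ref{thm:magiclemma} together with the fact that orbit closures are closed. The key observation is that both hypotheses of Theorem~\ref{thm:magiclemma} --- freeness and the dimension bound $\#\supp(\ket{\psi}) \le \sum_{i=1}^N(d_i-1)$ --- depend only on the set $\supp(\ket{\psi})$. Hence, if $\ket{\varphi} = \sum_{\mathbf{j}\in\supp(\ket{\psi})}c_{\mathbf{j}}\ket{\mathbf{j}}$ has \emph{all} coefficients $c_{\mathbf{j}}$ nonzero, then $\supp(\ket{\varphi}) = \supp(\ket{\psi})$, so $\ketbra{\varphi} \in M$ and Theorem~\ref{thm:magiclemma} applies to $\ket{\varphi}$ as well; transitivity of the $T$-action on $M$ yields some $t \in T \le G$ with $t\cdot\ket{\psi} = \ket{\varphi}$, and therefore $\ket{\varphi} \in G\cdot\ket{\psi} \subseteq \overline{G\cdot\ket{\psi}}$.

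It remains to treat a state $\ket{\psi'} = \sum_{\mathbf{j}\in\supp(\ket{\psi})}c_{\mathbf{j}}\ket{\mathbf{j}}$ in which some coefficients vanish. Put $S_0 = \{\mathbf{j} \in \supp(\ket{\psi}) : c_{\mathbf{j}} = 0\}$ and, for $\varepsilon > 0$, consider the perturbed state $\ket{\psi'_{\varepsilon}} = \ket{\psi'} + \varepsilon\sum_{\mathbf{j}\in S_0}\ket{\mathbf{j}}$. For every $\varepsilon > 0$ one has $\supp(\ket{\psi'_{\varepsilon}}) = \supp(\ket{\psi})$ exactly, so by the previous paragraph $\ket{\psi'_{\varepsilon}} \in \overline{G\cdot\ket{\psi}}$. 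Since $\ket{\psi'} \ne 0$, the projective classes $[\ket{\psi'_{\varepsilon}}]$ converge to $[\ket{\psi'}]$ as $\varepsilon \to 0$, and because $\overline{G\cdot\ket{\psi}}$ is closed in $P(\mathcal{H})$ we conclude $\ket{\psi'} \in \overline{G\cdot\ket{\psi}}$.

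There is no real obstacle in this argument; the only points that deserve a line of care are that freeness and the dimension inequality are properties of the support alone --- so that Theorem~\ref{thm:magiclemma} genuinely transfers to the perturbed states $\ket{\psi'_{\varepsilon}}$ --- and that the limiting step must be carried out in projective space, where the hypothesis that $\ket{\psi'}$ is a genuine nonzero state guarantees continuity of $\varepsilon \mapsto [\ket{\psi'_{\varepsilon}}]$ at $\varepsilon = 0$. Everything else follows immediately from $T \le G$ and the closedness of $\overline{G\cdot\ket{\psi}}$.
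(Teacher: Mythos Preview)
Your argument is correct and is exactly the ``continuity of the action'' that the paper invokes in one line: states with full support equal to $\supp(\ket{\psi})$ lie in the $T$-orbit by Theorem~\ref{thm:magiclemma}, and states with some vanishing coefficients are limits of such states, hence lie in the closure. You have simply spelled out the details the paper leaves implicit.
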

 This will prove useful for checking some hierarchy among the orbits.
 \begin{rem}
Note that this theorem directly implies proposition \ref{prop:free2} since we can now find a preimage under the moment map of every point in $K(\ket{\psi})$: Namely, the state $\sum_{\mathbf{j}\in \supp(\psi)} \sqrt{a_{\mathbf{j}}}\ket{\mathbf{j}}$ maps to $\sum_{\mathbf{j}\in\supp(\psi)}a_{\mathbf{j}} \Phi(\ketbra{\mathbf{j}})$.
\end{rem}
Combining the results up to this point, one can compute an inequality for a free state$\ket{\psi}$ on less than $\sum d_i - N $ product states by replacing the coefficients with variables and then solving the equation of Corollary \ref{cor:mindist} for the absolute value squares of the variable (we can therefore assume our coefficients are real and positive and thus call them $\sqrt{a},\sqrt{b},\ldots$). We thus get a state which maps in the orbit of $\ket{\psi}$ which maps to the point in the entanglement polytope of $\ket{\psi}$ closest to the origin. Using the scalar product in $Lie(K)$, we get an inequality for $\Delta_{\ket{\psi}}$. This method can be applied to many orbits in low-dimensional systems\footnote{See the Mathematica documentation on \emph{ClosestPointFinder} for the implementation and the appendix for the calculated inequalities.}. 

\section{Eigenvalue estimates}\label{section:EigValestimates}
We conclude this chapter with a sketch of an algebraic method first used by Christandl and Walter \footnote{Trying to find a simple proof for the inequality defining the three-qubit $W$ polytope while on a plane to Bogota, as I was told.} to prove inequalities for the local spectra. Let $\ket{\psi}\in \mathcal{H}$ be any state, and denote by $\lambda^i_k$ the $k-th$ biggest eigenvalue of the $i$-th reduced density matrix $\rho^{(i)}$ of $\rho = \ketbra{\psi}$. Then, by the min-max-principle from linear algebra and the definition of partial trace, 
$$ \lambda^i_k =  \max_{\substack{U \subseteq \C^{d_i}\\ \dim U = k}}\left(\min_{\substack{x \in U\\ \|x\|=1}} \braket{x|\rho^{(i)}|x}\right) =\left( \max_{\substack{U \subseteq \C^{d_i}\\ \dim U = k}}\min_{\substack{x \in U\\ \|x\|=1}} \braket{\psi|\mathds{1}_{1,\ldots,i-1}\otimes(\ketbra{x})_i\otimes\mathds{1}_{i+1,\ldots,N}|\psi}\right)$$
We can now directly state the following result: 
\begin{prop}\label{prop:eigenvalueestimates}
Let $\lambda_{k_1}^{i_1},\ldots,\lambda_{k_l}^{i_l}$ be a collection of local eigenvalues and $a_1,\ldots,a_l > 0$. Then, for any choice of subspaces $U_1\subseteq \C^{d_1},\ldots,U_l \subseteq \C^{d_n}$ with $\dim U_j = k_j$ we have that 
\begin{equation}
a_1\lambda_{k_1}^{i_1}+\ldots+a_l\lambda_{k_l}^{i_l} \geq \min_{\substack{x_1 \in U_1,\ldots,x_l \in U_l\\\|x_1\| = \ldots = \|x_l\| = 1}} \braket{\psi|\sum_{j=1}^l\left( \mathds{1}\otimes(\ketbra{x_{j}})_i\otimes\mathds{1}\right)|\psi}
\end{equation}
In particular, if $k_j = 1$ for all $j$, this simplifies to 
\begin{equation}
a_1\lambda_{1}^{i_1}+\ldots+a_l\lambda_{l}^{i_l} \geq  \braket{\psi|\sum_{j=1}^l a_j\left(\mathds{1}_{1\ldots i_j-1}\otimes(\ketbra{\Phi_{j}})\otimes\mathds{1}_{i_j+1\ldots n}\right)|\psi}
\end{equation}
for any $\Phi_j \in C^{d_{i_j}}$ with $\|\Phi_j\| = 1.$
\end{prop}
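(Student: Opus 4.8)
The plan is to derive the proposition directly from the Courant--Fischer (min-max) formula for eigenvalues recorded just above the statement, combined with the elementary fact that a sum of minima over independent variables equals a single minimum over the product domain.

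First I would fix an index $j \in \{1,\dots,l\}$ and an arbitrary subspace $U_j \subseteq \C^{d_{i_j}}$ with $\dim U_j = k_j$. The displayed min-max identity writes $\lambda_{k_j}^{i_j}$ as a \emph{maximum} over all $k_j$-dimensional subspaces of $\min_{x,\ \|x\|=1}\braket{\psi|\mathds{1}\otimes(\ketbra{x})_{i_j}\otimes\mathds{1}|\psi}$; discarding the maximum and keeping only the particular subspace $U_j$ gives at once
\[
\lambda_{k_j}^{i_j} \;\geq\; \min_{\substack{x_j\in U_j\\ \|x_j\|=1}} \braket{\psi|\mathds{1}\otimes(\ketbra{x_j})_{i_j}\otimes\mathds{1}|\psi}.
\]
Because $a_j>0$, multiplying through by $a_j$ preserves the inequality and absorbs $a_j$ into the minimum.

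Next I would sum these $l$ inequalities. On the right one obtains $\sum_{j=1}^l \min_{x_j\in U_j,\ \|x_j\|=1} a_j\braket{\psi|\mathds{1}\otimes(\ketbra{x_j})_{i_j}\otimes\mathds{1}|\psi}$, whose $j$-th summand depends only on $x_j$. Since the minimizations run over the independent factors of $U_1\times\cdots\times U_l$, the sum of the minima equals the minimum of the sum over this product; then, by linearity of $A\mapsto\braket{\psi|A|\psi}$, the summed expression collapses to the single matrix element $\braket{\psi|\sum_{j=1}^l a_j(\mathds{1}\otimes(\ketbra{x_j})_{i_j}\otimes\mathds{1})|\psi}$. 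This is exactly the first displayed inequality. I note in passing that nothing forces the indices $i_j$ to be distinct: if two of them coincide, the two rank-one operators simply add within the same tensor slot and the argument is unaffected.

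For the special case $k_j=1$ for every $j$, each $U_j$ is the complex line through the chosen unit vector $\Phi_j$; every unit vector of $U_j$ differs from $\Phi_j$ only by a phase, so $\ketbra{x_j}=\ketbra{\Phi_j}$ on all of $U_j$ and the inner minimum runs over a single point and disappears, yielding the second displayed inequality for any unit vectors $\Phi_1,\dots,\Phi_l$. I do not expect a genuine obstacle in this argument; the only step meriting a moment's care is the interchange of the sum with the minimization, which is legitimate precisely because the variables $x_1,\dots,x_l$ are separated.
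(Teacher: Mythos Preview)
Your argument is correct and is precisely the reasoning the paper has in mind: the proposition is presented there as an immediate consequence of the min-max identity displayed just before it, without further proof, and your write-up simply spells out the straightforward steps (drop the outer maximum, multiply by $a_j>0$, sum, and collapse the independent minima).
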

Now, if one has an inequality conjectured for the orbit of a certain state, one can try to apply the above proposition to a general state in the orbit. An example of how this is done can be found in section \ref{section:W3polytope}. 
\chapter{Exact results for low-dimensional systems}
In this chapter we will compute some exact results for low-dimensional systems and many-qubit systems using the results of the last chapter. 
\section{Computation of the entanglement polytopes of $2 \times 2 \times n$ Systems}
For $2 \times 2 \times n$ systems, all polytopes can be computed explicitly. As we will see, there is only one non-trivial case. 
\subsection{Bravyi's Inequalities}
In \cite{Bravyi2004} Bravyi gives a solution in terms of neat inequalities for the one-body quantum marginal problem for mixed 2-qubit states or, equivalently, for $2\times 2 \times 4$ pure states. We can use those to compute the maximal generic polytope for $2\times 2 \times 3$ and $2\times 2 \times 4$ systems. Let us call the the qubit systems $A$ and $B$ and the 4-level system $C$ Bravyi's result can be interpreted as follows: Let $\rho_A$ and $\rho_B$ be qubit density operators and $\rho_C$ a 4-level density operator. Denote by $\lambda_A^{min}$ and $\lambda_B^{min}$ the lower eigenvalues of $\rho_A$ and $\rho_B$ respectively, and by $\mathbf{\lambda}$ the weakly ordered spectrum of $\rho_C$. Then $(\rho_A,\rho_B,\rho_C)$ can be realised as the reduced density matrices of a pure state $\ket{\psi} \in \C^2\otimes\C^2\otimes\C^4$ if and only if 
\begin{align*}
\lambda_A^{\min} &\geq \lambda_3 + \lambda_4  \\
\lambda_B^{\min} &\geq \lambda_3 + \lambda_4  \\
\lambda_A^{\min} +\lambda_B^{\min} & \geq 2\lambda_4+\lambda_3+\lambda_2 \\
|\lambda_A^{\min} - \lambda_B^{\min}| &\leq \min\{\lambda_1-\lambda_3,\lambda_2-\lambda_4\}
\end{align*}
They can be easily be rewritten for the maximal eigenvalues $\lambda_{A,B}^{\max}=1-\lambda_{A,B}^{\min}$ of the $A$ and $B$ subsystems as 
 \begin{align}
 \lambda_A^{\max} &\leq \lambda_1 + \lambda_2 \label{BravyiIneq224_1} \\
 \lambda_B^{\max} &\leq \lambda_1 + \lambda_2  \label{BravyiIneq224_2}\\
 \lambda_A^{\max} +\lambda_B^{\max} & \leq 2\lambda_1+\lambda_2+\lambda_3 \label{BravyiIneq224_3}\\
 |\lambda_A^{\max} - \lambda_B^{\max}| &\leq \min\{\lambda_1-\lambda_3,\lambda_2-\lambda_4\} \label{BravyiIneq224_4}
 \end{align}
From these inequalities one can derive the inequalities for $2\times 2 \times 3$ systems by setting $\lambda_4=0$ (and then $\lambda_3=1-\lambda_1 -\lambda_2$). Then inequalities \eqref{BravyiIneq224_1} and \eqref{BravyiIneq224_2} will stay the same, but \eqref{BravyiIneq224_3} and \eqref{BravyiIneq224_4} simplify so that we get the following four inequalities:
\begin{align}
 \lambda_A^{\max} &\leq \lambda_1 + \lambda_2 \label{BravyiIneq223_1} \\
 \lambda_B^{\max} &\leq \lambda_1 + \lambda_2  \label{BravyiIneq223_2}\\
 \lambda_A^{\max} +\lambda_B^{\max} & \leq 1+\lambda_1 \label{BravyiIneq223_3}\\
 |\lambda_A^{\max} - \lambda_B^{\max}| &\leq \min\{2\lambda_1-1+\lambda_2,\lambda_2 \}\label{BravyiIneq223_4}
 \end{align}
\subsubsection{Entanglement classes in $2\times 2 \times 3$ and $2\times 2 \times 4$ systems}
In \cite{Miyake2004} the entanglement classes of $2\times 2 \times 3$ and $2\times 2 \times 4$ systems are completely classified as follows.
\begin{wrapfigure}[21]{cH}{0.5\textwidth}
\includegraphics[width=0.49\textwidth]{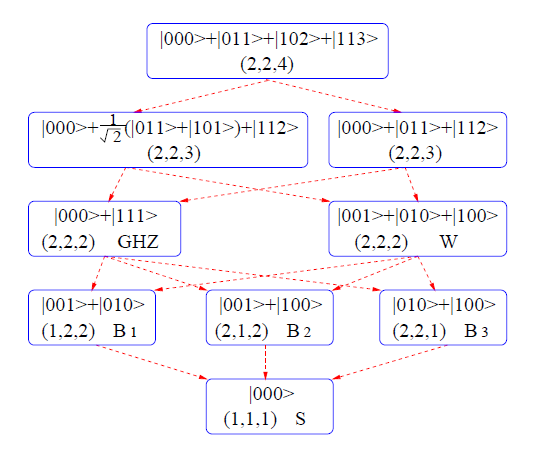}
\caption[Entanglement Hierarchy in $2\times 2\times 3$]{The 9 entanglement classes in a $2\times 2 \times 4$ system together with their hierarchy as presented in \cite{Miyake2004}. The top three entanglement classes are not named here. The class we call $W_3$ is represented by $\ket{000}+\ket{011}+\ket{112}$.}\label{ent_classes_fig}
\end{wrapfigure}
 
There are 9 classes.6 of them are the embedded 2-qubit classes, then there are non-equivalent classe of $2\times 2 \times 3$-entanglement: A generic class and another, non-generic class. There are many analogies of this class to the W class in the three-qubit case, so we will denote this class by $W_3$. The classes are presented in figure \ref{ent_classes_fig} and table \ref{ent_class_table}, both taken from \cite{Miyake2004}.
\begin{table}
\centering
\begin{tabular}{|c|c|c|c|c|}
\hline Class Name & Representative  & Local ranks   \\ 
\hline Generic $2\times 2\times 4$& $\ket{000}+\ket{011}+\ket{102}+\ket{113}$ & $(2,2,4)$    \\ 
\hline Generic $2\times 2\times 3$ & $\ket{000}+\frac{1}{\sqrt{2}}(\ket{011}+\ket{101})+\ket{112}$ & $(2,2,3)$  \\ 
\hline $W_3$  & $\ket{000}+\ket{011}+\ket{112}$ & $(2,2,3)$  \\ 
\hline GHZ  & $\ket{000}+\ket{111}$ &$(2,2,2) $  \\ 
\hline W & $\ket{001}+\ket{010}+\ket{100} $ & $(2,2,2) $  \\ 
\hline B1 & $\ket{000}+\ket{010}$ & $(1,2,2) $ \\ 
\hline B2& $\ket{100}+\ket{001}$ & $(2,1,2)$   \\ 
\hline B3 & $\ket{100}+\ket{010}$ & $(2,2,1)$  \\ 
\hline Seperable & $\ket{000}$ & $(1,1,1)$  \\ 
\hline
\end{tabular} 
\caption{Names and local ranks of the 9 entanglement classes}\label{ent_class_table}
\end{table}
In particular, this 
tells us we just have to find one additional polytope: Namely, the one of the $W_3$ class. The polytopes of the generic classes are given by Bravyi's Inequalities from the previous section and all polytopes of the three-qubit classes have been calculated in \cite{Walter2012}.

\FloatBarrier

\subsection{The entanglement polytopes of a $2 \times 2 \times 3$ system}
Let us now change our viewpoint slightly. We want to look at the polytopes as convex subsets of $\R^4$ by using the ``Most Local Eigenvalues'' coordinates. 
\subsubsection{The generic polytope} 
As discussed above the generic polytope can be in principle read off directly from Bravyi's Inequalities. However, in these the eigenvalues were assumed to be ordered and to belong to density operators, so we will have to implement these assumptions by adding some $\emph{local}$ inequalities to Bravyi's Inequalities \eqref{BravyiIneq223_1} to \eqref{BravyiIneq223_4}.  On the first two subsystems, for example,  this can be achieved by the inequalities
\begin{equation}\frac{1}{2} \leq x_{1,1},x_{2,1} \qquad \mathrm{and} \qquad x_{1,1},x_{2,1} \leq 1 \label{ineq:local223_1}\end{equation}
where we the first inequality is a \emph{Weyl chamber condition} ($x_{1,1} \geq 1-x_{1,1}$) and the second one is a \emph{density operator condtion} ($1-x_{1,1} \geq 0$). On the third subsystem, the density operator conditions are $x_{3,1} \leq 1$ and $x_{3,1}+x_{3,2} \leq 1$ and the Weyl chamber conditions are given by $x_{3,1} \geq x_{3,2} \geq 1 -x_{3,1} - x_{3,2}$. Let us summarise the local and global inequalities:

\begin{align*}
&\mathbf{Global} &&\mathbf{Local}\\
x_{1,1},x_{2,1} &\leq x_{3,1}+x_{3,2}  &\frac{1}{2}&\leq x_{1,1},x_{2,1} \leq 1 \\
x_{1,1} + x_{2,1} &\leq 1+x_{3,1}  &\frac{1}{3}&\leq x_{3,1} \leq 1 \\
|x_{1,1}-x_{2,1}|&\leq\min\{2x_{3,1}+x_{3,2}-1,x_{3,2}\} &\frac{1-x_{3,1}}{2} &\leq x_{3,2} \leq x_{3,1} \\
&   &x_{3,1}+x_{3,2} &\leq 1
\end{align*}
There still is some redundancy among these inequalities. 
\begin{clm}
The following is a defining non-redundant set of inequalities for the entanglement polytope of a generic state on a $2\times 2\times 3$-system:
\begin{align}
\frac{1}{2} \leq x_{1,1}, x_{2,1} &\leq x_{3,1} + x_{3,2} \leq 1 \\
x_{1,1}+x_{2,1} &\leq 1+x_{3,1} \\
|x_{1,1} -x_{2,1}| &\leq x_{3,2} \\
|x_{1,1}-x_{2,1}| &\leq 2x_{3,1}-1 + x_{3,2} \\
\frac{1-x_{3,1}}{2} \leq x_{3,2} &\leq  x_{3,1}\\ 
\end{align}
\end{clm}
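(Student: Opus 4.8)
The plan has two independent parts. First I would show that the displayed inequalities cut out exactly the same polytope as the full collection of local and global inequalities assembled just before the claim; since the retained system is literally a sub-list of the full one, this reduces to checking that the four constraints that were dropped --- namely $x_{1,1}\le 1$, $x_{2,1}\le 1$, $\tfrac13\le x_{3,1}$, and $x_{3,1}\le 1$ --- are already forced by the ones kept. Each is a one-line computation: $x_{1,1}\le x_{3,1}+x_{3,2}\le 1$ and likewise for $x_{2,1}$; adding $\tfrac{1-x_{3,1}}{2}\le x_{3,2}$ to $x_{3,1}+x_{3,2}\le 1$ yields $\tfrac{x_{3,1}+1}{2}\le 1$, i.e. $x_{3,1}\le 1$; and chaining $\tfrac{1-x_{3,1}}{2}\le x_{3,2}\le x_{3,1}$ yields $1-x_{3,1}\le 2x_{3,1}$, i.e. $x_{3,1}\ge\tfrac13$. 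Hence the two regions coincide, and by the discussion preceding the claim this region is $\Delta$ for the generic $2\times 2\times 3$ class.

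Second --- the non-redundancy --- I would check that no inequality in the displayed list can be deleted without enlarging the polytope, i.e. that each of the twelve linear inequalities obtained after splitting the two moduli $|x_{1,1}-x_{2,1}|$ supports a facet. Here I would use the symmetry $x_{1,1}\leftrightarrow x_{2,1}$ of the whole system (it exchanges the two qubit parties $A$ and $B$ and is manifest in the inequalities), which halves the work on the two $|x_{1,1}-x_{2,1}|$ inequalities. For each remaining inequality the task is to exhibit a point lying in the relative interior of the corresponding face --- hence satisfying every other inequality strictly --- witnessing that the face is genuinely $3$-dimensional; equivalently one perturbs such a point slightly across the supporting hyperplane to get a point that violates only the inequality under scrutiny. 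In practice I would instead feed the list to \texttt{qhull}: compute the vertex ($V$-)description, recompute the minimal half-space ($H$-)description from the vertices, and verify that the output reproduces the displayed list up to positive rescaling; equality of the vertex sets then gives both statements at once.

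I expect the second part to be the bottleneck. Step~1 is elementary and self-contained. Certifying facet-definingness of each of the dozen inequalities by hand requires producing that many witness points in general position and checking each of them against all the others, or else carrying through the vertex enumeration and the reverse facet enumeration explicitly --- finite but laborious, which is precisely why one delegates it to \texttt{qhull}.
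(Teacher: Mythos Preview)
Your proposal is correct and aligns with the paper's own proof, which consists of the single sentence ``This can be done using a standard convex hull routine'' (covering both the Claim and the subsequent vertex Proposition). Your Part~1 spells out by hand why the four omitted constraints are redundant, and your Part~2 --- delegating the facet check to \texttt{qhull} after noting the $x_{1,1}\leftrightarrow x_{2,1}$ symmetry --- is exactly what the paper's invocation of a convex hull routine amounts to.
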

\begin{prop}\label{prop:vertices223}
The intersetion of this halfspaces defines a convex polytope with 9 vertices whose coordinates are given by:
\begin{align*} 
v_1 &= (\frac{1}{2}, \frac{1}{2},\frac{1}{3},\frac{1}{3}) \\
v_2 &=(\frac{2}{3}, \frac{2}{3},\frac{1}{3},\frac{1}{3}) \\
v_3 &= (\frac{1}{2},\frac{3}{4},\frac{1}{2},\frac{1}{4}) \\
v_4 &= (\frac{3}{4}, \frac{1}{2},\frac{1}{2},\frac{1}{4}) \\
v_5 &= (\frac{1}{2}, \frac{1}{2},\frac{1}{2},\frac{1}{2}) \\
v_6 &= (1, \frac{1}{2},\frac{1}{2},\frac{1}{2}) \\
v_7 &= (\frac{1}{2},1,\frac{1}{2},\frac{1}{2}) \\
v_{8} &= (\frac{1}{2}, \frac{1}{2},1,0) \\
v_{9} &= (1,1,1,0)\\
\end{align*}
\end{prop}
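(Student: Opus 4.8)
The statement asserts that the finite intersection of the listed half-spaces is a bounded polytope whose vertex set is exactly the nine points $v_1,\dots,v_9$. Since this is a completely explicit, finite-dimensional ($\dim = 4$) computation, the plan is to verify the two directions of the equality $\bigcap\mathcal{I} = \mathrm{conv}(\{v_1,\dots,v_9\})$ separately, exactly along the lines sketched in the convexity subsection of the Preliminaries. First I would check the easy inclusion $\mathrm{conv}(\{v_1,\dots,v_9\}) \subseteq \bigcap\mathcal{I}$: plug each $v_k$ into each of the defining inequalities and confirm it is satisfied; since a half-space is convex, the whole convex hull then lies in the intersection. This is a routine but finite check (nine points against roughly ten inequalities, counting the split absolute-value and sandwich inequalities).

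For the reverse inclusion I would argue that every vertex $w$ of $\bigcap\mathcal{I}$ is one of the $v_k$. By the general principle recalled in the excerpt, a vertex of a polytope in $\R^4$ is determined by four of the bounding hyperplanes meeting in a single point which satisfies all remaining inequalities. So the plan is: enumerate the hyperplanes obtained by turning each inequality (including each branch of the $|x_{1,1}-x_{2,1}|$ inequalities and each side of the sandwich inequalities $\tfrac12 \le x_{1,1}$, $x_{1,1}\le x_{3,1}+x_{3,2}$, etc.) into an equality; for each $4$-subset solve the linear system; discard those with no solution, a non-unique solution, or a solution violating some inequality in $\mathcal{I}$; collect the survivors. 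One then checks this survivor set equals $\{v_1,\dots,v_9\}$. In practice this is exactly the task delegated to \texttt{qhull} as described earlier, so I would state that the vertex enumeration was carried out with \texttt{qhull} and that its output is the list above, supplementing it with the by-hand verification of the first inclusion so the identification is rigorous. Boundedness is immediate once the vertex list is in hand (the polytope is the convex hull of finitely many points), or alternatively from the local inequalities $x_{1,1},x_{2,1}\le 1$ and $x_{3,1},x_{3,2}\ge 0$ together with $x_{3,1}+x_{3,2}\le 1$, which already confine all four coordinates to $[0,1]$.

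The only genuinely delicate point — and the one I would treat carefully rather than wave at — is the redundancy bookkeeping: some of the branches of the absolute-value inequality and some of the Weyl/density conditions are implied by others on the feasible region, so naively taking all $4$-subsets of hyperplanes produces many spurious candidate points that must be rejected by the feasibility check, and conversely one must be sure no vertex is missed because the "right" four hyperplanes through it were mislabeled as redundant. The safeguard is the first inclusion combined with the Krein–Milman fact quoted in the excerpt: since the nine $v_k$ are verified to lie in $\bigcap\mathcal{I}$ and $\bigcap\mathcal{I}$ is convex, if in addition each $v_k$ is shown to be extremal in $\bigcap\mathcal{I}$ (e.g. by exhibiting, for each $v_k$, four inequalities from $\mathcal{I}$ that are tight at $v_k$ and whose gradients are linearly independent), then $\{v_1,\dots,v_9\}$ is contained in the vertex set; the \texttt{qhull} computation certifies there are no others, giving equality. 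I expect this linear-independence check at each $v_k$ to be the main manual obstacle, but it is a short finite verification.
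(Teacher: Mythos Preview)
Your proposal is correct and follows essentially the same approach as the paper, which simply states that the claim and proposition ``can be done using a standard convex hull routine.'' Your write-up is in fact a considerably more detailed and careful unpacking of what such a routine verifies, but the underlying method is the same.
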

\begin{proof}[Proof of Claim and Proposition]
This can be done using a standard convex hull routine.
\end{proof}
See also figure \ref{fig:223polytopes}.
\subsubsection{The $W_3$ polytope}\label{section:W3polytope}
To calculate the polytope of the $W_3$ we have to make use of the geometric tricks from the previous section.
The represantative given in \cite{Miyake2004} is 
$\frac{1}{\sqrt{3}}\left(\ket{000}+\ket{011}+\ket{112}\right)$ (after normalisation). For convenience we choose another representative $ \ket{\psi} = \frac{1}{\sqrt{3}}\left(\ket{010}+\ket{101}+\ket{002}\right)$ which can be obtained from the first by exchanging $\ket{0} \leftrightarrow \ket{1}$ on the first two systems. This state is free, and we can use proposition \ref{prop:free1} to compute its reduced density matrices $$\rho^{(1)}= \begin{pmatrix}\frac{2}{3} & 0 \\ 0 &\frac{1}{3} \end{pmatrix}; \qquad \rho^{(2)} = \begin{pmatrix} \frac{2}{3} & 0 \\ 0 & \frac{1}{3} \end{pmatrix};\qquad \rho^{(3)} = \begin{pmatrix} \frac{1}{3} & 0 & 0 \\ 0 & \frac{1}{3} & 0 \\ 0 & 0 & \frac{1}{3} \end{pmatrix} $$
One quickly calculates that $\ket{\psi}$ is not an eigenvector of $X_{\rho}$, however, we can calculate $X_{\tau}$ for an arbitrary state $\tau=\ketbra{\phi}= \ketbra{\phi(a,b,c)}$ on these basis vectors, that is, for $\ket{\phi(a,b,c)}=\sqrt{a}\ket{010} + \sqrt{b}\ket{101}+\sqrt{c}\ket{002}$. By theorem \ref{thm:magiclemma}, all such states lie in the orbit closure of $\ket{\psi}$. 
The reduced density matrices then are
\begin{equation}\label{taurdms} \tau^{(1)} = \begin{pmatrix}a+c & 0 \\ 0 &b \end{pmatrix}; \qquad \tau^{(2)} = \begin{pmatrix} b+c & 0 \\ 0 & a \end{pmatrix};\qquad \tau^{(3)} = \begin{pmatrix} a & 0 & 0 \\ 0 & b & 0 \\ 0 & 0 & c \end{pmatrix} 
\end{equation} 
so that we get 
\begin{align*}
X_{\tau}\ket{\phi} &= \tau^{(1)}\otimes \mathds{1} \otimes \mathds{1}\ket{\phi}+\mathds{1}\otimes\tau^{(2)}\otimes\mathds{1}\ket{\phi}+\mathds{1}\otimes\mathds{1}\otimes\mathds{1}\otimes\tau^{(3)}\ket{\phi} \\
&= (a+c)\sqrt{a}\ket{010} + b\sqrt{b}\ket{101}+(a+c)\sqrt{c}\ket{002} \\
&+a\sqrt{a}\ket{010} + (b+c)\sqrt{b}\ket{101}+(b+c)\sqrt{c}\ket{002} \\
&+a\sqrt{a}\ket{010} + b\sqrt{b}\ket{101}+c\sqrt{c}\ket{002} \\
&=\sqrt{a}(3a+b)\ket{010}+\sqrt{b}(a+3b+c)\ket{101}+\sqrt{c}(b+3c)\ket{002}
\end{align*}
so we are looking for $a,b,c$ with $3a+b=a+3b+c=b+3c$. Note that these are exactly the equations given in Corollary \ref{cor:mindist}, we  One solution is $ a=c=2,b=1$ or after normalisation $a=c=\frac{2}{5},b=\frac{1}{5}$.
The reduced density matrices now are 
\begin{equation}\tau^{(1)}= \begin{pmatrix} \frac{3}{5} & 0 \\ 0 & \frac{2}{5} \end{pmatrix}; \qquad \tau^{(2)}= \begin{pmatrix} \frac{3}{5} & 0 \\ 0 & \frac{2}{5} \end{pmatrix}; \qquad \tau^{(3)}= \begin{pmatrix} \frac{2}{5} & 0 & 0\\ 0 & \frac{2}{5} & 0 \\ 0 & 0 & \frac{1}{5} \end{pmatrix} \label{RDMs1} \end{equation}
I.e. this gives the point in the entanglement polytope of the $W_3$ class closest to the origin. Now, by proposition  \ref{prop:free2}, the intersection of the convex hull of the images of the three basis vectors $\ket{\psi_1}:=\ket{010},\ket{\psi_2}:=\ket{101},\ket{\psi_3}:=\ket{002}$ with the positive Weyl chamber $\mathfrak{t}^+$ will also lie in the entanglement polytope. The point in this convex hull can be realised as $\Phi(\tau)$, with $\tau = \tau(a,b,c)$ as above, and hence is given by the matrices \ref{taurdms}. Therefore, we can directly read off the inequalities which determine whether $a\Phi(\ketbra{\psi_1})+b\Phi(\ketbra{\psi_2})+c\Phi(\ketbra{\psi_3}) \in \mathfrak{t}^+$: 
\begin{align*} 
a &\geq c \geq b = 1 - a - c  \geq 0 \ \Leftrightarrow \ a \geq c \geq \frac{1-a}{2} \\
b+c &\geq a \Leftrightarrow (1-a-c)+c \geq a \ \Leftrightarrow \ a \leq \frac{1}{2} \\
a &\geq \frac{1}{3}
\end{align*}

\begin{wrapfigure}[10]{r}{0.25\textwidth}
\includegraphics[width=0.24\textwidth]{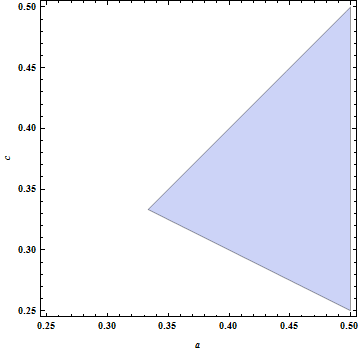}
\caption{The triangle defined by the inequalities \eqref{triangleineq1} and \eqref{triangleineq2}}\label{triangle}
\end{wrapfigure}
 Summarising we get the following two inequalities: 
\begin{align}
\frac{1}{3} \leq a &\leq \frac{1}{2} \label{triangleineq1}\\
\frac{1-a}{2} &\leq c \leq a \label{triangleineq2}
\end{align}
These inequalities define a triangle which can be seen in Figure \ref{triangle}. 

The corners of this triangle correspond to vertices of the generic polytope (and, by the remark above, also to states $\tau$), namely, $a=c=\frac{1}{3}$ corresponds to $v_2=(\frac{2}{3},\frac{2}{3},\frac{1}{3},\frac{1}{3})$, $a=c=\frac{1}{2}$ gives the vertiex corresponding to the GHZ state, and $a=\frac{1}{2}, c=\frac{1}{4}$ to $v_4=(\frac{3}{4},\frac{1}{2},\frac{1}{2},\frac{1}{4})$. Since the first two eigenvalues of $\tau'^{(3)}$ are the same, we can swap the roles of $a$ and $c$ in the inequalities above, which leads to another such triangle. This shares two vertices with the first but the third one is $v_5=(\frac{1}{2},\frac{3}{4},\frac{1}{2},\frac{1}{2})$
These 4 points define a hyperplane in $\R^4$ which is given by the equation \begin{equation}x_{1,1}+x_{2,1}+x_{3,1}+x_{3,2}=2 \ .\label{hyperplane}\end{equation}.
We therefore arrive at the following result: 
\begin{prop}
For the  entanglement polytope of the $W_3$ state the following holds: 
\begin{equation} \Delta_{W_3} \subseteq \Delta \cap \{x_{1,1}+x_{2,1}+x_{3,1}+x_{3,2} \geq 2\}
\end{equation}
Moreover, it contains all vertices of the full polytope except the origin $v_1$. 
\end{prop}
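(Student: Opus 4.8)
The statement has two independent halves: the outer inclusion $\Delta_{W_3}\subseteq\Delta\cap\{x_{1,1}+x_{2,1}+x_{3,1}+x_{3,2}\ge 2\}$, and the assertion that the eight vertices $v_2,\dots,v_9$ all lie in $\Delta_{W_3}$. For the outer half the plan is to turn the closest-point computation we have already carried out into a separating hyperplane. By Corollary~\ref{cor:mindist}/Proposition~\ref{prop:mindistcrit} applied to the free state $\tau=\tau(\tfrac25,\tfrac15,\tfrac25)\in\overline{G\cdot\psi}$, the point $\lambda^\ast$ with reduced density matrices $\mathrm{diag}(\tfrac35,\tfrac25)$, $\mathrm{diag}(\tfrac35,\tfrac25)$, $\mathrm{diag}(\tfrac25,\tfrac25,\tfrac15)$ --- that is, $\lambda^\ast=(\tfrac35,\tfrac35,\tfrac25,\tfrac25)$ in Most-eigenvalue coordinates --- is the point of $\Delta_{W_3}$ closest to the origin $O$. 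Feeding this into the separating-functional recipe of the convexity section gives, for every $\lambda\in\Delta_{W_3}$, the inequality $\langle\lambda-O,\lambda^\ast-O\rangle\ge\langle\lambda^\ast-O,\lambda^\ast-O\rangle$, where the inner product is the one on $Lie(K)$, hence evaluated on the \emph{full} lists of eigenvalues. Writing $\lambda=(x_{1,1},x_{2,1},x_{3,1},x_{3,2})$ and $O=(\tfrac12,\tfrac12,\tfrac13,\tfrac13)$, one finds that each qubit factor contributes $\tfrac15(x_{i,1}-\tfrac12)$ to the left-hand side, the qutrit factor contributes $\tfrac15(x_{3,1}+x_{3,2}-\tfrac23)$, and the right-hand side equals $\|\lambda^\ast-O\|^2=\tfrac1{15}$; rearranging collapses the inequality to exactly $x_{1,1}+x_{2,1}+x_{3,1}+x_{3,2}\ge 2$. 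Since $\Delta$ is by definition the set of all local spectra of pure states, $\Delta_{W_3}\subseteq\Delta$ is automatic, and the outer inclusion follows. The only delicate point here is the bookkeeping in the inner product: one must keep the ``last'' eigenvalue of every system, exactly as warned in the section on coordinates.

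For the vertex half the plan is to exhibit, for each $i=2,\dots,9$, an explicit pure state in $\overline{G\cdot\psi}$ whose ordered local spectrum equals $v_i$, so that $v_i\in\Delta_{W_3}$ directly from the definition of the entanglement polytope. For $v_2,v_3,v_4,v_5,v_6,v_7,v_9$ I would use free states supported on subsets of $S:=\supp(\psi)=\{\ket{010},\ket{101},\ket{002}\}$: these all lie in $\overline{G\cdot\psi}$ by Corollary~\ref{cor:magiclemma} (since $\psi$ is free and $\#S=3\le 1+1+2=\sum_i(d_i-1)$), their reduced density matrices are diagonal and explicit by Proposition~\ref{prop:free1}, and after reordering the diagonal entries one reads off the vertices --- e.g. equal weights give $v_2$, the squared-amplitude weights $(\tfrac14,\tfrac12,\tfrac14)$ and $(\tfrac12,\tfrac14,\tfrac14)$ give $v_3$ and $v_4$, the two-term states $\tfrac{1}{\sqrt{2}}(\ket{010}+\ket{101})$, $\tfrac{1}{\sqrt{2}}(\ket{010}+\ket{002})$, $\tfrac{1}{\sqrt{2}}(\ket{101}+\ket{002})$ give $v_5,v_6,v_7$, and the product state $\ket{002}$ gives $v_9$. (Alternatively Proposition~\ref{prop:free2} already puts the whole set $K(\ket{\psi})\cap\mathfrak{t}^+$ inside $\Delta_{W_3}$, and these are its vertices once the eigenvalues are ordered.) Finally $v_1=O$ has coordinate sum $\tfrac53<2$, hence is excluded by the outer inclusion just proved, so ``all vertices but $v_1$'' is the best statement of this kind.

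The one genuinely separate point --- and the step I expect to be the main obstacle --- is $v_8=(\tfrac12,\tfrac12,1,0)$, whose third reduced density matrix has rank one: no state with support equal to $S$ can realise it, so it is not covered by the argument above. I see two ways to close this gap. The clean one: $v_8$ has local ranks $(2,2,1)$, which is the SLOCC class $B_3$ of Miyake's classification (representative $\ket{100}+\ket{010}$, whose local spectrum is $v_8$), and $B_3\le W_3$ in the hierarchy of Figure~\ref{ent_classes_fig}, so $\Delta_{B_3}\subseteq\Delta_{W_3}$ and $v_8\in\Delta_{W_3}$. The self-contained one: starting from $\tfrac{1}{\sqrt{2}}(\ket{010}+\ket{101})\in\overline{G\cdot\psi}$, act with the one-parameter family $(\mathds{1},\mathds{1},D_\varepsilon)$ on the third system, where $D_\varepsilon$ sends $\ket{0}\mapsto\ket{0}$, $\ket{1}\mapsto\ket{0}+\varepsilon\ket{1}$, $\ket{2}\mapsto\ket{2}$; as $\varepsilon\to 0$ this converges in $P(\mathcal{H})$ (the overall determinant scaling being irrelevant there) to $(\ket{01}+\ket{10})_{12}\otimes\ket{0}_3$, whose reduced density matrices are $\mathrm{diag}(\tfrac12,\tfrac12)$, $\mathrm{diag}(\tfrac12,\tfrac12)$, $\mathrm{diag}(1,0,0)$, i.e. exactly $v_8$. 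Either way the vertex list is complete, and everything else reduces to the coordinate computation of the first half together with a finite check of diagonal matrices.
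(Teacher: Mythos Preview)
Your proof is correct and follows the same strategy as the paper: the inequality is obtained from the closest-point-to-the-origin computation via the separating-hyperplane inner product (your bookkeeping of the full eigenvalue lists is exactly right), and the vertices are exhibited via free states supported on $S=\supp(\psi)$ together with an orbit-closure argument. The only packaging difference is that the paper handles $v_5,\dots,v_9$ in one stroke by observing that $\ket{\phi(\tfrac12,\tfrac12,0)}=\tfrac{1}{\sqrt{2}}(\ket{010}+\ket{101})$ is a GHZ representative lying in $\overline{G\cdot\psi}$ (Corollary~\ref{cor:magiclemma}), so the entire three-qubit GHZ polytope---whose vertices are precisely $v_5,\dots,v_9$ after embedding---sits inside $\Delta_{W_3}$; this subsumes your separate treatment of $v_8$ without needing either the Miyake $B_3$ hierarchy or the explicit $D_\varepsilon$ degeneration.
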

\begin{proof}
The only thing left to show is that the vertices $v_6$ to $v_{11}$ are contained in $\Delta_{W_3}$, the others were treated. However, these are the vertices of the polytope of the GHZ state, from which we know that it lies in the closure of the SLOCC Class of $W_3$ by Theorem \ref{thm:magiclemma} (with the notation above, it is $\ket{\phi(\frac{1}{2},\frac{1}{2},0)}$). Therefore its entanglement polytope, which includes $v_6$ to $v_{11}$, is contained in $\Delta_{W_3}$.
\end{proof}
\begin{figure}[b]
\centering
\begin{subfigure}[h]{0.45\textwidth}
\centering
\includegraphics[width =\textwidth]{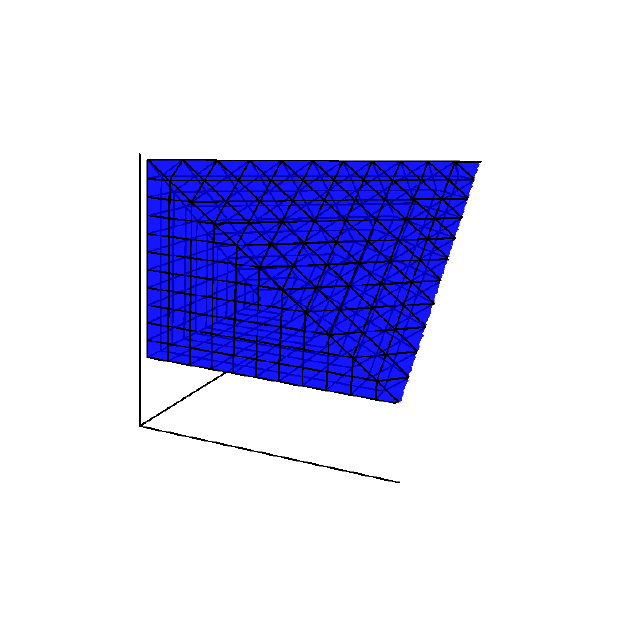}
\subcaption{Full polytope for $x_{3,3}=0$.}
\end{subfigure}
\qquad
\begin{subfigure}[h]{0.45\textwidth}
\centering
\includegraphics[width =\textwidth]{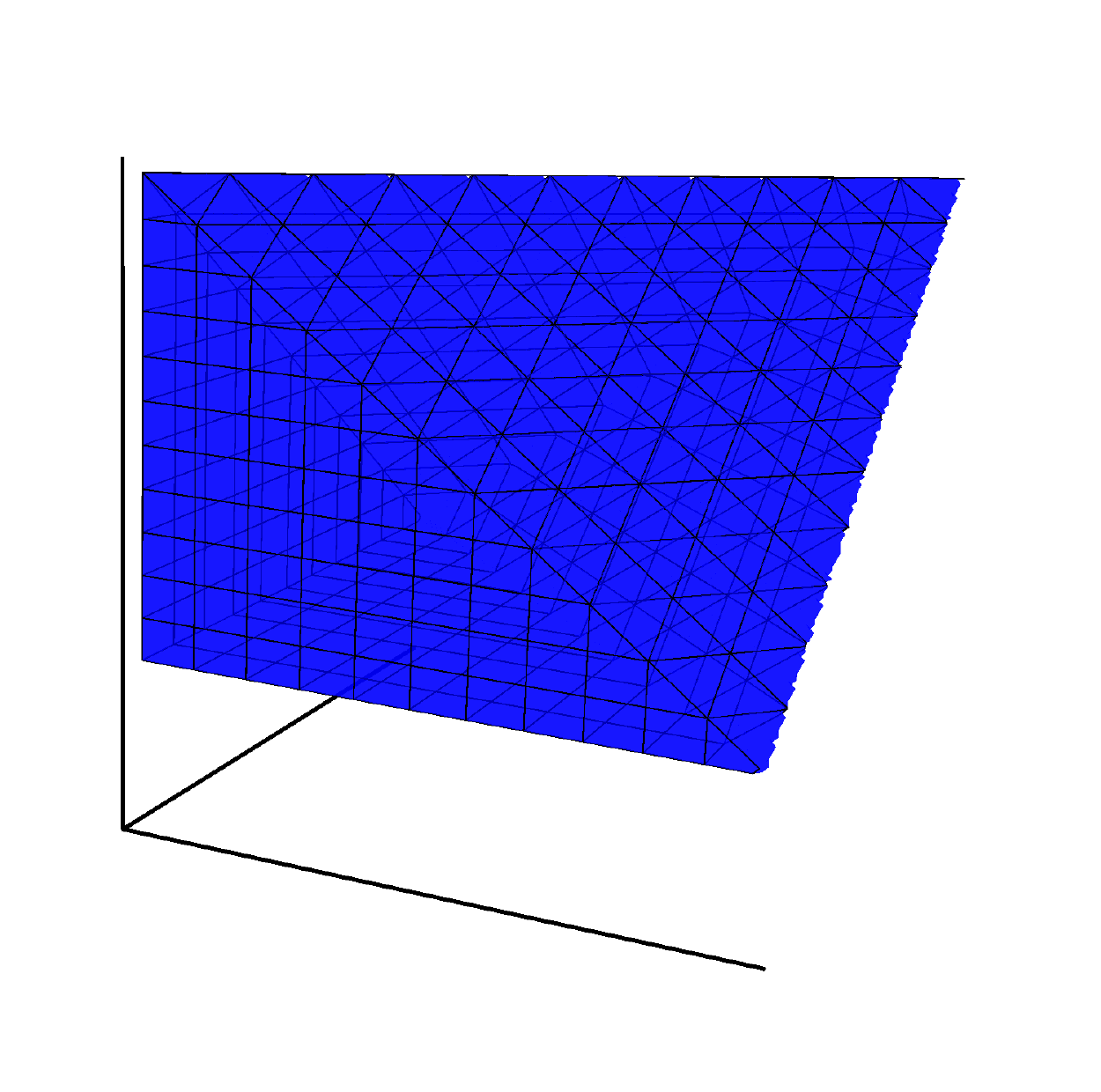}
\subcaption{$W_3$ polytope for $x_{3,3}=0$.}
\end{subfigure}

\begin{subfigure}[h]{0.45\textwidth}
\centering
\includegraphics[width =\textwidth]{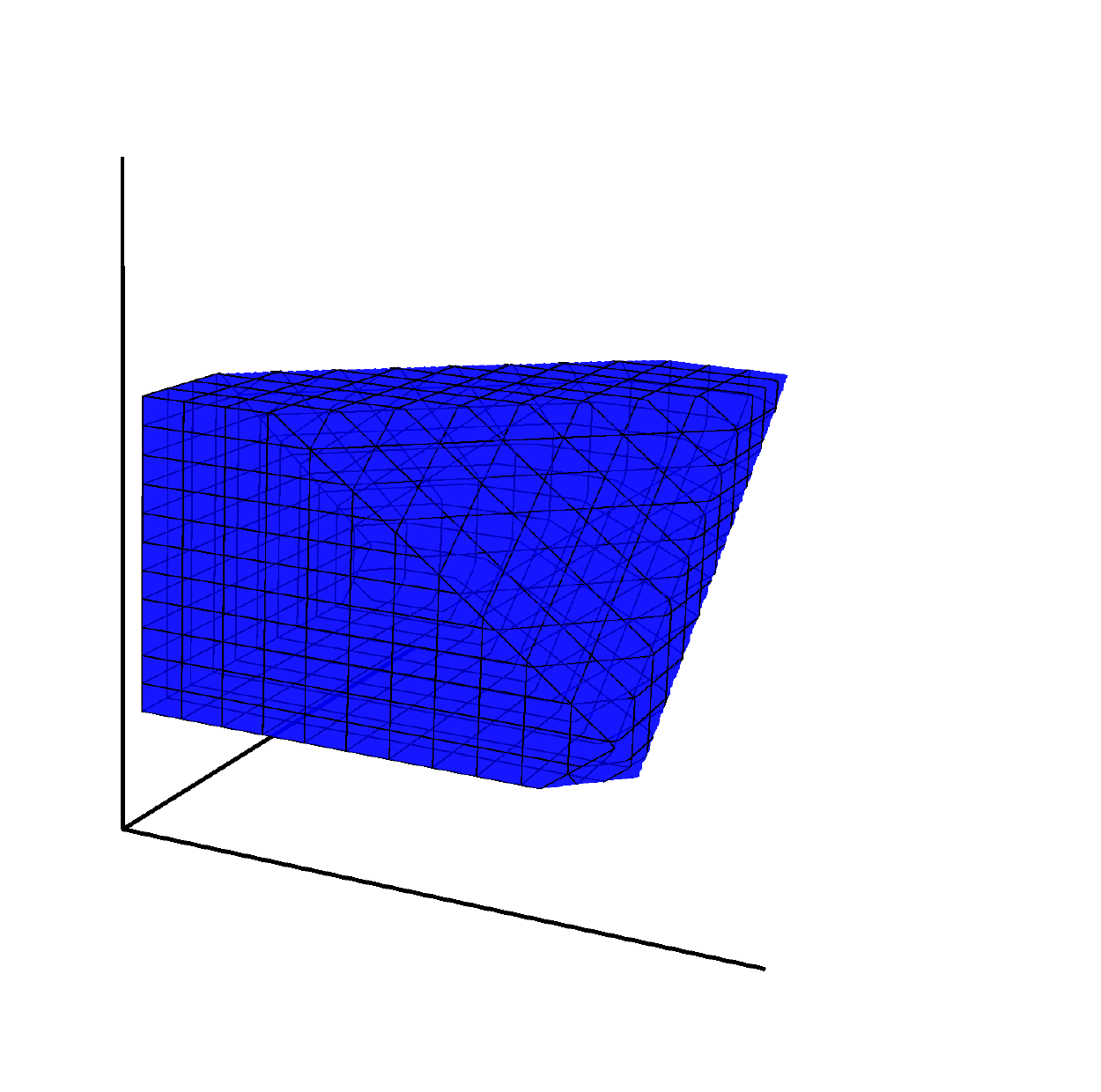}
\subcaption{Full polytope for $x_{3,3}=\frac{2}{9}$.}
\end{subfigure}
\qquad
\begin{subfigure}[h]{0.45\textwidth}
\centering
\includegraphics[width =\textwidth]{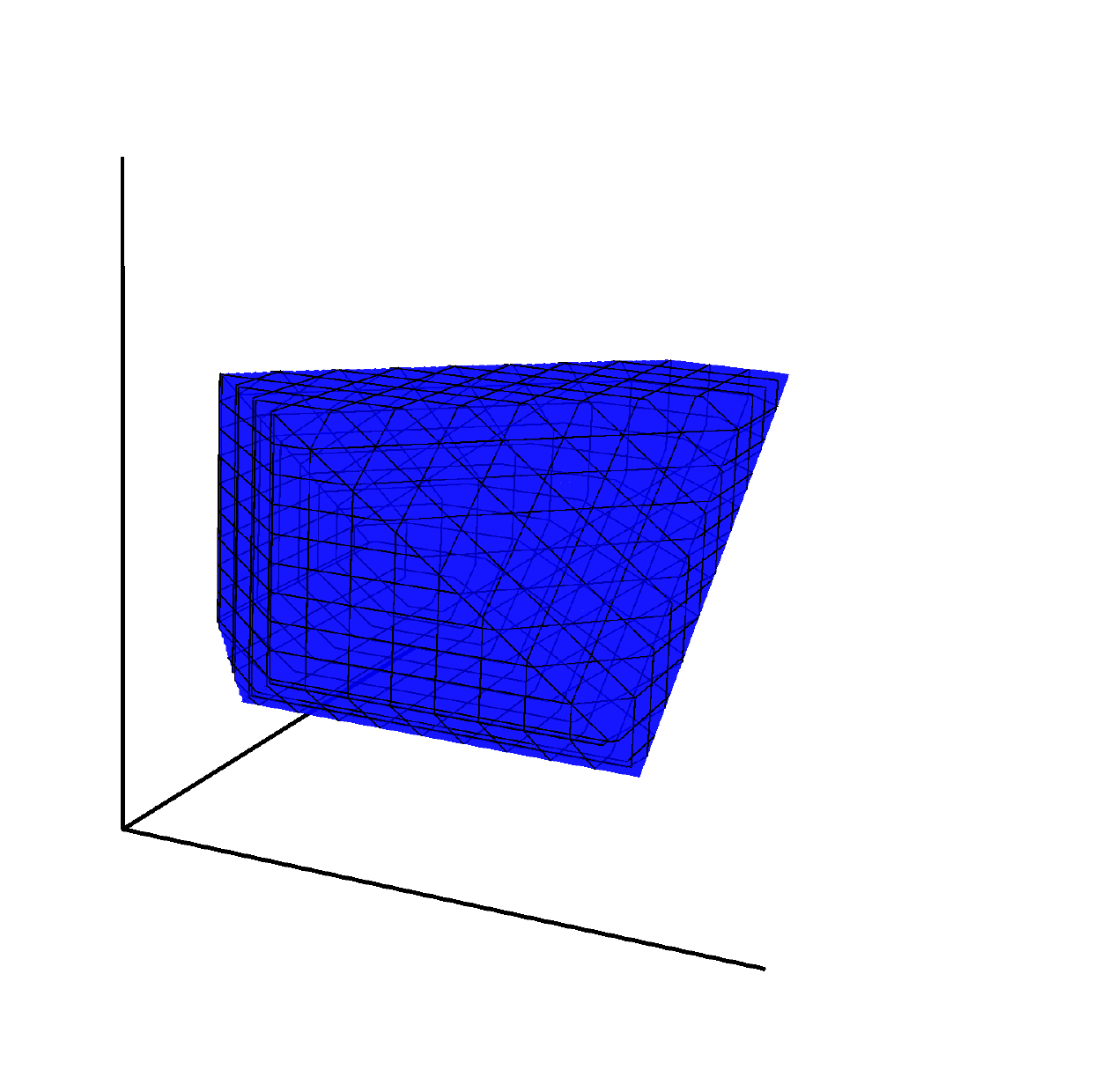}
\subcaption{$W_3$ polytope for$x_{3,3}=\frac{2}{9}$.}
\end{subfigure}

\begin{subfigure}[h]{0.45\textwidth}
\centering
\includegraphics[width =\textwidth]{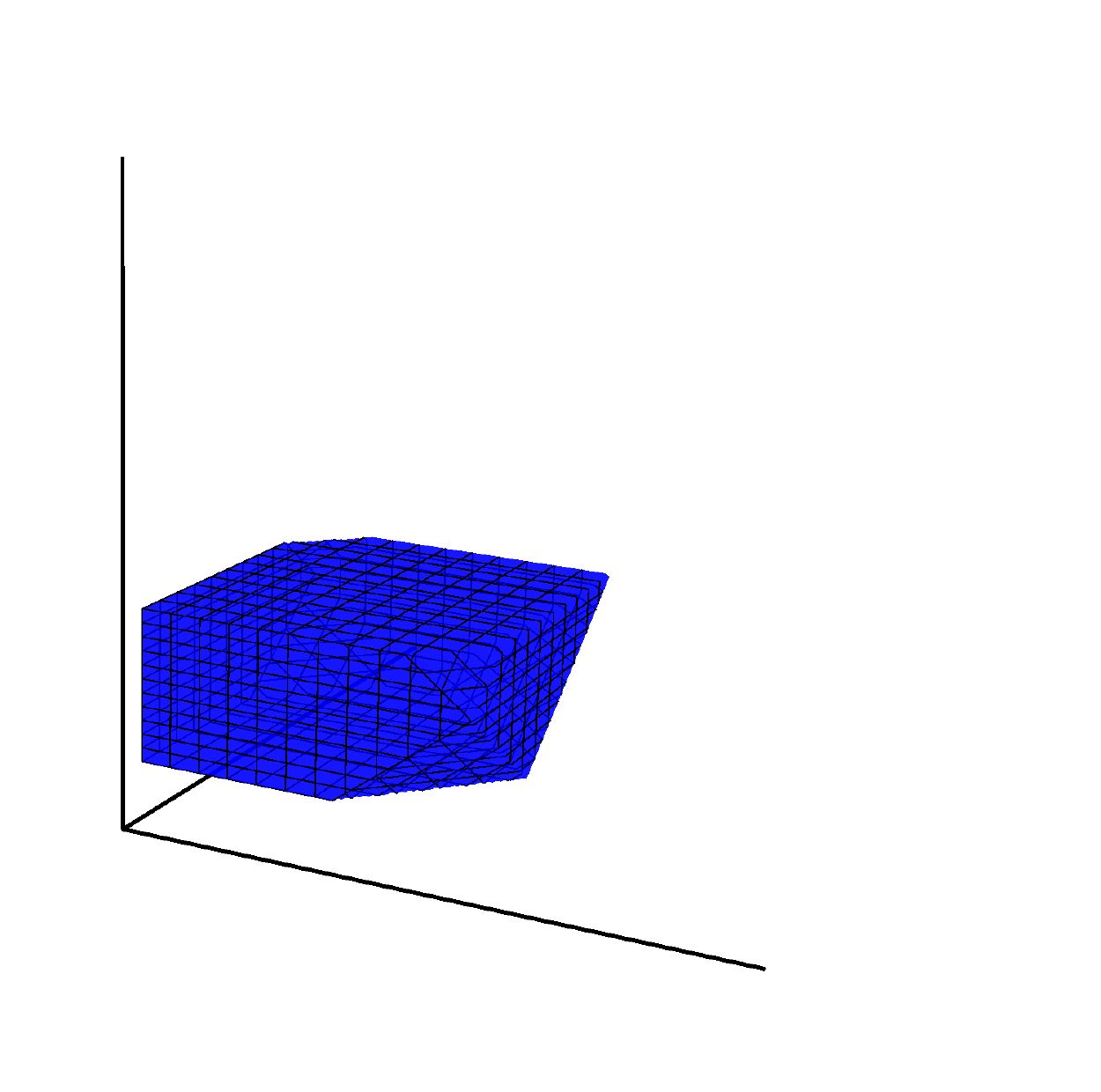}
\subcaption{Full polytope for $x_{3,3}=\frac{4}{9}$.}
\end{subfigure}
\qquad
\begin{subfigure}[h]{0.45\textwidth}
\centering
\includegraphics[width =\textwidth]{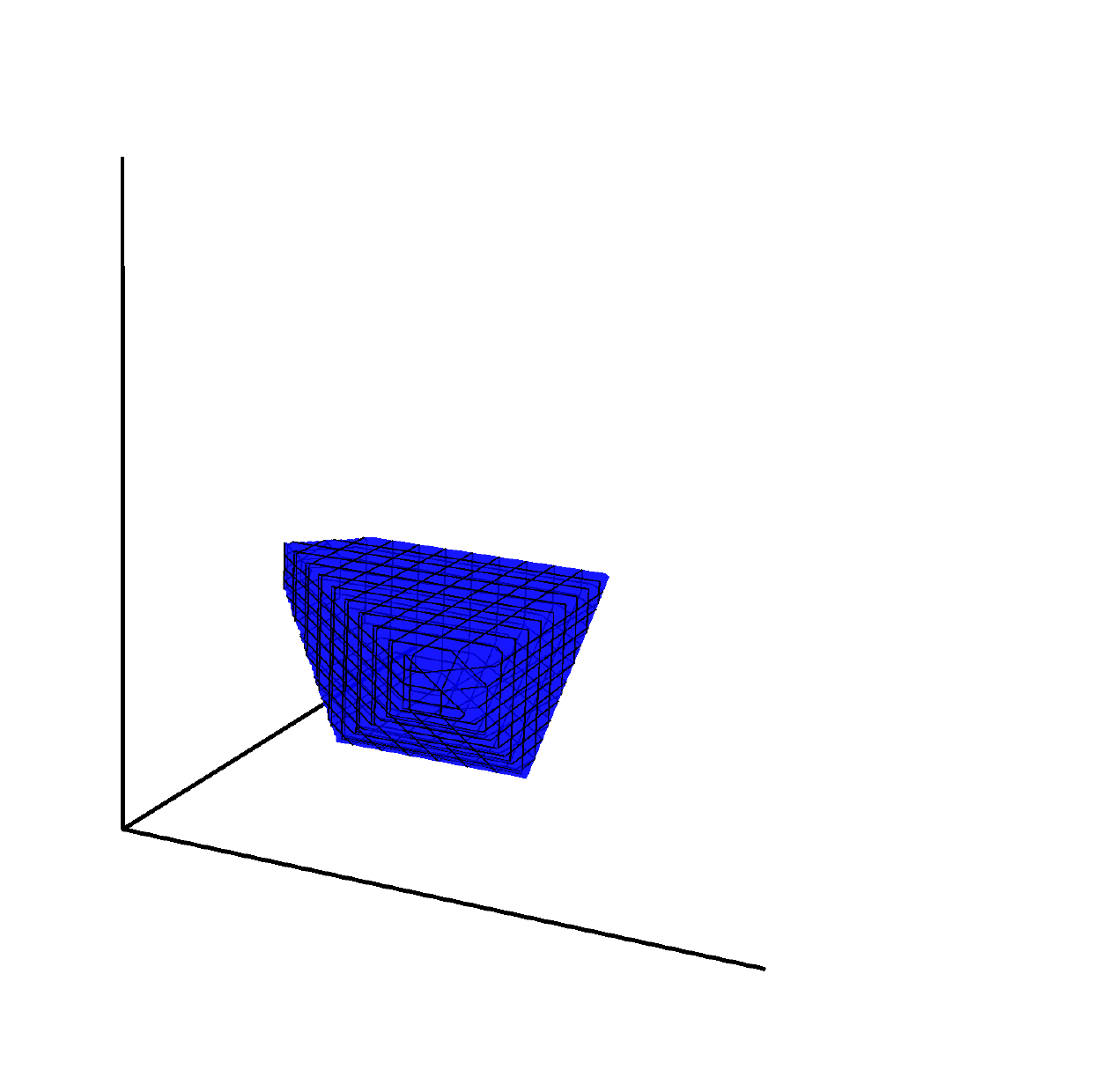}\subcaption{$W_3$ polytope for $x_{3,3}=\frac{4}{9}$.}
\end{subfigure}
\caption[Cuts through $2\times 2 \times 3 $ Entanglement polytopes]{Cuts through the full (left) and $W_3$ polytope for $x_{3,3}=0,1/9,2/9$. Notice that both have the full GHZ polytope for $x_{3,3}=0$.}
\label{fig:223polytopes}
\end{figure}
This gives us both an inner approximation -  the convex hull of $v_2$ through $v_8$ - and an outer approximation - a set of linear inequalities - for $\Delta_{W_3}$. We shall now see that the inner is tight. 

\begin{prop} $\Delta_{W_3}$ is the intersection of the generic polytope with the three inequalities 
\begin{align*}
x_{1,1}+x_{2,1}+x_{3,1}+x_{3,2} &\geq 2 \\
x_{1,1}+x_{3,1} &\geq 1 \\
x_{2,1}+x_{3,1} &\geq 1
\end{align*}
Its vertices are exactly given by the vertices of the generic polytope except $v_1$.
\end{prop}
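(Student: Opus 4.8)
The plan is a sandwich argument: establish the chain of inclusions
\[\mathrm{conv}(\{v_2,\dots,v_9\})\ \subseteq\ \Delta_{W_3}\ \subseteq\ \Delta\cap\mathcal{I}\ =\ \mathrm{conv}(\{v_2,\dots,v_9\}),\]
where $\mathcal{I}$ denotes the three displayed inequalities, so that all three sets coincide. The statement about the vertices is then immediate: a vertex of the polytope $\Delta\cap\mathcal I$ is extremal, so it must occur among the $v_i$, and conversely each $v_i$ is needed since $\Delta_{W_3}$ contains no smaller generating set.

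The left inclusion is already available: the preceding proposition shows that $v_2,\dots,v_9$ — that is, all vertices of the generic polytope $\Delta$ except the origin $v_1$ — lie in $\Delta_{W_3}$, and $\Delta_{W_3}$ is convex. For the middle inclusion there are three things to verify. That $\Delta_{W_3}\subseteq\Delta$ is immediate, since the generic class is dense and so $\overline{\mathcal{C}_{W_3}}\subseteq\overline{\mathcal{C}_{\mathrm{gen}}}$, whence $\Delta_{W_3}\subseteq\Delta_{\mathrm{gen}}=\Delta$. The inequality $x_{1,1}+x_{2,1}+x_{3,1}+x_{3,2}\geq 2$ was proved in the preceding proposition. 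What remains — and this is the real content — are the two inequalities $x_{1,1}+x_{3,1}\geq 1$ and $x_{2,1}+x_{3,1}\geq 1$, which are genuinely new: $x_{1,1}+x_{3,1}\geq 1$ already fails at the interior vertex $v_1=(\tfrac{1}{2},\tfrac{1}{2},\tfrac{1}{3},\tfrac{1}{3})$ of $\Delta$, so it cannot follow from the generic (Bravyi and local) inequalities and must be extracted from the $W_3$ structure.

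To prove $x_{1,1}+x_{3,1}\geq 1$ on $\Delta_{W_3}$ I would use the eigenvalue-estimate method of section \ref{section:EigValestimates}. In the bipartition $AB\,|\,C$ the representative reads $\ket{\psi}=\ket{01}_{AB}\ket{0}_C+\ket{10}_{AB}\ket{1}_C+\ket{00}_{AB}\ket{2}_C$, which is already a Schmidt decomposition, so $\rho^{(AB)}_\psi$ is a multiple of the projector onto $\{\ket{11}\}^\perp$ and the \emph{product} vector $\ket{1}_A\otimes\ket{1}_B$ lies in its kernel. This property is preserved by the action of $A\otimes B$ and, since the product vectors form a projectively compact set, it survives taking limits; hence for every $\ket{\chi}\in\overline{G\cdot\ket{\psi}}$ there is a product vector $\ket{p}_A\otimes\ket{q}_B$ in the kernel of $\rho^{(AB)}_\chi$, and moreover $\rho^{(3)}_\chi$ has the same largest eigenvalue as $\rho^{(AB)}_\chi$ because $\ket{\chi}$ is a purification of $\rho^{(AB)}_\chi$ into $\C^3$. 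Choosing in Proposition \ref{prop:eigenvalueestimates} the test vectors $\ket{p^\perp}_A$ (a unit vector orthogonal to $\ket{p}$) on the first system and a top eigenvector $\ket{\Phi_2}$ of $\rho^{(3)}_\chi$ on the third, and using $\braket{pq|\rho^{(AB)}_\chi|pq}=0$ together with $\mathds{1}_B=\ketbra{q}+\ketbra{q^\perp}$, we get for a point $\Phi(\chi)\in\Delta_{W_3}$
\[x_{1,1}+x_{3,1}\ \geq\ \braket{p^\perp|\rho^{(1)}_\chi|p^\perp}+\braket{\Phi_2|\rho^{(3)}_\chi|\Phi_2}\ =\ \Big(1-\braket{pq^\perp|\rho^{(AB)}_\chi|pq^\perp}\Big)+\lambda_{\max}\big(\rho^{(AB)}_\chi\big)\ \geq\ 1,\]
the last step being $\braket{pq^\perp|\rho^{(AB)}_\chi|pq^\perp}\leq\lambda_{\max}(\rho^{(AB)}_\chi)$. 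The inequality $x_{2,1}+x_{3,1}\geq 1$ follows identically, or directly from the $A\leftrightarrow B$ symmetry of the $W_3$ class (swapping $A$ and $B$ in the representative again produces a free three-term state of local ranks $(2,2,3)$, which by Miyake's classification is once more $W_3$).

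Finally, feeding the defining inequalities of the generic $2\times2\times3$ polytope (Bravyi's inequalities together with the local Weyl-chamber and density conditions, as in the Claim preceding Proposition \ref{prop:vertices223}) plus the three inequalities above to a convex-hull routine such as \texttt{qhull} returns a polytope; one checks that its $8$ vertices are exactly $v_2,\dots,v_9$, which is a finite computation. Since this polytope equals the convex hull of its vertices, it equals $\mathrm{conv}(\{v_2,\dots,v_9\})$, the sandwich closes, and all three bodies coincide. The main obstacle is the pair of class-specific inequalities of step three: one has to recognise the product vector in the kernel of the two-qubit marginal and exploit the equality of the largest eigenvalues across the $AB\,|\,C$ cut — everything else is either quoted from the earlier propositions or a mechanical vertex enumeration.
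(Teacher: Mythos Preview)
Your sandwich strategy and the handling of the outer identity $\Delta\cap\mathcal I=\mathrm{conv}\{v_2,\dots,v_9\}$ via a convex-hull routine are exactly what the paper does, and your argument for the two new inequalities is correct. The one genuine difference is \emph{how} you extract $x_{1,1}+x_{3,1}\geq 1$ (and its partner) from the $W_3$ structure.

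The paper's route is a direct support computation: by QR one may assume the SLOCC element is upper triangular, so every $g\cdot\ket{W_3}$ is supported on $\{\ket{000},\ket{001},\ket{010},\ket{100},\ket{101},\ket{002}\}$; plugging the fixed test vectors $\ket{0}_B$ and $\ket{0}_C$ into Proposition~\ref{prop:eigenvalueestimates} and summing coefficients gives $\lambda_2+\lambda_3\geq 1$ in one line. Your route is structural: you isolate the SLOCC-invariant fact that $\rho^{(AB)}$ has a \emph{product} kernel vector, push it to the closure by compactness of $\mathbb P^1\times\mathbb P^1$, and combine the identity $\braket{p^\perp|\rho^{(1)}|p^\perp}=1-\braket{pq^\perp|\rho^{(AB)}|pq^\perp}$ (valid because $A$ and $B$ are qubits) with the Schmidt equality $\lambda_{\max}(\rho^{(3)})=\lambda_{\max}(\rho^{(AB)})$. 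Both proofs sit inside the eigenvalue-estimate framework of Section~\ref{section:EigValestimates}, but with opposite flavours: the paper fixes test vectors once and for all and lets the support do the work, whereas you choose state-dependent test vectors guided by an invariant geometric feature. The paper's version is shorter and completely explicit; yours explains \emph{why} such an inequality should exist (rank-deficiency of $\rho^{(AB)}$ forced by the $2\times2\times3$ local ranks, with the deficiency realised by a product vector) and would transfer more readily to other classes sharing that feature.
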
 
See also figure \ref{fig:223polytopes}. The main part in the proof is the following lemma: 
\begin{lem}
Let $\ket{\psi} \in \overline{G\cdot\ket{W_3}}$ with maximal local eigenvalues $\lambda_1,\lambda_2,\lambda_3$. Then 
$\lambda_2 + \lambda_3 \geq 1$ and $ \lambda_1+\lambda_3 \geq 1 $. \end{lem}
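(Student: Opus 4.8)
The plan is to prove the single inequality $\lambda_2+\lambda_3\geq 1$; the companion inequality $\lambda_1+\lambda_3\geq 1$ then follows either by the entirely analogous argument with systems $1$ and $2$ interchanged, or, more quickly, from the observation that the chosen representative $\ket{W_3}=\ket{010}+\ket{101}+\ket{002}$ is invariant under the local unitary that swaps systems $1$ and $2$ and exchanges $\ket0\leftrightarrow\ket1$ on system $3$: this operation lies in $K$, hence fixes $\overline{G\cdot\ket{W_3}}$, and it exchanges $\lambda_1$ with $\lambda_2$ while fixing $\lambda_3$. Since $\{\ket\varphi : \lambda_2(\ket\varphi)+\lambda_3(\ket\varphi)\geq 1\}$ is closed (the $\lambda_i$ depend continuously on the state) and $G\cdot\ket{W_3}$ is dense in its closure, it is enough to treat $\ket\psi=(A\otimes B\otimes C)\ket{W_3}$ with $A,B\in GL(2)$ and $C\in GL(3)$.

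The crucial idea is to look at the bipartition $\{2\}\mid\{1,3\}$. Grouping systems $1$ and $3$ against system $2$ one writes $\ket\psi=(B\ket1)\otimes g_1+(B\ket0)\otimes g_2$ with $g_1=(A\ket0)\otimes(C\ket0)$ and $g_2=(A\ket1)\otimes(C\ket1)+(A\ket0)\otimes(C\ket2)$, both regarded as vectors in $\C^2\otimes\C^3$. Since $B$ is invertible, $\mathrm{supp}(\rho^{(13)})=\mathrm{span}(g_1,g_2)$, a subspace of dimension at most $2$ which \emph{contains the nonzero product vector} $g_1=(A\ket0)\otimes(C\ket0)$. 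Moreover $\rho^{(13)}$ has the same nonzero spectrum as the qubit reduced state $\rho^{(2)}$, so its second-largest eigenvalue equals $\lambda_{\min}(\rho^{(2)})=1-\lambda_2$. Hence the lemma reduces to the claim that $\lambda_{\max}(\rho^{(3)})=\lambda_{\max}(\tr_1\rho^{(13)})\geq 1-\lambda_2$.

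That claim is an instance of a self-contained fact: if $\sigma$ is a density operator on $\mathcal H_A\otimes\mathcal H_B$ with $\mathrm{rank}\,\sigma\leq 2$ whose support contains a nonzero product vector, then $\lambda_{\max}(\tr_A\sigma)$ is at least the second-largest eigenvalue $\mu_2$ of $\sigma$. I would argue as follows: when $\mathrm{rank}\,\sigma\leq 1$ it is trivial; otherwise, write $\sigma=\mu_1\ketbra{e_1}+\mu_2\ketbra{e_2}$ with $\mu_1\geq\mu_2>0$ and $e_1\perp e_2$, note $\sigma\succeq\mu_2 P_V$ with $V=\mathrm{supp}\,\sigma$, and apply the positive map $\tr_A$ to get $\tr_A\sigma\succeq\mu_2\,\tr_A P_V$; then choose a product vector $u\otimes v\in V$ with $\|u\|=\|v\|=1$ (possible because $V$ is a subspace) and a unit $q\in V$ orthogonal to $p:=u\otimes v$, so $P_V=\ketbra{p}+\ketbra{q}$ and $\tr_A P_V=\ketbra{v}+\tr_A\ketbra{q}\succeq\ketbra{v}$, whence $\langle v|\tr_A\sigma|v\rangle\geq\mu_2\langle v|\tr_A P_V|v\rangle\geq\mu_2$ and a fortiori $\lambda_{\max}(\tr_A\sigma)\geq\mu_2$. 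Applied to $\sigma=\rho^{(13)}$, with system $1$ traced out and the product vector $g_1$ in the support, this yields $\lambda_3\geq 1-\lambda_2$, as wanted. The final bound $\lambda_{\max}(\rho^{(3)})\geq\langle v|\rho^{(3)}|v\rangle$ is exactly an eigenvalue estimate of the type of Proposition~\ref{prop:eigenvalueestimates} (testing $\rho^{(3)}$ against the single vector $v$), so the whole argument stays within the methods of this section.

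I expect the real obstacle to be the reduction itself: recognising that the inequality should be approached through the $\{2\}\mid\{1,3\}$ cut, and isolating the feature of that cut that makes everything work, namely that $\mathrm{supp}(\rho^{(13)})$ is a $2$-plane containing a product vector. A head-on attempt, parametrising $\ket\psi$ by $(A,B,C)$ and substituting into Proposition~\ref{prop:eigenvalueestimates} with general test vectors $\Phi\in\C^2$, $\chi\in\C^3$, produces an unwieldy quadratic inequality in the entries of $A,B,C$; the point of the bipartite reformulation is precisely that this dependence on the group element disappears. Two minor bookkeeping points, namely the degenerate case $\mathrm{rank}\,\rho^{(13)}\leq 1$ (where $1-\lambda_2=0$ and there is nothing to prove) and the density/continuity passage to the orbit closure, are both dealt with by the remarks above.
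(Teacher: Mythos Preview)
Your argument is correct and takes a genuinely different route from the paper. The paper proceeds by QR-decomposition: writing $\ket\psi=g\cdot\ket{W_3}$ with $g$ upper triangular forces $\ket\psi$ to be supported on $\{000,001,010,100,101,002\}$, after which Proposition~\ref{prop:eigenvalueestimates} with the concrete test vectors $\ket{\Phi_2}=\ket{0}_2$, $\ket{\Phi_3}=\ket{0}_3$ gives $\lambda_2+\lambda_3\geq 2|\alpha|^2+2|\delta|^2+\text{(rest)}\geq 1$ by a direct coefficient count. Your approach instead isolates the structural reason the inequality holds: under the cut $\{2\}\mid\{1,3\}$ the support of $\rho^{(13)}$ is a $2$-plane containing a product vector, and this alone forces $\lambda_{\max}(\rho^{(3)})\geq\mu_2(\rho^{(13)})=1-\lambda_2$. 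The paper's method is shorter and entirely elementary once one has the right basis; yours is coordinate-free (no parametrisation by $(A,B,C)$ survives), isolates a reusable rank-$2$ lemma, and explains \emph{why} the test vectors $\ket0_2,\ket0_3$ are the right ones rather than discovering them by inspection.

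One small correction: the swap of systems $1$ and $2$ is not a local unitary and does not lie in $K=SU(2)\times SU(2)\times SU(3)$. What is true is that the composite operation (swap $1\leftrightarrow 2$, flip $\ket0\leftrightarrow\ket1$ on system $3$) \emph{normalises} $G$ and fixes $\ket{W_3}$, hence preserves $\overline{G\cdot\ket{W_3}}$ and interchanges $\lambda_1,\lambda_2$; that is the correct justification for the symmetry shortcut. Since you also offer the direct alternative (repeat the argument with the roles of systems $1$ and $2$ exchanged), this does not affect the validity of the proof.
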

\begin{proof}
We use the technique described in section \ref{section:EigValestimates}.
Let $\rho = \ketbra{\psi}$ and denote its reduced density matrices by $(\rho^{(1)},\rho^{(2)},\rho^{(3)})$. 
It is enough to prove the inequality for all states $\ket{\Psi}= g \cdot \ket{W_3}$ for some $g = (g_1,g_2,g_3) \in G$. By QR-decomposition, we can assume that the $g_i$ are upper triangular in the computational basis, so $$\ket{\psi} = g\cdot \sqrt{\frac{2}{5}}\left(\ket{010}+\ket{101}+\frac{1}{\sqrt{2}}\ket{002}\right) = \alpha\ket{000}+\beta\ket{001}+\gamma\ket{010}+\delta\ket{100}+\varepsilon\ket{101}+\eta\ket{002}$$ for some $\alpha,\beta,\gamma,\delta,\varepsilon,\eta$ with $|\alpha|^2+|\beta|^2+|\gamma|^2+|\delta|^2+|\varepsilon|^2+|\eta|^2=1$. Now, by Proposition \ref{prop:eigenvalueestimates}, choosing $\ket{\Phi_2}=\ket{0}_2, \ket{\Phi_3}=\ket{0}_3$, one gets 
$$\lambda_2 + \lambda_3 \geq \braket{\psi|\mathbf{1}_1 \otimes  \ketbra{0} \otimes \mathbf{1}_3 + \mathbf{1}_{12}\otimes\ketbra{0}|\psi} = 2(|\alpha|^2+| \delta |^2)+ |\beta|^2+|\gamma|^2+|\varepsilon|^2 +|\eta|^2 \geq 1 $$
The other inequality follows from symmetry or from the same argument after exchanging $\ket{0} \leftrightarrow \ket{1}$ on the third system.
\end{proof}
This means we have
$$ \mathrm{conv}(\{v_2,\ldots,v_9\})\subseteq \Delta_{W_3} \subseteq \Delta \cap \{x_{1,1}+x_{2,1}+x_{3,1}+x_{3,2} \geq 2 \}\cap \{x_{1,1}+x_{3,1} \geq 1\}\cap \{x_{2,1}+x_{3,1} \geq 1\} $$
One can compute that the extremal points of the right-hand side intersection are exactly equal to $\{v_2,\ldots,v_9\}$. Therefore, the left-hand and right-hand side coincide and the proposition is proven. \\
Of course, these two inequalities do not just fall from the sky. One can for example calculate them from the vertices one hopes are correct. However, for higher-dimensional systems one does not always have such a conjectured set of vertices. This is where our gradient descent method, explained in the next section, comes into play.
\FloatBarrier
\subsection{The entanglement polytopes of a $2\times 2 \times 4$ system}
As proven in \cite{Miyake2004} and shown here in figure \ref{ent_classes_fig} and table \ref{ent_class_table}, there is only one entanglement class whose states have local ranks $(2,2,4)$, namely the generic class with represantative $\ket{000}+\ket{011}+\ket{102}+\ket{113}$. The polytope of this class is given by Bravyi's Inequalities \eqref{BravyiIneq224_1} to \eqref{BravyiIneq224_4} in conjunction with the local inequalities which ensure the density operator and Weyl chamber conditions. Again, let us use ``Most Local Eigenvalues'' coordinates $x=(x_{1,1},x_{2,1},x_{3,1},x_{3,2},x_{3,3})$ corresponding to the element 
$$\left(\begin{pmatrix} x_{1,1} & 0 \\0 & 1-x_{1,1} \end{pmatrix}, \begin{pmatrix} x_{2,1} & 0 \\ 0 & 1-x_{2,1} \end{pmatrix}, \begin{pmatrix} x_{3,1} & 0 & 0 & 0 \\ 0 & x_{3,2} & 0 & 0 \\ 0 & 0 & x_{3,3} & 0 \\ 0 & 0 & 0 & 1-x_{3,1}-x_{3,2}-x_{3,3} \end{pmatrix}\right)$$
of the polytope. We then read off local inequalities $\frac{1}{2} \leq x_{1,1},x_{2,1} \leq 1$ and $0\leq 1-x_{3,1}-x_{3,2}-x_{3,3} \leq x_{3,3} \leq x_{3,2} \leq x_{3,1} \leq 1$ and obtain the following result: 
\begin{prop}
The entanglement polytope of the generic class in the $2\times 2\times 4$-system is given by the inequalities \eqref{BravyiIneq224_1} to \eqref{BravyiIneq224_4} and the set of local inequalities above.
\end{prop}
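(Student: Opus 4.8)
The plan is to deduce the statement almost entirely from two facts already at our disposal: the result quoted above from Corollary 2 of \cite{Walter2012}, that the entanglement polytope of the orbit closure of a generic state is the full quantum--marginal polytope $\Delta$, and Bravyi's solution \cite{Bravyi2004} of the one--body quantum marginal problem for $\C^2\otimes\C^2\otimes\C^4$. First I would invoke Miyake's classification \cite{Miyake2004} (Table \ref{ent_class_table}, Figure \ref{ent_classes_fig}) to note that the class with local ranks $(2,2,4)$ is the unique SLOCC class a generic state of $\C^2\otimes\C^2\otimes\C^4$ can lie in; hence the entanglement polytope of this generic class equals the full polytope $\Delta$, i.e.\ the set of all tuples of weakly ordered local spectra realized by some pure $\ket{\psi}\in\mathcal{H}$. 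In particular the inner approximation comes for free here: unlike in the $W_3$ case no separate convex--hull computation is needed, since the inclusion ``generic polytope $\subseteq\Delta$'' is trivial and the reverse inclusion is exactly the cited corollary.

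Next I would rephrase Bravyi's theorem as a statement about $\Delta$. Since the one--body marginals of $\ket{\psi}$ and of $(U^{(1)}\otimes U^{(2)}\otimes U^{(3)})\ket{\psi}$ are simultaneously conjugate, realizability of a triple $(\rho^{(1)},\rho^{(2)},\rho^{(3)})$ as the marginals of a pure state depends only on the ordered spectra; Bravyi's criterion, rewritten for the maximal qubit eigenvalues as in \eqref{BravyiIneq224_1}--\eqref{BravyiIneq224_4}, is therefore precisely a description of which spectral tuples belong to $\Delta$, under the standing assumption that the spectra are weakly decreasing and come from density operators. Because Bravyi's statement is an ``if and only if'', the described region equals $\Delta$ exactly, neither missing admissible points nor containing spurious ones.

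The remaining step is bookkeeping in the ``Most Local Eigenvalues'' chart. A point $x=(x_{1,1},x_{2,1},x_{3,1},x_{3,2},x_{3,3})$ corresponds to the displayed diagonal triple, and I would check that $x$ represents a genuine tuple of weakly ordered density--operator spectra exactly when the local inequalities $\tfrac12\le x_{1,1},x_{2,1}\le 1$ and $0\le 1-x_{3,1}-x_{3,2}-x_{3,3}\le x_{3,3}\le x_{3,2}\le x_{3,1}\le 1$ hold, the first group being the qubit Weyl--chamber and density--operator conditions, the second the corresponding conditions on the $4$--level system. For such $x$, substituting $\lambda_A^{\max}=x_{1,1}$, $\lambda_B^{\max}=x_{2,1}$ and $\lambda_i$ equal to the $i$-th diagonal entry of the third matrix turns Bravyi's criterion into \eqref{BravyiIneq224_1}--\eqref{BravyiIneq224_4}. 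Combined with the first paragraph, this shows that $x$ lies in the entanglement polytope of the generic class if and only if $x$ satisfies the local inequalities together with \eqref{BravyiIneq224_1}--\eqref{BravyiIneq224_4}.

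I expect the only genuine subtlety to be this last translation: one must be sure that intersecting Bravyi's inequalities with the Weyl--chamber conditions neither discards admissible spectra (any unordered datum reduces to the ordered case by a permutation, which is a local unitary) nor introduces spurious ones, and that the absolute value and the minimum in \eqref{BravyiIneq224_4} are honest piecewise--linear constraints --- indeed genuinely linear once restricted to the Weyl chamber, where $\lambda_1-\lambda_3\ge0$ and $\lambda_2-\lambda_4\ge0$ --- so that the whole region is a bona fide polytope. Everything else is a direct appeal to the two quoted theorems.
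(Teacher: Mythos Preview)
Your proposal is correct and follows essentially the same approach as the paper: the paper does not supply a detailed proof block for this proposition, but treats it as a direct consequence of (i) Miyake's classification showing the $(2,2,4)$-rank class is unique and hence generic, (ii) the general fact that the generic polytope equals the full quantum marginal polytope, and (iii) Bravyi's solution of that marginal problem. Your write-up makes the same deductions explicit and adds the coordinate bookkeeping; there is no substantive difference in strategy.
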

Using qhull, we can numerically verify the following interesting result: 
\begin{prop}
The set of vertices of this polytope is the union of the vertices of the generic $2\times 2 \times 3$ with the fully mixed vertex $(\frac{1}{2},\frac{1}{2},\frac{1}{4},\frac{1}{4},\frac{1}{4})$. 
\end{prop}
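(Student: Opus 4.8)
The plan is to show that $\Delta_{224}$ is a \emph{pyramid}. Write $\lambda_1=x_{3,1}$, $\lambda_2=x_{3,2}$, $\lambda_3=x_{3,3}$ and $\lambda_4=1-x_{3,1}-x_{3,2}-x_{3,3}$ for the ordered eigenvalues of the third marginal, write $x_{1,1},x_{2,1}$ for the largest eigenvalues of the first two marginals, and set $p:=(\tfrac12,\tfrac12,\tfrac14,\tfrac14,\tfrac14)$ and $F:=\Delta_{224}\cap\{\lambda_4=0\}$. I want to prove $\Delta_{224}=\mathrm{conv}(F\cup\{p\})$. Two preliminary facts go into this. First, $\{\lambda_4\ge 0\}$ is precisely the density-operator inequality $0\le 1-x_{3,1}-x_{3,2}-x_{3,3}$ from the list defining $\Delta_{224}$, so it is valid and $F$ is a genuine face of $\Delta_{224}$; moreover, substituting $\lambda_4=0$ into Bravyi's inequalities \eqref{BravyiIneq224_1}--\eqref{BravyiIneq224_4} together with the local inequalities returns exactly the description \eqref{BravyiIneq223_1}--\eqref{BravyiIneq223_4} of the generic $2\times 2\times 3$ polytope, so $F$, read in the coordinates $(x_{1,1},x_{2,1},x_{3,1},x_{3,2})$, is $\Delta_{223}$ with its nine vertices $v_1,\dots,v_9$. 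Second, $p\in\Delta_{224}$: it is the fully mixed tuple, realised by the marginals of the generic representative $\tfrac12(\ket{000}+\ket{011}+\ket{102}+\ket{113})$, and a direct substitution shows it satisfies every defining inequality and in fact saturates all of them except the three upper bounds $x_{1,1}\le 1$, $x_{2,1}\le 1$ and $\lambda_4\ge 0$.

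The heart of the argument is the inclusion $\Delta_{224}\subseteq\mathrm{conv}(F\cup\{p\})$; the reverse inclusion is just convexity applied to $F\subseteq\Delta_{224}$ and $p\in\Delta_{224}$. Given $x\in\Delta_{224}$, I would set $t:=4\lambda_4(x)$. Since $4\lambda_4\le\lambda_1+\lambda_2+\lambda_3+\lambda_4=1$ by the Weyl-chamber chain $\lambda_4\le\lambda_3\le\lambda_2\le\lambda_1$, we have $t\in[0,1]$. If $t=1$, then $\lambda_1=\lambda_2=\lambda_3=\lambda_4=\tfrac14$, and then \eqref{BravyiIneq224_1} together with $x_{1,1}\ge\tfrac12$ forces $x_{1,1}=\tfrac12$ and symmetrically $x_{2,1}=\tfrac12$, i.e. $x=p$. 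Otherwise put $q:=(x-tp)/(1-t)$; I claim $q\in F$, which yields $x=(1-t)q+tp\in\mathrm{conv}(F\cup\{p\})$. A one-line computation gives $\lambda_4(q)=0$, so it is enough to check $q\in\Delta_{224}$. For any defining inequality $\ell(\cdot)\ge 0$ that $p$ saturates — by the first paragraph this is all of \eqref{BravyiIneq224_1}--\eqref{BravyiIneq224_4} and all the Weyl-chamber inequalities — we get $\ell(q)=\ell(x)/(1-t)\ge 0$ automatically. The only inequalities still to be verified at $q$ are $x_{1,1}\le 1$ and $x_{2,1}\le 1$; these unwind to the constraints $x_{1,1}+2\lambda_4\le 1$ and $x_{2,1}+2\lambda_4\le 1$ on $x$, and these hold on $\Delta_{224}$ because $x_{1,1}\le\lambda_1+\lambda_2=1-\lambda_3-\lambda_4\le 1-2\lambda_4$ by \eqref{BravyiIneq224_1} and $\lambda_3\ge\lambda_4$, and symmetrically for $x_{2,1}$. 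Hence $q\in\Delta_{224}\cap\{\lambda_4=0\}=F$.

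From the pyramid identity the proposition drops out. As $F$ is a face of $\Delta_{224}$, each of $v_1,\dots,v_9$ is a vertex of $\Delta_{224}$. The functional $\lambda_4$ attains its maximum $\tfrac14$ on $\Delta_{224}$ only at $p$ (the computation above), so $p$ is a vertex, and it is distinct from all $v_i$ because $\lambda_4$ vanishes on $\mathrm{aff}\,F$ while $\lambda_4(p)=\tfrac14$. Finally, since $\Delta_{224}=\mathrm{conv}(\{v_1,\dots,v_9\}\cup\{p\})$, every vertex of $\Delta_{224}$ is one of $v_1,\dots,v_9,p$, which is exactly the claim.

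The main obstacle is bookkeeping rather than ideas: one must fix an explicit (preferably non-redundant) list of defining inequalities for $\Delta_{224}$ — Bravyi's \eqref{BravyiIneq224_1}--\eqref{BravyiIneq224_4}, the bounds $\tfrac12\le x_{1,1},x_{2,1}\le 1$, and the chain $0\le 1-x_{3,1}-x_{3,2}-x_{3,3}\le x_{3,3}\le x_{3,2}\le x_{3,1}$ — verify exactly which of them $p$ saturates, and make sure the few it does not are controlled; the one mildly non-obvious input there is that $x_{1,1}+2\lambda_4\le 1$ is a valid inequality on $\Delta_{224}$. If one prefers to sidestep the pyramid structure entirely, the alternative is a brute-force vertex enumeration of the intersection of these roughly fifteen half-spaces — which is exactly what \texttt{qhull} does — and the obstacle then reduces to trusting, or reproducing by hand, that computation, with the argument above serving as the human-readable certificate for its output.
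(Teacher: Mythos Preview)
Your argument is correct and is genuinely different from what the paper does. The paper's ``proof'' is a single sentence: the result is obtained by feeding the defining inequalities to \texttt{qhull} and reading off the vertex list. It then \emph{observes}, as a corollary of that computation, that $\Delta_{224}$ is a cone over $\Delta_{223}$. You reverse this: you prove the cone/pyramid structure $\Delta_{224}=\mathrm{conv}(F\cup\{p\})$ directly and deduce the vertex set from it, which is both more transparent and independent of numerical software. The key step --- showing that the two non-saturated upper bounds $x_{1,1}\le 1$ and $x_{2,1}\le 1$ at $q$ are controlled by the valid inequality $x_{1,1}+2\lambda_4\le 1$, which in turn follows from \eqref{BravyiIneq224_1} and $\lambda_3\ge\lambda_4$ --- is exactly the right observation and is the only non-trivial ingredient.

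One minor point of bookkeeping: the paper's explicit inequality list in the appendix also contains the redundant constraints $x_{1,1}\ge 0$, $x_{2,1}\ge 0$, $x_{3,3}\ge 0$, none of which $p$ saturates. You anticipated this in your last paragraph (``preferably non-redundant''), and indeed they are all implied by inequalities you do handle ($x_{1,1}\ge\tfrac12$, $x_{2,1}\ge\tfrac12$, $\lambda_3\ge\lambda_4\ge 0$), so nothing is missing --- but it would be cleaner to state explicitly which inequalities you drop and why, rather than leaving it implicit.
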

I.e. the generic $2\times 2 \times 4$-polytope is a \emph{cone} over the generic $2\times 2 \times 3$-polytope. 
\section{More analytically obtained results on entanglement polytopes }
Here we present some analytically verifiable results on entanglement polytopes in $2 \times 3 \times n$ systems. Using the Inequalities given by Klyachko in the appendix of \cite{Klyachko2007} for mixed $2 \times 3$ systems we can compute the vertices of the full polytopes using software. A classification of the SLOCC orbits in $2 \times 3 \times n$ systems is given in \cite{Chen2009}, see table \ref{table:23Nentclasses} It turns out that for some orbits the polytope can easily be obtained from the generic one. 

\subsection{$2\times 3 \times 3$ polytopes}\label{section:233polytopes}
There are six orbits with full rank density matrices in the $2 \times 3 \times 3$ system. In \cite{Chen2009}, the following non-normalized representatives are given:  
\begin{align*}
\ket{\psi_0} &=\ket{000}+\ket{111}+\ket{022}+\ket{122} \\
\ket{\psi_1} &=\ket{100}+\ket{010}+\ket{001}+\ket{122}\\
\ket{\psi_2} &=\ket{100}+\ket{010}+\ket{001}+\ket{022}\\
\ket{\psi_3} &=\ket{100}+\ket{010}+\ket{022}\\
\ket{\psi_4} &=\ket{100}+\ket{010}+\ket{001}+\ket{121}+\ket{112}\\
\ket{\psi_5} &=\ket{001}+\ket{010}+\ket{121}+\ket{112}\\
\end{align*}

\begin{prop}[Hierarchy of SLOCC Orbits in $2\times 3 \times 3$\footnote{Both the statement and the proof are unpublished work by P\'eter Vrana}]\label{prop:233hierarchy}
\begin{align}
\overline{G\cdot\ket{\psi_0}} &\supset \overline{G\cdot\ket{\psi_1}} \supset \overline{G\cdot\ket{\psi_2}} \supset \overline{G\cdot\ket{\psi_3}} \\
\overline{G\cdot\ket{\psi_0}} &\supset \overline{G\cdot\ket{\psi_4}} \supset \overline{G\cdot\ket{\psi_5}}
\end{align}
\end{prop}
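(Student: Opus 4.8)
The plan is to reduce every claimed inclusion $\overline{G\cdot\ket{\psi_i}}\supseteq\overline{G\cdot\ket{\psi_j}}$ to the membership $\ket{\psi_j}\in\overline{G\cdot\ket{\psi_i}}$, which by the orbit-closure remark of Chapter~\ref{ch:entpoly} immediately gives the inclusion of closures. By transitivity it then suffices to handle the five consecutive pairs $(\psi_0,\psi_1)$, $(\psi_1,\psi_2)$, $(\psi_2,\psi_3)$, $(\psi_0,\psi_4)$, $(\psi_4,\psi_5)$. Strictness of all inclusions is automatic: $\overline{G\cdot\ket{\psi_i}}$ is irreducible with $G\cdot\ket{\psi_i}$ dense in it, so distinct orbits have distinct closures, and the six orbits are pairwise distinct by the classification in \cite{Chen2009}. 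The two ``top'' pairs $(\psi_0,\psi_1)$ and $(\psi_0,\psi_4)$ are then trivial, because $\ket{\psi_0}$ represents the generic SLOCC class of the $2\times3\times3$ system — the dimension count $\dim G = 19 > 17 = \dim P(\mathcal H)$ together with a stabiliser computation for the diagonal-pencil representative shows $G\cdot\ket{\psi_0}$ is dense, or one may simply quote \cite{Chen2009} — so $\overline{G\cdot\ket{\psi_0}} = P(\mathcal H)$ contains everything.

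Two of the three remaining pairs are handled by Corollary~\ref{cor:magiclemma}. For $(\psi_2,\psi_3)$ one notes that $\ket{\psi_2}$ is free, that $\#\supp(\ket{\psi_2}) = 4 \le (2-1)+(3-1)+(3-1) = 5$, and that $\supp(\ket{\psi_3}) = \{(1,0,0),(0,1,0),(0,2,2)\}\subseteq\supp(\ket{\psi_2})$; Corollary~\ref{cor:magiclemma} then places $\ket{\psi_3}$ in $\overline{G\cdot\ket{\psi_2}}$. For $(\psi_4,\psi_5)$ the same works verbatim: $\ket{\psi_4}$ is free, $\#\supp(\ket{\psi_4}) = 5 \le 5$, and $\supp(\ket{\psi_5})\subseteq\supp(\ket{\psi_4})$.

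The one genuinely nontrivial pair is $(\psi_1,\psi_2)$, i.e.\ $\ket{\psi_2}\in\overline{G\cdot\ket{\psi_1}}$; here Corollary~\ref{cor:magiclemma} does not apply, since $\supp(\ket{\psi_2})\not\subseteq\supp(\ket{\psi_1})$, so a genuine limiting argument is needed. My plan is to exhibit an explicit one-parameter family $g(t)\in G$, possibly after first moving $\ket{\psi_1}$ by a fixed $g_0\in G$ to a convenient representative, with $\lim_{t\to\infty} g(t)\cdot\ket{\psi_1} = \ket{\psi_2}$ in the normalisation of \eqref{eq:action}. It is clearest to read the state through the $1\,|\,23$ cut: a vector in $\C^2\otimes\C^3\otimes\C^3$ is a pencil of $3\times3$ matrices, the $G$-action is pencil equivalence together with $SL_2$ acting on the pencil variable, and the Kronecker normal forms identify $\ket{\psi_1}$ with the pencil having one eigenvalue of Jordan type $(2)$ and a second, simple eigenvalue, and $\ket{\psi_2}$ with the pencil having a single eigenvalue of Jordan type $(2,1)$. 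So I want the elementary degeneration in which the simple eigenvalue is driven onto the position of the size-$2$ Jordan block while the two blocks remain separate; I would assemble $g(t)$ from the corresponding Möbius motion of the pencil variable together with compensating transformations on the two $\C^3$-factors, and then check the limit by a direct computation.

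The main obstacle is precisely this last verification. Naive families overshoot: a one-parameter subgroup that is unbounded in $SL_2$ tends to collapse the first-party marginal to rank $1$, producing a state that is a product across the $1\,|\,23$ cut and hence lies strictly below $\ket{\psi_2}$ in the hierarchy, while the ``obvious'' torus-plus-shear families run into incompatible scaling requirements (the same coefficient is forced both to $0$ and to a nonzero limit). One therefore has to tune $g(t)$ so that exactly the unwanted basis components are exponentially suppressed and the limit lands on the $(2,1)$-Jordan stratum and not on the more degenerate $(3)$-Jordan or rank-deficient strata; the Kronecker picture makes it clear that such a family exists, but producing the explicit $g_0$ and $g(t)$ and verifying the limit is where the real work lies.
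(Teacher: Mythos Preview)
Your reductions are sound, and for the two inclusions issuing from $\ket{\psi_0}$ your route is actually cleaner than the paper's: the paper constructs an explicit one-parameter family $g(\varepsilon)$ and verifies $\lim_{\varepsilon\to 0}g(\varepsilon)\cdot\ket{\psi_0}=\ket{\psi_1}$ by direct computation, whereas you short-circuit this via the genericity of $\ket{\psi_0}$ (legitimate once justified independently of the proposition, as you do through \cite{Chen2009} or the finiteness-of-orbits argument). For the two ``last'' inclusions $(\psi_2,\psi_3)$ and $(\psi_4,\psi_5)$ you and the paper agree, both invoking Theorem~\ref{thm:magiclemma}/Corollary~\ref{cor:magiclemma}.

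The gap is $(\psi_1,\psi_2)$. You correctly diagnose it as the only nontrivial step and give an accurate pencil-theoretic picture of the required degeneration, but you stop short of producing a family, and your closing paragraph is a list of obstacles rather than a proof. In fact the degeneration is elementary in your own language. For $t\neq 0$ the state
\[
\ket{\chi_t}:=\ket{100}+\ket{010}+\ket{001}+\ket{022}+t\,\ket{122}
\]
has matrix pencil $\mu M_0+\lambda M_1$ with determinant $-\mu^{2}(\mu+t\lambda)$ and nullity~$1$ at each root, hence two distinct generalised eigenvalues carrying Jordan blocks of sizes $2$ and $1$ --- precisely your Kronecker data for $\ket{\psi_1}$. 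Thus $\ket{\chi_t}\in G\cdot\ket{\psi_1}$ for every $t\neq 0$, and letting $t\to 0$ gives $\ket{\chi_0}=\ket{\psi_2}$ with no rank collapse, since all four basis kets persist in the limit. The paper instead simply writes down an explicit $g_1(\varepsilon)\in G$ and computes $\lim_{\varepsilon\to 0}g_1(\varepsilon)\cdot\ket{\psi_1}$ directly; either route closes the argument you left open.
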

\begin{proof}
We will give a full proof of the first inclusion and only sketch the proof of the others. Notice that the last inclusion in every line follows directly from \ref{thm:magiclemma}. 
\begin{clm} $$\lim_{\varepsilon \to 0}\frac{1}{\varepsilon}\left( \left(
\begin{array}{cc}
 1 & -1 \\
 0 & \varepsilon  \\
\end{array}
\right), \left(
\begin{array}{ccc}
 1 & -1 & 0 \\
 0 & \varepsilon  & 0 \\
 0 & 0 & 1 \\
\end{array}
\right),\left(
\begin{array}{ccc}
 1 & -1 & 0 \\
 0 & \varepsilon  & 0 \\
 0 & 0 & 1 \\
\end{array}
\right)\right)\cdot \ket{\psi_0} = \ket{\psi_1}$$
\end{clm}
\begin{proof}[Proof of claim]

Let $g(\varepsilon)$ denote the above SLOCC operator. We calculate:
\begin{align*}g(\varepsilon)\cdot \ket{\psi_0} &= \frac{\frac{1}{\varepsilon} \left( \ket{000} + (-\ket{0}+\varepsilon \ket{1}) \otimes (-\ket{0}+\varepsilon \ket{1}) \otimes  (-\ket{0}+\varepsilon \ket{1}) + \ket{022} +  (-\ket{0}+\varepsilon \ket{1})\otimes \ket{22}\right)}{\|(\text{the above})\|} \\ 
&=   \frac{\frac{1}{\varepsilon} \left( \ket{000} -\ket{000}+\varepsilon(\ket{100}+\ket{010}+\ket{001}) - \varepsilon^2 (\ket{011}+\ket{101}+\ket{110})\right)}{\|\cdots\|} \\
&+\frac{\frac{1}{\varepsilon}\left(\varepsilon^3 \ket{111} + \ket{022}  -\ket{022}+\varepsilon \ket{122}\right)}{\|\cdots\|} \\
&= \frac{\ket{100}+\ket{010}+\ket{001}+\ket{122}-\varepsilon(\ket{110}+\ket{101}+\ket{011})+\varepsilon^2\ket{111}}{\sqrt{4+3\varepsilon^2+\varepsilon^4}}\\ &\overset{\varepsilon \to 0}{\to} \ket{100}+\ket{010}+\ket{001}+\ket{122} = \ket{\psi_1} 
\end{align*}
\end{proof}
The other calculations are similar, so we just state their results. Let
\begin{align*}
g_1(\varepsilon)&=\left(\left(
\begin{array}{cc}
 1 & 0 \\
 0 & \varepsilon  \\
\end{array}
\right),\left(
\begin{array}{ccc}
 1 & 0 & 0 \\
 0 & 0 & \frac{1}{\varepsilon } \\
 0 & 1 & 0 \\
\end{array}
\right),\left(
\begin{array}{ccc}
 0 & 1 & 0 \\
 0 & 0 & 1 \\
 1 & 0 & 0 \\
\end{array}
\right)\right) \\
g_2(\varepsilon)&=\left(\frac{1}{2} \left(
\begin{array}{cc}
 \varepsilon  & 0 \\
 0 & 1 \\
\end{array}
\right)\cdot\left(
\begin{array}{cc}
 1 & -1 \\
 1 & 1 \\
\end{array}
\right),\left(
\begin{array}{ccc}
 1 & 0 & 0 \\
 0 & \frac{1}{\varepsilon } & \frac{1}{\varepsilon } \\
 0 & 0 & -\varepsilon  \\
\end{array}
\right)\cdot\left(
\begin{array}{ccc}
 1 & 1 & 0 \\
 1 & -1 & 0 \\
 0 & 0 & 1 \\
\end{array}
\right),\left(
\begin{array}{ccc}
 1 & 0 & 0 \\
 0 & \frac{1}{\varepsilon } & -\frac{1}{\varepsilon } \\
 0 & 0 & \varepsilon  \\
\end{array}
\right)\cdot\left(
\begin{array}{ccc}
 1 & 1 & 0 \\
 1 & -1 & 0 \\
 0 & 0 & 1 \\
\end{array}
\right)\right)
\end{align*}
then \begin{align*}
\lim_{\varepsilon \to 0} g_1(\varepsilon)\cdot \ket{\psi_1} &= \ket{\psi_2} \\
\lim_{\varepsilon \to 0} g_2(\varepsilon)\cdot \ket{\psi_4} &= \ket{\psi_5}
\end{align*}
and the last inclusions follow from \ref{thm:magiclemma}.
\end{proof}
We conclude that (since there are only finitely many orbits) the polytope of $\ket{\psi_0}$ is the generic one. It is therefore given by the Quantum Marginal Inequalities. We state its vertices here for convenience.
\begin{lem}
The entanglement polytope of $\ket{\psi_0}$ is the full $2\times 3 \times 3$ polytope. It has 18 vertices, out of which 13 are vertices of the $2 \times 2 \times 3$ and the   $2 \times 3 \times 2$ polytope\footnote{Notice there are actually several embeddings of these lower-dimensional systems in the $2 \times 3 \times 3$ system, but only the image of one of them will actually lie in our Weyl chamber.} (i.e. the ones given in Proposition \ref{prop:vertices223}, possibly with system 2 and 3 exchanged). The five new vertices are 
\begin{align*}
v_1 &= (\frac{1}{2},\frac{1}{3},\frac{1}{3},\frac{1}{3},\frac{1}{3})\\
v_2 &= (1,\frac{1}{3},\frac{1}{3},\frac{1}{3},\frac{1}{3})\\
v_3 &= (\frac{1}{2},\frac{1}{3},\frac{1}{3},\frac{2}{3},\frac{1}{6})\\
v_4 &= (\frac{1}{2},\frac{2}{3},\frac{1}{6},\frac{1}{3},\frac{1}{3})\\
v_5 &= (\frac{1}{2},\frac{2}{3},\frac{1}{6},\frac{2}{3},\frac{1}{6})\\
\end{align*}
\end{lem}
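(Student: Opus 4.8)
The statement has two parts: that $\Delta_{\psi_0}$ is the full $2\times3\times3$ polytope $\Delta$, and that $\Delta$ has exactly the $18$ listed vertices. The first part I would dispatch as indicated in the paragraph preceding the statement. By \cite{Chen2009} there are only finitely many SLOCC orbits on $P(\mathcal H)$, so, since $P(\mathcal H)$ is irreducible, one of them is dense; and since the local ranks are SLOCC invariants which are lower semicontinuous (they can only drop when passing to the orbit closure), the dense orbit must be one of the six full-rank orbits $\overline{G\cdot\ket{\psi_0}},\dots,\overline{G\cdot\ket{\psi_5}}$. Proposition~\ref{prop:233hierarchy} exhibits $\overline{G\cdot\ket{\psi_1}},\dots,\overline{G\cdot\ket{\psi_5}}$ as \emph{proper} subvarieties of $\overline{G\cdot\ket{\psi_0}}$, so none of them is dense; hence $\overline{G\cdot\ket{\psi_0}}=P(\mathcal H)$, and by the identification of entanglement polytopes with moment polytopes, equivalently by the generic case of Corollary~2 of \cite{Walter2012}, $\Delta_{\psi_0}=\Delta$ is the polytope cut out by the Quantum Marginal Inequalities.

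The real content is the vertex enumeration. The plan is to first write $\Delta$ as an explicit intersection of half-spaces in the ``Most Local Eigenvalues'' coordinates $(x_{1,1},x_{2,1},x_{2,2},x_{3,1},x_{3,2})$, adjoining to the one-body marginal inequalities for pure states of $\C^2\otimes\C^3\otimes\C^3$ (extracted from Klyachko's solution \cite{Klyachko2007}) the local inequalities that express ordering and positivity of the three spectra, namely $\tfrac12\le x_{1,1}\le 1$, $x_{2,1}\ge x_{2,2}\ge 1-x_{2,1}-x_{2,2}\ge 0$, and the analogue on the third system. Feeding this half-space description to a vertex-enumeration program such as \texttt{qhull} yields a finite list of extreme points, and the claim is that this list has $18$ elements: $v_1,\dots,v_5$ above together with $13$ more.

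To interpret the remaining $13$, I would observe that the hyperplane $\{1-x_{2,1}-x_{2,2}=0\}$ supports $\Delta$ (it comes from a density-operator condition), so its intersection $F$ with $\Delta$ is a face, and $F$ is exactly the image — under the embedding into the positive Weyl chamber singled out in the footnote — of the $2\times2\times3$ polytope of Proposition~\ref{prop:vertices223}: one checks that the $2\times3\times3$ marginal inequalities restrict on this hyperplane to the defining inequalities of the $2\times2\times3$ polytope, since a pure state of $\C^2\otimes\C^3\otimes\C^3$ with $\mathrm{rank}\,\rho^{(2)}\le 2$ is supported on a $\C^2\otimes\C^2\otimes\C^3$ subsystem. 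A vertex of a face of a polytope is a vertex of the polytope, so all $9$ vertices of $F$ are vertices of $\Delta$; the symmetric argument on the third system contributes another copy of the $2\times3\times2$ polytope, and the two faces overlap in the $2\times2\times2$ polytope, which has $5$ vertices, giving $9+9-5=13$ vertices in the union. The remaining $18-13=5$ extreme points are the $v_i$, each of which has both qutrit systems of full rank and hence lies in neither of these faces; for each I would confirm vertex-hood directly by checking that five of the defining inequalities are tight at it with linearly independent gradients.

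The main obstacle is that the completeness of the count — that there are exactly $18$ vertices and no more — rests entirely on the correctness of the convex-hull computation and on a faithful transcription of Klyachko's inequalities into the chosen coordinates and conventions; a secondary subtlety is pinning down which of the several embeddings of the lower-dimensional polytopes lands inside the positive Weyl chamber, so that the $13$ ``old'' vertices are matched up without recomputing those polytopes. Everything else is routine.
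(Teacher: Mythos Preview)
Your proposal is correct and follows the same approach the paper takes: the paper deduces from Proposition~\ref{prop:233hierarchy} and the finiteness of orbits that $\ket{\psi_0}$ generates the generic class, so its polytope is cut out by the Quantum Marginal Inequalities, and then simply states the vertices as the output of a convex-hull routine (\texttt{qhull}) applied to Klyachko's inequalities. Your treatment is in fact more thorough than the paper's, which gives no proof of the lemma beyond ``We state its vertices here for convenience'': your face-counting argument ($9+9-5=13$) explaining why the lower-dimensional polytopes contribute exactly $13$ vertices is a genuine addition, as is your explicit check that the dense orbit must be the one attached to $\ket{\psi_0}$ rather than merely asserting it.
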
For the other orbits only the "top" polytopes are readily computed. 
\begin{prop}\label{prop:233polytopes}
The entanglement polytope $\Delta_{\ket{\psi_4}}$ is given by the intersection of the generic one with the single halfspace given by the inequality
$$x_{1,1}+2x_{2,1}+x_{2,2}+2x_{3,1}+2x_{3,2} \geq 2.$$
The entanglement polytope $\Delta_{\ket{\psi_1}}$ is given by the intersection of the generic polytope with the halfspaces 
\begin{align}
x_{1,1}+x_{2,1}+x_{2,2}+x_{3,1}+x_{3,2} &\geq 3 \\
x_{1,1}+2x_{2,1}+x_{2,2}+x_{3,1} &\geq 2 \\
x_{1,1}+x_{2,1}+2x_{3,1}+x_{3,2} &\geq 2 \\
\end{align}
\end{prop}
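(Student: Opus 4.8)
The plan is to compute both polytopes by the three-step procedure that produced $\Delta_{W_3}$ in Section~\ref{section:W3polytope}: obtain an outer approximation as the intersection of a few explicit halfspaces with the generic polytope, obtain an inner approximation as the convex hull of vertices certified to lie inside, and then check with \texttt{qhull} that the extremal points of the outer approximation are exactly the vertices of the inner one. Two structural facts make this feasible. First, both $\ket{\psi_1}$ and $\ket{\psi_4}$ are free, with $\#\supp(\ket{\psi_1}) = 4$ and $\#\supp(\ket{\psi_4}) = 5$, both at most $\sum_i(d_i-1) = 1+2+2 = 5$; hence Theorem~\ref{thm:magiclemma} and Corollary~\ref{cor:magiclemma} apply, and by Proposition~\ref{prop:free1} all reduced density matrices of states supported on $\supp(\ket{\psi_i})$ are diagonal. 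Second, Proposition~\ref{prop:233hierarchy} gives $\Delta_{\ket{\psi_5}} \subseteq \Delta_{\ket{\psi_4}}$ and $\Delta_{\ket{\psi_3}} \subseteq \Delta_{\ket{\psi_2}} \subseteq \Delta_{\ket{\psi_1}}$.

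For the outer approximation I would establish each listed inequality with the eigenvalue-estimate method of Section~\ref{section:EigValestimates}. For a general orbit element $\ket{\Psi} = g\cdot\ket{\psi_i}$, a QR decomposition lets me assume each $g_j$ upper triangular, so that $\ket{\Psi}$ is supported on the basis vectors dominated by $\supp(\ket{\psi_i})$ in the triangular order. Proposition~\ref{prop:eigenvalueestimates} then bounds the relevant positive combination of local eigenvalues below by $\braket{\Psi|M|\Psi}$ for an explicit operator $M$, diagonal in the computational basis, built from a unit vector on the qubit factor and, for the terms carrying both $x_{2,1},x_{2,2}$ (resp.\ $x_{3,1},x_{3,2}$), from a $1$- and a $2$-dimensional subspace of the relevant $\C^3$; with the subspaces chosen well, the smallest eigenvalue of $M$ on the occurring basis vectors equals the claimed constant ($2$ for $\ket{\psi_4}$; $3,2,2$ for the three $\ket{\psi_1}$ inequalities), so $\braket{\Psi|M|\Psi}$ is bounded below by it. To guess the correct coefficients and test vectors beforehand, I would first compute, via Corollary~\ref{cor:mindist} (for $\ket{\psi_4}$ this is exactly the $5\times 5$ matrix $A$ of the Example following it) and the criterion of Proposition~\ref{prop:mindistcrit}, the points of $\Delta_{\ket{\psi_i}}$ closest to the origin and to the relevant faces of the generic polytope, and read the normal directions off from there.

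For the inner approximation, freeness gives $K(\ket{\psi_i})\cap\mathfrak{t}^+\subseteq\Delta_{\ket{\psi_i}}$ by Proposition~\ref{prop:free2}, while Corollary~\ref{cor:magiclemma} puts every state obtained from $\ket{\psi_i}$ by setting some coefficients to zero into $\overline{G\cdot\ket{\psi_i}}$; choosing these so that the resulting state falls into a class whose polytope is already known (the $2\times 2\times 3$ and $2\times 3\times 2$ polytopes of Proposition~\ref{prop:vertices223} and the three-qubit polytopes of \cite{Walter2012}) certifies a collection of vertices inside $\Delta_{\ket{\psi_i}}$, and Proposition~\ref{prop:233hierarchy} supplies more. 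I would then run \texttt{qhull} on the outer inequalities together with the Quantum Marginal Inequalities of the generic polytope, obtain the extremal points of the outer approximation, and check that each lies in the convex hull of the certified interior vertices; since they are extremal, the two descriptions coincide, which is the assertion.

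The main obstacle is the eigenvalue-estimate step: one must pick test subspaces for which the bound of Proposition~\ref{prop:eigenvalueestimates} is \emph{tight}, and then carry out the combinatorial check that the minimum of the resulting quadratic form over the orbit equals the stated constant rather than something larger — for $\ket{\psi_1}$ this has to be done three times. Tightness is ultimately witnessed by the matching vertices, so the genuine bottleneck is making the certified interior vertex list complete enough that \texttt{qhull} returns exactly it; where it does not, one augments it with further closest-point computations of the type in Corollary~\ref{cor:mindist}.
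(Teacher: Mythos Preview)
Your three-step strategy---outer approximation by inequalities, inner approximation by certified vertices, \texttt{qhull} comparison---is exactly the paper's. The structural observations (freeness, Theorem~\ref{thm:magiclemma}, Proposition~\ref{prop:free2}, degeneration to lower-dimensional generic classes via Corollary~\ref{cor:magiclemma}) are also the same ones the paper uses for the inner approximation.

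The one genuine difference is how the outer inequalities get \emph{proved}. You propose to use Corollary~\ref{cor:mindist} only as a heuristic to guess the normals and then establish each inequality uniformly via Proposition~\ref{prop:eigenvalueestimates}. The paper instead treats the closest-point computation itself as the proof: once the linear system $A\mathbf{c} = \lambda\mathbf{1}$ is solved, Proposition~\ref{eigenvectorprop} (via Proposition~\ref{prop:mindistcrit}) certifies that the resulting $\lambda^*$ \emph{is} the point of $\Delta_{\ket{\psi_i}}$ closest to the origin, and the separating-hyperplane inequality $\langle x-O,\lambda^*-O\rangle \geq \|\lambda^*-O\|^2$ follows immediately from convexity. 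This dispatches the single $\ket{\psi_4}$ inequality and the first $\ket{\psi_1}$ inequality with no test-vector search at all. Only the remaining two $\ket{\psi_1}$ inequalities are handled by the eigenvalue-estimate method, and there the paper rewrites $2x_{2,1}+x_{2,2}$ as $x_{2,1}+1-x_{2,3}$ and bounds the \emph{smallest} eigenvalue from above, rather than working with a two-dimensional subspace as you suggest; the two formulations are equivalent. Your uniform approach would work, but the paper's hybrid saves the combinatorial tightness check for the closest-point inequalities. A minor point: invoking Proposition~\ref{prop:233hierarchy} for the inner approximation does not help directly, since $\Delta_{\ket{\psi_5}},\Delta_{\ket{\psi_2}},\Delta_{\ket{\psi_3}}$ are not yet known at this stage; the paper instead degenerates $\ket{\psi_1}$ and $\ket{\psi_4}$ straight to the generic $2\times2\times3$ and $2\times3\times2$ classes.
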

\begin{proof}
We start with the entanglement polytope of $\ket{\psi_4}$. Applying corollary \ref{cor:magiclemma} to the state$\ket{\psi} =\sqrt{a}\ket{100}+\sqrt{b}\ket{010}+ \sqrt{c}\ket{001}+\sqrt{d}\ket{121}+\sqrt{e}\ket{112}$, we have to solve the linear equations 
$$A(\ket{\psi}) = \begin{pmatrix} 3 & 1 & 1 & 1 & 1 \\ 1 & 3 & 1 & 0 & 1 \\ 1 & 1 & 3 & 1 & 0 \\ 1 & 0 & 1 & 3 & 1 \\ 1 & 1 & 0 & 1 & 3 \end{pmatrix}\begin{pmatrix} a \\ b \\ c \\ d \\ e \end{pmatrix} = \lambda \begin{pmatrix} 1 \\ 1 \\ 1 \\ 1 \\ 1 \end{pmatrix}$$
which can be seen to lead to $a=\frac{1}{9},b=c=d=e=\frac{2}{9}$. The corresponding state then has local eigenvalues 
$$\lambda^*=\left(\left(\frac{5}{9},\frac{4}{9}\right),\left(\frac{4}{9},\frac{1}{3},\frac{2}{9}\right),\left(\frac{4}{9},\frac{1}{3},\frac{2}{9}\right)\right)$$
Now we can compute the corresponding inequality $\langle x - O, \lambda^*-O\rangle \geq \langle\lambda^*-O, \lambda^*-O\rangle$. Since $O = ((\frac{1}{2},\frac{1}{2}),(\frac{1}{3},\frac{1}{3},\frac{1}{3}),(\frac{1}{3},\frac{1}{3},\frac{1}{3}))$, $\lambda^* - O = \left(\left(\frac{1}{18},\frac{-1}{18}\right),\left(\frac{1}{9},0,\frac{-1}{9}\right),\left(\frac{1}{9},0,\frac{-1}{9}\right)\right)$ and $\langle\lambda^*-O,\lambda^*-O\rangle = \frac{1}{18}$. The left hand side of the inequality is 
\begin{align*}
&\left\langle  \left(\left(\frac{1}{18},\frac{-1}{18}\right),\left(\frac{1}{9},0,\frac{-1}{9}\right),\left(\frac{1}{9},0,\frac{-1}{9}\right)\right),\right. \\ & \left.\left(\left(x_{1,1}-\frac{1}{2},x_{1,2}-\frac{1}{2}\right),\left(x_{2,1}-\frac{1}{3},x_{2,2}-\frac{1}{3},x_{2,3}-\frac{1}{3}\right),\left(x_{2,1}-\frac{1}{3},x_{2,2}-\frac{1}{3},x_{2,3}-\frac{1}{3} \right)\right)\right\rangle \\ 
&= \frac{x_{1,1}}{18} -\frac{x_{1,2}}{18} + \frac{x_{2,1}}{9} -\frac{x_{2,3}}{9}+\frac{x_{3,1}}{9} -\frac{x_{3,2}}{9} \geq \frac{1}{18}
\end{align*}
which yields the claimed inequality after converting to the greater local eigenvalues, i.e. replacing $x_{i,d_i} = 1 - \sum_{j=1}^{d_i-1}x_{i,j}$. 
This procedure can of course be automatised and has been implemented\footnote{See section \ref{section:closestpointfinder} in the appendix.}.
Thus we get an outer approximation of the polytope. One can calculate now that the the vertices of the generic polytope intersected with this halfspace are exactly the vertices of the generic polytope except the origin $v_1 = (\frac{1}{2},\frac{1}{3},\frac{1}{3},\frac{1}{3},\frac{1}{3})$. This means to show that our claim is correct we have to verify all other vertices are included in $\Delta{\ket{\psi_4}}$. This can be done in two steps: 
\begin{enumerate}[i)]
\item Show the orbit closure of $\ket{\psi_4}$ contains the maximal $2 \times 2 \times 3$ and $2 \times 3 \times 2$ orbits. Then also their entanglement polytopes, along with all their vertices, will be contained in $\Delta_{\psi_4}$
\item Show that the 4 vertices not accounted for in this way also lie in the polytope using Proposition \ref{prop:free2}.
\end{enumerate}
Point i) can be checked again by virtue of Corollary \ref{cor:magiclemma}: Exchange first $\ket{0} \leftrightarrow \ket{1}$ on the third system, leading to a state $\ket{011}+\ket{101}+\ket{000}+\ket{120}+\ket{112}$. By Corollary \ref{cor:magiclemma},
$\ket{011}+\ket{101}+\ket{000}+\ket{112}$ is in  the orbit closure, this however is a representative of the generic orbit in $2\times 2 \times 3$. In exactly the same way one checks that the orbit closure contains a representative of the generic orbit in $2 \times 3 \times 2$. 
For Point ii) we can use Proposition \ref{prop:free2}. Any point in the convex hull of $\{\Phi(\ketbra{010}),\Phi(\ketbra{001}),\Phi(\ketbra{100}),\Phi(\ketbra{121}),\Phi(\ketbra{112})\}$ witb coefficients $a,b,c,d$ and $e$ (satisfying $a+b+c+d+e = 1$) respectively is given by diagonal matrices with entries $$\lambda_1 = (a+b,c+d+e),\lambda_2 = (b+c,a+e,d), \lambda_3 = (a+c,b+d,e)$$ 
We now show that we can get the remaining vertices $v_2,\ldots v_5$ from this (up to ordering, which can of course be done using local unitaries). E.g. to get $v_2 = (1,1/3,1/3,1/3,1/3)$ we need $d = 1/3, e = 1/3$ and then immediately get $c = 1/3, a = 0, b= 0$. Now let $d=1/3, e = 1/6$. It then follows that for $c=0, b=1/6, a =1/3$ we get. $v_3 =  (1/2,1/3,1/3,2/3,1/6)$ We therefore conclude that the entanglement polytope of $\ket{\psi_4}$ is indeed given by the claimed inequality. 
Similarly we get $v_4$ for $a=e=1/3$ and $b=d= 1/6$ and $v_5$ for $a = 1/2, b = 0$ and $c=d=e=1/6$. \\
Now for the second state. The first inequality $x_{1,1}+x_{2,1}+x_{2,2}+x_{3,1}+x_{3,2} \geq 3$ can be proven using the closest point method in exactly the same way as above. The equation in this case is 
$$ \begin{pmatrix} 3 & 1 & 1 & 1 \\ 1 & 3 & 1 & 0 \\ 1 & 1 & 3 & 0 \\ 1 & 0  & 0 &  3 \end{pmatrix}\begin{pmatrix} a \\ b \\ c \\ d \end{pmatrix} = \lambda\begin{pmatrix} 1 \\ 1 \\ 1 \\ 1 \end{pmatrix} \ ,$$  the calculation is not particularly thrilling, and we skip it. The other two, however, must be proven algebraically (at least one, the other then follows by symmetry). By using $x_{2,1}+x_{2,2} = 1 -x_{2,3}$, we see that the first inequality $x_{1,1}+2x_{2,1}+x_{2,2}+x_{3,1} \geq 2$ is equivalent to $x_{1,1}+x_{2,1}- x_{2,3}+x_{3,1} \geq 1$. Now we use an argument similar to that in section \ref{section:EigValestimates} (where we did not consider inequalities with differences). Let $\ket{\psi} \in G\cdot\ket{\psi_1}$ and $\rho = \ketbra{\psi}$ and $\sigma(A)$  denote the spectrum of any oprator $A$.   By the min-max-Principle, we know that 
\begin{equation} x_{2,3}= \min_{\lambda \in \sigma(\rho^{(2)})} \lambda = \min_{\substack{v \in \C^3 \\ \|v\|=1}}\braket{v|\rho^{(2)}|v} =\min_{\substack{v \in \C^3 \\ \|v\|=1}} \braket{\psi_1|(\mathds{1}_1\otimes \ketbra{v} \otimes \mathds{1}_3)|v}.\end{equation}
Therefore, combining this with \ref{prop:eigenvalueestimates}, we see that for any choice of normalised vectors $v_1 \in \C^2, v_2,v_3,v_4 \in \C^3$ and any state $\ket{\psi}$ with ordered local eigenvalues $\lambda_{i,j}, i = 1,\ldots,3, j=1,\ldots d_i$, the following estimate holds:
\begin{equation}\label{proof:1233 estimate}\lambda_{1,1}+\lambda_{2,1}-\lambda_{2,2}+\lambda_{3,1} \geq \braket{\psi|\left(\ketbra{v_1}\otimes\mathds{1}_{23}+\mathds{1}\otimes(\ketbra{v_2}-\ketbra{v_3})\otimes\mathds{1}_3+\mathds{1}_{12}\otimes\ketbra{v_4}\right)|\psi}
\end{equation}
Now we have to choose a smart basis in which to expand $\psi$. For this we first permute $\ket{0} \rightarrow \ket{1}\rightarrow\ket{2}\rightarrow\ket{0}$ on the second system and obtain a new representative of the orbit of $\ket{\psi_1}$: 
$$\ket{\psi_1} =\ket{100}+\ket{010}+\ket{001}+\ket{122}\mapsto \ket{\psi'} = \ket{110}+\ket{020}+\ket{011}+\ket{102}$$
Any state in $G\cdot\ket{\psi_1}$ can thus be written as $g\cdot\ket{\psi'}$ for some $g \in G$. By changing our basis further, if needed, we can assume $g$ is upper triangular. Then $\ket{\phi}:= g\cdot \ket{\psi'}=\sum_{\mathbf{j}\in\supp(\ket{\phi})}c_{\mathbf{j}}\ket{\mathbf{j}}$ will have support contained in
$$\supp(\ket{\phi}) \subseteq \{\ket{000},\ket{100},\ket{010},\ket{001},\ket{011},\ket{110},\ket{101},\ket{002},\ket{102},\ket{020}\}$$
Now we use this in the estimate \eqref{proof:1233 estimate} and plug in $v_1=\ket{0},v_2=\ket{0},v_3=\ket{2},v_4 = \ket{0}$. The estimate then becomes 
\begin{align*}
\lambda_{1,1}+\lambda_{2,1}-\lambda_{2,2}+\lambda_{3,1} &\geq \braket{\phi|\left(\ketbra{0}\otimes\mathds{1}_{23}+\mathds{1}\otimes(\ketbra{0}-\ketbra{2})\otimes\mathds{1}_3+\mathds{1}_{12}\otimes\ketbra{0}\right)|\phi} \\
&= \sum_{\mathbf{j}\in\supp(\ket{\phi})}|c_{\mathbf{j}}|^2 \braket{\mathbf{j}|\left(\ketbra{0}\otimes\mathds{1}_{23}+\mathds{1}\otimes(\ketbra{0}-\ketbra{2})\otimes\mathds{1}_3+\mathds{1}_{12}\otimes\ketbra{0}\right)|\mathbf{j}} \\
&= 3\cdot|c_{000}|^2 \\
&+2\cdot\left(|c_{010}|^2+|c_{001}|^2+|c_{100}|^2+|c_{002}|^2+|c_{020}|^2\right) \\
&+1\cdot\left(|c_{011}|^2+|c_{101}|^2+|c_{110}|^2+|c_{102}|^2\right)\\
&-1\cdot |c_{020}|^2 \\
&\geq \sum_{\mathbf{j}\in\supp(\ket{\phi})}|c_{\mathbf{j}}|^2 = 1.
\end{align*}
The other inequality now follows from symmetry. \\
Now that we have proven the inequalities, we once more have a go at the vertices. Calculating the extremal points of the intersection of the inequalities, they include all vertices of the generic polytope, except the origin, and six additional vertices
\begin{align*}
w_1 &= \left(\frac{1}{2},\frac{1}{3},\frac{1}{3},\frac{1}{2},\frac{1}{3}\right)\\
w_2 &= \left(\frac{1}{2},\frac{3}{8},\frac{3}{8},\frac{3}{8},\frac{3}{8}\right) \\
w_3 &= \left(\frac{1}{2},\frac{2}{5},\frac{3}{10},\frac{2}{5},\frac{2}{5}\right)\\
w_4 &= \left(\frac{1}{2},\frac{2}{5},\frac{2}{5},\frac{2}{5},\frac{2}{5}\right)\\
w_5 &= \left(\frac{1}{2},\frac{1}{2},\frac{1}{3},\frac{1}{3},\frac{1}{3}\right)\\
w_6 &= \left(\frac{2}{3},\frac{1}{3},\frac{1}{3},\frac{1}{3},\frac{1}{3}\right)\\
\end{align*} 
The convex hull of $\Phi(\ket{011}),\Phi(\ket{020}),\Phi(\ket{110}),\Phi(\ket{102})$ is given by diagonal matrices with entries 
$$\lambda(a,b,c,d)=(\lambda_1,\lambda_2,\lambda_3)=((a+b,c+d),(d,a+c,b), (b+c,a,d))$$
with $a+b+c+d =1$. One then finds that (up to ordering) 
\begin{align*}
w_1 &= \lambda(1/6,1/3,1/6,1/3)\\
w_2 &= \lambda(1/4,1/4,1/8,3/8)\\
w_3 &= \lambda(1/5,3/10,1/10,2/5)\\
w_4 &= \lambda(3/10,1/5,1/10,2/5)\\
w_5 &= \lambda(1/3,1/6,1/6,1/3)\\
w_6 &= \lambda(1/3,1/3,0,1/3)\\
\end{align*}
i.e the additional six vertices are included in $\Delta_{\psi_1}$. One now still needs to check that $\ket{\psi_1}$ can degenerate into maximal $2\times 2 \times 3$ and $2 \times 3 \times 2$ orbits. To this end, consider once more the state $\ket{\psi_1}=\ket{100}+\ket{010}+\ket{001}+\ket{122}$. Exchanging $\ket{0}$ and $\ket{1}$ on the third system we arrive at 
$$\ket{\psi''} = \ket{000}+\ket{011}+\ket{101}+\ket{122}$$ which can be converted to a representative of the maximal $2\times 2\times 3$ class by applying a non-invertible transformation 
$$\begin{pmatrix} 1 & 0 & 0 \\ 0 & 1 & 1\\ 0 & 0 & 0 \end{pmatrix}$$
on the second system. This can be achieved by acting on $\ket{\psi''}$ with the SLOCC operator 
$$ \mathds{1}\otimes \begin{pmatrix} \frac{1}{\varepsilon} & 0 & 0 \\ 0 & \frac{1}{\varepsilon} & \frac{1}{\varepsilon}\\ 0 & 0 & \varepsilon^2 \end{pmatrix}\otimes\mathds{1}$$ and letting $\varepsilon \to 0$.
In the same way we can reach the maximal $2\times 3 \times 2$ orbit. We conclude that the polytope contains all the vertices of the intersection of the generic polytope with the halfspaces given by the inequalities \ref{prop:233polytopes}, and therefore the proposition is proven. 
\end{proof}

\subsubsection{A word about the hierarchy of the polytopes}
In chapter \ref{ch:entpoly} we have seen that 
$$ \mathcal{C} \subseteq \mathcal{D} \Rightarrow \Delta_{\mathcal{C}} \subseteq \Delta_{\mathcal{D}},$$
i.e. an inclusion of the orbit closures gives an inclusion of the polytopes. In the $2 \times 3 \times 3$ case we can observe the following. 
\begin{rem}For the states named above,
$$\Delta_{\ket{\psi_4}} \subseteq \Delta_{\ket{\psi_1}}.$$
\end{rem}
\begin{proof}
One simply verifies that every vertex of $\Delta_{\ket{\psi_4}}$ satisfies the inequalities defining $\Delta_{\ket{\psi_1}}$.
\end{proof}
However, it is unclear if this inclusion is dictated by the SLOCC hierarchy or not, i.e at the moment we do not know whether it is also true that $\overline{G\cdot\ket{\psi_4}}$ is included in $G\overline{\cdot\ket{\psi_1}}$. 
\subsection{More closest-point inequalities for $2\times 3 \times N$ systems and $2 \times 4 \times N $ systems} 
The automatised closest-point method can be used for any orbit with a  free representative on not too many product states. For $2 \times 3 \times N$ systems this actually applies to all orbits, see table \ref{table:23Nentclasses}. For each entanglement class we can therefore calculate an inequality in an exact way, save those, of course, that do contain the origin. For $2 \times 3 \times N$ this happens only if the class has the full polytope.  A table of all those inequalities can be found in the appendix in Table \ref{table:23Ncpineqs}. \\
The method also can be applied to some classes in $2 \times 4 \times N$ systems. Its use there is mainly as a criterion a class has the full polytope or not. It is however limitated: There exist classes without free representatives, and classes whose entanglement polytope contains the origin but not the whole polytope (see the next section). Again, the inequalities can be found in the appendix. 
\subsection{A result for N Qubits}
A well-studied example in a system of $N$-Qubits are the so called symmetric Dicke states [REF,REF]
\begin{equation}
\ket{M,N}:=\sum_{\sigma \in P_{M,N}}\sigma\cdot \ket{\underbrace{1\cdots 1}_{M}\underbrace{0\cdots 0}_{N-M}}
\end{equation} 
where $M < N/2$ and $P_{M,N}$ is the set of all nontrivial permutations of the spins, i.e. the sum goes over all basis vectors with exactly $M$ ones. 
\begin{prop}\footnote{A very similar result was arrived at in \cite{Walter2012} in the bosonic case.}
The entanglement polytope of $\ket{M,N}$ is the full $N$-Qubit polytope intersected with the halfspace 
$\sum_{i=1}^N \lambda_i^{max} \geq N-M$.
\end{prop}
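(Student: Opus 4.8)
The plan is to prove the two inclusions separately: $\Delta_{\ket{M,N}}\subseteq\Delta\cap\{\sum_i\lambda_i^{\max}\ge N-M\}$ by the closest--point method of Section~3.2, and the reverse inclusion by placing every vertex of the right--hand polytope inside $\Delta_{\ket{M,N}}$.

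First, $\ket{M,N}$ is free: two distinct computational basis vectors of Hamming weight $M$ differ in at least two slots. So Proposition~\ref{prop:free1} applies, and by the permutation symmetry of $\ket{M,N}$ all reduced density matrices coincide; counting weight--$M$ strings with a prescribed bit in a given slot gives $\rho^{(i)}=\mathrm{diag}(1-\tfrac MN,\tfrac MN)$, already weakly decreasing since $M<N/2$. In particular $\Phi(\ketbra{M,N})$ has $\lambda_i^{\max}=1-\tfrac MN$ for all $i$, hence lies exactly on the hyperplane $\sum_i\lambda_i^{\max}=N-M$. For the outer bound I would apply the minimal--distance criterion (Proposition~\ref{prop:mindistcrit}, Corollary~\ref{cor:mindist}) to the uniform Dicke coefficients: the matrix $A(\ket{M,N})$, whose $(\mathbf j,\mathbf k)$ entry counts the slots where $\mathbf j$ and $\mathbf k$ agree, has constant row sum $M\binom{N-1}{M-1}+(N-M)\binom{N-1}{M}$ (sum over slots, and count for each the weight--$M$ strings agreeing with $\mathbf j$ there). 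Hence the vector $\binom{N}{M}^{-1}(1,\dots,1)$ of squared coefficients of the normalized $\ket{M,N}$ solves $A(|c|^2)=\lambda(1,\dots,1)$, so $\lambda^{*}:=\Phi(\ketbra{M,N})$ is the point of $\Delta_{\ket{M,N}}$ closest to the origin $O$. Since on every qubit $\lambda^{*}-O=(\tfrac12-\tfrac MN)\,\mathrm{diag}(1,-1)$, expanding the separating inequality $\langle\lambda-O,\lambda^{*}-O\rangle\ge\langle\lambda^{*}-O,\lambda^{*}-O\rangle$ with \emph{all} diagonal entries and dividing by the positive constant $2(\tfrac12-\tfrac MN)$ reduces it to $\sum_i\lambda_i^{\max}\ge N-M$, which is the outer inclusion.

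For the reverse inclusion I would list the vertices of $\Delta\cap\{\sum_i\lambda_i^{\max}\ge N-M\}$. By the known vertex description of the $N$--qubit polytope $\Delta$, these are the all--pure vertex $(1,\dots,1)$ together with the ``partial--rank'' points $v_S$ ($\lambda_i^{\max}=1$ for $i\in S$, $\lambda_i^{\max}=\tfrac12$ otherwise) for $N-2M\le|S|\le N-2$. The all--pure vertex is immediate, since every $\ket{\mathbf j}\in\supp(\ket{M,N})$ is a product state contained in $\overline{G\cdot\ket{M,N}}$ by the one--parameter--subgroup argument inside the proof of Proposition~\ref{prop:free2}. For a $v_S$ with $N-|S|$ \emph{even} I would degenerate $\ket{M,N}$ by projecting each qubit of $S$ onto $\ket0$ or $\ket1$ (via $\mathrm{diag}(1,\varepsilon)$ resp.\ $\mathrm{diag}(\varepsilon,1)$, $\varepsilon\to0$), choosing the number of $\ket1$'s so that the Dicke weight surviving on $S^c$ is balanced; the limit $\ket{\mathrm{fixed}}_S\otimes\ket{|S^c|/2,\,|S^c|}_{S^c}$ lies in $\overline{G\cdot\ket{M,N}}$ and has image $v_S$, because a balanced Dicke state on an even number of qubits has maximally mixed one--qubit marginals. (The facet vertices $|S|=N-2M$, for which $|S^c|=2M$, are moreover all produced at once by Proposition~\ref{prop:free2}: $K(\ket{M,N})$ is exactly the facet $\{\sum_i\lambda_i^{\max}=N-M\}$ of the target polytope, being the convex hull of the weights $\Phi_T(\ket{\mathbf j})$ intersected with the Weyl chamber.)

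The step I expect to be hardest is the remaining family: vertices $v_S$ with $N-|S|$ \emph{odd}, which first occur when $M\ge2$. Such a $v_S$ forces the qubits of $S$ to be pure, so it can only be realized by a state $\ket{\mathrm{pure}}_S\otimes\ket\chi_{S^c}$ with $\ket\chi$ on the odd--size set $S^c$ having maximally mixed one--qubit marginals; but no sub--Dicke state on an odd number of qubits has the origin in its own entanglement polytope, so the monomial peeling above cannot reach such a $\ket\chi$, and one needs a genuinely non--diagonal degeneration of $\ket{M,N}$ onto, for instance, $\ket{\mathrm{pure}}_S\otimes\mathrm{GHZ}_{S^c}$. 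My plan would be to build these degenerations explicitly as one--parameter families of invertible SLOCC operators, in the spirit of the $\varepsilon$--family computations used for the $2\times3\times3$ hierarchy in Proposition~\ref{prop:233hierarchy}; because this is delicate I would cross--check the final polytope against the covariant computation of \cite{Walter2012} and against the already known $W_N$ and three-- and four--qubit polytopes.
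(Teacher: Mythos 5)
Your outer inclusion is correct but takes a genuinely different route from the paper's. You derive $\sum_i\lambda_i^{\max}\ge N-M$ as a closest-point inequality: the agreement matrix $A(\ket{M,N})$ of Corollary \ref{cor:mindist} has constant row sum $M\binom{N-1}{M-1}+(N-M)\binom{N-1}{M}$, so the uniform Dicke coefficients solve the eigenvector equation, $\Phi(\ketbra{M,N})=\bigl(\mathrm{diag}(1-\tfrac MN,\tfrac MN)\bigr)_{i=1}^N$ is the point of $\Delta_{\ket{M,N}}$ nearest the origin, and the separating hyperplane works out to exactly the claimed facet (your reduction of $\langle\lambda-O,\lambda^*-O\rangle\ge\|\lambda^*-O\|^2$ is correct). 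The paper instead uses the eigenvalue-estimate method of Section \ref{section:EigValestimates}: any state of the class is $g\cdot\ket{M,N}$ with $g$ upper triangular, hence supported on strings with at most $M$ ones, and Proposition \ref{prop:eigenvalueestimates} with $\ket{\Phi_j}=\ket0$ on every site gives $\sum_i\lambda_i^{\max}\ge N-M$ in one line. Both are sound; yours additionally identifies where the bound is attained, the paper's is shorter and needs no minimality criterion.

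For the inner inclusion there is a genuine gap, and you have put your finger on exactly where it sits. The vertices with an even number $2l\le 2M$ of $\tfrac12$-entries are handled identically in both arguments, as $\tfrac12\bigl(\Phi(\ketbra{\mathbf j_1})+\Phi(\ketbra{\mathbf j_2})\bigr)$ for two weight-$M$ strings differing in $2l$ slots, which lies in $\Delta_{\ket{M,N}}$ by Proposition \ref{prop:free2}. But for $M\ge2$ the intersection also has vertices with an odd number $3,5,\dots,2M-1$ of $\tfrac12$-entries, and your obstruction is real: a convex combination of the weights $\Phi(\ketbra{\mathbf j})$, $\mathbf j\in\supp(\ket{M,N})$, that is pure on a set $S$ must be supported on strings agreeing on $S$, and the marginal defects on $S^c$ then sum to the integer $M-r$, never to $|S^c|/2$ for odd $|S^c|$. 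So $K(\ket{M,N})$ misses these vertices and one needs a genuinely non-diagonal degeneration onto something like $\ket{\mathrm{product}}_S\otimes\mathrm{GHZ}_{S^c}$, which you only outline as a plan. You should be aware that the paper's own proof does not close this case either: it exhibits the midpoint construction only for the vertex with exactly $2M$ halves and asserts the remaining vertices follow ``by choosing $\mathbf j_1,\mathbf j_2$ appropriately,'' which covers only the even-defect ones. Until the odd-defect degenerations are written down, neither argument is complete for $M\ge2$; for $M=1$ (the $W$ state) every relevant vertex has $0$ or $2$ halves and both proofs go through.
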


\begin{proof} 
To prove the inequality, we use section \ref{section:EigValestimates}. Applying a upper triangular matrix to $\ket{M,N}$ will result in a state $\ket{\psi}=\sum_{\mathbf{j}}c_{\mathbf{j}}\ket{\mathbf{j}}$ supported on product basis vectors where at most $M$ entries are 1. Therefore, according to Proposition \ref{prop:eigenvalueestimates},
$$\sum_{i=1}^N \lambda_i^{max} \geq  \braket{\psi|\left(\sum_{i=1}^N\mathds{1}\otimes(\ket{0}\bra{0}_i)\otimes\mathds{1}\right)|\psi} \geq N-M\sum_{\mathbf{j}}|c_{\mathbf{j}}|^2 = N-M$$
To show that this is the only additional inequality, we once more make use of the vertices. 
The marginal inequalities for qubits are well-known \cite{Higuchi2003}. The vertices (in Most Local Eigenvalues coordinates) are points with entries that are any combination of 1's and $\frac{1}{2}$'s, except those with a single $\frac{1}{2}$.

One can check that the vertices of the intersection of the full N-qubit polytope are just the ones from above which satisfy the additional inequality. Any such vertex has the property that is has at most $2M$ entries which are $1/2$. Now, we can make use of Proposition \ref{prop:free2} once more. Let $$v = \left(\underbrace{\frac{1}{2},\ldots,\frac{1}{2}}_{2M},\underbrace{1\ldots 1}_{N-2M}\right)$$ be such a vertex. The $v$ is exactly equal to $\frac{1}{2}(\Phi(\ketbra{\mathbf{j}_1})+\Phi(\ketbra{\mathbf{j}_2}))$ where (remember $N>2M$) 
$$ \mathbf{j}_1 = \ket{\underbrace{1\cdots 1}_{M}\underbrace{0\cdots 0}_{N-M}}, \mathbf{j}_2 = \ket{\underbrace{0\cdots 0}_{M}\underbrace{1\cdots 1}_{M}\underbrace{0\cdots 0}_{N-2M}}$$ It is clear that we can get all vertices in this way by just choosing  $\mathbf{j}_1$ and  $\mathbf{j}_2$  appropriately. 
\end{proof}
\chapter{Using the gradient flow}\label{chapter:GradientFlow}
In \cite{Walter2012} a gradient flow method to find the closest point to the origin in the entanglement polytope of a certain class is described. Alternatively, one can understand this as a method to (asymptotically) distill ``as much entanglement as possible'' (measured via the linear entropy of entanglement, see \cite{Zurek1993}) from a given state. 

\begin{figure}[h]
\centering
\includegraphics[width = 0.5\textwidth]{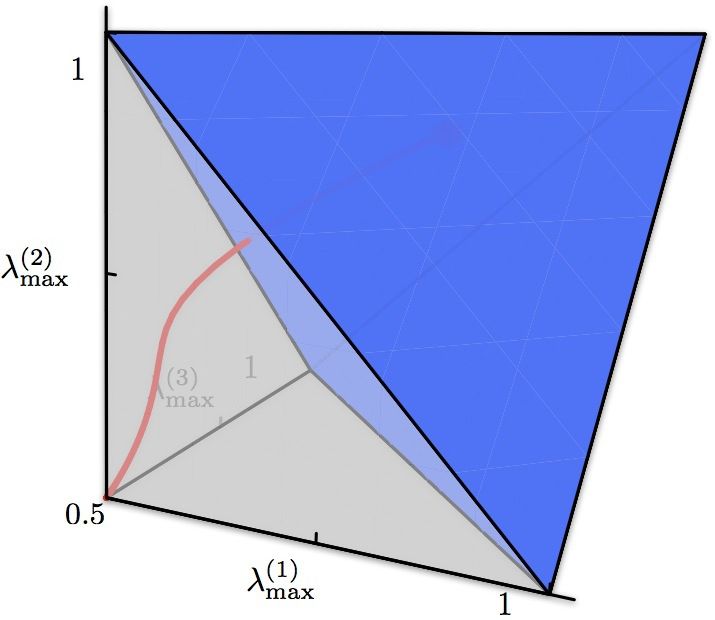}
\caption[Visualistaion of gradient flow to origin in three qubit system]{A visualisation of the gradient flow to the origin method for three qubits, taken from \cite{Walter2012}.}\label{fig:3qubitgradflow}
\end{figure}
In this section we present a generalisation of this by which we can flow not only in the direction of maximal entanglement, but in direction of any rational point in the positive Weyl chamber, thus finding the closest point in the polytope of a given entanglement class to such a point. 
We will first look at the theoretical background and then describe how it can be implemented and used to compute entanglement polytopes.
\section{Theoretical background}
The basic idea for the extended flow is an extension of the action first described by Mumford in \cite[Appendix]{Ness1984}. 
\subsection{An extended G-action}
We first need some preparations.
\subsubsection{A G-Action on coadjoint orbits of integral points}
Let $\lambda$ be an integral point in $\mathfrak{t}^+$. This means $\lambda$ consists of a tuple of diagonal entries  $\lambda^{(i)} \in  \Z^{d_i}$ with weakly decreasing entries for $i=1,\ldots,N$. We identify $\lambda^{(i)}$ with the vector of its entries on the diagonal. Every such vector determines an irreducible representation $V^{d_i}_{\lambda^{(i)}}$ of $SL(d_i)$ as described in \cite{Fulton1991}. By $V_{\lambda}$ we denote the  $G$-Module $\bigotimes_{i=1}^N V^{d_i}_{\lambda^{(i)}}$, it is irreducible with highest weight $\lambda$ and we denote by $\ket{\lambda}$ its highest weight vector. Now, let $O_\lambda = K \cdot \lambda $ be the coadjoint orbit of $\lambda$ in $Lie(K)^*$. Then $O_{\lambda}$ carries a natural $G$-action. To see this, define a $K$-equivariant embedding of $O_{\lambda}$ in $V_{\lambda}$ by sending $U\lambda U^{\dagger} \mapsto U\cdot \ket{\lambda}$. Now we can make use of the fact that $K\cdot\ket{\lambda}= K/T = G/B = G\cdot\ket{\lambda} $ (see e.g \cite{Fulton1991}), i.e. the $G$-orbit and the $K$-orbit of $\ket{\lambda}$ in $V_{\lambda}$ coincide. We can therefore define $g \cdot \lambda := U(g)\cdot = \lambda U(g)\lambda U(g)^{\dagger}$ for some $U(g)\in K$ with $g\cdot\ket{\lambda} = U(g)\cdot\ket{\lambda}$. 
\begin{clm} The above action is well-defined. Moreover, if we take as positive roots the computational basis vectors $\ket{\mathbf{j}}$, such that the Borel subgroup is given by upper triangular matrices and $g = QR$ is the QR-decomposition of $g$, $U(g) = Q$.
\end{clm}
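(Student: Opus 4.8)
The plan is to isolate the stabilizer of the highest-weight line, deduce well-definedness and the action axioms from the flag-variety picture already invoked in the text, and then read off $U(g)=Q$ by noting that the $R$-factor of a QR-decomposition lies in that stabilizer.

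\emph{Step 1: identify the stabilizer.} Since $\lambda\in\mathfrak{t}^+$ is integral and dominant, $\C\ket{\lambda}$ is the highest-weight line of $V_\lambda$ for the positive system whose Borel $B$ consists of tuples of upper-triangular matrices, and its stabilizer in $G$ is the standard parabolic $P_\lambda\supseteq B$ obtained by adjoining the negative root groups of the simple roots annihilating $\lambda$; for $G=\prod_i SL(d_i)$ this $P_\lambda$ is block-upper-triangular with block sizes given by the eigenvalue multiplicities of $\lambda^{(i)}$. Two facts are needed. First, $K$ acts transitively on the compact orbit $G\cdot[\ket{\lambda}]\cong G/P_\lambda$, so $K/(K\cap P_\lambda)\cong G/P_\lambda$; this is exactly the identification $K\cdot\ket{\lambda}=G\cdot\ket{\lambda}$ used in the text, and when $\lambda$ is regular one has $P_\lambda=B$ and $K\cap P_\lambda=T$. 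Second, $H:=K\cap P_\lambda$ equals the centralizer $Z_K(\lambda)$ of $\lambda$ under conjugation, which for $\prod SU(d_i)$ is just the block-diagonal unitaries commuting with $\lambda$; this follows by intersecting the Levi decomposition $P_\lambda=Z_G(\lambda)\ltimes R_u(P_\lambda)$ with $K$, using that $K$ meets the unipotent radical trivially.

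\emph{Step 2: well-definedness and the action axioms.} By transitivity of $K$ on $G/P_\lambda$, for each $g\in G$ there is $U(g)\in K$ with $g\cdot[\ket{\lambda}]=U(g)\cdot[\ket{\lambda}]$ in $\mathbb{P}(V_\lambda)$, unique up to right multiplication by $H$. Since $H=Z_K(\lambda)$ fixes $\lambda$, the element $U(g)\lambda U(g)^\dagger$ is independent of the choice, so $g\cdot\lambda$ is well-defined; extending by $g\cdot(k\lambda k^\dagger):=U(gk)\lambda U(gk)^\dagger$ is consistent because $k_1\lambda k_1^\dagger=k_2\lambda k_2^\dagger$ forces $k_2^{-1}k_1\in H\subseteq P_\lambda$, so $gk_1$ and $gk_2$ represent the same left coset in $G/P_\lambda$. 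Associativity is then a coset computation: writing $g_2k=U(g_2k)p$ with $p\in P_\lambda$ shows that $g_1g_2k$ and $g_1U(g_2k)$ lie in one left $P_\lambda$-coset, hence $U(g_1g_2k)$ and $U(g_1U(g_2k))$ agree up to $H$, which yields $(g_1g_2)\cdot(k\lambda k^\dagger)=g_1\cdot(g_2\cdot(k\lambda k^\dagger))$; the identity axiom holds since one may take $U(k)=k$ for $k\in K$. It is worth remarking that the genuinely canonical object is the map $O_\lambda\to\mathbb{P}(V_\lambda)$, $U\lambda U^\dagger\mapsto[U\ket{\lambda}]$; lifting it to $V_\lambda$ is only canonical up to a phase because $H$ fixes $\ket{\lambda}$ only up to a phase, but this does not affect the coadjoint orbit.

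\emph{Step 3: the QR identification, and the main obstacle.} Given $g\in G$, let $g=QR$ be the QR-decomposition taken in each factor. For $g\in\prod SL(d_i)$ the column-by-column Gram--Schmidt process makes each $R_i$ upper triangular with positive diagonal, and since $\det Q_i\det R_i=1$ with $\det R_i>0$ one gets $\det Q_i=1$; thus $Q\in K$ and $R\in B$. As $B\subseteq P_\lambda$ fixes the highest-weight line, $R\cdot[\ket{\lambda}]=[\ket{\lambda}]$, so $g\cdot[\ket{\lambda}]=Q\cdot[\ket{\lambda}]$ and $Q$ is a legitimate choice of $U(g)$, whence $g\cdot\lambda=Q\lambda Q^\dagger$. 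The only genuinely delicate point in all of this is associativity: because $g\mapsto U(g)$ is not a homomorphism, one must avoid manipulating the $U(g)$'s directly and instead argue entirely in $G/P_\lambda$, with the bookkeeping of Step 2 plus the single structural input $K\cap P_\lambda=Z_K(\lambda)$ doing all the work; establishing that last equality cleanly (rather than only in the explicit $\prod SU(d_i)$ model) is the other place where a little care is required.
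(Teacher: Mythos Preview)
Your proof is correct and follows the same route as the paper: the Borel subgroup stabilizes the highest-weight line, so the $R$-factor in $g=QR$ acts trivially and $U(g)=Q$ is a valid choice. The paper's own argument is considerably terser---it simply records $g\cdot\ket{\lambda}=QR\cdot\ket{\lambda}=Q\cdot\ket{\lambda}$ and declares well-definedness---whereas you additionally verify independence of the choice of $U(g)$ via $K\cap P_\lambda=Z_K(\lambda)$ and check the action axioms, which the paper leaves implicit.
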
 
\begin{proof}
Surely, the well-definedness of this action does not depend on the choice of positive roots, so it suffices to show the second point. Since the Borel subgroup stabilises the highest weight vector, we have $g\cdot\ket{\lambda} = QR\cdot\ket{\lambda} = Q\cdot\ket{\lambda}$ \ . \end{proof}

\subsubsection{The extended action}
Let $X=\overline{G\cdot\rho}$ be an orbit closure in $P(H)$ and assume $\frac{\lambda}{k}$ is a rational point in $\mathfrak{t}^+$.
Let $\lambda^*$ be the highest weight of the dual module $V_{\lambda}^* = V_{\lambda^*}.$ We then define an embedding
\begin{equation}\label{embedding1}X \times O_{\lambda^*} \hookrightarrow P(\mathrm{Sym}^kH)\times P(V_{\lambda^*}) \hookrightarrow P(\mathrm{Sym}^kH \otimes V_{\lambda^*}) \end{equation}
where the first embedding is given by $$ (\ket{\psi},g \cdot \lambda^*) \mapsto (\ketbra{\psi^k} ,(g\ket{\lambda*})(\bra{\lambda^*})g^{\dagger})$$i.e. the product of the Veronese embedding and the embedding discussed above, and the second is the Segre embedding $P(V_1)\times P(V_2) \hookrightarrow P(V_1 \otimes V_2)$. Note that this embedding is by construction $G$-equivariant. 
We can now calculate the moment map of this extended action, at least for elements in the image of this embedding.
\begin{lem}
The moment map on the image of the Veronese embedding $\nu$ of $P(H)$ into $P(\mathrm{Sym}^kH)$ is given by $\Phi_{P(\mathrm{Sym}^kH)}\big|_\nu(H) = k\cdot\Phi_H$.
\end{lem}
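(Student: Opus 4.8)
The plan is to reduce everything to the explicit moment-map formula for projective space already quoted from Kirwan (Lemma~2.7 in the excerpt) together with the concrete description of the Veronese embedding. First I would recall that $\nu$ sends the line through $\ket{\psi}$ to the line through $\ket{\psi}^{\otimes k}\in\mathrm{Sym}^kH\subseteq H^{\otimes k}$, so the image of $\nu$ consists precisely of the classes of decomposable symmetric tensors, and that $\nu$ is $K$-equivariant: $K$ acts on $\mathrm{Sym}^kH$ through the homomorphism $\mathrm{Sym}^k\varphi$, where $\varphi:K\to U(H)$ is the representation defining the $K$-action on $P(H)$. Hence Kirwan's lemma applies verbatim with $\varphi$ replaced by $\mathrm{Sym}^k\varphi$:
\[\Phi_{P(\mathrm{Sym}^kH)}(\ket{X})(a)=\frac{\braket{X|(\mathrm{Sym}^k\varphi)_*(a)|X}}{2\pi i\,|\braket{X|X}|}.\]

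The second step is to compute the differential $(\mathrm{Sym}^k\varphi)_*$. Since $\mathrm{Sym}^k\varphi$ is the restriction of $\varphi^{\otimes k}$ to the symmetric subspace, its differential is the restriction of the derivation
\[(\varphi^{\otimes k})_*(a)=\sum_{j=1}^k \mathds{1}^{\otimes(j-1)}\otimes\varphi_*(a)\otimes\mathds{1}^{\otimes(k-j)},\]
and the symmetric subspace is invariant under this operator. Plugging $\ket{X}=\ket{\psi}^{\otimes k}$ into the formula above, each of the $k$ summands contributes $\braket{\psi|\psi}^{k-1}\braket{\psi|\varphi_*(a)|\psi}$, so the numerator equals $k\,\braket{\psi|\psi}^{k-1}\braket{\psi|\varphi_*(a)|\psi}$, while the denominator is $2\pi i\,\braket{\psi|\psi}^k$ because $\braket{\psi^k|\psi^k}=\braket{\psi|\psi}^k$. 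The quotient is then exactly $k$ times $\frac{\braket{\psi|\varphi_*(a)|\psi}}{2\pi i\,\braket{\psi|\psi}}=k\,\Phi_H(\ket{\psi})(a)$, which is the assertion.

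I do not expect a genuine obstacle here; the only points requiring care are bookkeeping ones: verifying that the symmetric subspace is preserved by $(\varphi^{\otimes k})_*(a)$ (so that $\mathrm{Sym}^k\varphi$ is unitary and Kirwan's lemma is applicable), confirming that the Hermitian form on $\mathrm{Sym}^kH$ is the one induced from $H^{\otimes k}$ so that the cross terms factor as $\braket{\psi|\psi}^{k-1}\braket{\psi|\varphi_*(a)|\psi}$, and making sure the normalisation conventions (the $2\pi i$ factor and the trace-zero shift) are chosen identically on both sides. One should also state explicitly that it suffices to check the identity on the image of $\nu$, since that is all the lemma claims — no extension to all of $P(\mathrm{Sym}^kH)$ is needed, and indeed it would fail there.
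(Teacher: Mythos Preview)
Your proposal is correct and follows essentially the same route as the paper: apply the explicit moment-map formula to $\ket{\psi}^{\otimes k}$, use that the Lie-algebra action on $\mathrm{Sym}^kH$ is the derivation $\sum_{j=1}^k \mathds{1}^{\otimes(j-1)}\otimes\varphi_*(a)\otimes\mathds{1}^{\otimes(k-j)}$, and observe that each of the $k$ summands contributes the same factor. The paper's version is terser (it implicitly works with normalized states and suppresses the $2\pi i$), whereas you track the normalizations explicitly, but there is no substantive difference in method.
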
 
\begin{proof}
This follows from the following calculation: 
\begin{align*}
\Phi_{P(\mathrm{Sym}^kH)}\big|_{\nu(H)}(\ket{\psi}) &= (A \mapsto \braket{\psi^{\otimes k}|\rho_{\mathrm{Sym}^kH}(A)|\psi^{\otimes k}}) \\
&= (A \mapsto \sum_{i=1}^k \braket{\psi|\rho_H(A)|\psi}) \\ 
&=k\cdot\Phi_H(\ket{\psi})
\end{align*}
\end{proof}
\begin{lem}
For any two $G$-Actions on $P(V_1)$ and $P(V_2)$ with moment maps $\Phi_1$ and $\Phi_2$ respectively, the moment map $\Phi$ on the image of the Segre embedding 
$$\iota: P(V_1)\times P(V_2) \hookrightarrow P(V_1 \otimes V_2) $$ 
is given by
$$\Phi(\iota(\ketbra{\phi},\ketbra{\psi})) = \Phi_1(\ketbra{\phi})+\Phi_2(\ketbra{\psi})$$
\end{lem}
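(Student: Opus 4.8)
The plan is to prove the identity by a direct computation with the explicit formula for the moment map quoted from Kirwan's book (Lemma~2.7 above), in exactly the same spirit as the preceding lemma on the Veronese embedding. First I would record the two structural facts that make the computation go through. (1)~The $G$-action on $V_1\otimes V_2$ is the tensor product action, so its infinitesimal version satisfies $\varphi_*(a) = \varphi_{1,*}(a)\otimes\mathds{1} + \mathds{1}\otimes\varphi_{2,*}(a)$ for $a\in Lie(K)$, where $\varphi_{1,*},\varphi_{2,*}$ are the infinitesimal actions on $V_1,V_2$. (2)~As in the Veronese case, since the Segre embedding $\iota$ is a $G$-equivariant K\"ahler embedding (equivariance was noted when \eqref{embedding1} was set up), the moment map for the $G$-action on the $G$-invariant image $\iota(P(V_1)\times P(V_2))$ is the restriction of the Fubini-Study moment map on $P(V_1\otimes V_2)$, which is given by Kirwan's formula $\Phi(x)(a)=\braket{x|\varphi_*(a)|x}/(2\pi i\,|\braket{x|x}|)$.

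Granting these, the computation is short. Choose normalised representatives $\ket{\phi}\in V_1$ and $\ket{\psi}\in V_2$, so that $\ket{\phi}\otimes\ket{\psi}$ is a normalised representative of $\iota(\ketbra{\phi},\ketbra{\psi})$. Plugging the tensor-product form of $\varphi_*$ into Kirwan's formula and expanding,
\begin{align*}
\Phi(\iota(\ketbra{\phi},\ketbra{\psi}))(a) &= \tfrac{1}{2\pi i}\braket{\phi\otimes\psi|(\varphi_{1,*}(a)\otimes\mathds{1} + \mathds{1}\otimes\varphi_{2,*}(a))|\phi\otimes\psi} \\
&= \tfrac{1}{2\pi i}\big(\braket{\phi|\varphi_{1,*}(a)|\phi}\braket{\psi|\psi} + \braket{\phi|\phi}\braket{\psi|\varphi_{2,*}(a)|\psi}\big) \\
&= \tfrac{1}{2\pi i}\braket{\phi|\varphi_{1,*}(a)|\phi} + \tfrac{1}{2\pi i}\braket{\psi|\varphi_{2,*}(a)|\psi} \\
&= \Phi_1(\ketbra{\phi})(a) + \Phi_2(\ketbra{\psi})(a),
\end{align*}
the last line again by Kirwan's formula applied to each factor. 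Since this holds for every $a\in Lie(K)$, we obtain $\Phi\circ\iota = \Phi_1 + \Phi_2$ as functionals on $Lie(K)$ (equivalently, as points of $Lie(K)^*$).

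The only genuinely non-routine ingredient --- and the one I would expect to be the main obstacle --- is fact (2), that restricting the ambient moment map to the Segre image produces a moment map for the restricted action. This reduces to the standard statement that the Segre embedding pulls $\mathcal{O}(1)$ back to $\mathcal{O}(1,1)$, so that the Fubini-Study form restricts to a positive multiple of $\omega_1\oplus\omega_2$ on $P(V_1)\times P(V_2)$; once one knows the image is an invariant K\"ahler (hence symplectic) submanifold, the equivariance and Hamiltonian conditions defining a moment map pass to it automatically, and the display above pins down the value. The remaining bookkeeping (normalisation conventions, the harmless factor $2\pi i$, and checking that $\varphi_*$ really has the stated tensor form) is routine.
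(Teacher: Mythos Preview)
Your proof is correct and follows essentially the same approach as the paper: both use Kirwan's explicit formula for the moment map together with the identity $\rho_{V_1\otimes V_2}(a)=\rho_{V_1}(a)\otimes\mathds{1}+\mathds{1}\otimes\rho_{V_2}(a)$ for the infinitesimal tensor-product action, and expand the resulting expectation value. You are somewhat more careful than the paper in tracking normalisations and in flagging why the ambient moment map restricts correctly to the Segre image, but the core argument is identical.
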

\begin{proof}
This follows from the description of the moment map as 
\begin{align*}
\Phi(\iota(\ketbra{\phi},\ketbra{\psi}) & = \Phi(\ketbra{\phi \otimes \psi}) \\
&= (A \mapsto \braket{\phi\otimes\psi|\rho_V(A)|\phi\otimes\psi}) \\
&=(A \mapsto \braket{\phi|\rho_{V_1}(A)|\phi}+\braket{\psi|\rho_{V_2}(A)|\psi} \\
&= \Phi_1(\ketbra{\phi})+\Phi_2(\ketbra{\psi})
\end{align*}
where we have used that $\rho_{V_1 \otimes V_2} = \rho_{V_1} \otimes \mathbb{1}_{V_2} + \mathbb{1}_{V_1} \otimes \rho_{V_2}$.
\end{proof}

\subsection{Using the gradient flow of the extended action}
We can now look at the critical points of the norm square of the moment map $$\tilde{\Phi}: P(\mathrm{Sym}^k \otimes V_{\lambda^*}) \rightarrow Lie(K)^*$$ restricted to the image $\tilde{X}$ of the embedding \ref{embedding1}. 
\begin{prop}
Assume that $$\mathrm{grad}\|\tilde{\Phi}\|^2_{\ketbra{\psi},\mu}=0$$
for some $\mu \in O_{\lambda^*}$.
Then $\Phi(K\cdot\ketbra{\psi}) \cap \mathfrak{t}^+$ is the unique point in $\Delta_{\psi}$ closest to $\lambda /k.$.
\end{prop}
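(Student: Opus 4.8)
The plan is to run the proof of Proposition~\ref{eigenvectorprop} one more time, but now for the $G$-action on $\tilde X\subseteq P(\mathrm{Sym}^k\mathcal H\otimes V_{\lambda^*})$: the hypothesis $\mathrm{grad}\|\tilde\Phi\|^2_{\ketbra{\psi},\mu}=0$ plays the role that the eigenvector equation played there, namely it hands us a critical point of $\tilde f:=\|\tilde\Phi\|^2$ restricted to $\tilde X$ for free, and the conclusion is obtained by feeding this into Kirwan's stratification of $\mathrm{Crit}(\tilde f)$ and then translating back through the two moment-map formulae just established. First I would record that on the image of the embedding \ref{embedding1}, combining the Veronese lemma (the $\mathrm{Sym}^k$-factor contributes $k\Phi$), the Segre lemma (the two factors add) and the fact that the moment map of $K$ on a coadjoint orbit is the inclusion into $Lie(K)^*$, one has $\tilde\Phi(\ketbra{\psi^{\otimes k}},\nu)=k\,\Phi(\ketbra{\psi})+\nu$; and since $V_{\lambda^*}$ is the dual module, $O_{\lambda^*}=-O_\lambda$ inside the traceless Hermitian matrices, so the coadjoint part runs over $-O_\lambda$.

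Next, because $\tilde X$ is $G$-stable and $(\ketbra{\psi},\mu)$ is a critical point of $\tilde f$, Kirwan's Lemma~\ref{critpts}, applied to this action, places the point in one stratum $C_{\tilde\beta}=K\cdot(Z_{\tilde\beta}\cap\tilde\Phi^{-1}(\tilde\beta))$ with $\tilde\beta$ in the associated finite set $\mathcal B$ of minimal-norm points. As in the proof of Proposition~\ref{eigenvectorprop}, the slice $\tilde\Phi(C_{\tilde\beta})\cap\mathfrak t^+$ is a single point (the residual $K$-action is a simultaneous change of the local bases and of the basis on $V_{\lambda^*}$, which leaves all spectra fixed), so there is a $U\in K$ with $\tilde\Phi\bigl(U\cdot(\ketbra{\psi},\mu)\bigr)=\tilde\beta$. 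A critical point is its own limit under the negative gradient flow, hence $(\ketbra{\psi},\mu)\in W^S(C_{\tilde\beta})$, and Proposition~\ref{mindistprop} now identifies $\tilde\beta$ as the point of the extended moment polytope $\tilde\Delta:=\tilde\Phi\bigl(\overline{G\cdot(\ketbra{\psi},\mu)}\bigr)\cap\mathfrak t^+$ closest to the origin.

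It remains to convert ``closest to $0$ in $\tilde\Delta$'' into ``closest to $\lambda/k$ in $\Delta_\psi$'', and I expect \emph{this} to be the main obstacle. The soft half is $\mathrm{dist}(0,\tilde\Delta)^2\ge k^2\,\mathrm{dist}(\lambda/k,\Delta_\psi)^2$: a point of $\tilde\Delta$ is the ordered spectrum of $k\Phi(x')+\nu'$ for some $(x',\nu')$ in the orbit closure, its squared norm is $\|k\Phi(x')+\nu'\|^2$, and minimising the coadjoint part $\nu'\in-O_\lambda$ via the von Neumann / Ky~Fan trace inequality gives $\|k\Phi(x')+\nu'\|^2\ge k^2\|\eta'-\lambda/k\|^2$, where $\eta'\in\Delta_\psi$ is the ordered spectrum of $\Phi(x')$ (note $x'\in\overline{G\cdot\ketbra{\psi}}$, since projection onto the first factor maps the orbit closure into it). The reverse inequality is the non-trivial input: one must exhibit a point of $\overline{G\cdot(\ketbra{\psi},\mu)}$ whose two components ``decouple'', with the first realising the nearest point $\eta^\ast\in\Delta_\psi$ to $\lambda/k$ and the second the trace-optimal element of $O_{\lambda^*}$ for it; this is precisely the classical computation of the moment polytope of $\overline{G\cdot(x,\ket{\lambda^*})}$ for a highest-weight line $\ket{\lambda^*}$ (\cite{Brion1987}, \cite[Appendix]{Ness1984}), which works because $G$ acts on $O_{\lambda^*}\cong G/B$ with the Borel stabilising the highest-weight line. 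Granting $\mathrm{dist}(0,\tilde\Delta)^2=k^2\,\mathrm{dist}(\lambda/k,\Delta_\psi)^2$, the endgame is a chain of equalities: by $\mathrm{Ad}^*$-equivariance of $\Phi$ the matrix $\Phi(U\cdot\ketbra{\psi})$ has the same spectra as $\Phi(\ketbra{\psi})$, namely $\lambda(\ket{\psi})$, so
\[ k^2\,\mathrm{dist}(\lambda/k,\Delta_\psi)^2=\|\tilde\beta\|^2=\|k\,\Phi(U\cdot\ketbra{\psi})+U\mu U^\dagger\|^2\ge k^2\,\|\lambda(\ket{\psi})-\lambda/k\|^2\ge k^2\,\mathrm{dist}(\lambda/k,\Delta_\psi)^2, \]
forcing $\|\lambda(\ket{\psi})-\lambda/k\|=\mathrm{dist}(\lambda/k,\Delta_\psi)$. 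Hence $\lambda(\ket{\psi})=\Phi(K\cdot\ketbra{\psi})\cap\mathfrak t^+$ is a closest point, and uniqueness follows at once from strict convexity of the squared distance on the convex polytope $\Delta_\psi$.
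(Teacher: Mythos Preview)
Your proof is correct, but it takes a heavier route than the paper's. Both arguments start from Kirwan's theory: a critical point of $\|\tilde\Phi\|^2$ realises the minimum of $\|\tilde\Phi\|$ over its $G$-orbit closure. The difference is in how one translates this back to $\Delta_\psi$. You split the translation into two inequalities, using the von~Neumann trace inequality for the easy direction and then invoking the Brion/Ness--Mumford shifting result for the reverse inequality; this is valid but imports external machinery. The paper instead observes that, because the $G$-action on $O_{\lambda^*}$ was \emph{defined} via the QR decomposition, it factors through $K$: for any $g$ one has $g\cdot\mu = U(gV)\lambda^* U(gV)^{\dagger}$ with $U(gV)\in K$. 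Combined with $O_{\lambda^*}=-O_\lambda$ and the $K$-invariance of the norm, this gives the exact identity
\[
\|k\Phi(g\cdot\ketbra{\psi})+g\cdot\mu\|=k\,\bigl\|\Phi\bigl(U(gV)^{-1}g\cdot\ketbra{\psi}\bigr)-\tfrac{\lambda}{k}\bigr\|,
\]
and as $g$ ranges over $G$ so does $U(gV)^{-1}g$, so the two infima coincide on the nose. Your approach has the virtue of making the convex-geometric content explicit (and would generalise to situations where the second factor is not a single coadjoint orbit), whereas the paper's direct computation is shorter, entirely self-contained, and avoids both the trace inequality and the appeal to \cite{Brion1987}.
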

\begin{proof}
Let $\tilde{X}_{\psi,\mu} := \overline{G \cdot (\ketbra{\psi},\mu)}\subseteq \tilde{X} \subseteq P(\mathrm{Sym}^kH \otimes V_{\lambda^*})$.
Now $\lambda^* = -w_0(\lambda)$ where $w_0$ is the longest element in the Weyl group of $G$ (with respect to our maximal torus $T$), as discussed also in \cite{Brion1987}. In our case the Weyl group is isomorphic to the product of the local permutation subgroups and therefore can be considered a subgroup of G.  Therefore $O_{\lambda^*} = O_{-w_0(\lambda)} = O_{-\lambda} = - O_{\lambda}$. Let $V \in K$ such that $V\cdot\lambda^* = \mu$. Now, by the lemmata above, $$\|\tilde{\Phi}(\ketbra{\psi},\mu)\|=\|k\Phi(\ketbra{\psi})+\mu\| = k\|V(\Phi(V^{-1}\cdot\ketbra{\psi})-\frac{\lambda}{k})\| = k\|\Phi(V^{-1}\cdot\ketbra{\psi})-\frac{\lambda}{k}\| $$
Other the other hand, \ref{critpts} and \ref{mindistprop} tell us that 
\begin{align*} \|\tilde{\Phi}(\ketbra{\psi},\mu)\| & = \min_{\tilde{X}_{\psi,\mu}}\|\tilde{\Phi}(\ketbra{\psi'},\mu')\| = \inf_{g \in G} \| \tilde{\Phi}(g\cdot \ketbra{\psi},g\cdot\mu)\|\\
&= \inf_{g \in G}\|k\Phi(g\cdot\ketbra{\psi})+ (gV)\cdot\lambda^*\| \\
&= \inf_{g \in G}\|k\Phi(g\cdot\ketbra{\psi})-U(gV)\cdot\lambda\| \\
&= k\inf_{g\in G}\|U(gV)(\Phi(U(gV)^{-1}\cdot(g\cdot\ketbra{\psi}))-\frac{\lambda}{k})\| \\
&= k\inf_{g'\in G}\|\Phi(g'\cdot\ketbra{\psi})-\frac{\lambda}{k}\|
\end{align*}
where we have used equivariance of the moment map and invariance of the norm under $K$. Thus 
$$
k\|\Phi(V^{-1}\cdot\ketbra{\psi})-\frac{\lambda}{k}\| = k\inf_{g'\in G}\|\Phi(g'\cdot\ketbra{\psi})-\frac{\lambda}{k}\|$$
which is exactly what we wanted.
\end{proof}

\section{Implementation of the Gradient Flow}
Unfortunately, there is no such easy recipe to look for a critical point of the norm square of the extended moment map as for the usual one, where we can exploit our discussion about free states. As developed in \cite{Kirwan1984}, the gradient flow will always converge in the polytope, i.e. we have that the limit 
$$\lim_{t \to \infty}\Phi(\phi^t_{\mathrm{grad}\|\Phi\|^2}(\rho))$$ exists for all $\rho \in P(\mathcal{H})$, is contained in the entanglement polytope of $\rho$, and minimises distance to the origin. This is analytically intractable, especially for the extended action. However, one can compute a limit point numerically, by basically using a gradient descent method. In the following we explain how this can be achieved. 
\subsection{Discretising the flow}
Even though one theoretically needs the extended action, it is enough to compute in the image of the embedding \ref{embedding1}, since the gradient flow always stays in the $G$-orbit and thus also in the image of $X \times O_{\lambda^*}$ (since X was defined to be a $G$-orbit closure). Moreover, rather than computing the gradient directly and performing finite size steps in $P(\mathcal{H})$, we will approximate the gradient by approximating the infinitesimal action of $Lie(K)$ via \eqref{gradienteq}. Thus we can ensure the flow always stays in the orbit. Let $\ket{\psi} \in \mathcal{H}$ be a state in whose entanglement polytope we wish to find the closest point to $\frac{\lambda}{k}$. The procedure works as follows (we switch back now to the mathematical convention $\tr\xi = 0$ for $\xi \in Lie(K)$): 
\begin{enumerate}[1.]
\item Define $\psi_0 = \psi, G_0 = id,\lambda^* = w_0(-\lambda) + (\frac{k}{d_1},\ldots,\frac{k}{d_N})$ (to ensure $\tr\lambda^*=0$) and choose a "random" tuple of unitary matrices $U_0$, corresponding to a starting point $(\ketbra{\psi},U\cdot\lambda^*)$ for the flow. 
\item Calculate the image of the extended moment map $$\xi_0 = \tilde{\Phi}(\ketbra{\psi_0},U_0\cdot\lambda) = k\cdot\Phi(\ketbra{\psi})+U_0\lambda^* U_0^{\dagger} = \left(k\cdot(\rho^{(1)}-\frac{\mathds{1}}{d_1})+U_0^{(1)}\mathrm{diag}((\lambda^*)^{(1)})(U_0^{(1)})^{\dagger},\ldots\right)$$ 
\item Compute the new group element by approximating the infinitesimal action of $\xi_0$ with a finite stepsize $h$,$G_1 = G_0(\exp(-h\cdot\xi_0))$, then the new state $\psi_1 = G_1 \cdot \psi_0$, and the new point in the coadjoint orbit $U_1 = U(G_1U_0)$ (with $U(G)$ as above). 
\item Now compute $\xi_1 = \tilde{\Phi}(\ketbra{\psi_1},U_1\cdot\lambda)$. Since we are minimising the norm of $\tilde{\Phi}(\ketbra{\psi},U\cdot\lambda)$ we check whether $\|x1\|-\|x0\| < -c$, where $c>0$ is some constant.
If this is the case, we continue with step 3 replacing $G_0$ by $G_1$, $\psi_0$ by $\psi_1$, and $U_0$ by $U_1$, and increase the stepsize by a factor of 1.1. If we have not made progress, we redo step 3 with half the stepsize. 
\item Repeat 3 and 4 until we meet some exit criterion (stepsize too small, no progress for a certain number of tries, maximal number of steps reached, etc.
\end{enumerate}

An example of how the flow works for three qubits can be seen in the figure below where we flow to the point  $(\frac{1}{2},\frac{1}{3},\frac{1}{3})$ (corresponding to $k$ = 6 and $\lambda = ((3,3),(4,2),(4,2))$) from various starting states.

\begin{figure}[b]
\centering
\begin{subfigure}[h]{0.45\textwidth}
\centering
\includegraphics[width =\textwidth]{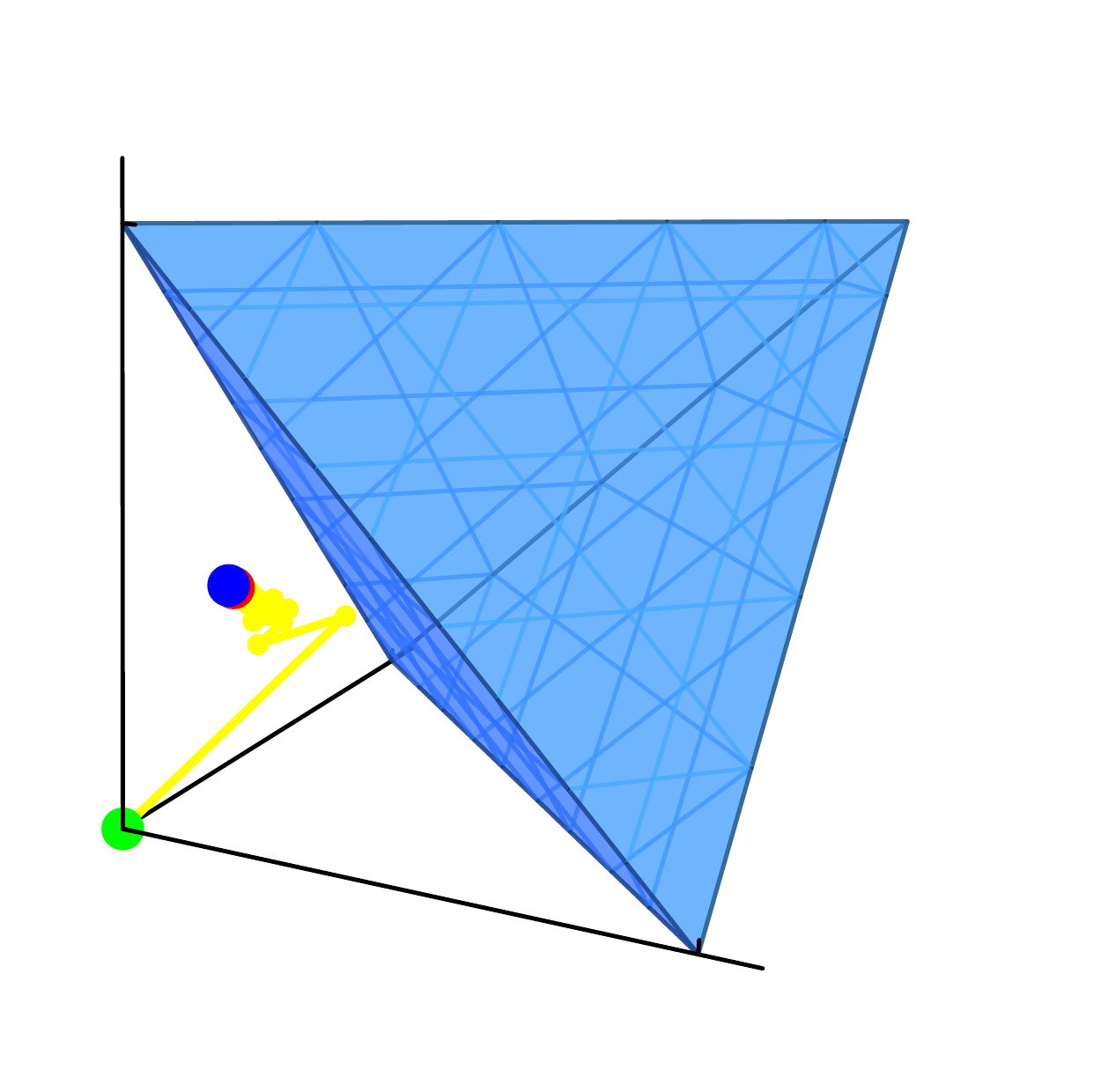}
\caption{Flowing to $p$ starting from the GHZ state}
\end{subfigure}
\begin{subfigure}[h]{0.45\textwidth}
\centering
\includegraphics[width =\textwidth]{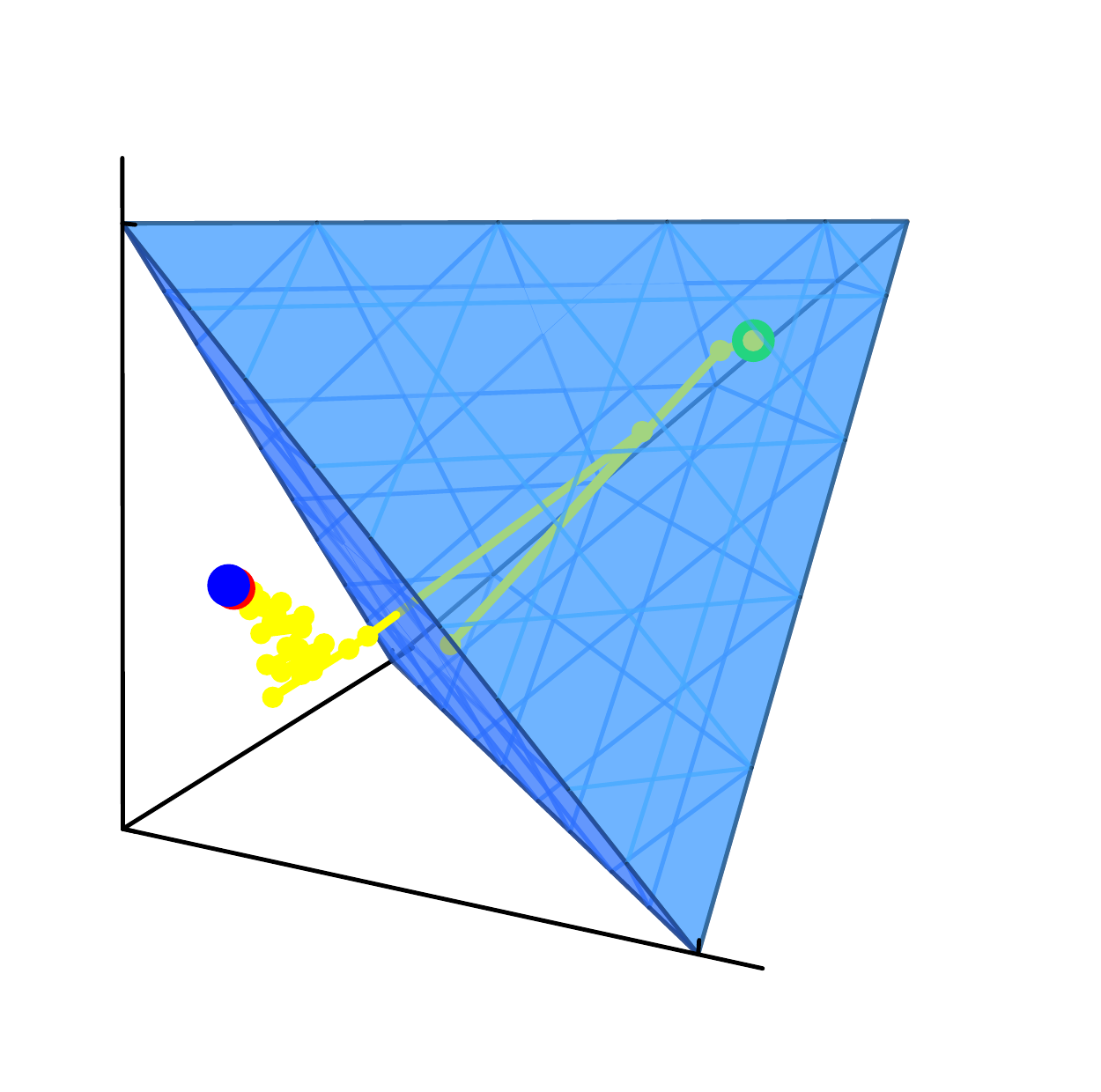}
\caption{Flowing to $p$ from a different point in the SLOCC orbit of the GHZ state}
\end{subfigure}

\begin{subfigure}[h]{0.45\textwidth}
\centering
\includegraphics[width =\textwidth]{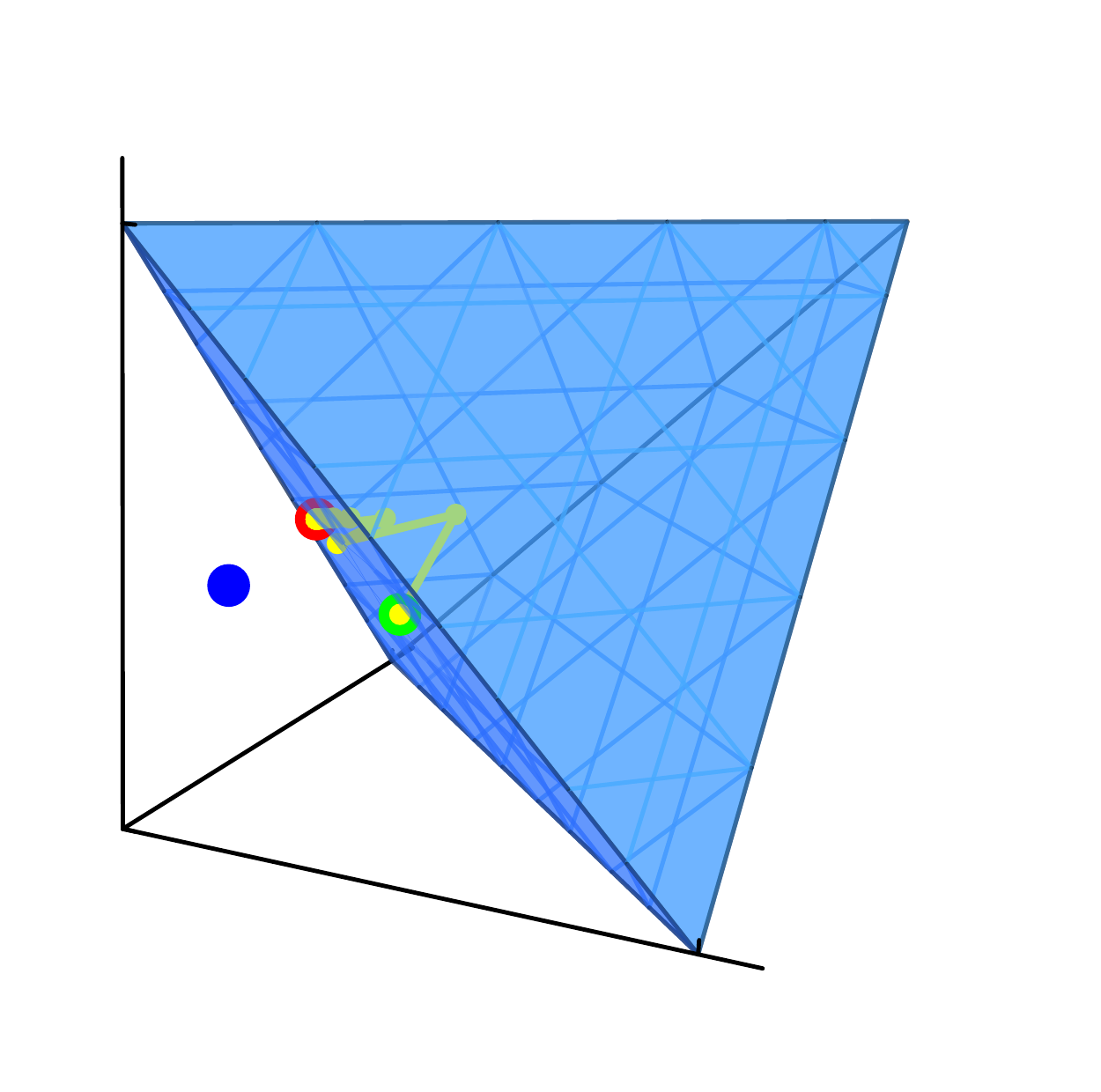}
\caption{Flowing to $p$ starting from the W state. The trajectory always stays in the W polytope and finally arrives at the point in the W polytope closest to $p$}
\end{subfigure}
\qquad\qquad
\begin{subfigure}[h]{0.45\textwidth}
\centering
\includegraphics[width =\textwidth]{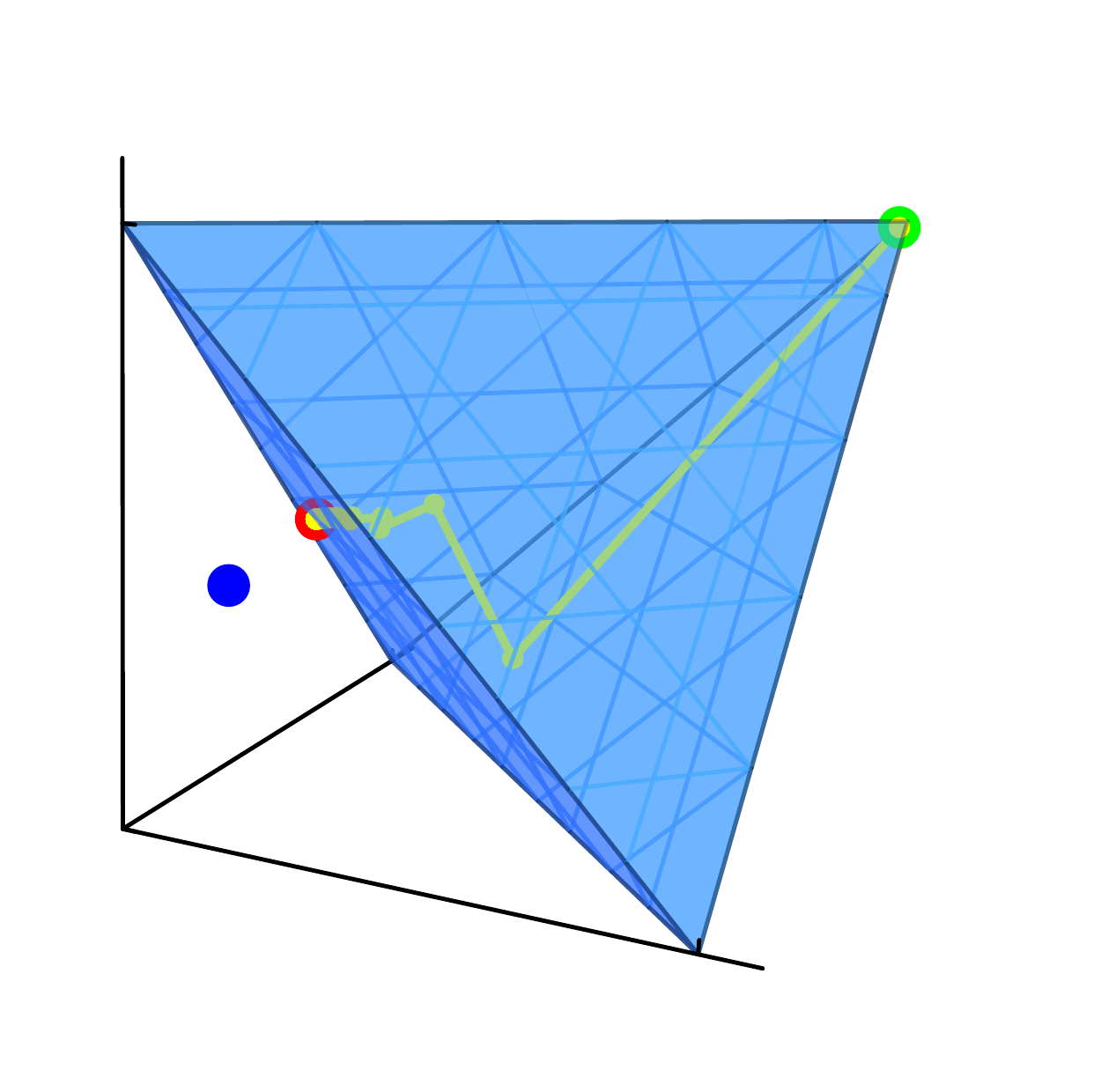}
\caption{Flowing to $p$ starting from the  a different state in the W orbit. The orbit always stays in the W polytope and finally arrives at the point in the W polytope closest to $p$}
\end{subfigure}
\caption[Illustration of the gradient flow to a different point for three qubits]{Applying the gradient flow to the point $p=(0.55,0.66,0.66)$ (blue) starting at states in the GHZ and W orbits. The green point denotes the starting point and the red point denotes the final point. Notice how the flow always stays inside the W polytope if we start at a point in the W orbit.}
\end{figure}
\FloatBarrier
\section{Semi-Interactive polytope computation(SIC)}\label{section:SIC}
The idea for this algorithm and most of its implementation comes from Michael Walter. \\ 
The main use of the gradient flow to calculate the polytopes is the following easy observation made already in chapter \ref{ch:entpoly}. Suppose $p \in \Delta_{\psi}$ is the closest point to $p' \notin \Delta_{\psi}$. Then, by convexity of the entanglement polytopes, we get a halfspace which contains $\Delta_{\psi}$ in form of a linear inequality: Namely, for all points $x \in \Delta_{\psi}$, we have $\langle x - p',p - p' \rangle \geq \langle p - p', p - p'\rangle$. 
\subsection{The algorithm}\label{section:SICalgo}
We now want to exploit this to compute polytopes. For this, we suppose for a moment that our gradient flow is exact, and that we can exactly solve ``convexity problems''  i.e. that we can compute the extremal points $E(\bigcap_i H_i)$ of an intersection of halfspaces and the linear inequalities describing the convex hull of a set of points With these assumptions, we now present an algorithm that, given any state, finds its entanglement polytope in finite time. We make use of the fact that moment polytopes always have rational vertices (see \cite{Brion1987}). Therefore, also all inequalities can only have rational coefficients, which means we can make them integral by multiplying the inequality with a suitable integer. 
\begin{enumerate}
\item Let $\mathcal{I}$ be the set  of local constraints $l(x)\leq a$ for the eigenvalues (i.e the eigenvalues are ordered, positive, and sum to 1). By multiplying those with a suitable integer, we can assume $\mathcal{I}$ consists of integral inequalities. 
\item Define the sets of found vertices $\mathcal{V}_{\mathrm{found}} := \emptyset$ and expected vertices $\mathcal{V}_{\mathrm{expected}} := E(\bigcap\mathcal{I})$ Notice that since the inequalities are integral, $\mathcal{V}_{\mathrm{expected}}$ consist of rational points.
\item Choose any $v \in \mathcal{V}_{\mathrm{expected}}$ and remove it from $\mathcal{V}_{\mathrm{expected}}$. Apply the gradient flow in direction of $v$. on $\ketbra{\psi}$ (which we can do because $v$ is rational). There are now two possibilities: We either find that $v \in \Delta_{\psi}$ or $v \notin \Delta_{\psi}$.
\item \begin{itemize}
\item If $v \notin \Delta_{\psi}$, we get a linear inequality $l'(x) \leq a'$ with integral coefficients. We then add this inequality to $\mathcal{I}$ and redefine $\mathcal{V}_{\mathrm{expected}} := E(\bigcap\mathcal{I})$
\item If $v \in \Delta_{\psi}$, we add $v$ to $\mathcal{V}_{\mathrm{found}}$.
\end{itemize}
\item Repeat steps 3 and 4 until $\mathcal{V}_{\mathrm{expected}}=\emptyset$.
\end{enumerate}
It is clear that this algorithm terminates after finitely many steps and is correct if all our results are exact. 

\subsection{Implementation}
Implementation of this algorithm would be easy if we actually could compute everything. But as it is, we must make do with our numerical approximations\footnote{Of course one could at least solve the convexity problems exactly in theory, but this gets too inefficient as we go for higher dimension.}. It turns out that the most problematic thing is the inequality we get when we did not find a vertex. Since we need to get rational inequalities, they must be rounded. Until now, attempts at automatising the rounding procedure have failed, i.e. after every vertex which we did not find we must round and enter the new inequality by hand. This is why the implemented version is called Semi-Interactive Polytope Computation - the vertices are computed automatically but the inequalities have to be guessed. 
One can decrease the amount of work significantly by adding the inequalities for the generic polytope, i.e the inequalities solving the quantum marginal problem, which are known thanks to Klyachko \cite{Klyachko2007}, to the set $\mathcal{I}$ in step 1 of the algorithm.  \\

Note that since our gradient flow always stays in the orbit polytope, also our approximation to the orbit polytope will always be contained in the orbit polytope (if the algorithm terminates). 

\subsection{Application}
SIC has been succesfully used to compute all polytopes of $2 \times 3 \times N$ systems. It has also been applied to systems of at most 6 qubits and $2\times 4 \times N$ systems. In theory, all that is needed to actually compute more polytopes is more patience and a bit of practice in how to use SIC. However, the numerics failed completely for systems of size $2\times 4 \times 7$ and $2\times 4 \times 8$. In the $2\times 4 \times 7$  case it seems that the polytopes are too ``thin'' to find an interior point for stable computation of the convex hull, while the $2\times 4 \times 8$ case caused Mathematica to exit on the computers used in the course of this thesis\footnote{Admittedly,  not the most high-end machines.}. Also, some attempts were made at applying it to three qutrits and $4\times 4 \times 4$ systems, and no problems were found so far other than increasing runtime. 
\begin{figure}
\centering
\begin{subfigure}[h]{0.45\textwidth}
\centering
\includegraphics[width =\textwidth]{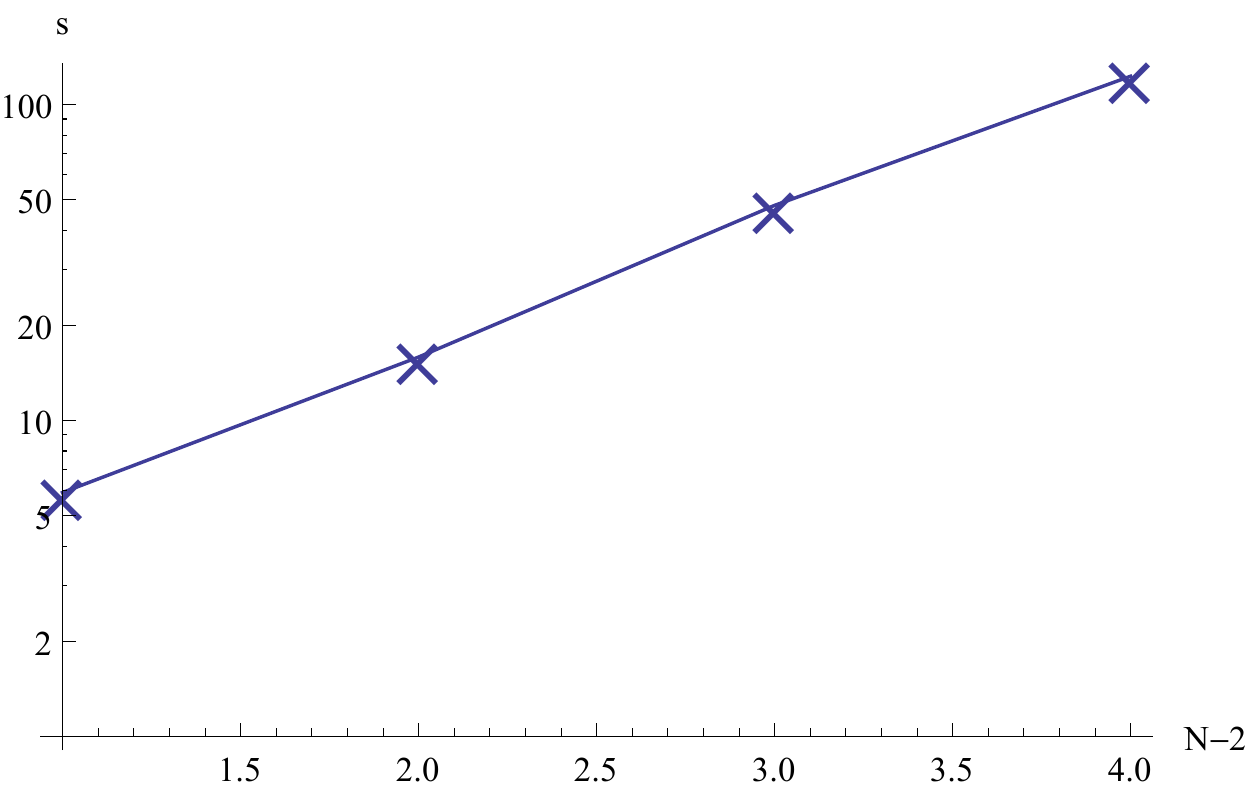}
\caption{Logarithmic Plot of SIC runtime for generic polytope in $2 \times 3 \times N$ systems}
\end{subfigure}
\begin{subfigure}[h]{0.45\textwidth}
\centering
\includegraphics[width =\textwidth]{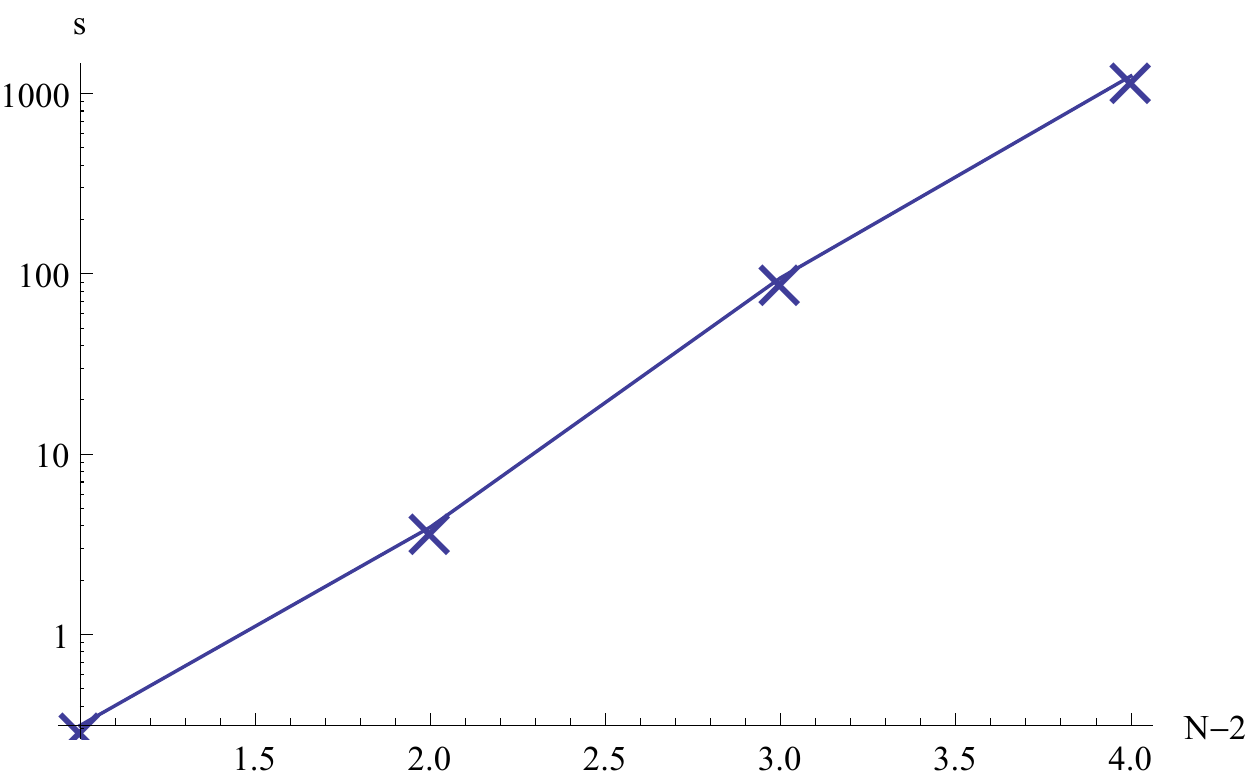}
\caption{Logarithmic Plot of SIC runtime for generic polytope in  $N$ Qubit systems}
\end{subfigure}
\caption[SIC Runtimes]{SIC runtime for generic polytope verification for $2\times 3 \times N$ systems (a) and $N$ Qubit systems (b) on a standard home computer. Notice runtime is exponential both in system size and local dimensions.}
\end{figure}
\chapter{Overview of the numerical Results}
In this chapter we give a brief overview over the results which were obtained using SIC. The full results can be found in the appendix. For $2\times 2 \times n $ systems, we have already computed all polytopes by hand.
\section{$2 \times 3 \times n$}
\begin{table}
\renewcommand{\arraystretch}{1.2}
\begin{equation*}
\begin{array}{c|c}
\text{Local Dimensions} & \text{Representative of SLOCC class} \\ \hline
2\times 3\times N(N \geq 6) & \ket{000} + \ket{011} + \ket{022} + \ket{103}+\ket{114} + \ket{125} \\ \hline
\multirow{2}{*}{$2\times 3\times 5$} & \ket{024} + \ket{000} + \ket{011}+\ket{102} + \ket{113} \\
&\ket{024} + \ket{121} + \ket{000} + \ket{011} +\ket{102} + \ket{113} \\ \hline
\multirow{5}{*}{$2\times 3\times 4$} & \ket{123} + \ket{012} + \ket{000} + \ket{101} \\ 
&\ket{023} + \ket{012} + \ket{000} + \ket{101} \\
&\ket{123} + \ket{012} + \ket{110} + \ket{000} + \ket{101}\\
&\ket{023} + \ket{122} + \ket{012} + \ket{000} + \ket{101}\\
&\ket{023} + \ket{122} + \ket{012} + \ket{110} + \ket{000} + \ket{101} \\ \hline
\multirow{6}{*}{$2\times 3\times 3$} &\ket{000} + \ket{111} + \ket{022} \\
&\ket{000} + \ket{111} + \ket{022} + \ket{122} \\
&\ket{010} + \ket{001} + \ket{112} + \ket{121} \\
&\ket{100} + \ket{010} + \ket{001} + \ket{112} + \ket{121}\\
&\ket{100} + \ket{010} + \ket{001} + \ket{022} \\
&\ket{100} + \ket{010} + \ket{001} + \ket{122} \\ \hline
\multirow{2}{*}{$2\times 3\times 2$}&\ket{000} + \ket{011} + \ket{121} \\
&\ket{000} + \ket{011} + \ket{110} + \ket{121} \\ \hline
2\times 2\times 4 &\ket{000} + \ket{011} + \ket{102} + \ket{113} \\ \hline
\multirow{2}{*}{$2\times 2\times 3$}&\ket{000} + \ket{011} + \ket{112} \\ 
&\ket{000} + \ket{011} + \ket{101} + \ket{112} \\ \hline
\multirow{2}{*}{$2\times 2\times 2$} &\ket{000} + \ket{111}\\
&\ket{001} + \ket{010} + \ket{100} \\ \hline
1\times 3\times 3 &\ket{000} + \ket{011} + \ket{022} \\ \hline
1\times 2\times 2 &\ket{000} + \ket{011}\\ \hline
2\times 1\times 2 &\ket{000} + \ket{101}\\ \hline
2\times 2\times 1 &\ket{000} + \ket{110}\\ \hline
1\times 1\times 1 &\ket{000}
\end{array}
\end{equation*}
\caption{Representatives of all SLOCC entanglement classes on systems of size $2\times 3 \times N$, as shown in \cite{Chen2009}.\label{table:23Nentclasses}}
\end{table}
Table \ref{table:23Nentclasses} shows representatives for all SLOCC classes of systems of local dimensions $ 2 \times 3 \times N$. Notice that for such dimensions we always have finitely many orbits. 
For each of those we have computed the entanglement polytope in terms of inequalities separating it from the generic one. In this section we give a brief overview on the structure of the results. 
\subsection{$ 2\times 3 \times 3$}
In section \ref{section:233polytopes} we have calculated the top two entanglement polytopes for this system. It turns out that already for this system, the polytopes of the lesser entangled top-rank classes are increasingly complicated. This shows that one cannot hope for an "easy" method to compute these polytopes, i.e. it is almost impossible to guess an exhaustive set of inequalities (as we did for the $2\times 2 \times 3$ case) in general: The most complicated case already includes 13 inequalities. 
\subsection{$2 \times 3 \times 4$}
This is the lowest-dimensional case where all systems have different dimensions. This is mirrored in an asymmetry of the inclusion of the polytopes in the generic. By Theorem \ref{thm:magiclemma}, the hierarchy of the classes can be read directly read off from the representatives (it is just given by inclusion of their supports). The polytopes  of $\ket{123} + \ket{012} + \ket{110} + \ket{000} + \ket{101}$ and $\ket{123} + \ket{012} + \ket{000} + \ket{101}$ have a nice inclusion in the generic polytope, given by three inequalities or even a single one respectively, compare \ref{table:234orbitpol}. The inclusion of the other polytopes is far more complicated, requiring 9 respectively 16 inequalities.
\subsection{$2 \times 3 \times 5$}
Here, one thing that can be observed is that the inequalities calculated by SIC are considerably "easier" than the closest point inequality, here $3 x_{1,1}+4 x_{2,1}+4 x_{2,2}+4 x_{3,1}+3 x_{3,2}+3 x_{3,3}\geq 7$, i.e. the coefficients are lower. This can be observed in various systems and is probably due to the fact that the closest point to the origin almost never lies in the interior of a facet of the polytope but on the intersection of several of them. Thus it can happen that the closest point inequality is not itself a facet of the polytope, but just any seperating hyperplane between the origin and the polytope. In this case, for example, the closest point inequality is implied by the two inequalities
\begin{align*}
&-x_{1,1}-x_{2,1}-2 x_{2,2}-2 x_{3,1}-2 x_{3,2}-2 x_{3,3}+3\leq 0 \\
& -x_{1,1}-2 x_{2,1}-2 x_{2,2}-2 x_{3,1}-x_{3,2}-x_{3,3}+3\leq 0 
\end{align*}
One would assume that those are actually facets of the polytope. 
\section{$2 \times 4 \times N$}
Although the numerical exploration of the $2 \times 4 \times N$ polytopes is yet at the beginning, several new patterns can be observed. For finitely many orbits the hierarchy of the SLOCC classes is mirrored by that of the polytopes, i.e every class has its own polytope. 
As we leave the realm of finitely many orbits, this is no longer true (as was also witnessed in \cite{Walter2012} for the polytopes of four qubits). In the $2 \times 4 \times 4$ case, we have computed at least two classes which map to the full polytope. Moreover, while for finitely many orbits it is always true that only class (namely the generic one) does include the origin, the $2 \times 4 \times 4$ case contains an example of a class which includes the origin but does not have the full polytope.
Also some examples of higher-dimensional polytopes were computed, but only up $2 \times 4 \times 6$. For $2 \times 4 \times 7$ the numerics failed. 
\section{Higher dimensional systems} 
We also tried to apply the method to $3\times 3 \times 3$ and $4 \times 4 \times 4 $ systems. In the three Qutrits case, we were interested in two states, namely the three-qubit GHZ state
$\ket{000}+\ket{111}+\ket{222}$, also known as unit tensor [REF?] and the state corresponding to multiplication of upper triangular $2 \times 2$ matrices. One version of this tensor is\footnote{Which can intuitively be explained as follows: Denote $\ket{0} = \begin{pmatrix}
1 & 0 \\ 0 & 0 \end{pmatrix}, \ket{1} = \begin{pmatrix} 0 & 1 \\ 0 & 0 \end{pmatrix},\ket{2} = \begin{pmatrix} 0 & 0 \\ 0 & 1 \end{pmatrix}$ then we get a basis vector $\ket{ijk}$ in the tensor if $\ket{i}\cdot\ket{j}=\ket{k}$.} $\ket{000}+\ket{222}+\ket{011}+\ket{121}$. It turns out that the unit tensor has the full polytope while there is a number of inequalities for the polytope of the upper triangular matrix multiplication tensor. \\
Actually we would be interested in the tensor in the $4 \times 4 \times 4$ system corresponding to multiplication of arbitrary $2 \times 2$ matrices. The problem in this case is that in contrast to the three qutrit case, where marginal inequalities were given by Franz et al. in \cite{Franz2000} we do not know the full polytope. This leaves "too much in the dark" to actually use SIC, especially since we're already dealing with a 64-dimensional space, and SIC is accordingly slow and inaccurate. 
\chapter{Conclusions}
\section{Results}
We can now answer the questions asked in the introduction in more detail. We can say that some methods have been found to analytically compute entanglement polytopes in low dimensions. The closest point method can in theory be used for states satisfying the conditions in any dimensions. However, we also have seen that the value of these inequalities decreases in higher dimensions due to the geometry of the problem, i.e the closest point in the origin will typically lie in the intersection of many facets of the polytope, and the inequality we receive is mostly not among them and thus does not give a tight bound for the polytope. \\ 

More results for larger dimensions were obtained using numerical methods. These methods have shown that in general there can be a lot of inequalities for a single polytope, i.e completely classifyfing the polytopes without numerical help either requires a significant theoretical advancement or knowledge of a generating set of covariants. \\

We have computed all polytopes for $2\times 3 \times N$ systems. The $2\times 4 \times N $ and three qutrits cases could theoretically also be treated in more detail (even though, as already stated, our numerics fail completely for $2\times 4 \times 7$ and $2\times 4 \times 8$). The only thing prevented us from doing so was the deadline of the thesis, since computing the polytopes usually takes patience. \\

What is actually remarkable is the simplicity (very low integral coefficients for the most part). This resembles Klyachko's findings in \cite{Klyachko2007}, where the inequalities given for the full polytope are presented in the appendix. 

\section{Outlook}
Given from what we have learned from this thesis, it should be possible to compute polytopes numerically also for larger dimensions. However at some point it then becomes inevitable to automatise SIC. To this end, the quality of the numerics would have to be improved considerably. Of course, the drawback of numerics is that it is not a priori clear how to treat infinite families of SLOCC classes. Also this would have to be subject to further investigation. \\
Moreover, the linear inequalities for the spectra derived from the numerics can be subjected to further theoretical research. For example, if one could calculate the preimage of points under the momemt map, it should be possible to generalise the closest point method discussed here to arbitrary target points, thus giving a way of explicitly proving the inequalities obtained from the numerics. 
\chapter{Acknowledgements} 
I would to thank  
\begin{itemize}
\item Prof. Renato Renner for introducing and elating me to the Topic of Quantum Information Theory in an excellent lecture in 2012, 
\item My supervisor Prof. Matthias Christandl for introducing me to entanglement polytopes and making this thesis possible, 
\item Michael Walter for countless explanations, ideas, lines of mathematica code, reading and correcting and rereading my arguments,  but above all for his infinite amount of patience and helpfulness, 
\item Dr. P\'eter Vrana for sharing his experience and results with me, counterchecking my ideas and pointing me in interesting directions, 
\item Prof. Brent Doran for an occasional chat on invariant theory,
\item my parents for their support and motivation and 
\item last but not least my girlfriend, Ramona Ochsner, for her patience, understanding, care and support. 
\end{itemize}

Without those people this thesis would never have been possible. 

\bibliographystyle{habbrv}
\bibliography{bibliography}

\begin{appendix}
\chapter{Appendix: A collection of all computed inequalities}
In this section we give detailed results: For each system on dimensions smaller than $2\times 3 \times N$, we give the inequalities (both global and local) for the generic polytope\footnote{As always we here use the inequalities given by Klyachko in \cite{Klyachko2007} adjusted to our system.}, all inequalities obtained using the closest point method and exhaustive sets of inequalities for the polytopes of the SLOCC classes computed using SIC. For systems of larger dimensions, we computed only a limited number of cases, which are also included. Still we found at least one inequality for almost every entanglement of systems of sizes between $2  \times 4 \times 4$ and $2\times 4 \times 6$.
\section{$2 \times 2 \times n$}
We repeat the results for $2 \times 2 \times n$ systems. 
\subsection{Three Qubits}
Although the three-qubit case was done already in \cite{Walter2012}, we include it for the sake of completeness.
\subsubsection{Generic Polytope}

\begin{align*}
 x_{1,1}&\geq 0 \\
 x_{2,1}&\geq 0 \\
 x_{3,1}&\geq 0 \\
 x_{1,1}&\leq 1 \\
 x_{2,1}&\leq 1 \\
 x_{3,1}&\leq 1 \\
 2 x_{1,1}&\geq 1 \\
 2 x_{2,1}&\geq 1 \\
 2 x_{3,1}&\geq 1 \\
 x_{1,1}+x_{2,1}&\leq x_{3,1}+1 \\
 x_{1,1}+x_{3,1}&\leq x_{2,1}+1 \\
 x_{2,1}+x_{3,1}&\leq x_{1,1}+1 \\
\end{align*}

\subsubsection{Orbit Polytopes}
\begin{table}[h]
\renewcommand{\arraystretch}{1.2}
\begin{equation*}
\begin{array}{|c|l|}
\multicolumn{2}{c}{\textbf{Orbit polytopes for $2 \times 2 \times 2$}} \\ \hline
\text{Representative of Class} & \text{Additional Inequalities} \\ \hline
\ket{000}+\ket{111} & \text{None} \\ \hline 
\ket{100}+\ket{010}+\ket{001} & x_{1,1}+x_{2,1}+x_{3,1} \geq 2 \\ \hline
\end{array}
\end{equation*}
\caption{Entanglement Polytope Inequalities for three Qubits}
\end{table}
\FloatBarrier
\subsection{$2 \times 2 \times 3$}
\subsubsection{Generic Polytope}

\begin{align*}
 x_{1,1}&\geq 0 \\
 x_{2,1}&\geq 0 \\
 x_{3,2}&\geq 0 \\
 x_{1,1}&\leq 1 \\
 x_{2,1}&\leq 1 \\
 2 x_{1,1}&\geq 1 \\
 2 x_{2,1}&\geq 1 \\
 x_{3,2}&\leq x_{3,1} \\
 x_{3,1}+x_{3,2}&\leq 1 \\
 x_{3,1}+2 x_{3,2}&\geq 1 \\
 x_{1,1}&\leq x_{2,1}+x_{3,2} \\
 x_{1,1}&\leq x_{3,1}+x_{3,2} \\
 x_{2,1}&\leq x_{1,1}+x_{3,2} \\
 x_{2,1}&\leq x_{3,1}+x_{3,2} \\
 x_{1,1}+x_{2,1}&\leq x_{3,1}+1 \\
 x_{1,1}+1&\leq x_{2,1}+2 x_{3,1}+x_{3,2} \\
 x_{2,1}+1&\leq x_{1,1}+2 x_{3,1}+x_{3,2} \\
\end{align*}
\subsubsection{Orbit Polytopes}
\begin{table}[h]
\renewcommand{\arraystretch}{1.2}
\begin{equation*}
\begin{array}{|c|l|}
\multicolumn{2}{c}{\textbf{Orbit polytopes for $2 \times \times 3$}} \\ \hline
\text{Representative of Class} & \text{Additional Inequalities} \\ \hline
\ket{000}+\ket{101}+\ket{011}+\ket{112} & \text{None} \\ \hline 
\multirow{3}{*}{$\ket{000}+\ket{011}+\ket{112}$} & x_{1,1}+x_{2,1}+x_{3,1}+x_{3,2} \geq 2 \\
& x_{1,1}+x_{3,1} \geq 1 \\
& x_{2,1}+x_{3,1} \geq 1 \\ \hline
\end{array}
\end{equation*}
\caption{Entanglement Polytope Inequalities for $2 \times 2 \times 3$}
\end{table}
\subsection{$2\times 2\times 4$}
\subsubsection{Generic Polytope}
\begin{align*}
 x_{1,1}&\geq 0 \\
 x_{2,1}&\geq 0 \\
 x_{3,3}&\geq 0 \\
 x_{1,1}&\leq 1 \\
 x_{2,1}&\leq 1 \\
 2 x_{1,1}&\geq 1 \\
 2 x_{2,1}&\geq 1 \\
 x_{3,2}&\leq x_{3,1} \\
 x_{3,3}&\leq x_{3,2} \\
 x_{1,1}&\leq x_{3,1}+x_{3,2} \\
 x_{2,1}&\leq x_{3,1}+x_{3,2} \\
 x_{3,1}+x_{3,2}+x_{3,3}&\leq 1 \\
 x_{3,1}+x_{3,2}+2 x_{3,3}&\geq 1 \\
 x_{1,1}+x_{3,3}&\leq x_{2,1}+x_{3,1} \\
 x_{2,1}+x_{3,3}&\leq x_{1,1}+x_{3,1} \\
 x_{1,1}+x_{2,1}&\leq 2 x_{3,1}+x_{3,2}+x_{3,3} \\
 x_{1,1}+1&\leq x_{2,1}+x_{3,1}+2 x_{3,2}+x_{3,3} \\
 x_{2,1}+1&\leq x_{1,1}+x_{3,1}+2 x_{3,2}+x_{3,3} \\
\end{align*}
\subsubsection{Orbit Polytopes}
\begin{table}[h]
\renewcommand{\arraystretch}{1.2}
\begin{equation*}
\begin{array}{|c|l|}
\multicolumn{2}{c}{\textbf{Orbit polytopes for $2 \times \times 4$}} \\ \hline
\text{Representative of Class} & \text{Additional Inequalities} \\ \hline
\ket{000}+\ket{101}+\ket{012}+\ket{113} & \text{None} \\ \hline 
\end{array}
\end{equation*}
\caption{Entanglement Polytope Inequalities for $2 \times 2 \times 4$}
\end{table}
\pagebreak
\section{$2 \times 3 \times N$}
\FloatBarrier
We give first the inequalities obtained from the closest point method for every $2 \times 3 \times N $ orbit. 
\subsection{Closest Point Inequalities}
\begin{table}[h]
\renewcommand{\arraystretch}{1.2}
\begin{math}
\begin{array}{|p{3cm}|p{5cm}|c|}
\multicolumn{3}{c}{\textbf{Closest Point inequalities for } 2 \times 3 \times N} \\ \hline 
\text{Local Dimensions} & \text{Representative of SLOCC class} & \text{Inequality} \\ \hline
$2 \times 3\times N(N \geq 6) $&$ \ket{000} + \ket{011} + \ket{022} + \ket{103}+\ket{114} + \ket{125} $& \text{n/a} \\ \hline
\multirow{2}{*}{$2 \times 3 \times 5$} & $\ket{024} + \ket{000} + \ket{011}+\ket{102} + \ket{113}$ & 3 x_{1,1}+4 x_{2,1}+4 x_{2,2}+4 x_{3,1}+3 x_{3,2}+3 x_{3,3}\geq 7 \\
&$\ket{024} + \ket{121} + \ket{000} + \ket{011} +\ket{102} + \ket{113}$ & \text{n/a} \\ \hline
\multirow{5}{*}{$2 \times 3 \times 4$} &$ \ket{123} + \ket{012} + \ket{000} + \ket{101} $ & x_{2,1}+x_{3,1}+x_{3,2}\geq 1 \\ 
&$\ket{023} + \ket{012} + \ket{000} + \ket{101}$ & x_{1,1}+2 x_{2,1}+3 x_{3,1}+2 x_{3,2}+2 x_{3,3}\geq 5 \\
&$\ket{123} + \ket{012} + \ket{110} + \ket{000} + \ket{101}$ & x_{1,1}+2 x_{2,1}+x_{2,2}+2 x_{3,1}+2 x_{3,2}+x_{3,3}\geq 3\\
&$\ket{023} + \ket{122} + \ket{012} + \ket{000} + \ket{101}$ & 2 x_{1,1}+3 x_{2,1}+2 x_{2,2}+3 x_{3,1}+2 x_{3,2}+x_{3,3}\geq 5 \\
&$\ket{023} + \ket{122} + \ket{012} + \ket{110} + \ket{000} + \ket{101} $ & \text{n/a} \\ \hline
\multirow{6}{*}{$2 \times 3 \times 3$} &$\ket{000} + \ket{111} + \ket{022}$ & 2 x_{1,1}+x_{2,1}+x_{3,1}\geq 2 \\
&$\ket{000} + \ket{111} + \ket{022} + \ket{122}$ & \text{n/a} \\
&$\ket{010} + \ket{001} + \ket{112} + \ket{121}$ & 2 x_{1,1}+2 x_{2,1}+x_{2,2}+2 x_{3,1}+x_{3,2}\geq 4 \\
&$\ket{100} + \ket{010} + \ket{001} + \ket{112} + \ket{121}$ & x_{1,1}+2 x_{2,1}+x_{2,2}+2 x_{3,1}+x_{3,2}\geq 3\\
&$\ket{100} + \ket{010} + \ket{001} + \ket{022}$ & 2 x_{1,1}+2 x_{2,1}+x_{2,2}+2 x_{3,1}+x_{3,2}\geq 4 \\
&$\ket{100} + \ket{010} + \ket{001} + \ket{122}$ & x_{1,1}+x_{2,1}+x_{2,2}+x_{3,1}+x_{3,2}\geq 2 \\ \hline
\end{array}
\end{math}
\caption{Inequalities calculated using the closest point method on the states in table \ref{table:23Nentclasses}.\label{table:23Ncpineqs}}
\end{table}
\FloatBarrier
\pagebreak
\subsection{$2 \times 3 \times 3$}
\subsubsection{Generic Polytope}
\begin{multicols}{2}
\begin{fleqn}
\begin{align*}
 x_{1,1}&\geq 0 \\
 x_{2,1}&\geq 0 \\
 x_{2,2}&\geq 0 \\
 x_{3,2}&\geq 0 \\
 x_{1,1}&\leq 1 \\
 2 x_{1,1}&\geq 1 \\
 x_{2,2}&\leq x_{2,1} \\
 x_{3,2}&\leq x_{3,1} \\
 x_{2,1}+x_{2,2}&\leq 1 \\
 x_{3,1}+x_{3,2}&\leq 1 \\
 x_{2,1}+2 x_{2,2}&\geq 1 \\
 x_{3,1}+2 x_{3,2}&\geq 1 \\
 x_{2,1}&\leq x_{1,1}+x_{3,2} \\
 x_{2,1}&\leq x_{3,1}+x_{3,2} \\
 x_{2,2}&\leq x_{1,1}+x_{3,1} \\
 x_{3,1}&\leq x_{1,1}+x_{2,2} \\
 x_{3,1}&\leq x_{2,1}+x_{2,2} \\
 x_{1,1}+x_{2,1}&\leq x_{3,1}+1 \\
 x_{1,1}+x_{2,2}&\leq x_{3,1}+1 \\
 x_{1,1}+x_{2,2}&\leq x_{3,2}+1 \\
 x_{1,1}+x_{3,1}&\leq x_{2,1}+1 \\
 x_{1,1}+x_{3,2}&\leq x_{2,1}+1 \\
 x_{1,1}+x_{3,2}&\leq x_{2,2}+1 \\
 x_{1,1}&\leq x_{2,1}+x_{2,2}+x_{3,1} \\
 x_{1,1}&\leq x_{2,1}+x_{2,2}+x_{3,2} \\
 x_{1,1}&\leq x_{2,1}+x_{3,1}+x_{3,2} \\
 \end{align*}
 \begin{align*}
 x_{1,1}&\leq x_{2,2}+x_{3,1}+x_{3,2} \\
 x_{2,1}&\leq x_{1,1}+x_{2,2}+x_{3,2} \\
 x_{1,1}+x_{2,1}&\leq x_{2,2}+x_{3,1}+1 \\
 2 x_{2,1}+x_{2,2}&\leq x_{1,1}+x_{3,1}+1 \\
 2 x_{2,1}+x_{2,2}&\leq x_{1,1}+x_{3,2}+1 \\
 x_{1,1}+2 x_{2,1}+x_{2,2}&\leq x_{3,1}+2 \\
 x_{2,1}+2 x_{2,2}&\leq x_{1,1}+x_{3,2}+1 \\
 x_{1,1}+x_{2,1}+2 x_{2,2}&\leq x_{3,2}+2 \\
 x_{2,1}+x_{2,2}&\leq x_{1,1}+x_{3,1}+x_{3,2} \\
 x_{1,1}+x_{2,1}+x_{2,2}&\leq x_{3,1}+x_{3,2}+1 \\
 x_{1,1}+x_{3,1}+x_{3,2}&\leq x_{2,1}+x_{2,2}+1 \\
 x_{1,1}+x_{2,1}&\leq x_{2,2}+2 x_{3,1}+x_{3,2} \\
 x_{1,1}+x_{3,1}&\leq 2 x_{2,1}+x_{2,2}+x_{3,2} \\
 x_{2,1}+1&\leq x_{1,1}+x_{2,2}+x_{3,1}+2 x_{3,2} \\
 x_{2,2}+1&\leq x_{1,1}+x_{2,1}+2 x_{3,1}+x_{3,2} \\
 x_{3,1}+1&\leq x_{1,1}+x_{2,1}+2 x_{2,2}+x_{3,2} \\
 2 &\leq x_{1,1}+2 x_{2,1}+x_{2,2}+2 x_{3,1}+x_{3,2} \\
 2 x_{2,1}+x_{2,2}&\leq x_{1,1}+2 x_{3,1}+x_{3,2} \\
 x_{2,1}+2 x_{2,2}&\leq x_{1,1}+2 x_{3,1}+x_{3,2} \\
 2 x_{3,1}+x_{3,2}&\leq x_{1,1}+2 x_{2,1}+x_{2,2} \\
 x_{1,1}+1&\leq 2 x_{2,1}+x_{2,2}+2 x_{3,1}+x_{3,2} \\
 x_{1,1}+x_{2,1}+2 x_{2,2}&\leq 2 x_{3,1}+x_{3,2}+1 \\
 x_{1,1}+x_{3,1}+2 x_{3,2}&\leq 2 x_{2,1}+x_{2,2}+1 \\
 \end{align*}
 \end{fleqn}
 \end{multicols}
 \FloatBarrier
 \pagebreak
 
\subsubsection{Orbit Polytopes}
\begin{table}[h]
\begin{equation*}
\begin{array}{|c|l|}
\multicolumn{2}{c}{\textbf{Orbit polytopes for $2 \times 3 \times 3$}} \\ \hline
\text{Representative of Class} & \text{Additional Inequalities} \\ \hline
\ket{000}+\ket{111}+\ket{022}+\ket{122} & \text{None} \\ \hline 
\multirow{3}{*}{$\ket{100}+\ket{010}+\ket{010}+\ket{122}$} & x_{1,1}+2 x_{2,1}+x_{2,2}+x_{3,1}\geq 2 \\
& x_{1,1}+x_{2,1}+2 x_{3,1}+x_{3,2}\geq 2 \\
&x_{1,1}+x_{2,1}+x_{2,2}+x_{3,1}+x_{3,2}\geq 2 \\ \hline
\ket{010}+\ket{100}+\ket{001}+\ket{112}+\ket{121} & x_{1,1}+2 x_{2,1}+x_{2,2}+2 x_{3,1}+x_{3,2}\geq 3 \\ \hline 
\multirow{8}{*}{$\ket{000}+\ket{111}+\ket{022}$} & x_{2,2}\leq x_{3,1} \\
& x_{3,2}\leq x_{2,1} \\
& x_{1,1}+x_{2,1}\geq 1 \\
& x_{1,1}+x_{3,1}\geq 1 \\
& x_{2,1}+x_{2,2}+x_{3,2}\geq 1 \\
& x_{2,2}+x_{3,1}+x_{3,2}\geq 1 \\
& x_{2,2}+1\leq x_{1,1}+x_{3,1}+x_{3,2} \\
& x_{3,2}+1\leq x_{1,1}+x_{2,1}+x_{2,2} \\ \hline
\multirow{13}{*}{$\ket{100}+\ket{010}+\ket{001}+\ket{022}$} &x_{2,2}\leq x_{3,1} \\
 &x_{3,2}\leq x_{2,1} \\
 &x_{1,1}+x_{2,1}\geq 1 \\
 &x_{1,1}+x_{3,1}\geq 1 \\
 &x_{2,1}+x_{2,2}+x_{3,2}\geq 1 \\
 &x_{2,2}+x_{3,1}+x_{3,2}\geq 1 \\
 &x_{1,1}+x_{2,1}+x_{2,2}+x_{3,1}\geq 2 \\
 &x_{1,1}+x_{2,1}+x_{3,1}+x_{3,2}\geq 2 \\
 &2 x_{1,1}+2 x_{2,1}+x_{2,2}+2 x_{3,1}+x_{3,2}\geq 4 \\
 &x_{2,2}+3\leq 2 x_{1,1}+x_{2,1}+2 \left(x_{3,1}+x_{3,2}\right) \\
 &x_{3,2}+3\leq 2 x_{1,1}+2 x_{2,1}+2 x_{2,2}+x_{3,1} \\
 &4 x_{1,1}+4 x_{2,1}+5 x_{2,2}+5 x_{3,1}+x_{3,2}\geq 9 \\
 &4 x_{1,1}+5 x_{2,1}+x_{2,2}+4 x_{3,1}+5 x_{3,2}\geq 9 \\ \hline
\multirow{14}{*}{$\ket{100}+\ket{010}+\ket{112}+\ket{121}$} & x_{2,2}\leq x_{3,1} \\
& x_{3,2}\leq x_{2,1} \\
& x_{1,1}+x_{2,1}\geq 1 \\
& x_{1,1}+x_{3,1}\geq 1 \\
& x_{2,1}+x_{2,2}+x_{3,2}\geq 1 \\
& x_{2,2}+x_{3,1}+x_{3,2}\geq 1 \\
& x_{1,1}+x_{2,1}+x_{2,2}+x_{3,1}\geq 2 \\
& x_{1,1}+x_{2,1}+x_{3,1}+x_{3,2}\geq 2 \\
& 2 x_{1,1}+2 x_{2,1}+x_{2,2}+2 x_{3,1}+x_{3,2}\geq 4 \\
& x_{2,2}+3\leq 2 x_{1,1}+x_{2,1}+2 \left(x_{3,1}+x_{3,2}\right) \\
& x_{3,2}+3\leq 2 x_{1,1}+2 x_{2,1}+2 x_{2,2}+x_{3,1} \\
& 4 x_{1,1}+4 x_{2,1}+5 x_{2,2}+5 x_{3,1}+x_{3,2}\geq 9 \\
& 4 x_{1,1}+5 x_{2,1}+x_{2,2}+4 x_{3,1}+5 x_{3,2}\geq 9 \\ \hline
\end{array}
\end{equation*}
\caption{Entanglement Polytope Inequalities for $2 \times 3 \times 3$}
\end{table}
\FloatBarrier
\pagebreak
\subsection{$2 \times 3 \times 4$}
From now on we will write all inequalities in the form $l(\mathbf{x})+d \leq 0$ which makes them less readable but more presentable. 
\subsubsection{Generic Polytope}
\begin{fleqn}
\begin{multicols}{2}
\begin{align*}
 -x_{1,1}&\leq 0 \\
 -x_{2,2}&\leq 0 \\
 -x_{3,3}&\leq 0 \\
 x_{1,1}-1&\leq 0 \\
 1-2 x_{1,1}&\leq 0 \\
 x_{2,2}-x_{2,1}&\leq 0 \\
 x_{3,2}-x_{3,1}&\leq 0 \\
 x_{3,3}-x_{3,2}&\leq 0 \\
 x_{2,1}+x_{2,2}-1&\leq 0 \\
 -x_{2,1}-2 x_{2,2}+1&\leq 0 \\
 -x_{1,1}+x_{2,1}-x_{3,2}&\leq 0 \\
 x_{2,1}-x_{3,1}-x_{3,2}&\leq 0 \\
 x_{3,1}+x_{3,2}+x_{3,3}-1&\leq 0 \\
 x_{1,1}+x_{2,1}-x_{3,1}-1&\leq 0 \\
 x_{1,1}+x_{2,2}-x_{3,2}-1&\leq 0 \\
 -x_{3,1}-x_{3,2}-2 x_{3,3}+1&\leq 0 \\
 x_{1,1}-x_{2,2}-x_{3,1}-x_{3,2}&\leq 0 \\
 x_{1,1}-x_{2,1}-x_{2,2}-x_{3,3}&\leq 0 \\
 x_{1,1}-x_{2,1}-x_{3,1}-x_{3,3}&\leq 0 \\
 x_{1,1}-x_{2,2}-x_{3,1}-x_{3,3}&\leq 0 \\
 x_{1,1}-x_{2,1}-x_{3,2}-x_{3,3}&\leq 0 \\
 x_{1,1}-x_{3,1}-x_{3,2}-x_{3,3}&\leq 0 \\
 -x_{1,1}-x_{2,2}-x_{3,2}-x_{3,3}+1&\leq 0 \\
 -x_{2,1}-x_{2,2}-x_{3,2}-x_{3,3}+1&\leq 0 \\
 -x_{2,1}-x_{3,1}-x_{3,2}-x_{3,3}+1&\leq 0 \\
 -x_{1,1}+x_{2,1}+x_{2,2}-x_{3,1}-x_{3,2}&\leq 0 \\
 \end{align*}
 \begin{align*}
 -x_{1,1}+x_{2,1}-x_{2,2}-x_{3,2}+x_{3,3}&\leq 0 \\
 x_{1,1}+x_{2,1}+x_{2,2}-x_{3,1}-x_{3,2}-1&\leq 0 \\
 x_{1,1}+x_{2,1}-x_{2,2}-x_{3,1}+x_{3,3}-1&\leq 0 \\
 x_{1,1}+x_{2,2}-2 x_{3,1}-x_{3,2}-x_{3,3}&\leq 0 \\
 x_{1,1}-2 x_{2,1}-x_{2,2}+x_{3,1}-x_{3,2}&\leq 0 \\
 x_{1,1}-2 x_{2,1}-x_{2,2}+x_{3,2}-x_{3,3}&\leq 0 \\
 -x_{1,1}-x_{2,1}+x_{2,2}-2 x_{3,1}-x_{3,2}+1&\leq 0 \\
 -x_{1,1}+x_{2,2}-2 x_{3,1}-x_{3,2}-x_{3,3}+1&\leq 0 \\
 -x_{1,1}-x_{2,1}-2 x_{2,2}+x_{3,1}-x_{3,2}+1&\leq 0 \\
 -x_{1,1}-2 x_{2,1}-x_{2,2}+x_{3,1}-x_{3,3}+1&\leq 0 \\
 -x_{1,1}+2 x_{2,1}+x_{2,2}-x_{3,1}+x_{3,3}-1&\leq 0 \\
 -x_{1,1}+x_{2,1}+2 x_{2,2}-x_{3,1}+x_{3,3}-1&\leq 0 \\
 x_{1,1}+x_{2,1}+2 x_{2,2}-x_{3,1}+x_{3,3}-2&\leq 0 \\
 x_{1,1}-2 x_{2,1}-x_{2,2}-2 x_{3,1}-x_{3,2}+1&\leq 0 \\
 x_{1,1}-x_{2,1}-2 x_{3,1}-2 x_{3,2}-x_{3,3}+1&\leq 0 \\
 x_{1,1}+x_{2,1}-x_{2,2}-2 x_{3,1}-x_{3,2}-x_{3,3}&\leq 0 \\
 x_{1,1}-x_{2,1}-x_{2,2}-2 x_{3,1}-x_{3,2}-x_{3,3}+1&\leq 0 \\
 x_{1,1}-x_{2,1}-x_{2,2}-x_{3,1}-2 x_{3,2}-x_{3,3}+1&\leq 0 \\
 -x_{1,1}+x_{2,1}-x_{2,2}-x_{3,1}-2 x_{3,2}-x_{3,3}+1&\leq 0 \\
 -x_{1,1}+2 x_{2,1}+x_{2,2}-2 x_{3,1}-x_{3,2}-x_{3,3}&\leq 0 \\
 -x_{1,1}+2 x_{2,1}+x_{2,2}-x_{3,1}-2 x_{3,2}-x_{3,3}&\leq 0 \\
 -x_{1,1}+x_{2,1}+2 x_{2,2}-x_{3,1}-2 x_{3,2}-x_{3,3}&\leq 0 \\
 x_{1,1}+2 x_{2,1}+x_{2,2}-2 x_{3,1}-x_{3,2}-x_{3,3}-1&\leq 0 \\
 x_{1,1}+x_{2,1}+2 x_{2,2}-x_{3,1}-2 x_{3,2}-x_{3,3}-1&\leq 0 \\
 -x_{1,1}-2 x_{2,1}-x_{2,2}-3 x_{3,1}-2 x_{3,2}-x_{3,3}+3&\leq 0 \\
 x_{1,1}-2 x_{2,1}-x_{2,2}-3 x_{3,1}-2 x_{3,2}-x_{3,3}+2&\leq 0 \\
\end{align*}
\end{multicols}
\end{fleqn}
\FloatBarrier
\pagebreak
\subsubsection{Orbit Polytopes}
\begin{table}[h]
\renewcommand{\arraystretch}{1.2}
\begin{equation*}
\begin{array}{|c|l|}
\multicolumn{2}{c}{\textbf{Orbit polytopes for $2 \times 3 \times 4$}} \\ \hline
\text{Representative of Class} & \text{Additional Inequalities} \\ \hline
\ket{000}+\ket{101}+\ket{110}+\ket{012}+\ket{023}+\ket{122} & \text{None} \\ \hline 
\multirow{3}{*}{$\ket{000}+\ket{110}+\ket{012}+\ket{123}+\ket{101}$} & -x_{2,1}-x_{2,2}-x_{3,1}+1\leq 0 \\
& -x_{1,1}-x_{2,1}-2 x_{3,1}-x_{3,2}-x_{3,3}+2\leq 0 \\
& -x_{1,1}-2 x_{2,1}-x_{2,2}-2 x_{3,1}-2 x_{3,2}-x_{3,3}+3\leq 0 \\ \hline 
\ket{000}+\ket{101}+\ket{012}+\ket{123} & -x_{2,1}-x_{3,1}-x_{3,2} \leq 0 \\ \hline
\multirow{9}{*}{$\ket{000}+\ket{101}+\ket{012}+\ket{023}+\ket{122}$} &-x_{2,1}-x_{3,1}-x_{3,2}+1\leq 0 \\
& -x_{1,1}-x_{2,1}-x_{3,1}+x_{3,3}+1\leq 0 \\
& -x_{1,1}-x_{2,1}-x_{2,2}-x_{3,1}-x_{3,2}+2\leq 0 \\
& -x_{1,1}-x_{2,1}-x_{3,1}-2 x_{3,2}-x_{3,3}+2\leq 0 \\
& -x_{1,1}-x_{2,2}-2 x_{3,1}-x_{3,2}-x_{3,3}+2\leq 0 \\

& -3 x_{1,1}-6 x_{2,1}-3 x_{2,2}-2 x_{3,1}+x_{3,2}+x_{3,3}+5\leq 0 \\
& -2 x_{1,1}-3 x_{2,1}-2 x_{2,2}-3 x_{3,1}-2 x_{3,2}-x_{3,3}+5\leq 0 \\
& -x_{1,1}-2 x_{2,1}-x_{2,2}-2 x_{3,1}-x_{3,2}-x_{3,3}+3\leq 0 \\
& -3 x_{1,1}-3 x_{2,1}-3 x_{2,2}-x_{3,1}-x_{3,2}+2 x_{3,3}+4\leq 0 \\ \hline
\multirow{16}{*}{$\ket{000}+\ket{101}+\ket{012}+\ket{123}$} & x_{2,2}-x_{3,1}\leq 0 \\
& x_{3,2}-x_{2,1}\leq 0 \\
& x_{3,3}-x_{2,2}\leq 0 \\
& -x_{1,1}-x_{2,1}+1\leq 0 \\
& -x_{1,1}-x_{3,1}+1\leq 0 \\
& -x_{2,1}-x_{2,2}-x_{3,2}+1\leq 0 \\
& -3 x_{1,1}-2 x_{3,1}+x_{3,2}+x_{3,3}+2\leq 0 \\
& 3 x_{2,2}-2 x_{3,1}+x_{3,2}+x_{3,3}-1\leq 0 \\
& -x_{1,1}+x_{2,2}-x_{3,1}-x_{3,2}+1\leq 0 \\
& x_{2,1}+x_{2,2}-x_{3,1}-x_{3,2}-x_{3,3}\leq 0 \\
& -3 x_{1,1}-2 x_{2,1}-3 x_{3,1}-2 x_{3,2}-2 x_{3,3}+5\leq 0 \\
& -x_{1,1}-x_{2,1}-x_{2,2}-x_{3,1}-x_{3,3}+2\leq 0 \\
& -x_{1,1}-x_{2,1}-x_{3,1}-x_{3,2}-x_{3,3}+2\leq 0 \\
& -3 x_{1,1}+3 x_{2,2}-x_{3,1}-x_{3,2}+2 x_{3,3}+1\leq 0 \\
& -4 x_{1,1}+x_{2,1}+5 x_{2,2}-5 x_{3,1}-5 x_{3,2}-x_{3,3}+4\leq 0 \\
& -4 x_{1,1}+x_{2,1}+5 x_{2,2}-5 x_{3,1}-5 x_{3,2}-x_{3,3}+4\leq 0 \\
& -3 x_{1,1}-3 x_{2,1}-3 x_{2,2}-x_{3,1}+2 x_{3,2}-x_{3,3}+4\leq 0 \\ \hline
\end{array}
\end{equation*}
\caption{Entanglement Polytope Inequalities for $2 \times 3 \times 4$\label{table:234orbitpol}}
\end{table}
\FloatBarrier
\pagebreak
\subsection{$2 \times 3 \times 5$}
\subsubsection{Generic Polytope}
\begin{fleqn}
\begin{multicols}{2}
\begin{align*}
 x_{1,1}-1&\leq 0 \\
 -x_{1,1}&\leq 0 \\
 -x_{2,2}&\leq 0 \\
 -x_{3,4}&\leq 0 \\
 1-2 x_{1,1}&\leq 0 \\
 x_{2,1}+x_{2,2}-1&\leq 0 \\
 x_{2,2}-x_{2,1}&\leq 0 \\
 x_{3,2}-x_{3,1}&\leq 0 \\
 x_{3,3}-x_{3,2}&\leq 0 \\
 x_{3,4}-x_{3,3}&\leq 0 \\
 -x_{2,1}-2 x_{2,2}+1&\leq 0 \\
 x_{2,1}-x_{3,1}-x_{3,2}&\leq 0 \\
 x_{3,1}+x_{3,2}+x_{3,3}+x_{3,4}-1&\leq 0 \\
 x_{1,1}+x_{2,2}-x_{3,1}+x_{3,4}-1&\leq 0 \\
 x_{1,1}-x_{2,2}-x_{3,1}-x_{3,3}&\leq 0 \\
 x_{1,1}-x_{2,1}-x_{3,2}-x_{3,3}&\leq 0 \\
 x_{1,1}-x_{3,1}-x_{3,2}-x_{3,3}&\leq 0 \\
 x_{1,1}+x_{2,1}+x_{2,2}-x_{3,1}-x_{3,2}-1&\leq 0 \\
 -x_{1,1}-x_{2,2}-x_{3,2}-x_{3,3}+1&\leq 0 \\
 -x_{2,1}-x_{2,2}-x_{3,2}-x_{3,3}+1&\leq 0 \\
 -x_{2,1}-x_{3,1}-x_{3,2}-x_{3,3}+1&\leq 0 \\
 -x_{3,1}-x_{3,2}-x_{3,3}-2 x_{3,4}+1&\leq 0 \\
 x_{1,1}-x_{2,1}-x_{2,2}-x_{3,2}+x_{3,4}&\leq 0 \\
 x_{1,1}-x_{2,1}-x_{3,1}-x_{3,2}+x_{3,4}&\leq 0 \\
 -x_{1,1}+x_{2,2}-2 x_{3,1}-x_{3,2}-x_{3,3}+1&\leq 0 \\
 \end{align*}
 \begin{align*}
 x_{1,1}+x_{2,1}-x_{2,2}-2 x_{3,1}-x_{3,2}-x_{3,4}&\leq 0 \\
 x_{1,1}+x_{2,2}-x_{3,1}-2 x_{3,2}-x_{3,3}-x_{3,4}&\leq 0 \\
 x_{1,1}+x_{2,1}-2 x_{3,1}-x_{3,2}-x_{3,3}-x_{3,4}&\leq 0 \\
 x_{2,1}+x_{2,2}-x_{3,1}-x_{3,2}-x_{3,3}-x_{3,4}&\leq 0 \\
 x_{1,1}-2 x_{2,1}-x_{2,2}-x_{3,1}+x_{3,3}+2 x_{3,4}&\leq 0 \\
 x_{1,1}+x_{2,1}+2 x_{2,2}-x_{3,1}-2 x_{3,2}-x_{3,3}-1&\leq 0 \\
 x_{1,1}+2 x_{2,1}+x_{2,2}-2 x_{3,1}-x_{3,2}-x_{3,3}-1&\leq 0 \\
 x_{1,1}+x_{2,1}+2 x_{2,2}-2 x_{3,1}-x_{3,2}-x_{3,4}-1&\leq 0 \\
 -x_{1,1}+2 x_{2,1}+x_{2,2}-x_{3,1}-2 x_{3,2}-x_{3,3}&\leq 0 \\
 -x_{1,1}+2 x_{2,1}+x_{2,2}-2 x_{3,1}-x_{3,2}-x_{3,4}&\leq 0 \\
 x_{1,1}-x_{2,1}-x_{2,2}-2 x_{3,1}-x_{3,2}-x_{3,3}+1&\leq 0 \\
 -x_{1,1}+x_{2,1}-x_{2,2}-x_{3,1}-2 x_{3,2}-x_{3,4}+1&\leq 0 \\
 x_{1,1}-2 x_{2,1}-x_{2,2}-x_{3,1}-2 x_{3,3}-x_{3,4}+1&\leq 0 \\
 x_{1,1}-x_{2,1}-2 x_{3,1}-x_{3,2}-2 x_{3,3}-x_{3,4}+1&\leq 0 \\
 x_{1,1}-2 x_{2,1}-x_{2,2}-2 x_{3,2}-x_{3,3}-x_{3,4}+1&\leq 0 \\
 x_{1,1}-x_{2,2}-2 x_{3,1}-2 x_{3,2}-x_{3,3}-x_{3,4}+1&\leq 0 \\
 -x_{1,1}+x_{2,1}-x_{3,1}-2 x_{3,2}-x_{3,3}-x_{3,4}+1&\leq 0 \\
 -x_{1,1}-2 x_{2,1}-x_{2,2}-2 x_{3,1}-x_{3,2}+x_{3,4}+2&\leq 0 \\
 -x_{1,1}-2 x_{2,1}-x_{2,2}-x_{3,2}-2 x_{3,3}-x_{3,4}+2&\leq 0 \\
 -x_{1,1}-x_{2,1}-2 x_{2,2}-2 x_{3,2}-x_{3,3}-x_{3,4}+2&\leq 0 \\
 x_{1,1}+x_{2,1}-x_{2,2}-3 x_{3,1}-2 x_{3,2}-2 x_{3,3}-x_{3,4}+1&\leq 0 \\
 -x_{1,1}+x_{2,1}+x_{2,2}-2 x_{3,1}-2 x_{3,2}-x_{3,3}-x_{3,4}+1&\leq 0 \\
 -x_{1,1}+x_{2,1}+2 x_{2,2}-3 x_{3,1}-2 x_{3,2}-x_{3,3}-2 x_{3,4}+1&\leq 0 \\
 -x_{1,1}+x_{2,1}-x_{2,2}-2 x_{3,1}-3 x_{3,2}-2 x_{3,3}-x_{3,4}+2&\leq 0 \\
 -x_{1,1}+x_{2,1}+2 x_{2,2}-2 x_{3,1}-3 x_{3,2}-2 x_{3,3}-x_{3,4}+1&\leq 0 \\
 -x_{1,1}+2 x_{2,1}+x_{2,2}-3 x_{3,1}-2 x_{3,2}-2 x_{3,3}-x_{3,4}+1&\leq 0 \\
 x_{1,1}-x_{2,1}-x_{2,2}-x_{3,1}-x_{3,2}-2 x_{3,3}-x_{3,4}+1&\leq 0 \\
 x_{1,1}-2 x_{2,1}-x_{2,2}-3 x_{3,1}-2 x_{3,2}-x_{3,3}-x_{3,4}+2&\leq 0 \\
 -x_{1,1}-x_{2,1}+x_{2,2}-3 x_{3,1}-2 x_{3,2}-x_{3,3}-x_{3,4}+2&\leq 0 \\
\end{align*}
\end{multicols}
\end{fleqn}
\pagebreak
\subsubsection{Orbit Polytopes}
\begin{table}[h]
\renewcommand{\arraystretch}{1.2}
\begin{equation*}
\begin{array}{|c|l|}

\multicolumn{2}{c}{\textbf{Orbit polytopes for $2 \times 3 \times 5$}} \\ \hline
\text{Representative of Class} & \text{Additional Inequalities} \\ \hline
\ket{000}+\ket{011}+\ket{102}+\ket{113}+\ket{121}+\ket{024} & \text{None} \\ \hline 
\multirow{15}{*}{$\ket{000}+\ket{011}+\ket{102}+\ket{113}+\ket{024}$} & -x_{2,1}-x_{2,2}-x_{3,1}+1\leq 0 \\
& -x_{1,1}-x_{3,1}-x_{3,2}+1\leq 0 \\
& -x_{1,1}-x_{2,1}-x_{3,1}+x_{3,4}+1\leq 0 \\
& -x_{2,2}-x_{3,1}-x_{3,2}-x_{3,3}+1\leq 0 \\
& -x_{1,1}+x_{2,2}-x_{3,1}-2 x_{3,2}-x_{3,3}+1\leq 0 \\
& -x_{1,1}+x_{2,1}-2 x_{3,1}-x_{3,2}-x_{3,3}+1\leq 0 \\
& -x_{1,1}+x_{2,2}-2 x_{3,1}-x_{3,2}-x_{3,4}+1\leq 0 \\
& -x_{1,1}-2 x_{2,1}-x_{2,2}-x_{3,1}-x_{3,2}+x_{3,4}+2\leq 0 \\
&-x_{1,1}-x_{2,1}-2 x_{2,2}-2 x_{3,1}-2 x_{3,2}-2 x_{3,3}+3\leq 0 \\
& -x_{1,1}-2 x_{2,1}-2 x_{2,2}-2 x_{3,1}-x_{3,2}-x_{3,3}+3\leq 0 \\
& -x_{1,1}-x_{2,1}-x_{2,2}-x_{3,1}-x_{3,2}-x_{3,3}+2\leq 0 \\
& -x_{1,1}-x_{2,1}-x_{3,1}-2 x_{3,2}-x_{3,3}-x_{3,4}+2\leq 0 \\
& -x_{1,1}-x_{2,2}-2 x_{3,1}-x_{3,2}-x_{3,3}-x_{3,4}+2\leq 0 \\
& -x_{1,1}-2 x_{2,1}-x_{2,2}-2 x_{3,1}-x_{3,2}-2 x_{3,3}-x_{3,4}+3\leq 0 \\
& -x_{1,1}-x_{2,1}-2 x_{2,2}-2 x_{3,1}-2 x_{3,2}-x_{3,3}-x_{3,4}+3\leq 0 \\ \hline
\end{array}
\end{equation*}
\caption{Entanglement Polytope Inequalities for $2 \times 3 \times 5$}
\end{table}
\pagebreak
\subsection{$2 \times 3 \times 6$}

\subsubsection{Generic Polytope}
\begin{fleqn}
\begin{multicols}{2}
\begin{align*}
 x_{1,1}-1&\leq 0 \\
 -x_{1,1}&\leq 0 \\
 -x_{2,2}&\leq 0 \\
 -x_{3,5}&\leq 0 \\
 1-2 x_{1,1}&\leq 0 \\
 x_{2,1}+x_{2,2}-1&\leq 0 \\
 x_{2,2}-x_{2,1}&\leq 0 \\
 x_{3,2}-x_{3,1}&\leq 0 \\
 x_{3,3}-x_{3,2}&\leq 0 \\
 x_{3,4}-x_{3,3}&\leq 0 \\
 x_{3,5}-x_{3,4}&\leq 0 \\
 -x_{2,1}-2 x_{2,2}+1&\leq 0 \\
 -x_{2,1}+x_{3,4}+x_{3,5}&\leq 0 \\
 x_{2,1}-x_{3,1}-x_{3,2}&\leq 0 \\
 x_{3,1}+x_{3,2}+x_{3,3}+x_{3,4}+x_{3,5}-1&\leq 0 \\
 x_{1,1}-x_{3,1}-x_{3,2}-x_{3,3}&\leq 0 \\
 -x_{1,1}+x_{2,2}-x_{3,1}+x_{3,4}+x_{3,5}&\leq 0 \\
 -x_{2,1}-x_{2,2}-x_{3,2}-x_{3,3}+1&\leq 0 \\
 x_{1,1}-x_{2,1}-x_{3,1}-x_{3,2}+x_{3,4}&\leq 0 \\
 x_{1,1}-x_{2,2}-x_{3,1}-x_{3,2}+x_{3,5}&\leq 0 \\
 x_{1,1}-x_{2,1}-x_{3,1}-x_{3,3}+x_{3,5}&\leq 0 \\
 -x_{1,1}+x_{2,1}+x_{2,2}-x_{3,1}-x_{3,2}+x_{3,5}&\leq 0 \\
 x_{1,1}-x_{2,1}-x_{2,2}-x_{3,1}+x_{3,4}+x_{3,5}&\leq 0 \\
 -x_{3,1}-x_{3,2}-x_{3,3}-x_{3,4}-2 x_{3,5}+1&\leq 0 \\
 x_{1,1}+x_{2,2}-x_{3,1}-2 x_{3,2}-x_{3,3}-x_{3,4}&\leq 0 \\
 x_{1,1}+x_{2,1}-2 x_{3,1}-x_{3,2}-x_{3,3}-x_{3,4}&\leq 0 \\
 x_{2,1}+x_{2,2}-x_{3,1}-x_{3,2}-x_{3,3}-x_{3,4}&\leq 0 \\
 x_{1,1}+x_{2,2}-2 x_{3,1}-x_{3,2}-x_{3,3}-x_{3,5}&\leq 0 \\
 \end{align*}
 \begin{align*}
 -x_{1,1}+x_{2,1}-x_{3,1}-2 x_{3,2}-x_{3,3}-x_{3,4}+1&\leq 0 \\
 x_{1,1}+x_{2,1}-x_{2,2}-2 x_{3,1}-x_{3,2}-x_{3,3}+x_{3,5}&\leq 0 \\
 x_{1,1}-2 x_{2,1}-x_{2,2}-x_{3,1}+x_{3,3}+x_{3,4}+2 x_{3,5}&\leq 0 \\
 -x_{1,1}-x_{2,1}+x_{2,2}-x_{3,1}+x_{3,3}+x_{3,4}+2 x_{3,5}&\leq 0 \\
 -x_{1,1}+x_{2,1}+2 x_{2,2}-x_{3,1}-2 x_{3,2}-x_{3,3}+x_{3,5}&\leq 0 \\
 -x_{1,1}+2 x_{2,1}+x_{2,2}-2 x_{3,1}-x_{3,2}-x_{3,3}+x_{3,5}&\leq 0 \\
 -x_{1,1}+x_{2,1}+2 x_{2,2}-2 x_{3,1}-x_{3,2}-x_{3,4}+x_{3,5}&\leq 0 \\
 x_{1,1}-2 x_{2,1}-x_{2,2}-2 x_{3,1}-x_{3,2}+x_{3,4}-x_{3,5}+1&\leq 0 \\
 -x_{1,1}+x_{2,1}-x_{2,2}-x_{3,1}-2 x_{3,2}-x_{3,3}+x_{3,5}+1&\leq 0 \\
 -x_{1,1}-2 x_{2,1}-x_{2,2}-x_{3,1}+x_{3,3}+2 x_{3,4}+x_{3,5}+1&\leq 0 \\
 x_{1,1}-x_{2,1}-x_{2,2}-x_{3,1}-x_{3,2}-2 x_{3,3}-x_{3,4}+1&\leq 0 \\
 x_{1,1}-x_{2,1}-x_{2,2}-x_{3,1}-2 x_{3,2}-x_{3,3}-x_{3,5}+1&\leq 0 \\
 x_{1,1}-x_{2,1}-x_{3,1}-2 x_{3,2}-2 x_{3,3}-x_{3,4}-x_{3,5}+1&\leq 0 \\
 x_{1,1}-x_{2,2}-2 x_{3,1}-x_{3,2}-2 x_{3,3}-x_{3,4}-x_{3,5}+1&\leq 0 \\
 x_{1,1}+x_{2,1}+x_{2,2}-2 x_{3,1}-2 x_{3,2}-x_{3,3}-x_{3,4}-x_{3,5}&\leq 0 \\
 -x_{1,1}-x_{2,2}-x_{3,1}-2 x_{3,2}-2 x_{3,3}-x_{3,4}-x_{3,5}+2&\leq 0 \\
 x_{1,1}+x_{2,1}+2 x_{2,2}-3 x_{3,1}-2 x_{3,2}-x_{3,3}-2 x_{3,4}-x_{3,5}&\leq 0 \\
 x_{1,1}+x_{2,1}+2 x_{2,2}-2 x_{3,1}-3 x_{3,2}-2 x_{3,3}-x_{3,4}-x_{3,5}&\leq 0 \\
 x_{1,1}+2 x_{2,1}+x_{2,2}-3 x_{3,1}-2 x_{3,2}-2 x_{3,3}-x_{3,4}-x_{3,5}&\leq 0 \\
 x_{1,1}+x_{2,1}-x_{2,2}-3 x_{3,1}-2 x_{3,2}-x_{3,3}-2 x_{3,4}-x_{3,5}+1&\leq 0 \\
 x_{1,1}-2 x_{2,1}-x_{2,2}-2 x_{3,1}-x_{3,2}-3 x_{3,3}-2 x_{3,4}-x_{3,5}+2&\leq 0 \\
 x_{1,1}-2 x_{2,1}-x_{2,2}-x_{3,1}-3 x_{3,2}-2 x_{3,3}-2 x_{3,4}-x_{3,5}+2&\leq 0 \\
 -x_{1,1}+x_{2,1}-x_{2,2}-2 x_{3,1}-3 x_{3,2}-x_{3,3}-2 x_{3,4}-x_{3,5}+2&\leq 0 \\
 -x_{1,1}+2 x_{2,1}+x_{2,2}-3 x_{3,1}-2 x_{3,2}-x_{3,3}-2 x_{3,4}-x_{3,5}+1&\leq 0 \\
 -x_{1,1}+2 x_{2,1}+x_{2,2}-2 x_{3,1}-3 x_{3,2}-2 x_{3,3}-x_{3,4}-x_{3,5}+1&\leq 0 \\
 -x_{1,1}-2 x_{2,1}-x_{2,2}-x_{3,1}-2 x_{3,2}-3 x_{3,3}-2 x_{3,4}-x_{3,5}+3&\leq 0 \\
 -x_{1,1}-x_{2,1}-2 x_{2,2}-x_{3,1}-3 x_{3,2}-2 x_{3,3}-2 x_{3,4}-x_{3,5}+3&\leq 0 \\
\end{align*}
\end{multicols}
\end{fleqn}
\subsubsection{Orbit Polytopes}
\begin{table}[h]
\renewcommand{\arraystretch}{1.2}
\begin{equation*}
\begin{array}{|c|l|}
\multicolumn{2}{c}{\textbf{Orbit polytopes for $2 \times 3 \times 6$}} \\ \hline
\text{Representative of Class} & \text{Additional Inequalities} \\ \hline
\ket{000}+\ket{011}+\ket{022}+\ket{103}+\ket{114}+\ket{125} & \text{None} \\ \hline
\end{array}
\end{equation*}
\caption{Entanglement Polytope Inequalities for $2 \times 3 \times 6$}
\end{table}
 \newpage
\section{$2\times 4 \times N$}
The results arrived at for this cases are (as of yet) very incomplete, but we include the ones we obtained. We also no longer include the inequalities for the generic polytopes as they take more and more space (the inequalities for the $2 \times 4 \times 8$ case provided in \cite{Klyachko2007} extend over 7 pages). They can be retrieved from the 
 Mathematica package. 
\subsection{$2 \times 4 \times 4$}
We use notation from \cite{Chen2009}: 
\begin{align*}
\ket{\phi_0} &= \ket{000}+\ket{111}+\ket{022} \\
\ket{\phi_1} &= \ket{000}+\ket{111}+\ket{022}+\ket{122} \\
\ket{\phi_2} &= \ket{010}+\ket{001}+\ket{112}+\ket{121} \\
\ket{\phi_3} &= \ket{100}+\ket{010}+\ket{001}+\ket{112}+\ket{121} \\
\ket{\phi_4} &= \ket{100}+\ket{010}+\ket{001}+\ket{022} \\
\ket{\phi_5} &= \ket{100}+\ket{010}+\ket{001}+\ket{122} \\
\ket{\phi_6} &= \ket{000}+\ket{011}+\ket{110}+\ket{121} \\
\end{align*}
\subsubsection{Closest Point Inequalities}
\FloatBarrier
\begin{table}[h]
\renewcommand{\arraystretch}{1.2}
\begin{equation*}
\begin{array}{|c|c|}
\multicolumn{2}{c}{\textbf{Closest Point Inequalities for } 2 \times 4 \times 4} \\ \hline 
\text{Representative of orbit} & \text{Inequality} \\ \hline 
\ket{133}+\ket{\varphi_{0}} &\text{n/a} \\ 
\ket{033}+\ket{\varphi_{0}} & 2 x_{1,1}+x_{2,1}+x_{3,1}\geq 2 \\
\ket{133}+\ket{\varphi_{1}} & \text{n/a}\\
\ket{033}+x\ket{133}+\ket{\varphi_{0}} & \text{n/a} \\
\ket{133}+ \ket{\varphi_{2}} & 3 x_{1,1}+8 x_{2,1}+4 x_{2,2}+3 x_{2,3}+8x_{3,1}+4 x_{3,2}+3 x_{3,3}\geq 11 \\
\ket{133} + \ket{\varphi_{3}} & 3 x_{1,1}+8 x_{2,1}+4 x_{2,2}+3 x_{2,3}+8x_{3,1}+4 x_{3,2}+3 x_{3,3}\geq 11 \\
\ket{033}+ \ket{\varphi_{3}} & 3x_{1,1}+8x_{2,1}+4x_{2,2}+3 x_{2,3}+8 x_{3,1}+4 x_{3,2}+3 x_{3,3}\geq 11\\
\ket{133}+ \ket{\varphi_{4}} & 2 x_{1,1}+2x_{2,1}+2 x_{2,2}+x_{2,3}+2 x_{3,1}+2 x_{3,2}+x_{3,3}\geq 4 \\
\ket{033}+ \ket{\varphi_4 }& 2 x_{1,1}+2 x_{2,1}+x_{2,2}+x_{2,3}+2 x_{3,1}+x_{3,2}+x_{3,3}\geq 4 \\
\ket{133} + \ket{\varphi_5} & \text{n/a} \\
\ket{133} + \ket{033} + \ket{\varphi_5} & \text{n/a} \\
\ket{133}+\ket{032}+\ket{\varphi_2} & x_{2,1}+x_{2,2}+x_{3,1}\geq 1 \\
\ket{133}+\ket{032}+\ket{\varphi_3} &x_{2,1}+x_{2,2}+x_{3,1}\geq 1\\
\ket{033}+\ket{132}+\ket{\varphi_4} &x_{1,1}+x_{2,1}+x_{2,2}+x_{3,1}+x_{3,2}\geq 2 \\
\ket{133}+\ket{032}+\ket{\varphi_5} & \text{n/a}\\
\ket{033}+\ket{132}+\ket{\varphi_6} & x_{2,1}+x_{3,1}+x_{3,2}\geq 1 \\ \hline

\end{array}
\end{equation*}
\caption{Inequalities calculated using the closest point method for $2 \times 4 \times 4$ systems}
\end{table}
\FloatBarrier
\subsubsection{Orbit polytopes}
\begin{table}[h]
\renewcommand{\arraystretch}{1.2}
\begin{equation*}
\begin{array}{|c|l|}
\multicolumn{2}{c}{\textbf{Some Orbit polytopes for $2 \times 4 \times 4$}} \\ \hline
\text{Representative of Class} & \text{Additional Inequalities} \\ \hline
\multirow{5}{*}{$\ket{133}+\ket{\varphi_{0}}$} & x_{2,2}+x_{2,3}-x_{3,1}-x_{3,2}\leq 0 \\
& -x_{2,1}-x_{2,2}+x_{3,2}+x_{3,3}\leq 0 \\
& -x_{2,2}-x_{2,3}-x_{3,1}-x_{3,2}+1\leq 0 \\
& -x_{2,1}-x_{2,2}-x_{3,2}-x_{3,3}+1\leq 0 \\ \hline
\ket{133} + \ket{033} + \ket{\varphi_5}  & \text{None} \\ \hline 
\ket{133}+\ket{032}+\ket{\varphi_5}    & \text{None} \\ \hline 
\end{array}
\end{equation*}
\caption{Entanglement Polytope Inequalities for the three entanglement classes in $2 \times 4 \times 4$}
\end{table}
\newpage
\subsection{$2 \times 4 \times 5$}
For a definition of the representatives of the orbits in $2 \times 4 \times 5$ see Appendix B. 
\subsubsection{Closest Point Inequalities}
\FloatBarrier
\begin{table}[h]
\renewcommand{\arraystretch}{1.2}
\begin{equation*}
\begin{array}{|c|c|}
\multicolumn{2}{c}{\text{Closest Point Inequalities for } 2 \times 4 \times 5} \\ \hline 
\text{Representative of orbit} & \text{Inequality} \\ \hline 
 \ket{\Gamma_0} & 6 x_{1,1}+7 x_{2,1}+3 x_{2,2}+10 x_{3,1}+7 x_{3,2}+7 x_{3,3}+6
  x_{3,4}\geq 13 \\
 \ket{\Gamma_1} & \text{n/a} \\
 \ket{\Gamma_2} & 9 x_{1,1}+5 x_{2,1}+9 x_{3,1}+5 x_{3,2}+5 x_{3,3}+5 x_{3,4}\geq 14
  \\
 \ket{\Gamma_3} & \text{n/a} \\
 \ket{\Gamma_4} & x_{2,1}+x_{2,2}+x_{3,1}+x_{3,2}\geq 1 \\
 \ket{\Gamma_5} & \text{n/a} \\

 \ket{\Gamma_6} & \text{n/a} \\
 \ket{\Gamma_8} & 7 x_{1,1}+24 x_{2,1}+17 x_{2,2}+10 x_{2,3}+24 x_{3,1}+21 x_{3,2}+14
  x_{3,3}+7 x_{3,4}\geq 31 \\
 \ket{\Gamma_9} & 10 x_{1,1}+21 x_{2,1}+11 x_{2,2}+10 x_{2,3}+21 x_{3,1}+20
  x_{3,2}+11 x_{3,3}+10 x_{3,4}\geq 31 \\

 \ket{\Gamma_{10}} & \text{n/a} \\

 \ket{\Gamma_{12}} & \text{n/a} \\
 \ket{\Gamma_{14}} & 3 x_{1,1}+3 x_{2,1}+2 x_{2,2}+2 x_{2,3}+5 x_{3,1}+3 x_{3,2}+3
  x_{3,3}\geq 8 \\ \hline
\end{array}
\end{equation*}
\caption{Inequalities calculated using the closest point method for $2 \times 4 \times 5$ systems}
\end{table}
\newpage
\FloatBarrier
\subsubsection{Orbit Polytopes}
\begin{table}[h]
\renewcommand{\arraystretch}{1.2}
\begin{equation*}
\begin{array}{|c|l|}
\multicolumn{2}{c}{\textbf{Some Orbit polytopes for $2 \times 4 \times 5$}} \\ \hline
\text{Representative of Class} & \text{Additional Inequalities} \\ \hline
\multirow{5}{*}{$\ket{\Gamma_1}$} &  -x_{2,1}-x_{3,1}-x_{3,2}-x_{3,3}+1\leq 0 \\ \hline
& x_{1,1}+x_{2,2}+2 x_{2,3}-3 x_{3,1}-2 x_{3,2}-x_{3,3}-x_{3,4}\leq 0 \\
& x_{1,1}-x_{2,1}+x_{2,2}-x_{2,3}-3 x_{3,1}-2 x_{3,2}-x_{3,3}-x_{3,4}+1\leq 0 \\
& x_{1,1}-2 x_{2,1}-2 x_{2,2}-x_{2,3}-2 x_{3,1}-x_{3,2}-3 x_{3,3}-x_{3,4}+2\leq 0 \\
& x_{1,1}-2 x_{2,1}-2 x_{2,2}-x_{2,3}-x_{3,1}-3 x_{3,2}-2 x_{3,3}-x_{3,4}+2\leq 0 \\
\ket{\Gamma_{12}} & \text{None} \\ \hline 
\ket{\Gamma_{13}} & \text{None} \\ \hline 
\end{array}
\end{equation*}
\caption{Entanglement Polytope Inequalities for the $\ket{\Gamma_1},\ket{\Gamma_{12}}$ and  $\ket{\Gamma_{13}}$ classes in $2 \times 4 \times 5$}
\end{table}
\begin{table}[h]
\renewcommand{\arraystretch}{1.1}
\begin{equation*}
\begin{array}{c}
\textbf{Orbit polytope of $\ket{\Gamma_0}$ for $2 \times 4 \times 5$} \\ \hline
\text{Additional Inequalities} \\ \hline
 x_{2,3}-x_{3,1} \leq 0  \\     x_{3,3}-x_{2,1}\leq 0 \\
 x_{3,4}-x_{2,2}\leq 0   \\    x_{3,4}-x_{2,2}\leq 0 \\
 -x_{1,1}-x_{2,2}-x_{3,1}+1\leq 0   \\   
 -x_{1,1}-x_{2,1}-x_{3,2}+1\leq 0 \\
 -x_{1,1}-x_{3,1}-x_{3,2}+1\leq 0   \\   
 -x_{1,1}+x_{2,2}+x_{2,3}-x_{3,1}+x_{3,4}\leq 0 \\
 -x_{2,1}-x_{2,2}-x_{2,3}-x_{3,2}+1\leq 0   \\   
 -x_{2,1}-x_{2,3}-x_{3,1}-x_{3,2}+1\leq 0 \\
 -x_{2,1}-x_{2,2}-x_{3,1}-x_{3,3}+1\leq 0   \\   
 -x_{2,1}-x_{3,1}-x_{3,2}-x_{3,3}+1\leq 0 \\
 x_{2,1}+x_{2,3}-x_{3,1}-x_{3,2}-x_{3,3}\leq 0   \\   
 x_{2,1}+x_{2,3}-x_{3,1}-x_{3,2}-x_{3,3}\leq 0 \\
 -x_{1,1}-x_{2,1}+x_{2,3}-x_{3,1}-x_{3,3}+1\leq 0   \\   
 -x_{1,1}+x_{2,2}-x_{3,1}-2 x_{3,2}-x_{3,3}+1\leq 0 \\
 -x_{1,1}+x_{2,1}-2 x_{3,1}-x_{3,2}-x_{3,3}+1\leq 0   \\   
 -x_{1,1}+x_{2,1}-2 x_{3,1}-x_{3,2}-x_{3,3}+1\leq 0 \\
 -x_{1,1}+x_{2,2}-2 x_{3,1}-x_{3,2}-x_{3,3}+1\leq 0   \\   
 -x_{1,1}+x_{2,2}-2 x_{3,1}-x_{3,2}-x_{3,4}+1\leq 0  \\
 -x_{1,1}-x_{2,2}-x_{2,3}-x_{3,1}+x_{3,4}+1\leq 0   \\   
 -9 x_{2,1}-4 x_{2,2}-9 x_{3,1}-5 x_{3,2}-9 x_{3,3}+9\leq 0 \\
 -x_{2,2}-x_{2,3}-x_{3,1}-x_{3,2}-x_{3,4}+1\leq 0   \\   
 -x_{2,2}-x_{2,3}-x_{3,1}-x_{3,2}-x_{3,4}+1\leq 0 \\
 -x_{2,1}-x_{2,2}-x_{3,2}-x_{3,3}-x_{3,4}+1\leq 0   \\   
 -x_{2,3}-x_{3,1}-x_{3,2}-x_{3,3}-x_{3,4}+1\leq 0 \\
 -3 x_{1,1}-4 x_{2,2}-3 x_{2,3}-3 x_{3,1}+4 x_{3,4}+3\leq 0   \\   
 -x_{1,1}+x_{2,2}+2 x_{2,3}-x_{3,1}-x_{3,2}+x_{3,4}\leq 0 \\
 -x_{1,1}-x_{2,1}-2 x_{2,2}-x_{2,3}-x_{3,1}-x_{3,3}+2\leq 0   \\   
 -x_{1,1}-2 x_{2,1}-x_{2,2}-x_{2,3}-x_{3,2}-x_{3,3}+2\leq 0 \\
 -x_{1,1}-x_{2,1}-x_{2,3}-2 x_{3,1}-x_{3,2}-x_{3,3}+2\leq 0   \\   
 -x_{1,1}-2 x_{2,1}-x_{2,2}-x_{2,3}-x_{3,1}-x_{3,4}+2\leq 0 \\
 -x_{1,1}-x_{2,1}-2 x_{3,1}-x_{3,2}-x_{3,3}-x_{3,4}+2\leq 0   \\   
 -x_{1,1}+x_{2,2}+x_{2,3}-x_{3,1}-2 x_{3,2}-x_{3,3}-x_{3,4}+1\leq 0 \\
 -x_{1,1}+x_{2,1}+x_{2,3}-2 x_{3,1}-x_{3,2}-x_{3,3}-x_{3,4}+1\leq 0   \\   
 -4 x_{1,1}+x_{2,2}+5 x_{2,3}-5 x_{3,1}-4 x_{3,2}-5 x_{3,3}+x_{3,4}+4\leq 0 \\
 -x_{1,1}+x_{2,2}+2 x_{2,3}-2 x_{3,1}-x_{3,2}-2 x_{3,3}-x_{3,4}+1\leq 0   \\   
 -x_{1,1}-x_{2,1}+x_{2,3}-2 x_{3,1}-2 x_{3,2}-x_{3,3}-x_{3,4}+2\leq 0 \\
 -x_{1,1}+x_{2,1}+2 x_{2,3}-2 x_{3,1}-2 x_{3,2}-x_{3,3}-x_{3,4}+1\leq 0   \\   
 -x_{1,1}-x_{2,2}-6 x_{2,3}-7 x_{3,1}-8 x_{3,2}-7 x_{3,3}-6 x_{3,4}+7\leq 0 \\
 -x_{1,1}-x_{2,1}-x_{2,2}-x_{2,3}-x_{3,1}-x_{3,2}-x_{3,4}+2\leq 0   \\   
 -x_{1,1}-x_{2,1}-x_{2,2}-x_{3,1}-x_{3,2}-x_{3,3}-x_{3,4}+2\leq 0 \\
 -2 x_{1,1}-4 x_{2,1}-2 x_{2,2}-2 x_{2,3}+x_{3,1}-x_{3,2}-x_{3,3}+x_{3,4}+3\leq 0   \\   
 -4 x_{1,1}-8 x_{2,1}-8 x_{2,2}-4 x_{2,3}-3 x_{3,1}+x_{3,2}+5 x_{3,3}-3 x_{3,4}+7\leq 0 \\
 -3 x_{1,1}+x_{2,1}-2 x_{2,2}-2 x_{2,3}-4 x_{3,1}-x_{3,2}-x_{3,3}+2 x_{3,4}+3\leq 0   \\   
 -2 x_{1,1}-2 x_{2,1}-2 x_{2,2}-x_{2,3}-2 x_{3,1}-2 x_{3,2}-x_{3,3}-2 x_{3,4}+4\leq 0 \\
 -x_{1,1}-2 x_{2,1}-x_{2,2}-x_{2,3}-2 x_{3,1}-x_{3,2}-2 x_{3,3}-x_{3,4}+3\leq 0 \\  \hline
\end{array}
\end{equation*}
\caption{Entanglement Polytope Inequalities for the $\ket{\Gamma_0}$ orbit $2 \times 4 \times 5$}
\end{table}
\FloatBarrier
\subsection{$2 \times 4 \times 6$}
\subsubsection{Closest Point Inequalities}
\begin{table}[h]
\renewcommand{\arraystretch}{1.2}
\begin{equation*}
\begin{array}{|c|c|}
\multicolumn{2}{c}{\textbf{Closest Point Inequalities for } 2 \times 4 \times 6} \\ \hline 
\text{Representative of orbit} & \text{Inequality} \\ \hline 
\ket{\Theta_0} & x_{2,1}+x_{2,2}+x_{3,1}+x_{3,2}\geq 1 \\
\ket{\Theta_1} & 6 x_{1,1}+5 x_{2,1}+5 x_{2,2}+6 x_{3,1}+6 x_{3,2}+5 x_{3,3}+5 x_{3,4}\geq 11 \\
\ket{\Theta_2} & 15 x_{1,1}+40 x_{2,1}+33 x_{2,2}+18 x_{2,3}+40 x_{3,1}+37 x_{3,2}+22 x_{3,3}+15 x_{3,4}+7 x_{3,5}\geq 55 \\
\ket{\Theta_3} & \text{n/a} \\
\ket{\Theta_4} & \text{n/a} \\
\ket{\Theta_5} & x_{2,1}+x_{3,1}+x_{3,2}+x_{3,3}+x_{3,4}\geq 1 \\ \hline
\end{array}
\end{equation*}
\caption{Inequalities calculated using the closest point method for $2 \times 4 \times 6$ systems}
\end{table}
\FloatBarrier
\subsubsection{Orbit polytopes}
\begin{table}[h]
\renewcommand{\arraystretch}{1.2}
\begin{equation*}
\begin{array}{|c|l|}
\multicolumn{2}{c}{\textbf{Some Orbit polytopes for $2 \times 4 \times 6$}} \\ \hline
\text{Representative of Class} & \text{Additional Inequalities} \\ \hline
\ket{\Theta_0} &  -x_{2,1}-x_{2,2}-x_{3,1}-x_{3,2}+1\leq 0 \\
\ket{\Theta_4} & \text{None} \\ \hline 
\end{array}
\end{equation*}
\caption{Entanglement Polytope Inequalities for the $\ket{\Gamma_1},\ket{\Gamma_{12}}$ and  $\ket{\Gamma_{13}}$ classes in $2 \times 4 \times 6$}
\end{table}

\begin{table}[h]
\renewcommand{\arraystretch}{1.1}
\begin{equation*}
\begin{array}{c}
\textbf{Orbit polytope of $\ket{\Theta_1}$ for $2 \times 4 \times 6$} \\ \hline
\text{Additional Inequalities} \\ \hline
 x_{2,3}-x_{3,1}\leq 0 \\
 x_{3,3}-x_{2,1}\leq 0 \\
 x_{3,4}-x_{2,2}\leq 0 \\
 x_{3,5}-x_{2,3}\leq 0 \\
 -x_{1,1}+x_{2,1}+x_{2,3}-x_{3,1}+x_{3,5}\leq 0 \\
 -x_{2,1}-x_{2,2}-x_{2,3}-x_{3,2}+1\leq 0 \\
 4 x_{2,3}-3 x_{3,1}+x_{3,2}+x_{3,3}+x_{3,4}-1\leq 0 \\
 x_{2,1}+x_{2,3}-x_{3,1}-x_{3,2}-x_{3,3}\leq 0 \\
 x_{2,1}+x_{2,3}-x_{3,1}-x_{3,2}-x_{3,3}\leq 0 \\
 -x_{1,1}+x_{2,1}-2 x_{3,1}-x_{3,2}-x_{3,3}+1\leq 0 \\
 -x_{1,1}-x_{2,2}-x_{2,3}-x_{3,1}+x_{3,4}+1\leq 0 \\
 -x_{2,2}-x_{2,3}-x_{3,1}-x_{3,2}-x_{3,4}+1\leq 0 \\
 -x_{2,2}-x_{2,3}-x_{3,1}-x_{3,2}-x_{3,4}+1\leq 0 \\
 -x_{2,1}-x_{2,2}-x_{3,2}-x_{3,3}-x_{3,4}+1\leq 0 \\
 -x_{2,2}-x_{3,1}-x_{3,2}-x_{3,3}-x_{3,4}+1\leq 0 \\
 -x_{1,1}+x_{2,1}+x_{2,3}-2 x_{3,1}-x_{3,2}-x_{3,3}-x_{3,4}+1\leq 0 \\
 2 x_{2,1}+2 x_{2,3}-x_{3,1}-x_{3,2}-x_{3,3}+x_{3,4}+x_{3,5}-1\leq 0 \\
 x_{1,1}-2 x_{2,1}-x_{2,2}-2 x_{3,2}-2 x_{3,3}-x_{3,4}-x_{3,5}+1\leq 0 \\
 -3 x_{1,1}-x_{2,1}-2 x_{2,3}-3 x_{3,1}-2 x_{3,2}+x_{3,4}+2 x_{3,5}+3\leq 0 \\
 x_{2,1}+x_{2,2}+x_{2,3}-x_{3,1}-x_{3,2}-x_{3,3}-x_{3,4}-x_{3,5}\leq 0 \\
 -5 x_{1,1}-4 x_{2,1}-4 x_{2,2}-5 x_{3,1}-5 x_{3,2}-4 x_{3,3}-4 x_{3,4}+9\leq 0 \\
 -x_{1,1}-x_{2,2}-x_{2,3}-2 x_{3,1}-x_{3,2}-x_{3,3}-x_{3,5}+2\leq 0 \\
 -x_{1,1}-x_{2,1}-2 x_{2,2}-2 x_{3,2}-2 x_{3,3}-x_{3,4}-x_{3,5}+2\leq 0 \\
 -x_{1,1}-x_{2,2}-2 x_{3,1}-x_{3,2}-x_{3,3}-x_{3,4}-x_{3,5}+2\leq 0 \\
 -x_{1,1}-x_{2,2}-2 x_{3,1}-x_{3,2}-x_{3,3}-x_{3,4}-x_{3,5}+2\leq 0 \\
 -5 x_{2,2}-5 x_{2,3}-2 x_{3,1}-2 x_{3,2}+3 x_{3,3}-2 x_{3,4}+3 x_{3,5}+2\leq 0 \\
 -5 x_{1,1}+5 x_{2,1}-6 x_{3,1}-x_{3,2}-x_{3,3}+4 x_{3,4}+4 x_{3,5}+1\leq 0 \\
 x_{1,1}+x_{2,1}+2 x_{2,2}+2 x_{2,3}-2 x_{3,1}-x_{3,2}-2 x_{3,3}-x_{3,4}-1\leq 0 \\
 x_{1,1}+2 x_{2,1}+x_{2,2}+2 x_{2,3}-2 x_{3,1}-2 x_{3,2}-x_{3,3}-x_{3,4}-1\leq 0 \\
 -2 x_{1,1}-4 x_{2,1}-2 x_{2,2}-2 x_{2,3}-x_{3,1}+x_{3,2}+x_{3,3}-x_{3,4}+3\leq 0 \\
 -9 x_{1,1}-9 x_{2,1}-9 x_{2,2}-10 x_{3,1}-10 x_{3,2}-10 x_{3,3}-11 x_{3,4}-x_{3,5}-19\leq 0 \\
 -9 x_{2,1}-10 x_{2,2}-x_{2,3}-x_{3,1}-9 x_{3,2}-10 x_{3,3}-9 x_{3,4}-x_{3,5}+10\leq 0 \\
 -2 x_{1,1}-2 x_{2,1}-x_{2,2}-2 x_{3,1}-3 x_{3,2}-2 x_{3,3}-2 x_{3,4}-x_{3,5}+4\leq 0 \\
 -5 x_{1,1}-10 x_{2,1}-5 x_{2,2}-5 x_{2,3}-4 x_{3,1}+x_{3,2}+x_{3,3}-4 x_{3,4}+x_{3,5}+9\leq 0 \\
 -2 x_{1,1}-5 x_{2,1}-3 x_{2,2}-3 x_{2,3}+x_{3,1}-2 x_{3,2}-x_{3,3}+x_{3,4}+x_{3,5}+4\leq 0 \\
 -5 x_{1,1}-5 x_{2,1}-5 x_{2,2}-x_{3,1}-x_{3,2}-x_{3,3}-x_{3,4}+4 x_{3,5}+6\leq 0 \\
 -x_{1,1}+x_{2,1}+2 x_{2,2}+2 x_{2,3}-2 x_{3,1}-3 x_{3,2}-2 x_{3,3}-2 x_{3,4}-x_{3,5}+1\leq 0 \\
 -2 x_{1,1}-4 x_{2,1}-2 x_{2,2}-x_{2,3}-2 x_{3,1}-3 x_{3,2}-x_{3,3}-2 x_{3,4}-x_{3,5}+6\leq 0 \\
 -3 x_{1,1}-6 x_{2,1}-3 x_{2,2}-3 x_{2,3}-x_{3,1}-4 x_{3,2}-4 x_{3,3}-x_{3,4}-x_{3,5}+7\leq 0 \\
 -x_{1,1}-x_{2,1}-2 x_{2,2}-x_{2,3}-2 x_{3,1}-x_{3,2}-2 x_{3,3}-x_{3,4}-x_{3,5}+3\leq 0 \\ \hline
\end{array}
\end{equation*}
\caption{The orbit polytope of $\ket{\Theta_1}$ in $2\times 4 \times 6$}
\end{table}
\FloatBarrier
\section{Three Qutrits}
Here we give a list of inequalities for the state corresponding to the multiplication of upper triangular $2 \times 2$ matrices. 
\begin{table}[h]
\begin{equation*}
\begin{array}{c}
\textbf{A complete list of inequalities for the Upper Triangular Matrix Multiplication Tensor}\\ \hline 
 x_{1,2}-x_{2,1}\leq 0 \\
 x_{2,2}-x_{1,1}\leq 0 \\
 x_{1,2}-x_{3,1}\leq 0 \\
 x_{2,2}-x_{3,1}\leq 0 \\
 x_{3,2}-x_{1,1}\leq 0 \\
 x_{3,2}-x_{2,1}\leq 0 \\
 -x_{1,1}-x_{2,1}+1\leq 0 \\
 -x_{1,1}-x_{3,1}+1\leq 0 \\
 -x_{2,1}-x_{3,1}+1\leq 0 \\
 -x_{1,2}+x_{2,1}-x_{3,1}\leq 0 \\
 x_{1,1}-x_{2,2}-x_{3,1}\leq 0 \\
 -x_{1,2}-x_{2,1}+x_{3,1}\leq 0 \\
 -x_{1,1}-x_{2,2}+x_{3,1}\leq 0 \\
 x_{1,1}-x_{2,1}-x_{3,2}\leq 0 \\
 -x_{1,1}+x_{2,1}-x_{3,2}\leq 0 \\
 -x_{1,2}-x_{2,2}+x_{3,2}\leq 0 \\
 -x_{1,1}-x_{1,2}-x_{2,2}+1\leq 0 \\
 -x_{1,1}-x_{1,2}-x_{3,2}+1\leq 0 \\
 -x_{2,1}-x_{2,2}-x_{3,2}+1\leq 0 \\
 x_{1,2}+x_{2,2}-x_{3,1}-x_{3,2}\leq 0 \\
 x_{1,2}-x_{2,1}-x_{2,2}+x_{3,2}\leq 0 \\
 -x_{1,1}-x_{1,2}+x_{2,2}+x_{3,2}\leq 0 \\
 x_{1,1}+x_{1,2}-x_{2,2}-x_{3,1}-x_{3,2}\leq 0 \\
 -x_{1,2}+x_{2,1}+x_{2,2}-x_{3,1}-x_{3,2}\leq 0 \\
 -x_{1,1}-x_{1,2}-x_{2,1}-x_{2,2}+x_{3,1}+1\leq 0 \\
 -x_{1,1}-x_{1,2}+x_{2,1}-x_{3,1}-x_{3,2}+1\leq 0 \\
 x_{1,1}-x_{2,1}-x_{2,2}-x_{3,1}-x_{3,2}+1\leq 0 \\
 -x_{1,1}-x_{1,2}-x_{2,1}-x_{2,2}-x_{3,1}+2\leq 0 \\
 -x_{1,1}-x_{1,2}-x_{2,1}-x_{3,1}-x_{3,2}+2\leq 0 \\
 -x_{1,1}-x_{2,1}-x_{2,2}-x_{3,1}-x_{3,2}+2\leq 0 \\
\end{array} 
\end{equation*}
\caption{The entanglement polytope of the Upper Triangular Matrix Multiplication Tensor}
\end{table}
\chapter{Appendix B: SLOCC Classes for $ 2 \times M \times N $ Systems as done by Chen and Chen}
In this appendix we briefly present some of the results by Chen and Chen in \cite{Chen2009} on the SLOCC classes of $2 \times M \times N$ systems collected in a table. For M = 1 the orbits represented by $\ket{\Gamma_7},\ket{\Gamma_{11}},\ket{\Gamma_{13}}$ fall away.
\begin{table}[h]
\begin{equation*}
\begin{array}{|c|l|}
\multicolumn{2}{c}{\textbf{Entanglement classes for selected } 2 \times M \times N \textbf{ systems}} \\\hline 
\text{local ranks} & \text{Representative of SLOCC Class} \\ \hline
2 \times M \times 2M, M \geq 2 & \ket{\Upsilon_0} := \ket{0}\sum_{i=0}^{M-1}\ket{ii}+\ket{1}\sum_{i=0}^{M-1}\ket{i,i+M} \\ \hline 
\multirow{2}{*}{$2 \times M+1 \times  (2M+1), M \geq 1$} & \ket{\Upsilon_1}:=\ket{0,M,2M}+\ket{\Upsilon_0} \\
& \ket{\Upsilon_2}:=\ket{0,M,2M}+\ket{1,M,M-1}+\ket{\Upsilon_0} \\ \hline
\multirow{6}{*}{$2 \times M+2 \times  (2M+2), M \geq 2$} & \ket{\Theta_0}:= \ket{1,M+1,2M+1}+\ket{\Upsilon_1} \\
& \ket{\Theta_1}:=  \ket{0,M+1,2M+1}+\ket{\Upsilon_1} \\
& \ket{\Theta_2}:= \ket{1,M+1,2M+1}+\ket{\Upsilon_2} \\
& \ket{\Theta_3}:=  \ket{0,M+1,2M+1}+\ket{1,M+1,2M}+ \ket{\Upsilon_1} \\
& \ket{\Theta_4}:=  \ket{0,M+1,2M+1}+\ket{1,M+1,0}+ \ket{\Upsilon_2} \\
&  \ket{\Theta_5}:=  \ket{0,M+1,2M+1}+\ket{1,M+1,2M}+ \ket{\Upsilon_2} \\ \hline
\multirow{15}{*}{$2 \times M+3 \times  (2M+3), M \geq 2$} &  \ket{\Gamma_0} := \ket{1,M+2,2M+2}+\ket{\Theta_0} \\
&  \ket{\Gamma_1} := (\ket{0}+\ket{1})\ket{M+2,2M+2}+\ket{\Theta_0} \\
&  \ket{\Gamma_2} := \ket{0,M+2,2M+2}+\ket{\Theta_1} \\
&  \ket{\Gamma_3} := \ket{1,M+2,2M+2}+\ket{\Theta_2} \\
&  \ket{\Gamma_4} := \ket{0,M+2,2M+2}+\ket{\Theta_2} \\
&  \ket{\Gamma_5} := \ket{1,M+2,2M+2}+\ket{\Theta_3} \\
&  \ket{\Gamma_6} := \ket{0,M+2,2M+2}+\ket{\Theta_3} \\
&  \ket{\Gamma_7} := \ket{1,M+2,2M+2}+\ket{\Theta_4} \\
&  \ket{\Gamma_8} := \ket{1,M+2,2M+2}+\ket{\Theta_5} \\
&  \ket{\Gamma_9} := \ket{1,M+2,2M+2}+\ket{0,M+2,2M+1}+\ket{\Theta_2} \\
&\ket{\Gamma_{10}} := \ket{1,M+2,2M+2}+\ket{1,M+2,2M+1}+\ket{\Theta_3} \\
&\ket{\Gamma_{11}} := \ket{1,M+2,2M+2}+\ket{0,M+2,
M+1}+\ket{\Theta_4} \\
&\ket{\Gamma_{12}} := \ket{1,M+2,2M+2}+\ket{0,M+2,
M}+\ket{\Theta_5} \\
&\ket{\Gamma_{13}} := \ket{0,M+2,2M+2}+\ket{1,M+2,2
M+1}+\ket{\Theta_5} \\ \hline
 
\end{array}
\end{equation*}
\caption{The entanglement classes in $2\times 4 \times N$ systems quoted from \cite{Chen2009}}
\end{table}
\chapter{Appendix C: Documentation of the Mathematica Package \emph{Gradient Flow and SIC.m}}
We explain here how  Mathematica package Gradient Flow and SIC.m is used\footnote{Mathematica 9.0 was used to write it and it has not been tested with older versions.}. This is not to be understood as a complete documentation, but rather as a guide to actually computing Entanglement polytopes using Semi-Interactive Polytope Computation, or SIC for short. While the package itself was put together by the author, the main body of the code was written by Michael Walter, with only minor additions and alterations by the author. \\ 
The two main parts of the package are the Gradient Flow and SIC Functions. Their use will be explained in detail here, the theoretical details are explained in chapter \ref{chapter:GradientFlow}. Some three Qubit examples are included, for more examples see the notebooks attached to the package. 
\section{Quantum Tools}
The package computes in the tensor product Hilbert space using the Built-In Mathematica \verb|KroneckerProduct| function. 
\subsection{Entering states}
States must be entered using the provided \verb|BasisKet| function which gets called as \verb|BasisKet[I,dims]| where \verb|I| and \verb|dims| must be lists of the same length. \verb|I| determines the entries of the ket while \verb|dims| specifies the local dimensions $(d_1,\ldots,d_N)$. For example, the three-qubit GHZ and W states can be entered as follows: 
\begin{verbatim}
GHZ = BasisKet[{0,0,0},{2,2,2}] + BasisKet[{1,1,1},{2,2,2}];
W = BasisKet[{1,0,0},{2,2,2}] + BasisKet[{0,1,0},{2,2,2}] + BasisKet[{0,0,1},{2,2,2}];
\end{verbatim}
\verb|BasisKet| returns a column vector: 
\begin{verbatim}
In[1]:= BasisKet[{0,0,0},{2,2,2}]
Out[1]= {{1},{0},{0},{0},{0},{0},{0},{0}}
\end{verbatim}
which is why the built-in Mathematica \verb|Normalize| won't work. Instead, the package provides a function \verb|Nrm|:
\begin{verbatim}
In[666]:= GHZ=BasisKet[{0,0,0},{2,2,2}]+BasisKet[{1,1,1},{2,2,2}];
GHZ //Nrm
Out[667]= {{1/Sqrt[2]},{0},{0},{0},{0},{0},{0},{1/Sqrt[2]}}
\end{verbatim}
\subsection{Eigenvalues and reduced density matrices}
There are functions \verb|LocalEigenvalues| and \verb|MostLocalEigenvalues| which compute the local eigenvalues resp. "Most Local Eigenvalues" (notice one must always include the local dimensions as second argument)  : 
\begin{verbatim}
In[672]:= LocalEigenvalues[GHZ, {2,2,2}]
MostLocalEigenvalues[W,{2,2,2}]
Out[672]= {{1/2,1/2},{1/2,1/2},{1/2,1/2}}
Out[673]= {{2/3},{2/3},{2/3}}
\end{verbatim}
There is also a function \verb|RDM| which computes the $i$-th reduced density matrix. It gets called as \verb|RDM[Rho,dims,i]|. Notice that the first argument is now a density operator $\rho$. The function \verb|Rho[Psi]| converts a state to a density operator. 
\begin{verbatim}
In[674]:= RDM[Rho[W],{2,2,2},1]
Out[674]= {{2/3,0},{0,1/3}}
\end{verbatim}

\section{The GradientFlow function} 
The \verb|GradientFlow| function uses the theoretical ideas of \ref{chapter:GradientFlow} to flow to the point in the entanglement polytope of a given state which is closest to a rational point in $Lie(T)$. 
\subsection{Input to the GradientFlow function}
The GradientFlow function gets called as 
\begin{verbatim}
GradientFlow[initialPsi,initialCoadjointUs,lambdas,dims]
\end{verbatim}.
The arguments must be entered as follows:
\begin{itemize}
\item The argument \verb|initialPsi| is the state in whose entanglement polytope one like to flow. It must be entered using the \verb|BasisKet| function provided with the package. 
\item The argument \verb|initialCoadjointUs| must be a tuple of unitary matrices corresponding to a starting point on the coadjoint orbit. Mostly one will want take a generic starting point, for this the package provides the function \verb|GenericCoadjoint[dims]| which returns a random tuple of unitary matrices. 
\item The argument \verb|lambdas| is a bit special. It does not take directly the point one wants to flow to. Instead, one must enter a tuple of Young diagrams $\lambda =(\lambda_1,\ldots\lambda_N)$, where $\lambda_i$ has $d_i$ rows (possibly empty). A Young diagram with $n_j$ boxes on row $j$ is entered as a list $\{n_1,\ldots n_{d_i}\}$ of non-increasing integers.  Each $\lambda_i$ must have the same number of boxes, $k$, such that the  point one wants to flow to in the polytope is $\frac{\lambda}{k}$. For example, if one wants to flow to the point $(1,1,\frac{1}{2})$ in the three-qubit polytope, one enters \verb|{{2,0},{2,0},{1,1}|. Alternatively, one can use the function \verb|ConvertToLambdas[point,dims]|, which takes a point in the ``Most Local Eigenvalues'' coordinates and transforms it to a tuple of Young diagrams. 
\item \verb|dims| specifies, as always, the local dimensions.
\end{itemize}
For example, a correct call would be: 
\begin{verbatim}
GradientFlow[W, GenericCoadjoint[{2, 2, 2}], {{2, 0}, {1, 1}, {1, 1}}, {2, 2, 2}]
\end{verbatim}
\subsection{Output of the GradientFlow function}
The Output of \verb|GradientFlow| is a list of \emph{replacement rules} which basically allow to access all results of the flow process. The list is very long, so one should not let the output be displayed, but rather address the results one is interested in specifically. The  results that can be accessed are: 
\begin{itemize}
\item \verb|Reached|:Bool which is \verb|True| if the target point was reached, i.e. the distance in $Lie(T)$ between the endpoint and the target point is less than the option value \verb|TargetPrecision|. By default, \verb|TargetPrecision|=0.01. 
\item \verb|FinalPsi| the state where the gradient flow stopped. 
\item \verb|FinalLocalEigenvalues| local eigenvalues of the final state, i.e the endpoint of the polytope trajectory. 
\item \verb|FinalDistance| the distance in $Lie(T)$ between the endpoint and the target point.
\item \verb|PsiTrajectory| the trajectory of the gradient flow in the Hilbert space. 
\end{itemize}
If \verb|Reached| is \verb|False|, then the following three additional parameters are returned.
\begin{itemize}
\item \verb|Inequality| The inequality for the polytope  calculated as explained in \ref{chapter:GradientFlow}
\item \verb|RawInequality| the raw version of this inequality, can be used to meddle around with when the calculated inequality is unprecise (remember we should always get rational inequalities)
\item \verb|suggestedInequality| a scaled and rounded version of \verb|RawInequality|. Its quality varies from perfect over useless to wrong. This will be important for the SIC.  
\end{itemize}
As a first example, we try to flow from the $W$ state to the origin: 
\begin{verbatim}
GradientFlow[W, GenericCoadjoint[{2, 2, 2}], {{1, 1}, {1, 1}, {1, 1}}, {2, 2, 2}]
\end{verbatim}
This returns 
\begin{verbatim}
{Reached -> False, 
 FinalPsi -> {{0.}, {0.57735}, {0.57735}, {0.}, {0.57735}, {0.}, \
{0.}, {0.}}, 
 FinalLocalEigenvalues -> {{0.666667, 0.333333}, {0.666667, 
    0.333333}, {0.666667, 0.333333}}, FinalDistance -> 0.408248, 
 PsiTrajectory -> {{{0.}, {0.57735}, {0.57735}, {0.}, {0.57735}, \
{0.}, {0.}, {0.}}}, 
 RawInequality -> {-0.816497, -0.816497, -0.816497, 1.63299}, 
 Inequality -> 
  1. Subscript[x, 1, 1] + 1. Subscript[x, 2, 1] + 
    1. Subscript[x, 3, 1] >= 2., 
 suggestedInequality -> {-1, -1, -1, 2}}
\end{verbatim}
We see the flow is stationary in this case because the $W$ state already maps to the closest point to the origin in its entanglement polytope. 
A usual call to the GradientFlow will look like this\footnote{Depending on what Mathematica decides to do, one maybe needs to address the parameters of the GradientFlow and the SIC as \texttt{Private`Parameter}.}: 
\begin{verbatim}
In[131]:= g =GradientFlow[AnotherWClassState,GenericCoadjoint[{2,2,2}],
{{1,1},{1,1},{1,1}},{2,2,2}];
FinalDistance /. g
Inequality /. g 
(* Or any other parameters one would like to address*)
Out[132]= 0.408248
Out[133]= 1. Subscript[x, 1,1]+1. Subscript[x, 2,1]+1. Subscript[x, 3,1]>=2.
\end{verbatim}
\subsection{Options of the GradientFlow function}
The options and their default values can be seen in the table below. 

\begin{table}[h]
\centering
\begin{tabular}{|c|c|}
\multicolumn{2}{c}{\textbf{Options of the GradientFlow Function}} \\ \hline 
Option & Default \\ \hline
\texttt{MaxSteps} & $\infty$ \\
\texttt{MinProgress} & $10^{-6}$ \\
\texttt{MinStepSize} & $10^{-6}$ \\
\texttt{InitialStepSize} & 1 \\ 
\texttt{MaxRestarts} & 5 \\
\texttt{TargetPrecision} & $10^{-2}$ \\
\texttt{Verbose} & False \\ \hline

\end{tabular}
\end{table}
Most are rather self-explanatory, only \verb|MaxRestarts|  needs further explanation. When there has been no progress for a while, the gradient flows \emph{restarts}: It resets the step counter and the step size (but keeps the trajectory). This is a numerical hack to ameliorate convergence. The Option \verb|MaxRestarts| gives an upper bound on the number of times this can be done. 
\subsection{Possible Issues}
Even though the gradient flow works well in many examples, especially low-dimensional ones\footnote{Some more are supplied with the package}, as one increases dimension both inefficiency and imprecision can occur. Here are some tips how one can try deal to with this. \\ 
\subsubsection{Qubit systems}
For Qubit systems the flow seems to have no precision problems, at least not for ``easy'' states such as GHZ and symmetric Dicke states. However as one goes up with the number of qubits the flow takes more and more computing time. It seems to be possible to improve on this by decreasing the values of \verb|MaxRestarts|, \verb|MinProgress|, and \verb|TargetPrecision|. 
\subsubsection{Imprecision in higher-level systems}
Here experience has shown that it is most promising to decrease \verb|MinProgress| value rather than the other option parameters. This is also reasonable once one thinks about the nature of the problem: We are trying to use the gradient to flow to a point where the gradient vanishes. It will be very small anyway in the neighbourhood of those points, so there is no real sense in further decreasing the stepsize. However, the descent might go on very slowly, so that one gains a bit on convergence by allowing for lesser progress\footnote{Of course, one can also increase the stepsize again. This however is already done by the restarts, and experience has shown that increasing the number of allowed restarts beyond five does not really add to convergence.}. However, one of course pays a price in terms of efficiency. 
\subsubsection{Inefficiency in higher-level systems}
We include this point just to say that there cannot be done much about it. Since convergence and precision become more fragile as one increases dimensions, the results quickly become useless if one tries to increase efficiency.
\section{Semi-Interactive Polytope Computation}
This is the main tool of the package. It implements the Algorithm presented and discussed in \ref{section:SIC}. The environment 
consists of the ``data structure'' of an SIC \emph{context} and the four functions \verb|SICStart| \verb|SICFlowOn|,
\verb|SICAddInequality|, and \verb|SICStop|. 
\subsection{The SIC context} 
The SIC context contains the data which is needed to run the algorithm presented in \ref{section:SICalgo}. It consists of lists of replacement rules for various data elements. We quickly go through the important ones. 
\begin{itemize}
\item In containts the initial data as replacement rules for \verb|InitialPsi|, \verb|InitialCoadjointUs|, \verb|Dims| determining the starting point and the local dimensions. 
\item It contains the data of the algorithm as replacement rules for \verb|CurrentInequalities|,\verb|VerticesExpected| and \verb|VerticesFound|. 
\item It contains replacement rules \verb|LastResult|, which contains the result of the last gradient flow procedure executed, and \verb|FlowOptions|, which contains a list of options passed on to the gradient flow. 
\end{itemize} 
Any of these parameters can in theory be addressed at any time of the procedure by using a Mathematica construct like 
\begin{verbatim}
c /.(FlowOptions -> _):> (FlowOptions ->{MinProgress ->10^(-10)});
\end{verbatim}
(again, depending on Mathematica, one has to address the parameters as \verb|Private`Parameter|)

\subsection{Setting up an SIC context}
To set up an SIC context the function \verb|SICStart| is used. It has three  mandatory arguments \verb|initialPsi|,
\verb|initialCoadjointUs| and \verb|dims|, which determine a starting point for SIC and the local dimensions (again, one will probabaly want to use \verb|GenericCoadjoint[dims]| as \verb|initialCoadjointUs|).  Furthermore, it has an optional argument \verb|opts| which should be a list of options for the gradient flow.\\ \verb|c = SICStart[initalPsi,initialCoadjointUs,dims,opts]| initialises an SIC context \verb|c| with initial data \verb|initialPsi|,\verb|initialCoadjointUs| and \verb|dims| and \verb|FlowOptions| set to opts.
 Furthermore, it initialises the parameters needed in the algorithm: 
 \begin{itemize}
\item \verb|CurrentInequalities| with a set of local inequalities, i.e. Weyl chamber and eigenvalues conditions. 
\item \verb|VerticesExpected| with the set of vertices corresponding to these inequalities, computed by \texttt{Qhull}.
\item \verb|VerticesFound| with the empty set. 
\end{itemize}
If possible, one will want to include the inequalities for the generic polytope. This can be done by using the function \verb|AddGenericInequalities| on a context \verb|c| after initialising: 
\begin{verbatim}
c = SICStart[GHZ,GenericCoadjoint[{2,2,2}],{2,2,2}];
c = SICAddInequalities[c];
\end{verbatim}

\subsection{Doing the SIC Flow}
After initialising the SIC Flow  can be started by using the function 
\verb|SICFlowOn| on \verb|c|:
\begin{verbatim}
c = SICFlowOn[c];
\end{verbatim}
As described already in \ref{section:SICalgo}, It works in the following way: 
\begin{itemize}
\item It takes a vertex from \verb|VerticesExpected| and runs a gradient flow from our starting to this vertex. Since the vertices are computed numerically, this involves a rounding procedure. 
\item If the vertex is found, it gets printed to the Front End as 
\begin{verbatim}
VERTEX FOUND: {Most Local Eigenvalues of the vertex}
\end{verbatim}
and continues with the next vertex from \verb|VerticesExpected|, or, in the case \verb|VerticesExpected| is empty, terminates with \verb|NO MORE VERTICES EXPECTED|. This means that at this point we have found the entire polytope of our initial state. 
\item If the vertex is not found, it outputs 
\begin{verbatim}
VERTEX NOT FOUND: {Most Local Eigenvalues of the vertex}
\end{verbatim}
to the Front End, followed by the closest point and the inequalities computed by the gradient flow, and interrupts the procedure. 
\end{itemize}
Again, this can best be seen in examples: 
\begin{verbatim}
c = SICStart[GHZ, GenericCoadjoint[{2, 2, 2}], {2, 2, 2}];
c = AddGenericInequalities[c];
c = SICFlowOn[c];
\end{verbatim}
produces the output 
\begin{verbatim}
VERTEX FOUND: {1.,1.,1.}
VERTEX FOUND: {1.,0.5,0.5}
VERTEX FOUND: {0.5,0.5,0.5}
VERTEX FOUND: {0.5,0.5,1.}
VERTEX FOUND: {0.5,1.,0.5}
NO MORE VERTICES EXPECTED
\end{verbatim}
while
\begin{verbatim}
c = SICStart[W, GenericCoadjoint[{2, 2, 2}], {2, 2, 2}];
c = AddGenericInequalities[c];
c = SICFlowOn[c];
\end{verbatim}
produces 
\begin{verbatim}
VERTEX FOUND: {1.,1.,1.}

VERTEX FOUND: {0.5,0.5,1.}

VERTEX NOT FOUND: {0.5,0.5,0.5}

  CLOSEST POINT: {0.666667,0.666667,0.666667}

  INEQUALITY: 1. Subscript[x, 1,1]+1. Subscript[x, 2,1]+1. Subscript[x, 3,1]>=2.

  RAW INEQUALITY: {-0.816497,-0.816497,-0.816497,1.63299}

  SUGGESTED INEQUALITY: {-1,-1,-1,2}
\end{verbatim}

\subsubsection{Adding Inequalities}
Once the SIC Flow failed to find to vertex, we must enter a new inequality by hand and then resume the SIC Flow. 
This is implemented in the function \verb|SICAddInequality|. It gets called as \verb|c = SICAddInequality[c,Ineqs]|, where \verb|c| is a SIC context and \verb|Ineqs| is an inequality or a list of inequalities in ``raw'' or ``Qhull'' format, i.e a list $\{n_1,...,n_d,d\}$ corresponding to the inequality $n.x + d \leq$. For example the inequality $x_{1,1}+ x_{2,1} + x_{3,1} \geq 2$ can be added in the example above as 
\begin{verbatim}
c = SICAddInequality[c{-1,-1,-1,2}]; 
c = SICFlowOn[c];
\end{verbatim}
which leads to the following output: 
\begin{verbatim}
VERTEX FOUND: {0.5,1.,0.5}
VERTEX FOUND: {1.,0.5,0.5}
NO MORE VERTICES EXPECTED
\end{verbatim}
In theory one always can take the inequality that the gradient flow finds. However, this soon fails to be precise. We will discuss this in the issues section. 
\subsubsection{After termination}
Once the SICFlow terminates one can use the function \verb|SICStop| on a context to display the vertices found in the progress. Continuing our W example: 
\begin{verbatim}
SICStop[c];
VERTICES:
1.	1.	1.
0.5	0.5	1.
0.5	1.	0.5
1.	0.5	0.5
\end{verbatim}
\subsection{Issues with the inequalities}
In theory, when having to enter an inequality, one should be able to  always just take the inequality the SIC Flow just calculated. However, remember that we always want integral inequalities with small coefficients. Unfortunately, this fails wildly as we start going to higher dimensions. \\
As a first rule one should check the accordance between the three inequalities displayed. Typically a nice suggested inequality points to good precision, but sometimes there are mysterious rounding errors, so also in this case one should double check. After some exercise, one can often guess the correct inequality to continue by looking at all three inequalities or meddling with the raw inequality (it's often a good idea to divide by the maximal or second heighest entry and then try to have a guess at rational numbers). \\ 
Often however the result is not sensible enough to continue. Often, the most promising thing is to run a \verb|GradientFlow| to the vertex the SIC did not find (SIC has a lower default precision in the flow to speed up the process). In many cases the inequality looks particularly better afterwards. However, it might also happen that with the gradient flow one reaches the vertex not found by SIC. In this case one can try to increase the precision parameters passed to the gradient flow in the SIC Flow. If this is not possible, one can use the function\verb|ConsiderFound|, which manually moves the vertex currently from expected to found. The correct syntax is 
\verb|c = ConsiderFound[c]|.
Lastly, it can also happen geometrically that one runs into an ``uglier'' inequality. For example, we saw that for higher dimensions the coefficients in the closest point inequalities become increasingly large. This means we ran into some corner of the polytope where several faces coincides. In such cases, use just carries these inequalities along. After termination one can try to run the SIC anew, adding some of the simpler inequalities but the ``ugly'' ones. Often the gradient flow then terminates without them, indicating they are implied by some simpler ones. 
\section{The closest point finder}\label{section:closestpointfinder}
The Package contains a function \verb|ClosestPointFinder[Psi,dims]| which tries to find the closest point in the entanglement polytope of \verb|Psi| using the theoretical ideas in chapter \ref{geometrictricks} (i.e. \verb|Psi| should satisfy the assumptions of Theorem \ref{thm:magiclemma}). In practice, what it does is trying to solve th eigenvector equation \ref{prop:free2} for the coefficients of a state with support $\supp$(\verb|Psi|). If it finds a solution (which is not certain, since we have not proven that a solution exists), it outputs the closest point, the corresponding state, and the inequality, in a list of rules for \verb|Point|, \verb|State| and \verb|Inequality|. 
\begin{verbatim}
In[200]:= ClosestPointFinder[GHZ,{2,2,2}]
ClosestPointFinder[W,{2,2,2}]
Out[200]= {Point->{{1/2,1/2},{1/2,1/2},{1/2,1/2}},
State->{{1/Sqrt[2]},{0},{0},{0},{0},{0},{0},{1/Sqrt[2]}},Inequality->True}
Out[201]= {Point->{{2/3,1/3},{2/3,1/3},{2/3,1/3}},
State->{{0},{1/Sqrt[3]},{1/Sqrt[3]},{0},{1/Sqrt[3]},{0},{0},{0}},
Inequality->Subscript[x, 1,1]+Subscript[x, 2,1]+Subscript[x, 3,1]>=2}
\end{verbatim}
Notice that when applying the \verb|ClosestPointFinder| on a state whose entanglement polytope contains the origin, like the GHZ state,  the inequality returned is ``True''.  
\listoffigures
\listoftables
\end{appendix}
\end{document}